\documentclass[
  reprint,
  superscriptaddress,
  amsmath,amssymb,
  aps,
  prl,
  secnumarabic
]{revtex4-2}

\usepackage{etoolbox}

\makeatletter

\newcommand{\appendixtableofcontents}{%
  \section*{Appendix Contents}%
  \@starttoc{atoc}%
}

\let\latexappendix\appendix

\renewcommand{\appendix}{%
  \latexappendix
  \let\oldsection\section
  \let\oldsubsection\subsection

  \renewcommand{\section}{%
    \@ifstar{\oldsection*}{\appendixsection}%
  }%
  \renewcommand{\subsection}{%
    \@ifstar{\oldsubsection*}{\appendixsubsection}%
  }%
}

\newcommand{\appendixsection}[1]{%
  \oldsection{#1}%
  \addcontentsline{atoc}{section}{\protect\numberline{\thesection}#1}%
}

\newcommand{\appendixsubsection}[1]{%
  \oldsubsection{#1}%
  \addcontentsline{atoc}{subsection}{\protect\numberline{\thesubsection}#1}%
}

\makeatother

\def\@seccntformat#1{\csname the#1\endcsname\quad}

\usepackage{amsmath,amssymb,amsfonts,amsthm}
\usepackage{mathtools,mathrsfs}
\DeclarePairedDelimiter\ceil{\lceil}{\rceil}
\DeclarePairedDelimiter\floor{\lfloor}{\rfloor}
\usepackage{bbm}
\usepackage{dsfont}
\usepackage{graphicx}   % figures
\usepackage{bm}    % bold math
\usepackage{braket}     % Dirac notation
\usepackage{physics}   % optional, use sparingly
\usepackage{hyperref}
\hypersetup{
  colorlinks=true,
  linkcolor=blue,
  citecolor=blue,
  urlcolor=blue
}
\usepackage{stmaryrd}
\usepackage{pgf,tikz}

\usetikzlibrary{positioning,fit,calc,arrows.meta}

\newtheorem{theorem}{Theorem}[section]
\newtheorem{lemma}[theorem]{Lemma}

\newtheorem{proposition}[theorem]{Proposition}
\newtheorem{corollary}[theorem]{Corollary}

\newcommand{\NA}{N_{\mathcal{A}}}

\newcommand{\cE}{\mathcal{E}}
\newcommand{\hatfM}{\widehat{f}_{\mathscr{M}}}
\newcommand{\nn}{\nonumber}
\newcommand{\1}{\mathbbm{1}}

\newcommand{\kb}[1]{|#1\rangle\!\langle#1|}

\newcommand{\R}{\mathbb{R}}
\newcommand{\C}{\mathbb{C}}
\newcommand{\N}{\mathbb{N}}
\newcommand{\cA}{\mathcal{A}}
\newcommand{\cH}{\mathcal{H}}

\newcommand{\cL}{\mathcal{L}}

\newcommand{\spec}{\operatorname{Sp}}

\newcommand{\eps}{\varepsilon}
\renewcommand{\epsilon}{\varepsilon}
\renewcommand{\phi}{\varphi}

\newcommand{\Ntot}{N_{\operatorname{tot}}}
\newcommand{\bin}[1]{}

\begin{document}

% Optional: additional authors/affiliations
% \author{Second Author}
% \affiliation{...}

%\date{}

\title{Simulating Thermal Properties of Bose--Hubbard Models on a Quantum Computer}

\author{Simon Becker}
\affiliation{Bocconi University, 20136 Milan, Italy}
\email{simon.becker@unibocconi.it}

\author{Cambyse Rouz\'e}
\affiliation{Inria, T\'el\'ecom Paris -- LTCI, Institut Polytechnique de Paris, 91120 Palaiseau, France}
\email{cambyse.rouze@inria.fr}

\author{Robert Salzmann}
\affiliation{RWTH Aachen, Department of Physics, Otto-Blumenthal-Strasse 20,
52074 Aachen} \affiliation{Inria, ENS de Lyon, Université Claude Bernard Lyon 1, LIP, 69342, Lyon cedex 07, France}
\email{robert.salzmann@rwth-aachen.de}

\begin{abstract}
While recent advances have established efficient quantum algorithms for preparing Gibbs states of finite-dimensional systems, comparable complexity results for bosonic and other infinite-dimensional models remain unexplored. We introduce the first general rigorous Gibbs sampling framework for bosonic many-body systems, showing that physically relevant bosonic models admit gapped dissipative generators, enabling efficient preparation of thermal states. Although our results hold for broad classes of models, we illustrate them using Bose--Hubbard Hamiltonians, both within and beyond the mean-field regime. In both cases, we show that the associated dissipative generators maintain a positive spectral gap, thereby implying exponential convergence to the thermal state. Our argument in the multi-mode case is based on a finite-rank reduction of the dissipative dynamics, which allows us to control the generator via compact perturbations and deduce the discreteness of the spectrum and the stability of the gap. We apply our results to provide efficient preparation of the corresponding Gibbs state on qubit hardware, and by that a quantum algorithm to compute thermal properties of the associated model. This provides the first mathematically controlled route to Gibbs sampling in infinite-dimensional systems, with implications for quantum simulation, thermalization, and many-body complexity, where quantum advantages may arise.
\end{abstract}

\maketitle

\section{Introduction}

\noindent
The simulation of ground and thermal states of many-body quantum systems is widely regarded as one of the most promising routes toward near-term quantum advantage. Recent progress has occurred on several fronts: advances in the complexity-theoretic understanding of quantum many-body systems~\cite{bravyi2021complexity}, the development of quantum algorithms for estimating physically relevant observables~\cite{Anschuetz2025,gilyen2021sub,leng2025sub,basso2024optimizing}, and increasingly sophisticated experimental demonstrations of quantum simulations~\cite{Mi2024}.

Among these directions, the preparation of thermal states has recently seen major progress. A series of works introduced quantum Gibbs samplers based on Lindbladian dynamics~\cite{chen2023quantum,gilyen2024quantum,chen2023efficient,ding2025efficient}, and subsequent results established efficient convergence in a variety of physically relevant regimes~\cite{rouze2411optimal,rouze2024efficient,vsmid2025polynomial,vsmid2025rapid,tong2025fast,ding2025end,hahn2025efficient,bakshi2025dobrushin,bergamaschi2025quantum,zhan2025rapid}. However, in essentially all regimes where quantum algorithms are known to be efficient, closely related classical algorithms can also efficiently approximate thermal expectation values and partition functions~\cite{bakshi2024high,mann2021efficient,helmuth2023efficient,Mann2024,chen2025convergence}. This raises the question of whether genuine quantum advantages can emerge in alternative physical settings.

While most existing results focus on locally finite-dimensional systems, such as spin models or fermionic lattices, bosonic systems have received far less attention in the computational complexity literature. Historically, discrete systems have dominated quantum algorithmic studies because they parallel the classical digital-computation paradigm, while bosonic systems were mainly explored in quantum communication and information theory~\cite{braunstein2005quantum,cerf2007quantum,weedbrook2012gaussian}.

Recently, however, a rigorous complexity-theoretic framework for bosonic quantum systems has begun to emerge~\cite{chabaud2024bosonic,chabaud2025energy}, revealing phenomena that differ substantially from those of finite-dimensional systems. At the same time, bosonic systems arise naturally in many areas of physics, including condensed matter \cite{cazalilla2011one,griffin1996bose,dalfovo1999theory,torma2022superconductivity}, quantum optics \cite{fabre2020modes,brennecke2007cavity,mivehvar2021cavity}, atomic physics and quantum chemistry~\cite{gersch1963quantum,guo2012critical,del2018tensor,kloss2019multiset,macridin2018digital,macridin2018electron,reinhard2019density,sandhoefer2016density,schroder2016simulating,woods2015simulating}. Their dynamics have therefore been extensively studied from the perspective of Hamiltonian simulation~\cite{kuwahara2024effective} and mathematical physics~\cite{schuch2006quantum,nachtergaele2007lieb}. However, existing quantum runtime guarantees for Gibbs sampling in the bosonic setting are largely restricted to Gaussian systems, for which quantum advantage is not generally expected~\cite{bartlett2002efficient}.

From a computational standpoint, bosonic systems present unique challenges. Classical approximation techniques that are highly effective for spin systems often fail in infinite-dimensional settings. For example, semidefinite-programming relaxations, which provide powerful approximation methods for ground-state problems~\cite{Goemans1995,Hastings2022}, do not readily extend to bosonic systems~\cite{Navascus2013}. Likewise, cluster-expansion techniques that yield efficient classical algorithms for partition functions of finite-dimensional systems~\cite{mann2021efficient,helmuth2023efficient,Mann2024} encounter major obstacles in bosonic models due to the unboundedness of the Hamiltonians and observables~\cite{kuwahara25}. Moreover, Gibbs states of bosonic systems can remain entangled at arbitrarily high temperatures \cite{han2025entropic}, in sharp contrast with discrete variables~\cite{bakshi2024high}.

These challenges suggest that bosonic systems may offer natural regimes where quantum algorithms outperform classical approaches. As bosonic systems already play a central role in early proposals for quantum advantage~\cite{Aaronson2011,tillmann2013experimental,spring2013boson,hamilton2017gaussian},  motivating a systematic study of quantum Gibbs sampling in continuous-variable many-body systems.

In this Letter, we initiate such a program. We study a family of efficiently implementable quantum Gibbs samplers for infinite-dimensional systems and establish positive spectral-gap results, and hence exponential convergence to equilibrium. We focus 
on the \textit{Bose-Hubbard model}. This model is central to many-body bosonic physics because of its fundamental theoretical significance \cite{cazalilla2011one,bloch2008many,kollath2007quench,Fisher1989,freericks1996strong} and its direct experimental relevance \cite{jaksch1998cold,greiner2002quantum,stoferle2004transition,bakr2009quantum,fallani2007ultracold,bakr2010probing}.
Our results provide a rigorous framework for thermal-state preparation and for quantum algorithms estimating thermodynamic observables in bosonic many-body systems.

\vspace{-1em}

\section{Quantum Gibbs sampling for continuous-variable systems}

\noindent The primary task considered in this letter is that of the preparation of the Gibbs state $\sigma_\beta(H):=e^{-\beta H}/\Tr(e^{-\beta H})$ of the Hamiltonian $H$ of an infinite dimensional quantum system at inverse temperature $\beta>0$. In order for $\sigma_\beta(H)$ to be well-defined, we further assume the Gibbs hypothesis, namely that $\Tr(e^{-\lambda H})<\infty$ for all $\lambda >0$. To achieve this, we consider a family of dissipative quantum Gibbs samplers \cite{ding2025efficient,gilyen2024quantum} extended in~\cite{BeckerRouzeSalzmannToAppearcmp} to infinite-dimensional systems. In a nutshell, the jumps associated with the generator $\mathcal{L}_{\smash{\sigma_E,\widehat{f},H}}$ of the dissipation formally consist of the following dressing of a family of bare jump operators $\{A^\alpha\}_{\alpha\in\mathcal A}=\{(A^\alpha)^\dagger\}_{\alpha\in\mathcal A}$:
\begin{align}
\label{eq:IntegralJump}
L^\alpha(H)
\!:=\!
\int_{\mathbb R} e^{itH}\! A^\alpha e^{-itH} f(t)\,dt,
\end{align}
where the filter function $f$ is chosen so that the Gibbs state is stationary:
\begin{align}
\mathcal{L}_{\sigma_E,\widehat f,H}(\sigma_\beta(H))=0.
\end{align}
In fact, the so-called KMS condition $\overline{\widehat f(\nu)}\!=\!\widehat f(-\nu)e^{-\beta\nu/2}$ on the Fourier transform $\widehat{f}(\nu)=\int f(t)e^{it\nu}dt$ ensures that the generator is self-adjoint with respect to a well-chosen inner product $\langle.,.\rangle_{\sigma_\beta(H)}$ (see Appendix \ref{sec:CVGS}). The parameter $\sigma_E>0$ controls both the Hamiltonian simulation time required to implement the dissipation on a quantum computer and the spectral gap, $\operatorname{gap}(L_{\smash{\sigma_E,\widehat{f},H}}),$ of the generator acting on the Hilbert space associated with $\langle.,.\rangle_{\sigma_\beta(H)}$. The latter controls the mixing time of the dynamics when initialized in a well-chosen state $\rho_{\operatorname{ini}}$ of the system with $\rho_{\operatorname{ini}}\le \mathfrak{c}\,\sigma_\beta(H)$:
\begin{align}
t_{\operatorname{mix}}(\varepsilon)&:=\inf\big\{t\ge 0:\,\big\|e^{t\mathcal{L}_{\sigma_E,\widehat{f},H}}(\,\rho_{\operatorname{ini}}\,)-\sigma_\beta(H)\big\|_1\le \varepsilon\big\}\nonumber\\
& \le  \frac{2\log\big({\mathfrak{c}}/{\varepsilon}\big)}{\operatorname{gap}(L_{\sigma_E,\widehat{f},H})}\,,\qquad\text{ given } \varepsilon\in (0,1)\,.\label{mixingtimeintro}
\end{align}
Intuitively, as $\sigma_{E}\to 0$, the amount of resources required to implement the evolution up to time $t_{\operatorname{mix}}(\varepsilon)$ blows up, while the spectral gap decreases as $\sigma_E$ increases. In our companion paper \cite{BeckerRouzeSalzmannToAppearcmp},we also identified single-mode settings where a bad choice of filter $f$ may lead to a vanishing gap, resulting in infinite mixing times.
% or exponentially increases the runtime of the dynamics' algorithmic implementation \rob{Do we actually see the latter case?}. 
Building on these insights, our main goal is to prove that, in the limit \(\sigma_E\to\infty\), the resulting generator \(\mathcal{L}_{\smash{\widehat f,H}}\) remains gapped for a broad class of physically relevant bosonic systems, and to develop end-to-end schemes for computing their key physical properties.

In Section \ref{sec.BHmain} below, we detail our findings concerning the positivity of the spectral gap for the Bose–Hubbard model. The common strategy underlying our results is as follows. First, we identify solvable or approximately solvable reference models, typically Gaussian or number-diagonal ones, whose associated Gibbs samplers admit explicit spectral control. Second, we show that physically relevant perturbations preserve positivity of the dissipative gap and their fixed points stay close to that of the unperturbed dynamics. Third, we apply the finite-dimensional approximation scheme from \cite{BeckerRouzeSalzmannToAppearcmp} to connect the infinite-dimensional dynamics to algorithmically tractable approximations, thereby demonstrating that Gibbs states of Bose-Hubbard models can be prepared efficiently on qubit-based quantum computers, see Section~\ref{sec:EndToEnd}.

\vspace{-0.4cm}

\section{Bose--Hubbard Hamiltonians}\label{sec.BHmain}

\noindent A broad class of examples is provided by the Bose--Hubbard Hamiltonians: on a finite lattice $\Lambda=\llbracket 1,L\rrbracket^D$ with $n=L^D$ modes, its Hamiltonian is
\begin{align*}
H_{\operatorname{BH}}
\!=\!
-J\sum_{\langle i,j\rangle}(a_i^\dagger a_j+\mathrm{h.c.})
\!+\!\frac{U}{2}\sum_i(N_i^2-N_i)
\!-\!\mu\sum_i N_i.
\end{align*}

\noindent Here $a_i$ and $a_i^\dagger$ are bosonic annihilation and creation operators, $N_i=a_i^{\smash{\dagger}} a_i$, $J$ is the hopping amplitude, $U>0$ the on-site repulsion, and $\mu$ the chemical potential. For such Hamiltonians, the ground-state problem restricted to finite-particle sectors is known to be computationally hard~\cite{childs2014bose}, while finite-temperature quantities necessarily involve all particle-number sectors and therefore the full infinite-dimensional Fock space. This infinite-dimensionality creates a serious obstacle for classical algorithms. In particular, existing classical partition-function algorithms typically rely on truncating local boson numbers with a cutoff that grows with system size, leading at best to quasi-polynomial runtime in relevant regimes~\cite{kuwahara25}. This motivates the search for quantum algorithms that avoid such classical truncation barriers.

\vspace{-0.5em}

\subsection{Mean-field regime}
\noindent For the mean-field decoupling of the Bose--Hubbard Hamiltonian, the effective single-site model is
\begin{align*}
H_{\mathrm{MF}}
=
-\mu N+\frac{U}{2}N(N-1)-\bar\psi\,a-\psi\,a^\dagger+|\psi|^2.
\end{align*}
We perform a perturbative spectral analysis in the superfluid order parameter $\psi$, which corresponds to a genuinely unbounded perturbation of the Hamiltonian. For sufficiently small $|\psi|$, the spectrum remains discrete, high-energy eigenvalues stay isolated, and the corresponding eigenvectors remain close to the Fock basis. Comparing the resulting Dirichlet form with that of a diagonal reference model yields a positive spectral gap for the associated self-adjoint Gibbs sampler (see Appendix~\ref{sec:meanfieldBH} for details):
\begin{theorem}
The mean-field Bose--Hubbard sampler $L_{\hatfM,H_{\operatorname{MF}}}$ associated to the filter function $\hatfM(\nu) = \exp\big(-{\sqrt{1+(\beta\nu)^2}+ \beta\nu}/{4}\big)$ and bare jumps $\{a,a^\dagger\}$ is gapped for sufficiently small $|\psi|$.
\end{theorem}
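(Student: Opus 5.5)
We work in the $\sigma_\beta(H)$-weighted (KMS) inner product, where by construction the generator $L_{\hatfM,H}$ is self-adjoint and negative semidefinite with kernel spanned by $\1$. Proving a gap therefore amounts to a Poincaré inequality: the Dirichlet form $\mathcal{E}_H(X)=-\langle X, L_{\hatfM,H}(X)\rangle_{\sigma_\beta(H)}$ should dominate $c\,\mathrm{Var}_{\sigma_\beta(H)}(X)$ for some $c>0$. We proceed in three steps, treating $\psi$ as a perturbation.

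\textbf{Step 1: the diagonal reference model.} Set $\psi=0$ and $H_0=-\mu N+\frac U2 N(N-1)$. This operator is diagonal in the Fock basis with eigenvalues $E_k=-\mu k+\frac U2 k(k-1)$, and $E_{k+1}-E_k=Uk-\mu$ grows linearly. The dressed jumps $L^{a}(H_0)$ and $L^{a^\dagger}(H_0)$ only connect $|k\rangle$ to $|k\pm1\rangle$. Their weights are $\hatfM(\pm(E_{k+1}-E_k))\sqrt{k+1}$.

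For this choice of filter,
\[
\hatfM(\nu)=\exp\bigl(-(\sqrt{1+\beta^2\nu^2}-\beta\nu)/4\bigr).
\]
For $\nu\to+\infty$ it stays of order one. For $\nu\to-\infty$ it decays like $e^{\beta\nu/2}$, which is exactly what detailed balance requires. As a result, the off-diagonal (coherence) sector has rates growing like $k$. The diagonal sector reduces to a classical birth–death chain on $\N$, reversible with respect to $p_k\propto e^{-\beta E_k}$. Its downward rates are roughly $\hatfM(Uk)^2 k \gtrsim k$.

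We establish the gap of this chain with a Hardy-type criterion (Muckenhoupt/Chen's criterion for birth–death chains). The Gaussian-type tail of $p_k\sim e^{-\beta U k^2/2}$, combined with rates growing in $k$, gives a positive gap $\lambda_0$. The coherence blocks $|k\rangle\langle l|$ are handled explicitly as well, since on these blocks the generator acts as a damping term at least of order $\min(k,l)+1$ plus a positive constant. Together this gives $\mathrm{gap}(L_{\hatfM,H_0})=\lambda_0>0$.

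\textbf{Step 2: perturbation of the spectral data.} Next we control the spectral data of $H_\psi$. The perturbation $V=-\bar\psi a-\psi a^\dagger+|\psi|^2$ is $N^{1/2}$-bounded, hence infinitesimally relatively bounded with respect to $H_0$, which grows like $N^2$. By Kato–Rellich, $H_\psi$ is self-adjoint with compact resolvent.

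Since the gaps $E_{k+1}-E_k\sim Uk$ grow, a Kato projection argument on circles of radius $\tfrac12\min_{\pm}|E_{k\pm1}-E_k|$ gives the following. For $|\psi|$ small, uniformly in $k$, every eigenvalue $E_k(\psi)$ stays simple with $|E_k(\psi)-E_k|\lesssim |\psi|^2$. The eigenvectors satisfy $\|\phi_k(\psi)-|k\rangle\|\lesssim |\psi|$, in fact with decay $|\psi|/\sqrt{k}$, because $\|V|k\rangle\|\lesssim|\psi|\sqrt k$ while the gap is of order $k$. The uniformity in $k$ is the point where unboundedness enters, and the linear growth of the gaps is what rescues it.

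\textbf{Step 3: comparison of Dirichlet forms (main obstacle).} We expand $L^{a}(H_\psi)$ in the perturbed eigenbasis,
\[
L^a(H_\psi)=\sum_{j,k}\hatfM\bigl(E_j(\psi)-E_k(\psi)\bigr)\,\langle\phi_j,a\,\phi_k\rangle\,|\phi_j\rangle\langle\phi_k|,
\]
and show the following decomposition. After conjugation by the unitary $W_\psi$ mapping $|k\rangle\mapsto\phi_k(\psi)$, it equals $L^a(H_0)$ plus a remainder $R_\psi$. The remainder must obey $\|R_\psi\, (N+1)^{-1/2}\|\lesssim|\psi|$, in the sense relevant for the weighted norm. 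The same must hold for $\sigma_\beta(H_\psi)$ relative to $W_\psi\sigma_\beta(H_0)W_\psi^\dagger$ as a bounded multiplicative perturbation, $\sigma_\beta(H_\psi)\asymp\sigma_\beta(H_0)$ with constants $1\pm O(|\psi|)$.

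Two facts drive this. First, $\hatfM$ is Lipschitz with bounded derivative and the eigenvalue shifts are $O(|\psi|^2)$. Second, the matrix elements $\langle\phi_j,a\phi_k\rangle$ differ from $\sqrt k\,\delta_{j,k-1}$ by $O(|\psi|)$ terms, with enough decay in $|j-k|$ to be summable.

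Given this, the Dirichlet forms satisfy $\mathcal{E}_{H_\psi}\ge(1-C|\psi|)\mathcal{E}_{H_0}\circ W_\psi-C|\psi|\,\mathrm{Var}$, and the variances are comparable. We then conclude
\[
\mathrm{gap}(L_{\hatfM,H_\psi})\ge\lambda_0-C'|\psi|>0
\]
for $|\psi|$ small.

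The hardest part is making the remainder bounds uniform over the unbounded ladder: the jump amplitudes grow like $\sqrt k$ and the filter values are exponentially small on downhill transitions. I would control $R_\psi$ relative to the reference Dirichlet form itself, using that $\mathcal{E}_{H_0}$ controls $\langle X,[N,\cdot]\rangle$-type quantities. This is cleaner than using the operator norm, and the reference rates growing like $k$ absorb the $\sqrt k$ amplitude growth.
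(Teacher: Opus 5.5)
Your overall architecture is the same as the paper's: a gapped number-diagonal reference generator, perturbation theory driven by the linearly growing level spacing, and a comparison of Dirichlet forms. The trouble is that Step~2 is false on part of the parameter range, and Step~3 inherits the failure.

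The unperturbed spacings are $E^{(0)}_m-E^{(0)}_n=(m-n)\bigl(\tfrac U2(m+n-1)-\mu\bigr)$. These grow only for $m,n\gtrsim\mu/U$. Among the low levels they vanish whenever $m+n=2\mu/U+1$, so for $2\mu/U\in\{0,1,2,\dots\}$ the reference Hamiltonian has exact degeneracies.

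\begin{itemize}
\item For $\mu=0$, the states $|0\rangle$ and $|1\rangle$ are degenerate and coupled at first order by $\langle 1|V|0\rangle=-\psi$. For every small $\psi\neq 0$ the perturbed eigenvectors are therefore $(|0\rangle\pm e^{i\arg\psi}|1\rangle)/\sqrt2$ up to $O(|\psi|)$. Your Kato circles of radius $\tfrac12\min_\pm|E_{k\pm1}-E_k|$ have radius zero, and $\|\phi_k(\psi)-|k\rangle\|$ does not tend to zero. The quantity $\langle\phi_j,a\phi_k\rangle-\sqrt k\,\delta_{j,k-1}$ is then of order one (e.g.\ $e^{i\arg\psi}/2$ on the diagonal of the cluster), so your remainder $R_\psi$ is not small and the comparison through $W_\psi$ breaks down.
\item Non-adjacent coincidences are missed as well. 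For $\mu=U/2$ one has $E^{(0)}_0=E^{(0)}_2$, which circles built from adjacent gaps do not separate. The second-order effective Hamiltonian on $\operatorname{span}\{|0\rangle,|2\rangle\}$ has equal diagonal entries $2|\psi|^2/U$ and an off-diagonal entry of modulus $2\sqrt2|\psi|^2/U$, so the mixing is again of order one.
\item Near-degeneracies, where $2\mu/U$ is close to but not equal to an integer, are harmless, since the smallness of $|\psi|$ may depend on $\mu$. Exact ones are not.
\end{itemize}

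The paper deals with this by controlling the finitely many low levels ($n<3+4\mu/U$) only at the level of clusters.
\begin{itemize}
\item Levels are grouped into classes $\mathcal C_n$ that are separated at a scale $\delta$.
\item The cluster spectral projections $P_{\mathcal C_n}$ are shown to be $O(|\psi|/\delta)$-close to the Fock projections $P^{(0)}_{\mathcal C_n}$, also after multiplication by $a$ and $a^\dagger$.
\item Lipschitz continuity of $\nu\mapsto\hatfM(\nu)e^{it\nu}$ replaces the individual Bohr frequencies inside a cluster by a common one, at cost $O(\delta(\beta+|t|))$.
\item Choosing $\delta=\sqrt{|\psi|/\beta}$ makes the derivation difference of norm $\lesssim\sqrt{|\psi|}\,(\sqrt\beta+|t|/\sqrt\beta)$. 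Hence $\Delta\lesssim\beta|\psi|$, and the gap is at least $(\sqrt{\lambda_0}-\sqrt{\Delta})^2$. This is why the paper gets only $\lambda_0-O(\sqrt{|\psi|})$ rather than your $\lambda_0-C'|\psi|$.
\end{itemize}

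The paper's reference is $h(N)=\sum_n E_n(\psi)|n\rangle\langle n|$, with the perturbed eigenvalues. Filter values and phases then coincide exactly, and only eigenvector differences enter. With your $\psi$-independent reference, the filter-difference terms must also be absorbed, which your relative-bound scheme can do.

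Within your own framework there is an alternative repair. Use the gauge $e^{i\theta N}$ to take $\psi\ge 0$, and choose the eigenbasis of $H_0$ inside its degenerate eigenspaces by analytic (Rellich) perturbation theory, so that $W_\psi\to I$; $L_{\hatfM,H_0}$ does not depend on this choice. This has to be carried out explicitly, though. As written, your uniform simplicity and $O(|\psi|)$ eigenvector claims are false.

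A minor point on the filter. The filter is $\exp\bigl(-(\sqrt{1+\beta^2\nu^2}+\beta\nu)/4\bigr)$: it is small on uphill frequencies $\nu=E-E'>0$ and tends to $1$ on downhill ones. Your Step~1 flips this sign twice, consistently, so the rates you derive are right. However, your Step~3 remark that filter values are exponentially small on downhill transitions is backwards.
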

\noindent Our analysis likely applies to the full Bose-Hubbard model for finite lattice sites, though not necessarily in the thermodynamic limit. It remains an important open problem to show that the generator $L_{\hatfM,H_{\operatorname{BH}}}$ remains gapped independently of the number of modes. 

\vspace{-0.5em}

\subsection{Regularized Bose--Hubbard models}

\noindent In order to make a step in the direction of proving the spectral gap for the full Bose-Hubbard model, we consider two physically relevant regularizations, designed to model both superfluid and Mott-insulating regimes while controlling high-occupation sectors \cite{Fisher1989}. First, we choose the following parametrization of the Bose--Hubbard Hamiltonian over a $D$-dimensional lattice of side-length $L$
\begin{align*}
H_{\operatorname{BH}} &= -J \sum_{\langle i,j\rangle} a_i^{\dagger}a_j  + \operatorname{h.c.} + \eta \sum_{i}  N_i  + V , 
\end{align*}
where $\eta,\eta',J \in \mathbb R$, $U>0$, and $V:=\frac{U}{2}  \sum_i N_i^2 - \eta'  N_i$. One regularizes the on-site interaction to study the \emph{superfluid phase} 
\begin{equation*}
\begin{split}
H_{\operatorname{SF}} &\equiv H^{(M')}_{\operatorname{SF}}\!=\! -J \sum_{\langle i,j\rangle} a_i^{\dagger}a_j \! + \operatorname{h.\!c.} \!+ \eta \sum_{i}  N_i  + \Pi^b_{M'} V \Pi^b_{M'}, 
\end{split}
\end{equation*}
for some increasing family of projections $\Pi^b_{M'}$ onto  finite dimensional subspaces that we specify below, and where $\eta-2|J|D>0$. We also consider the regularization of the hopping-term to emphasize the on-site interaction, which is particularly relevant for the \textit{Mott-insulating phase}
\begin{equation*}
\begin{split}
H_{\operatorname{MI}}&\!\equiv \!H^{(M')}_{\operatorname{MI}}\!\!=
\Pi^a_{M'}\!\!\left(\!-J\!\sum_{\langle i,j\rangle}\! a_i^{\dagger}a_j \!+ \operatorname{h.\!c.}\!+ \eta\sum_{i} \!N_i \!\right)\!\!\Pi^a_{M'} \! \! +\! V, 
\end{split}
\end{equation*}
for some finite dimensional projections $\Pi^a_{M'}$. In words, $H_{\operatorname{SF}}$ is a finite-rank perturbation of a quadratic Hamiltonian, while $H_{\operatorname{MI}}$ is a finite-rank perturbation of a commuting sum of quartic operators $\frac{U}{2} \sum_i (N_i^2 - \eta' N_i)$. In Appendix \ref{sec.regulBH}, we verify that both truncations approximate the full Bose Hubbard model for sufficiently large truncations.
\begin{lemma}
\label{lem:GibbsStatesClose}
Assume \(U>0\) and $\eta-2D|J|>0$. Then 
\begin{align*}
\Big\|\sigma_\beta(H_{\operatorname{BH}})-\sigma_\beta(H_{\operatorname{SF}})\Big\|_1\!\!\le\! \varepsilon\, \text{ for } \,M'=\Omega\Big(n+\log\Big(\frac{1}{\varepsilon}\Big)\Big).
\end{align*}
The same holds for the Mott-insulator truncation $H_{\operatorname{MI}}$ under the only assumption that $U>0$.
\end{lemma}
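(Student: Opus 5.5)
Since the projections $\Pi^b_{M'}$, $\Pi^a_{M'}$ are only specified later, I will take them to be spectral projections of the total number operator $\Ntot:=\sum_i N_i$ onto $\{\Ntot\le M'\}$. For $\Pi^a_{M'}$ one could equally use the product of local cutoffs $N_i\le M'$; the argument below is the same. The key structural fact is that $\Ntot$ commutes with the hopping term, with $\sum_i N_i$, and with $V$. Hence $H_{\operatorname{BH}}$, $H_{\operatorname{SF}}$ and $H_{\operatorname{MI}}$ are all block diagonal in the particle-number sectors $\cH_N:=\{\Ntot=N\}$, and each truncated Hamiltonian coincides with $H_{\operatorname{BH}}$ on $\bigoplus_{N\le M'}\cH_N$.

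On the complementary sectors $N>M'$ the truncated Hamiltonians behave as follows. $H_{\operatorname{SF}}$ reduces to the quadratic part $Q:=-J\sum(a_i^\dagger a_j+\mathrm{h.c.})+\eta\Ntot\ge(\eta-2D|J|)\Ntot$. $H_{\operatorname{MI}}$ reduces to $V\ge \frac{U}{2n}\Ntot^2-\eta'\Ntot$, using Cauchy--Schwarz $\sum_iN_i^2\ge \Ntot^2/n$.

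With $P:=\Pi_{M'}$ and both Hamiltonians commuting with $P$, write $e^{-\beta H}=Pe^{-\beta H}P+P^\perp e^{-\beta H}P^\perp$. Let $Z$ and $Z'$ be the two partition functions and $R,R'$ the traces of the tails $P^\perp e^{-\beta H}$ and $P^\perp e^{-\beta H'}$. The two Gibbs states differ on the low sector only through normalization, which gives the elementary estimate
\[
\|\sigma_\beta(H)-\sigma_\beta(H')\|_1\le \frac{|Z-Z'|}{Z'}+\frac{R+R'}{Z}\le \frac{2(R+R')}{\min(Z,Z')}.
\]
For the denominator, the vacuum lies in $\cH_0$, where all three Hamiltonians vanish. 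Therefore $Z,Z'\ge 1$, and it suffices to show $R,R'\le\varepsilon/4$.

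For the tails I use $\dim\cH_N=\binom{N+n-1}{n-1}\le 2^{N+n}$ together with lower bounds on each Hamiltonian. For $H_{\operatorname{BH}}$, one has $H_{\operatorname{BH}}\ge Q+V\ge c\Ntot+\frac{U}{2n}\Ntot^2-\eta'\Ntot$ with $c=\eta-2D|J|>0$. In the SF case I want a lower bound linear in $N$. Dropping the quadratic term gives $H\ge (c-\eta')\Ntot$ only if $\eta'<c$, which is not guaranteed. I will instead note $\frac{U}{2}N_i^2-\eta'N_i\ge -\eta'^2/(2U)$, so $V\ge -n\eta'^2/(2U)$, hence $H_{\operatorname{BH}}\ge c\Ntot-Cn$ with $C=\eta'^2/(2U)$. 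In the SF case both $H_{\operatorname{BH}}$ and $H_{\operatorname{SF}}=Q$ on the tail are then at least $cN-Cn$ on $\cH_N$. This yields
\[
R,R'\le e^{\beta Cn}\sum_{N>M'}2^{N+n}e^{-\beta cN}.
\]
Here is the main obstacle: this sum only decays if $\beta c>\log 2$. To avoid that, I replace the crude dimension bound by the exact generating function. One has $\sum_N\dim\cH_N\,x^N=(1-x)^{-n}$, so for $x=e^{-\beta c/2}$,
\[
\sum_{N>M'}\dim\cH_N e^{-\beta cN}\le x^{M'}(1-x)^{-n}.
\]
The tail is then at most $e^{\beta Cn+n\log(1/(1-x))-\beta cM'/2}$, which is at most $\varepsilon/4$ once $M'\ge \frac{2}{\beta c}\big(n(\beta C+\log\frac{1}{1-x})+\log\frac{4}{\varepsilon}\big)=\Omega(n+\log(1/\varepsilon))$.

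In the MI case no condition on $\eta$ is needed, because the quadratic lower bound $V\ge\frac{U}{2n}N^2-\eta'N$ dominates. The hopping part restricted to the tail vanishes for $H_{\operatorname{MI}}$. For $H_{\operatorname{BH}}$, the hopping part is at least $-(2D|J|+|\eta|)N$, and on the tail $N>M'$ this is absorbed into $\frac{U}{4n}N^2$ once $M'\gtrsim n$. The same generating-function bound, now with Gaussian decay in $N$, finishes the argument.
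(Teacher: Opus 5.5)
Your argument is correct. For $H_{\operatorname{SF}}$ it is essentially the paper's proof: you bound the trace distance by twice the tail weights of the total-number sectors above $M'$, get $\mathcal Z\ge 1$ from the vacuum, derive the lower bound $(\eta-2D|J|)\Ntot-Cn$ from $V\ge -Cn$, and use the generating function $(1-e^{-s})^{-n}$ with the decay rate split in half.

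For $H_{\operatorname{MI}}$ you take a different and simpler route. The paper writes $H_{\operatorname{BH}}-H_{\operatorname{MI}}=\overline{P}T+PT\overline{P}$ and bounds $\|e^{-\beta H_{\operatorname{BH}}}-e^{-\beta H_{\operatorname{MI}}}\|_1$ by Duhamel's formula. It then estimates the perturbation sector by sector in trace norm ($\lesssim k\,d_k$ on $\ker(\Ntot-k)$) and sums $\sum_{k>M'}k\,d_k\,q^k$ with a negative-binomial/Markov tail bound. You instead reuse the block argument from the superfluid case. This only needs two facts: the two Hamiltonians agree on $\{\Ntot\le M'\}$, and both are $\ge \frac{U}{4n}\Ntot^2$ on the complement once $M'\gtrsim n$. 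Your ``Gaussian decay'' step closes as claimed: on $\Ntot>M'\ge\gamma n$ the weight is dominated by $e^{-\beta U\gamma\Ntot/4}$, so the generating-function bound applies verbatim. Your route treats both truncations uniformly and avoids Duhamel. The paper's continuity estimate would also tolerate a small nonzero discrepancy on the low sectors, but that generality is not used, since the discrepancy vanishes there.

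One thing you must adjust is the choice of truncation. You prove the lemma for the total-number cutoff $\mathbf 1\{\Ntot\le M'\}$, but the paper's truncations use product cutoffs: $\Pi^b_{M'}=(P^b_{M'})^{\otimes n}$ for $H_{\operatorname{SF}}$ and $P=(P^a_{M'})^{\otimes n}$ for $H_{\operatorname{MI}}$. For these, your claims that on $\{\Ntot>M'\}$ the Hamiltonian $H_{\operatorname{SF}}$ reduces to the quadratic part and $H_{\operatorname{MI}}$ reduces to $V$ are false. For example, a state with $M'$ bosons in one mode and one boson in another lies in the range of the product projection but has $\Ntot=M'+1$.

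Your argument survives because on the tail you only use lower bounds:
\begin{itemize}
\item The product projections commute with $\Ntot$ and act as the identity on $\{\Ntot\le M'\}$, so the coincidence on low sectors still holds.
\item On the tail, $V\ge -Cn$ gives $\Pi^b_{M'}V\Pi^b_{M'}\ge -Cn$, hence $H_{\operatorname{SF}}\ge(\eta-2D|J|)\Ntot-Cn$. This is exactly how the paper handles this step.
\item With $T:=-J\sum_{\langle i,j\rangle}(a_i^\dagger a_j+\mathrm{h.c.})+\eta\Ntot$, the bound $T\ge-(2D|J|+|\eta|)\Ntot$ together with $[P,\Ntot]=0$ gives $PTP\ge-(2D|J|+|\eta|)\Ntot$. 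So $H_{\operatorname{MI}}$ obeys the same lower bound you derived for $H_{\operatorname{BH}}$.
\end{itemize}
With these two one-line replacements, your proof covers the paper's truncations.
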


\noindent Thus, it is enough to prepare the Gibbs state of either $H_{\operatorname{SF}}$ or $H_{\operatorname{MI}}$ at truncation level $M'=\mathcal{O}(n)$ in order to get an accurate approximation of the Gibbs state of $H_{\operatorname{BH}}$. In our next main result, we prove that at any truncation $M'$, the Gibbs samplers associated to either of the two truncations eventually converge to their respective fixed points.

\begin{theorem}\label{theoremfiniterankgap}
There exist filter functions $\widehat{f}$ such that, for any $n\in\mathbb{N}$, bare jumps $\{a_i,a_i^\dagger\}_{i\in[n]}$ and truncation $M'$, the generator $L_{\widehat{f},H_{\operatorname{SF}}}$ and $L_{\widehat{f},H_{\operatorname{MI}}}$ are gapped, i.e.
\begin{align*}
\operatorname{gap}(L_{\widehat{f},H_{\operatorname{SF}}})>0\qquad \text{ and }\qquad \operatorname{gap}(L_{\widehat{f},H_{\operatorname{MI}}})>0.
\end{align*}
\end{theorem}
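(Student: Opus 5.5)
The plan is to treat both generators as compact perturbations of an explicitly solvable reference generator. We then combine a Weyl-type stability argument for the essential spectrum with a separate argument excluding a zero eigenvalue beyond the stationary state. Throughout, we work in the KMS-weighted Hilbert space $L^2(\sigma_\beta(H))$, where $L_{\widehat f,H}$ is self-adjoint and nonpositive (Appendix \ref{sec:CVGS}). The gap is then $\inf \operatorname{Sp}(-L)\setminus\{0\}$, and it is positive if and only if (i) $0$ is an isolated point of the spectrum and (ii) $\ker L=\operatorname{span}\{\1\}$ after passing to the Heisenberg picture.

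\textbf{Step 1: reference models.} For $H_{\operatorname{SF}}$ the reference is the quadratic Hamiltonian $H_0=-J\sum(a_i^\dagger a_j+\mathrm{h.c.})+\eta\sum N_i$. It is diagonalised by a unitary mode transformation $b_k=\sum_i u_{ki}a_i$ with frequencies $\omega_k\ge \eta-2D|J|>0$. The dressed jumps built from $a_i,a_i^\dagger$ are then linear combinations of $b_k,b_k^\dagger$ weighted by $\widehat f(\mp\omega_k)$. The generator therefore factorises into a sum of commuting single-mode quantum Ornstein--Uhlenbeck generators. Its spectrum is explicit: it is discrete, with gap $\min_k \widehat f(\omega_k)^2$ up to KMS factors, which is positive for any $\widehat f$ with no zeros.

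For $H_{\operatorname{MI}}$ the reference is the number-diagonal $V$. Here $e^{itV}a_ie^{-itV}=a_i\,e^{-it\,\delta_i(N_i)}$ with $\delta_i(N)=U N+\text{const}$. The dressed jumps are $a_i\,\widehat f(\cdot)(N_i)$, and the generator reduces to a product of one-dimensional birth--death chains on $\N$. Their Dirichlet form can be compared to a reference chain via a Hardy/Muckenhoupt criterion. This is the place where the filter $\widehat f$ must be chosen, for instance of the form $\hatfM$. Its tails $\widehat f(\pm\nu)\sim e^{-\beta|\nu|/4\pm\beta\nu/4}$ ensure that the ratio of transition rates times Gibbs weights stays bounded away from zero even though the energy increments $U N$ grow. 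I expect to reuse the single-mode result from \cite{BeckerRouzeSalzmannToAppearcmp}, which states that suitable filters give a gap for number-diagonal Hamiltonians with superlinear growth.

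\textbf{Step 2: compactness of the perturbation.} We write $H=H_0+K$, where $K$ is finite rank: either $\Pi^b_{M'}V\Pi^b_{M'}$, or $\Pi^a_{M'}(\cdots)\Pi^a_{M'}$ (for MI we interpret $H_0=V$ and note that the projections commute with $N$). Since $K$ is bounded and finite rank, $e^{itH}Ae^{-itH}-e^{itH_0}Ae^{-itH_0}$ is controlled through Duhamel's formula. We then need to show that $L_{\widehat f,H}-U^{*}L_{\widehat f,H_0}U$ is relatively compact with respect to $L_{\widehat f,H_0}$. Here $U$ is the unitary identifying $L^2(\sigma_\beta(H))$ with $L^2(\sigma_\beta(H_0))$; since $\sigma_\beta(H)$ and $\sigma_\beta(H_0)$ differ only on a finite-dimensional block, the two weights are mutually comparable and $U$ is bounded with a bounded inverse. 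This is the finite-rank reduction mentioned in the abstract. The main obstacle lies here. The jumps $a_i$ are unbounded, so we must show that the difference of the dressed jumps has the form (finite rank)$\times$(bounded) on suitable domains. The filter's decay in $t$, namely $f\in L^1$ with moments, makes the Duhamel integral converge. Away from the range of the projections the dynamics agree up to a term involving $[K,A]$, which is finite rank because $A=a_i$ maps finite-particle subspaces into finite-particle subspaces. I would first establish this for the Dirichlet forms, and then invoke the form version of Weyl's theorem to conclude that $\operatorname{Sp}_{\mathrm{ess}}(L_{\widehat f,H})=\operatorname{Sp}_{\mathrm{ess}}(L_{\widehat f,H_0})$, which is bounded away from $0$ by Step 1.

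\textbf{Step 3: the kernel.} By Step 2, $0$ is at worst an isolated eigenvalue of finite multiplicity. For a self-adjoint KMS generator, $\mathcal E(X,X)=0$ forces $[L^\alpha,X]=0$ for all dressed jumps, and hence $[\widehat f\text{-components of } a_i,X]=0$. Because $\widehat f>0$ everywhere, the Bohr-frequency components of $a_i$ and $a_i^\dagger$ generate everything that the $a_i$ and $a_i^\dagger$ themselves generate, so their commutant is trivial (irreducibility of the CCR). Hence $X\propto\1$. Combined with isolation, this yields $\operatorname{gap}>0$ for every $n$ and $M'$, with no claim of uniformity in these parameters.
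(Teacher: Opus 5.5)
Your overall architecture matches the paper's: a solvable reference, a block structure $[H_0,\Pi]=0$ with $R=\Pi R\Pi$, a generator perturbation controlled by the finite block, and then simplicity of $0$. However, Step 2 breaks down for $H_{\operatorname{MI}}$, for two reasons.

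\textbf{First issue: the Duhamel argument needs an integrable filter.} You obtain the finite-rank structure of $L^\alpha(H)-L^\alpha(H_0)$ from Duhamel's formula. Since the remainder grows like $|t|$, this needs $\int|f(t)|\,|t|\,dt<\infty$. In Step 1 you rightly take $\hatfM$ for the Mott case. But $\hatfM(\nu)\to1$ as $\nu\to-\infty$, so by Riemann--Lebesgue $f\notin L^1$, and the jumps exist only in the Bohr-frequency representation. You cannot switch to an integrable filter there either. Filters whose transform decays at negative frequencies give gapless samplers for number-diagonal references; that is exactly why $\hatfM$ is used, and your Step 1 would fail.

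The reduction must instead be done in frequency space. $H$ commutes with $\Pi$ and coincides with $H_0$ on $\overline{\Pi}\mathcal H$, so their spectral projections agree there. Both conserve total particle number, so $A^\alpha\Pi\mathcal H$ and $(A^\alpha)^\dagger\Pi\mathcal H$ lie in a finite-dimensional invariant sector. Hence only finitely many terms of $\sum_{E,E'}\widehat f(E-E')e^{\pm\beta(E-E')/4}P_EA^\alpha P_{E'}$ change. No time decay of $f$ is needed; this is the paper's finite-rank stability lemma for conjugated jumps.

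\textbf{Second issue: a finite-rank change in the jumps does not make the generator difference compact.} On $\mathscr T_2(\mathcal H)\cong\mathcal H\otimes\overline{\mathcal H}$, the dissipative and $B$-parts of the difference produce one-sided maps $X\mapsto FX$ and $X\mapsto XF$ with $F$ of finite rank. These are $F\otimes I$, which is bounded but not compact: take $F=|e_0\rangle\langle e_0|$ and $X_k=|e_0\rangle\langle e_k|$. So the relative compactness your Weyl argument requires does not follow from a mere spectral gap of $L_{\widehat f,V}$. You need the reference to confine on both sides, i.e.\ to have compact resolvent.

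For $H_{\operatorname{SF}}$ the qOU reference has this property. For $H_{\operatorname{MI}}$ it is the missing idea, and it is where the quartic growth enters in the paper:
\begin{itemize}
\item with $\hatfM$, the conjugated reference jumps $a_i^\dagger g_+(N_i)$ and $a_ig_-(N_i)$ are compact;
\item $L_{\hatfM,V}$ equals $-\{\sum_iA_-(N_i),\cdot\}$ up to a bounded term, with $A_-(N_i)$ diagonal and diverging like $N_i$;
\item that anticommutator has spectrum given by sums of pairs of eigenvalues, hence compact resolvent.
\end{itemize}

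Once the reference has compact resolvent, the resolvent identity shows that any bounded perturbation (infinitesimally bounded, for SF) keeps the resolvent compact. So $L_{\widehat f,H}$ has discrete spectrum and Weyl's theorem is unnecessary. Your Step 3 then supplies simplicity of $0$ (the paper takes this from uniqueness of the fixed state), which gives the gap.
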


\noindent Since both models discussed are in  particular number preserving, the vacuum $\ket{0}$ is an eigenstate with eigenvalue $0$ in both cases. Therefore $\ketbra{0}{0}\le \mathfrak{c} \sigma_\beta(H)$ with $\mathfrak{c}\le \mathcal{Z}_{\beta}(H)=e^{\operatorname{poly}(n)}$ for $H\in\{H_{\operatorname{BH}},H_{\operatorname{SF}},H_{\operatorname{MI}}\}$. In view of the upper bound \eqref{mixingtimeintro}, choosing $\rho_{\operatorname{ini}}\equiv \ketbra{0}{0}$, it suffices to lower bound $\operatorname{gap}(L_{\widehat{f},H})$ in order to derive explicit bounds on the mixing time of the sampler.

\vspace{-0.7em}

\section{Finite-rank spectral gap under finite-rank perturbations}

\noindent
The proof of Theorem~\ref{theoremfiniterankgap} follows a general strategy that we summarize now: let a quantum system with Hilbert space $\mathcal{H}$ be described by a Hamiltonian
\[
H=H_0+R,
\]
where \(H_0\) is exactly solvable and \(R=\Pi R\Pi\) acts only on a finite-dimensional low-energy sector selected by an orthogonal projection \(\Pi\) commuting with \(H_0\). Writing \(\overline{\Pi}:=I-\Pi\), \(H_0\) and \(H\) are decomposed according to
\[
\mathcal H = \overline{\Pi}\mathcal H \oplus \Pi \mathcal H,
\]
and they coincide on \(\overline{\Pi}\mathcal H\). The two Hamiltonians differ only through a finite-dimensional block, and in particular, their spectral projections agree outside that block. This structural property is inherited by the dressed jump operators. Let \(A^\alpha\) be a possibly unbounded bare jump such that
$\Pi A^\alpha$ extends to a bounded operator on $\mathcal H$, and consider its associated decorated jump $L^\alpha(H)$. Denoting 
\[
L^\alpha_{\pm,H}:=e^{\pm \beta H/4}L^\alpha(H)e^{\mp \beta H/4},
\]
we show that $L^\alpha_{\pm,H}=L^\alpha_{\pm,H_0}+\mathcal R^\alpha_\pm$, where each remainder \(\mathcal R^\alpha_\pm\) is of finite rank and has vanishing \(\overline{\Pi}\)-\(\overline{\Pi}\) block.

% :
% \begin{proposition}\label{introPropdiffgeneratr} In the notation of the previous paragraph,
% \begin{align*}
% L_{\widehat f,H}-L_{\widehat f,H_0}&\!=\!-i\big((B_{+,H}\!-\!B_{+,0})\bullet-\bullet(B_{-,H}\!-\!B_{-,0})\big)\\
% &\qquad +\sum_{\alpha\in\mathcal A}\big(\mathfrak G_\alpha\!-\!\tfrac12\mathfrak A_\alpha-\tfrac12\mathfrak B_\alpha\big),
% \end{align*}
% where $B_{\pm,H}:=e^{\pm\beta H/4}B_H e^{\mp\beta H/4}$ and, given the spectral decomposition $H=\sum_{E\in\operatorname{Sp}(H)}EP_E$,
% \begin{align*}
% B_H&=\sum_{\smash{\alpha\in\mathcal{A}}}\sum_{\smash{\nu\in B(H)}}\frac{i}{2}\operatorname{tanh}\Big(\tfrac{\beta \nu}{4}\Big)\sum_{\smash{E-E'=\nu}}P_E|L^\alpha(H)|^2 P_{E'},\\
% \mathfrak G_\alpha&=\mathcal R_+^\alpha \bullet (L_{+,0}^\alpha)^\dagger+L_{+,0}^\alpha \bullet (\mathcal R_+^\alpha)^\dagger+\mathcal R_+^\alpha \bullet (\mathcal R_+^\alpha)^\dagger,\\
% \mathfrak A_\alpha&=\big((L_{-,H}^\alpha)^\dagger L_{+,H}^\alpha-(L_{-,0}^\alpha)^\dagger L_{+,0}^\alpha\big)\bullet,\\
% \mathfrak B_\alpha&=\bullet\big(L_{-,H}^\alpha (L_{+,H}^\alpha)^\dagger-L_{-,0}^\alpha (L_{+,0}^\alpha)^\dagger\big).
% \end{align*}
% \end{proposition}

\vspace{-0.5em}

\subsection{Perturbations of Gaussian models}\label{secGaussmodelsintro}

\noindent To illustrate our method we consider as example the simplest case of a single-mode system ($n=1$), choose the set of bare jumps as $\{a,a^\dagger\}\equiv \{A^-,A^+\}$,  $\Pi_{M'}:={P^a_{M'}}$ as the projection onto the first $M'+1$ single-mode Fock states, and consider the quadratic Hamiltonian $H_0\equiv N=a^\dagger a$. Using the canonical commutation relations, it was shown in \cite{OU} that the generator $L_{\widehat{f},H_0}$ is lower bounded by the so-called ladder-block operator 
\begin{align*}
L_{\operatorname{LB}}(\bullet )
:= \frac{1}{2}(\nu_+ - \nu_-)^2 (N \bullet  + \bullet N) \;-\; \nu_+(\nu_- - \nu_+)\, \bullet 
\end{align*}
where $\nu_+:=|\widehat{f}(1)|^2$ and $\nu_-:=|\widehat{f}(-1)|^2$. Therefore, 
\begin{align}\label{LLBeigenvalue}
\!\!\!\!\smash{L_{\operatorname{LB}}(\ketbra{n}{m})}\!=\!\frac{\smash{\nu_-\!\!-\!\nu_+}}{\smash{2}}\big((\smash{\nu_-\!\!\!-\!\nu_+})\smash{(n+m)\!-\!2\nu_+\!}\big)\smash{\ketbra{n}{m}}.
\end{align}
Moreover, we make use of the following:
\begin{lemma}\label{introlemmaGaussianperturb}
Let $
L' = L_{\widehat{f},N} + L_{\operatorname{pert}}$ be a perturbation of the generator $L_{\widehat{f},N}$ of the semigroup converging to the Gibbs states of the number operator $N$. If there exists some $\lambda\in\mathbb{C}$ such that $\Vert L_{\operatorname{pert}} (L_{\operatorname{LB}}-\lambda )^{-1} \Vert <1$, then $L'$ has a discrete spectrum and, hence, is gapped.
\end{lemma}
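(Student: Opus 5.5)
The plan is to show that $L'$ has compact resolvent, and hence purely discrete spectrum. Since $L'$ is self-adjoint in the KMS inner product, discreteness combined with the fact that $0$ is a simple eigenvalue gives the gap. All operators are viewed on the weighted Hilbert space $\mathcal{K}$ attached to $\langle\cdot,\cdot\rangle_{\sigma_\beta}$, which is unitarily equivalent via $X\mapsto \sigma_\beta^{-1/4}X\sigma_\beta^{-1/4}$ to the Hilbert--Schmidt class. In this picture both $L_{\widehat f,N}$ and $L'$ are self-adjoint, and I take the sign convention in which they are nonnegative.

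\textbf{Step 1 (the reference operator).} By \eqref{LLBeigenvalue}, $L_{\operatorname{LB}}$ is diagonal in the basis $\{\ketbra{n}{m}\}$, which is orthogonal for the weighted inner product because $\sigma_\beta(N)$ is diagonal. Its eigenvalues $\frac{\nu_--\nu_+}{2}\big((\nu_--\nu_+)(n+m)-2\nu_+\big)$ grow linearly in $n+m$ provided $\nu_-\neq\nu_+$, which the KMS condition forces since $\nu_+=e^{-\beta}\nu_-$. Each eigenvalue also has finite multiplicity. Hence $(L_{\operatorname{LB}}-\lambda)^{-1}$ is compact for every $\lambda\notin\spec(L_{\operatorname{LB}})$.

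\textbf{Step 2 (compactness transfers to $L_{\widehat f,N}$).} We use the comparison $L_{\widehat f,N}\ge L_{\operatorname{LB}}$ from \cite{OU}, understood in the quadratic-form sense. It yields form domination, so after shifting both operators to be strictly positive, the min--max principle gives $\mu_k(L_{\widehat f,N})\ge \mu_k(L_{\operatorname{LB}})\to\infty$. Therefore $L_{\widehat f,N}$ also has compact resolvent.

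\textbf{Step 3 (the Neumann series argument).} This step needs care because the hypothesis involves $L_{\operatorname{LB}}$ rather than $L_{\widehat f,N}$. I would write
\[
L'-\lambda=(L_{\operatorname{LB}}-\lambda)+(L_{\widehat f,N}-L_{\operatorname{LB}})+L_{\operatorname{pert}}.
\]
The cleanest route is to read the hypothesis as relative boundedness of $L_{\operatorname{pert}}$ with respect to the reference generator with bound $<1$. By Kato--Rellich, $L'$ is then self-adjoint on the domain of that generator. We then have
\[
L'-\lambda=\big(I+L_{\operatorname{pert}}(L_0-\lambda)^{-1}\big)(L_0-\lambda),
\]
and the first factor is invertible by the Neumann series. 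This gives
\[
(L'-\lambda)^{-1}=(L_0-\lambda)^{-1}\big(I+L_{\operatorname{pert}}(L_0-\lambda)^{-1}\big)^{-1},
\]
a compact operator composed with a bounded one, hence compact. If the hypothesis is literally stated for $L_{\operatorname{LB}}$, I would first control $L_{\operatorname{pert}}(L_{\widehat f,N}-\lambda)^{-1}$ by $L_{\operatorname{pert}}(L_{\operatorname{LB}}-\lambda)^{-1}$. One way is to choose $\lambda$ real and very negative and use the operator inequality $(L_{\widehat f,N}-\lambda)^{-1}\le (L_{\operatorname{LB}}-\lambda)^{-1}$, which follows from form domination. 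The other is to argue directly that the compactness of $(L'-\lambda)^{-1}$ can be reached through $L_{\operatorname{LB}}$ by a resolvent identity.

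\textbf{Step 4 (conclusion).} Compact resolvent implies $\spec(L')$ is discrete, consisting of isolated eigenvalues of finite multiplicity that accumulate only at infinity. Since $L'$ still annihilates its invariant state and is nonnegative and self-adjoint, $0$ is isolated in the spectrum. If the kernel is one-dimensional (primitivity, inherited from the jumps $a,a^\dagger$), then $\operatorname{gap}(L')$ is the smallest nonzero eigenvalue, which is strictly positive.

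The main obstacle is Step 3. The issue is passing from the hypothesis stated against $L_{\operatorname{LB}}$ to control of $L'$, whose unperturbed part is $L_{\widehat f,N}$ and not $L_{\operatorname{LB}}$. A quadratic-form inequality does not directly give operator-norm resolvent bounds for products $L_{\operatorname{pert}}R$, so the domain and ordering issues must be handled carefully, for instance through square roots $(L_{\operatorname{LB}}-\lambda)^{1/2}$ and the KLMN theorem.
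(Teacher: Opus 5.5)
Your overall architecture (compact resolvent, hence discrete spectrum, hence a gap once $0$ is simple) is the one the paper has in mind, cf.\ the sketch for Theorem~\ref{thm:gap-persistence}. However, Step~3, which is where the lemma has its content, is not proved.

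The factorization $L'-\lambda=\bigl(I+L_{\operatorname{pert}}(L_0-\lambda)^{-1}\bigr)(L_0-\lambda)$ needs $L_0=L_{\widehat f,N}$. The hypothesis, though, is a bound relative to $L_{\operatorname{LB}}$, and $L'\neq L_{\operatorname{LB}}+L_{\operatorname{pert}}$. So ``reading the hypothesis'' relative to $L_{\widehat f,N}$ changes the statement.

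Your proposed bridge via resolvent ordering fails. From $0\le(L_{\widehat f,N}-\lambda)^{-1}\le(L_{\operatorname{LB}}-\lambda)^{-1}$ you cannot get $\|L_{\operatorname{pert}}(L_{\widehat f,N}-\lambda)^{-1}\|\le\|L_{\operatorname{pert}}(L_{\operatorname{LB}}-\lambda)^{-1}\|$, because $0\le R_1\le R_2$ does not imply $R_1^2\le R_2^2$. Only $\|TR_1^{1/2}\|\le\|TR_2^{1/2}\|$ survives. In other words, form bounds are monotone under form domination, but operator bounds are not.

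Domain equivalence does not rescue this either. It is true that $D(L_{\widehat f,N})=D(L_{\operatorname{LB}})$ here: the gain term $\sqrt{\nu_+\nu_-}\,(a\,\cdot\,a^\dagger+a^\dagger\cdot\,a)$ is $\{N,\cdot\}$-bounded with relative bound $2\sqrt{\nu_+\nu_-}/(\nu_++\nu_-)<1$. But the graph-norm equivalence constant can push the product above $1$. That route only works for relative bound $0$, which is what the paper uses in its application (``for $\lambda$ large enough''). It does not cover a bound that is merely $<1$.

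The missing idea is to drop the Neumann series and argue with quadratic forms, which is where your remark about square roots and KLMN should lead.
\begin{itemize}
\item First pass to a real $\lambda_0$ below $\spec(L_{\operatorname{LB}})$. One has $\|(L_{\operatorname{LB}}-\lambda)(L_{\operatorname{LB}}-\lambda_0)^{-1}\|=\sup_k|\kappa_k-\lambda|/(\kappa_k-\lambda_0)\le 1$ once $\lambda_0$ is negative enough. Hence $\|L_{\operatorname{pert}}(L_{\operatorname{LB}}-\lambda_0)^{-1}\|\le a<1$.
\item Next use that $L_{\operatorname{pert}}$ is symmetric: it is a difference of two generators that are self-adjoint on $\mathscr T_2$, which you already use in Step~4. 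The Heinz (L\"owner--Heinz) inequality then turns the operator bound into the form bound $|\langle x,L_{\operatorname{pert}}x\rangle|\le a\,\langle x,(L_{\operatorname{LB}}-\lambda_0)x\rangle$.
\item Combine this with $L_{\widehat f,N}\ge L_{\operatorname{LB}}$ (in your sign convention) to get $\langle x,L'x\rangle\ge(1-a)\langle x,L_{\operatorname{LB}}x\rangle+a\lambda_0\|x\|_2^2$.
\item KLMN, using $L_{\operatorname{LB}}\le L_{\widehat f,N}$ once more, realizes $L'$ as the operator of a closed form on the form domain of $L_{\widehat f,N}$.
\item Min--max then gives $\mu_k(L')\ge(1-a)\mu_k(L_{\operatorname{LB}})+a\lambda_0\to\infty$, i.e.\ $L'$ has compact resolvent.
\end{itemize}
This is exactly your Step~2 argument, applied to $L'$ instead of $L_{\widehat f,N}$. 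With it in place, Steps~1 and~4 finish the proof.
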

\noindent  Then, we consider $H=N+V$ for a perturbation of the form $V=\Pi_{M'} h(N)\Pi_{M'}=\sum_{k\le M'}h(k)\ketbra{k}{k}$, for some function $h$ of the number operator $N$. Given $\omega_n:=1+h(n)1_{n\le M'}-h(n-1)1_{n-1\le M'}$, $\alpha_n:=\widehat f(-\omega_n)e^{-\beta\omega_n/4}$, $\beta_n:=\widehat f(\omega_{n+1})e^{\beta\omega_{n+1}/4}$,  $\gamma:=\sqrt{\nu_+\nu_-}$, and $\delta_n
:=n\bigl(|\alpha_n|^2-\nu_-\bigr)+(n+1)\bigl(|\beta_n|^2-\nu_+\bigr)$, we can express
\begin{align}
&\bigl(L_{\widehat f,N}-L_{\widehat f,H}\bigr)(\ketbra{n}{m})\nonumber\\
&\qquad \nonumber=
\sqrt{nm}\,\Bigl(\gamma-\alpha_n\overline{\alpha_m}\Bigr)\ketbra{n-1}{m-1}\\
&\nonumber\qquad
+\sqrt{(n+1)(m+1)}\,\Bigl(\gamma-\beta_n\overline{\beta_m}\Bigr)\ketbra{n+1}{m+1}\\
&\qquad
+\frac12(\delta_n+\delta_m)\ketbra{n}{m}.\label{LNHdiffgauss}
\end{align}

\vspace{-0.2em}

\noindent Moreover, since \(\omega_n=1\) for all \(n\ge M'+2\), one has
\[
\alpha_n=\widehat f(-1)e^{-\beta/4}=\widehat{f}(1)e^{\beta/4}=\beta_n,\qquad
\delta_n=0.
\]
Therefore \((L_{\widehat f,N}-L_{\widehat f,H})(\ketbra{n}{m})\ne 0\) only if $n\le M'+1$ or $m\le M'+1$. Combining \eqref{LNHdiffgauss} with the eigenvalue decomposition \eqref{LLBeigenvalue} and applying Lemma  \ref{introlemmaGaussianperturb} for $\lambda$ large enough, directly leads us to the conclusion that $\operatorname{gap}(L_{\widehat{f},H})>0$.

\smallskip

For the superfluid truncation \(H_{\operatorname{SF}}\), the reference Hamiltonian is the quadratic hopping model
\[
H_0
=
-J\sum_{\langle i,j\rangle}(a_i^\dagger a_j+\mathrm{h.c.})
+\eta\sum_i N_i.
\]
After diagonalizing its one-particle hopping matrix, \(H_0\) becomes a sum of independent bosonic modes,
\[
H_0=\sum_k \varepsilon_k\, b_k^\dagger b_k
\]
for some normal modes $b_k=\sum_x \phi_k(x)\,a_x$ with orthonormal single particle eigenfunctions
$\phi_k(x)=\smash{\big(\frac{2}{L+1}\big)^{D/2}\prod_{\mu=1}^D \sin(k_\mu x_\mu)}$ and single-particle energies $\epsilon_k=\eta-2J\sum_{\mu} \cos k_\mu$. We get immediately
\[
V=
\sum_k \Bigl(-\eta'+\frac{U}{2}\Bigr)b_k^\dagger b_k
+\frac{U}{2}\sum_{k,q,r,s}\Lambda_{kqrs}\,b_k^\dagger b_q^\dagger b_r b_s,
\]
where $\Lambda_{kqrs}
=
\sum_x \phi_k(x)\phi_q(x)\phi_r(x)\phi_s(x)$. Choosing $\Pi'_{M'}=(P^b_{M'})^{\otimes n}$ the $n$-fold product projection on to the first $M'+1$ Fock states associated to the mode operators $\{b,b^\dagger\}$, we can extend the previous analysis to the present $n$-mode setting.

\vspace{-0.5em}

\subsection{Perturbation of number-diagonal models}

\noindent For the Mott-insulating truncation \(H_{\operatorname{MI}}\), the reference model is instead the number diagonal Hamiltonian
\[
V=\sum_i h(N_i),
\qquad
h(n)=\frac{U}{2}n^2-\eta'n.
\]
For such models, we prove in the companion paper \cite{BeckerRouzeSalzmannToAppearcmp} that filter functions whose Fourier transform decays for negative frequencies inevitably lead to gapless samplers. Instead, we consider a Metropolis-type filter function already introduced in \cite{ding2025efficient}:
\begin{align}
\label{eq:MetropolisFilter}
    \widehat f_{\mathscr M}(\nu) = \exp\left(-\frac{\sqrt{1+(\beta\nu)^2}+ \beta\nu}{4}\right)
\end{align}
so the jumps are well defined only in the frequency domain: given the set of Bohr frequencies $B(H)=\operatorname{Sp}(H)-\operatorname{Sp}(H)$ of $H:=\sum_{E\in\operatorname{Sp}(H)}EP_E$,
\begin{align*}
L^\alpha(H)=\sum_{\nu\in B(H)}\widehat{f}_{\mathscr M}(\nu)\sum_{E-E'=\nu}P_EA^\alpha P_{E'}.
\end{align*}
In \cite[Theorem 3.5]{BeckerRouzeSalzmannToAppearcmp}, we show that, by choosing the set of bare jumps as $\{a_i,a_i^\dagger\}_{i\in[n]}$, the corresponding sampler $L_{\widehat{f}_{\mathscr M},V}$ is gapped. Here, we show that the system remains gapped when adding the truncated quadratic part $H_{\operatorname{MI}}-V$, with $\Pi_{M'}=(P^a_{M'})^{\otimes n}$, where $P^a_{M'}$ denotes the projection onto the first $M'+1$ Fock states corresponding to the mode operators $\{a,a^\dagger\}$. First, in Lemma \ref{lem:conjugated-hN-compact}, we show that  \begin{align}\label{decompositionLfHintro}
L_{\widehat{f}_{\mathscr M},h(N_i)}=-\{A(N_i),\bullet\}+K_i
\end{align}
where $K_i$ is compact on $\mathscr{T}_2(\mathcal{H})$ and \(A(N_i)\) is diagonal in the Fock basis and its eigenvalues diverge at large occupation. Note that the decomposition \eqref{decompositionLfHintro} is enforced by the quartic nature of $h$, and does not hold for the Gaussian models of Section \ref{secGaussmodelsintro}. Since the anticommutator map above has discrete spectrum given by sums of pairs  eigenvalues of $A(N_i)$ (cf.~Corollary \ref{lemm:spec}), we have that its resolvent $(\{A(N_i),\bullet\}-i)^{\smash{-1}}$ is compact on $\mathscr{T}_2(\mathcal{H})$. By the second resolvent identity (see Lemma \ref{corr:perturbation2}), we directly get that the resolvent $(L_{\widehat{f}_{\mathscr M},h(N_i)}-i)^{\smash{-1}}$ is compact, which implies that $L_{\widehat{f}_{\mathscr M},h(N_i)}$ has discrete spectrum and is thus gapped. Finally, in Corollary \ref{corr:perturbation3}, we argue that the 
difference $K':=L_{\widehat{f}_{\mathscr{M}},H_{\operatorname{MI}}}-L_{\widehat{f}_{\mathscr{M}},V}$ is compact, so that the above argument extends to 
\begin{align*}
L_{\widehat{f}_{\mathscr{M}},H_{\operatorname{MI}}}=\sum_i\big(-\{A(N_i),\bullet\}+K_i\big)+K',
\end{align*}
 which proves $\operatorname{gap}(L_{\widehat{f}_{\mathscr{M}},H_{\operatorname{MI}}})>0$.

\vspace{-0.8em}

\section{End-to-end runtimes \\for physical properties}
\label{sec:EndToEnd}
\noindent The above findings open the door to rigorous runtime analysis of quantum algorithms based on implementing our constructed Gibbs samplers generated for the estimation of thermal properties of bosonic quantum systems. In Appendix~\ref{sec:CircuitImplementationBH}, we combine Theorem~\ref{theoremfiniterankgap} with \cite[Corollary 4.33]{BeckerRouzeSalzmannToAppearcmp} to show that Gibbs states of Bose-Hubbard models can be prepared efficiently on a qubit-based quantum computer:
\begin{theorem}
Let $M'=\operatorname{poly}(n,\log(1/\eps)).$ Then the Gibbs state $\sigma_\beta(H^{(M')}_{\operatorname{SF}})$ can be prepared within $\eps$-trace distance on a quantum computer with $\mathcal{O}\left(n\log n\,\log\log(1/(\lambda_2\eps))\right)$ many qubits and circuit depth
$
    \widetilde{ \mathcal{O}}\left(\frac{1}{\lambda_2}\operatorname{poly}\left(n, \log(1/\eps)\right)\right).$
Here, we denoted $\lambda_2\equiv \operatorname{gap}(L_{\widehat{f},H^{(M')}_{\operatorname{SF}}})>0.$ Hence, using Lemma~\ref{lem:GibbsStatesClose} and $M' = \Omega(n +\log(1/\eps)),$ this provides preparation of $\sigma_\beta(H_{\operatorname{BH}})$ with the same complexities.
\end{theorem}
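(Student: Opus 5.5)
The plan is to specialize the finite-dimensional approximation scheme of \cite[Corollary 4.33]{BeckerRouzeSalzmannToAppearcmp} to $H=H^{(M')}_{\operatorname{SF}}$. We use the gap $\lambda_2>0$ from Theorem~\ref{theoremfiniterankgap} and close with Lemma~\ref{lem:GibbsStatesClose}. Concretely, we simulate the Lindbladian $L_{\widehat f,H^{(M')}_{\operatorname{SF}}}$ with bare jumps $\{a_i,a_i^\dagger\}$, started from $\rho_{\operatorname{ini}}=\ketbra{0}{0}$. By \eqref{mixingtimeintro} and $\mathfrak c\le \mathcal Z_\beta(H)=e^{\operatorname{poly}(n)}$, it suffices to evolve up to time $T=\mathcal O\big((\operatorname{poly}(n)+\log(1/\eps))/\lambda_2\big)$ to reach accuracy $\eps/3$.

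\textbf{Step 1: truncation of the simulated dynamics.} We replace each mode by a Fock cutoff $M$, i.e.\ we project onto $(P^a_M)^{\otimes n}$, with $M\ge M'$. The truncated generator must approximate the true one on the relevant trajectory. Because $H_{\operatorname{SF}}$ is number preserving and the dynamics starts in the vacuum, the key input is an a priori moment bound: $\Tr\big(e^{tL}(\rho_{\operatorname{ini}})\,\textstyle\sum_i N_i^k\big)$ should be bounded uniformly in $t$ by $\operatorname{poly}(n)^k$. We obtain it from a Lyapunov-type estimate $L^\dagger(N_{\operatorname{tot}})\le -c N_{\operatorname{tot}}+C n$. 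This estimate should follow from $\eta-2D|J|>0$ together with the detailed-balance structure, because above the finite-rank block the model is quadratic with positive single-particle energies $\varepsilon_k$. The exponentially small Gibbs tails of $\sigma_\beta(H_{\operatorname{SF}})$ then show that $M=\operatorname{poly}(n,\log(1/\eps))$ suffices, up to a leakage error of $\eps/3$ accumulated over time $T$. This is exactly the hypothesis structure of \cite[Corollary 4.33]{BeckerRouzeSalzmannToAppearcmp}, so we only need to verify its assumptions: the moment bound, polynomial growth of $\|H P_M\|$, and the filter regularity. Each mode is then encoded in $\lceil\log_2(M+1)\rceil=\mathcal O(\log n+\log\log(1/(\lambda_2\eps)))$ qubits, and the ancilla registers for the energy estimation account for the remaining polylogarithmic factors, which gives the stated $\mathcal O(n\log n\,\log\log(1/(\lambda_2\eps)))$ qubit count.

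\textbf{Step 2: circuit implementation.} On the truncated space, $\|H\|\le\operatorname{poly}(n,M)$, since the hopping is $\mathcal O(nDJ M)$ and the projected quartic term is $\mathcal O(nUM'^2)$. Hamiltonian simulation of $e^{itH}$ therefore costs $\operatorname{poly}(n,\log(1/\eps))$ per unit time. The dressed jumps \eqref{eq:IntegralJump} are block-encoded via a discretized, truncated-in-time filter $f$, and the Lindbladian simulation algorithm of \cite{ding2025efficient,BeckerRouzeSalzmannToAppearcmp} runs up to time $T$ with error $\eps/3$. This yields depth $\widetilde{\mathcal O}(T\operatorname{poly}(n,\log(1/\eps)))=\widetilde{\mathcal O}(\lambda_2^{-1}\operatorname{poly}(n,\log(1/\eps)))$. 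The three error contributions, from mixing, truncation, and implementation, sum to $\eps$.

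\textbf{Step 3: transfer to $H_{\operatorname{BH}}$.} Choose $M'=\Omega(n+\log(1/\eps))$ so that Lemma~\ref{lem:GibbsStatesClose} gives $\|\sigma_\beta(H_{\operatorname{BH}})-\sigma_\beta(H_{\operatorname{SF}})\|_1\le\eps$. The triangle inequality then gives the claim with the same complexities up to constants.

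The main obstacle is Step 1: establishing the uniform-in-time moment bound for the infinite-dimensional dynamics. The bare jumps $a_i^\dagger$ are unbounded, and the dressing by $H_{\operatorname{SF}}$ mixes the finite-rank block with the quadratic exterior. The first approach would be to decompose $L_{\widehat f,H_{\operatorname{SF}}}=L_{\widehat f,H_0}+(\text{finite rank, as in the finite-rank perturbation section})$. One would then prove the Lyapunov drift for the Gaussian part $H_0=\sum_k\varepsilon_k b_k^\dagger b_k$ explicitly mode by mode, using the ladder-block bound $L_{\operatorname{LB}}$, and absorb the finite-rank remainder, which is bounded on the support of $\Pi'_{M'}$, into the constant $Cn$.
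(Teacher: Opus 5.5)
Your top-level architecture matches the paper: black-box use of \cite[Corollary 4.33]{BeckerRouzeSalzmannToAppearcmp}, a vacuum start, and transfer via Lemma~\ref{lem:GibbsStatesClose}. The gap is in Step~1, which you yourself call the main obstacle. It is left unproven, and it is also not what is needed. You try to control the moments of $e^{t\mathcal L}(\rho_{\operatorname{ini}})$ uniformly in $t$ through a drift inequality $L^\dagger(\Ntot)\le -c\Ntot+Cn$, which you only assert ``should follow''.

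The missing idea is that no dynamical estimate is required. The semigroup $e^{t\mathcal L}$ is positive and fixes $\sigma_\beta$. Hence $\rho_{\operatorname{ini}}=\kb{0}^{\otimes n}\le\mathfrak c\,\sigma_\beta$ gives $e^{t\mathcal L}(\rho_{\operatorname{ini}})\le\mathfrak c\,\sigma_\beta$ for all $t\ge0$, so every trajectory moment is at most $\mathfrak c$ times the corresponding stationary one. Accordingly, the only state-dependent inputs of Corollary 4.33 are the two scalars $\mathfrak c$ and $E_{\operatorname{Gibbs}}=\Tr(e^{4\Ntot^\kappa}\sigma_\beta(H_{\operatorname{SF}}))$. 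The paper bounds both by $\exp(\operatorname{poly}(n,M'))$ with elementary arguments (Lemmas~\ref{lem:VacuumBoundSF} and~\ref{lem:EGibbsSF}):
\begin{itemize}
\item $H_{\operatorname{SF}}\ket{0}^{\otimes n}=0$ gives $\mathfrak c\le\mathcal Z_\beta$.
\item Min-max with $\|V'\|\le\frac U2(1+|\eta'|)(nM')^2$ and $\varepsilon_k\ge\eta-2D|J|$ bounds $\mathcal Z_\beta$.
\item $[H_{\operatorname{SF}},\Ntot]=0$, together with $4x^\kappa\le tx+C_\kappa t^{-\kappa/(1-\kappa)}$ at $t=\beta(\eta-2D|J|)/2$, reduces $E_{\operatorname{Gibbs}}$ to a ratio of Gaussian partition functions.
\end{itemize}
Then $\widetilde M=\operatorname{poly}(\log(\mathfrak c E_{\operatorname{Gibbs}}n/(\lambda_2\eps)))$ directly yields the $\mathcal O(n\log\widetilde M)$ qubit count.

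Even on its own terms, your Step~1 is too weak. A drift inequality for $\Ntot$ itself only controls the first moment. Fixed-order moments only give a cutoff polynomial in $1/\eps$, i.e.\ $\Theta(\log(1/\eps))$ rather than $\mathcal O(\log n+\log\log(1/(\lambda_2\eps)))$ qubits per mode. The framework instead works with stretched-exponential weights $e^{\Ntot^\kappa}$. Nor can a bound $\operatorname{poly}(n)^k$ hold uniformly in $k$, since a single thermal mode already has $k$-th moments of order $k!$.

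You also list the wrong hypotheses. Besides $\|(H_{\operatorname{SF}})_{\le\widetilde M}\|\le\operatorname{poly}$, the paper verifies:
\begin{itemize}
\item the weighted jump bounds \eqref{eq:2CondTruncJumps};
\item the truncation estimate $\|(H_{\operatorname{SF}}-(H_{\operatorname{SF}})_{\le\widetilde M})e^{-\Ntot^\kappa}\|\lesssim\operatorname{poly}(n,M',\widetilde M)e^{-\widetilde M^\kappa}$;
\item the compatibility condition \eqref{eq:ExpHamBound}, which holds with $r=0$ precisely because $H_{\operatorname{SF}}$ commutes with $\Ntot$.
\end{itemize}

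Finally, the circuit implements $\mathcal L_{\sigma_E,\widehat f,H_{\operatorname{SF}}}$ with finite $\sigma_E$, not $L_{\widehat f,H}$ with a time-truncated filter. To turn $\lambda_2$ into a mixing time for the dynamics actually run, the paper uses $\operatorname{gap}(L_{\sigma_E,\widehat f,H_{\operatorname{SF}}})\ge\lambda_2$ from \cite[Proposition 4.2]{BeckerRouzeSalzmannToAppearcmp}. Your proposal omits this step.
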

Note that, by using the filter function $\hatfM$ together with \cite[Corollary 4.35]{BeckerRouzeSalzmannToAppearcmp}, one obtains analogous results for the preparation of the Gibbs states associated with $H_{\operatorname{MF}}$ and $H_{\operatorname{MI}}$. 

The above Gibbs-state preparation results enable the estimation of various thermal observables, including local densities, compressibility, and two-point correlation functions. We illustrate this fact with the problem of estimating the free energy of the Bose--Hubbard model at a given inverse temperature $\beta$ (see 
Appendix \ref{appendixforfreeenergy} for details): where, the goal is to estimate the free energy at $\beta>0$:
\begin{align*}
 F(\beta,H_{\operatorname{BH}}):=-\beta^{-1}\log\Tr(e^{-\beta H_{\operatorname{BH}}}).
\end{align*}
For this, we consider the path $H(s):=H_0+s  V $, with $H_0:=-J \sum_{\langle i,j\rangle} a_i^{\dagger}a_j  + \operatorname{h.c.}$, so that $H(0)=H_0$ and $H(1)=H_{\operatorname{BH}}$. A standard use of the fundamental theorem of calculus yields 
\begin{align*}
F(\beta,H_{\operatorname{BH}})=F(\beta,H_{0})
+\int_0^1\Tr\Big(\sigma_\beta(H(s))V\Big) ds.
\end{align*}
where we denote the Gibbs state of $H$ at inverse temperature $\beta$ by $\sigma_\beta(H)$. Therefore, since $H_0$ is quadratic, its free energy is known and it hence suffices to estimate the average of the potential $V:=H_1-H_0$ in the Gibbs state at different values of $s$ in order to obtain a good estimate of $F(\beta,H_{\operatorname{BH}})$. A truncation of the interaction term $V$ in both the Gibbs states $\sigma_\beta(H(s))$ and in the trace, combined with the discretization of the integral over $[0,1]$ and use of our sampler to prepare the appropriate Gibbs states along the discretized path, leads to the following:

\begin{theorem}
The  free energy $ F(\beta,H_{\operatorname{BH}})$ can be estimated with accuracy $\eps>0$ and probability of failure bounded by $\delta>0$ on a quantum computer with $\widetilde{\mathcal{O}}\left(n\log n\,\log\log(1/(\lambda^{\min}_{2}\eps))\right)$ many qubits with total runtime of order 
\begin{align*}
    \widetilde{ \mathcal{O}}\left(\frac{1}{\lambda^{\min}_{2}\eps^3}\log\left(1/\delta\right)\operatorname{poly}\left(n\right)\right)
\end{align*}
where $\lambda^{\min}_{2}\equiv \min_{s\in[0,1]} \operatorname{gap}(L_{\widehat f,H^{(M')}_{\operatorname{SF}}(s)})>0$ and $H^{(M')}_{\operatorname{SF}}(s):=H_0+s\Pi^b_{M'}V \Pi^b_{M'}$ for $M'= \Omega(n + \log(1/\eps)).$
\end{theorem}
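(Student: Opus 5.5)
We prove the free-energy theorem by thermodynamic integration along the path $s\mapsto H(s)$. The argument has three error sources (truncation, quadrature, sampling/Gibbs-preparation), and each gets a budget of order $\eps$.

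\textbf{Step 1: truncated path and exact identity.} Replace $H(s)$ by $H^{(M')}_{\operatorname{SF}}(s)=H_0+s\Pi^b_{M'}V\Pi^b_{M'}$. Differentiating $-\beta^{-1}\log\Tr e^{-\beta H(s)}$ in $s$ gives the Duhamel identity
\[
\partial_s F=\Tr\big(\sigma_\beta(H^{(M')}_{\operatorname{SF}}(s))\,\Pi^b_{M'}V\Pi^b_{M'}\big).
\]
Here $\Pi^b_{M'}V\Pi^b_{M'}$ is bounded, and it commutes with the spectral structure of $H_0$ in the $b$-modes up to finite rank, so differentiation under the trace is justified by trace-class bounds on $e^{-\beta H}$. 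The base point $F(\beta,H_0)$ is computed classically from the one-particle energies $\varepsilon_k$ via $\beta^{-1}\sum_k\log(1-e^{-\beta\varepsilon_k})$. Note that this formula only converges if we keep the $\eta\sum_i N_i$ term in $H_0$, so I will include it, with $\eta-2D|J|>0$, and keep only $\frac{U}{2}\sum_i N_i^2-\eta'N_i$ in $V$. The truncated free energy at $s=1$ differs from $F(\beta,H_{\operatorname{BH}})$ by $O(\eps)$. To see this, I will rerun the argument of Lemma~\ref{lem:GibbsStatesClose} at the level of partition functions: $|\log\mathcal Z(H_{\operatorname{BH}})-\log\mathcal Z(H_{\operatorname{SF}})|$ is controlled by the Gibbs weight outside $\Pi^b_{M'}$. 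Since $V\ge -cn$, that weight is $e^{-\Omega(M')}$, which gives the choice $M'=\Omega(n+\log(1/\eps))$.

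\textbf{Step 2: quadrature.} Discretize $[0,1]$ into $K$ points using the trapezoid or midpoint rule. The integrand $g(s)=\Tr(\sigma_\beta(H(s))V_{M'})$ has derivative $-\beta\,\mathrm{Cov}_s(V_{M'},V_{M'})$, which is at most $\beta\|V_{M'}\|^2=\beta\,\mathrm{poly}(n,M')$ in absolute value. Hence $K=O(\mathrm{poly}(n)/\eps)$ points give error $\eps/3$. A second-derivative bound would improve this, but the stated $\eps^{-3}$ suggests the crude first-order rule is what is intended.

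\textbf{Step 3: estimation at each node.} At each $s_k$, prepare $\sigma_\beta(H^{(M')}_{\operatorname{SF}}(s_k))$ to trace distance $\eps/(3\|V_{M'}\|)$ using the preparation theorem above. That theorem applies for every $s$, since $s\Pi^b_{M'}V\Pi^b_{M'}$ is of the same form, and the gap is bounded below by $\lambda_2^{\min}>0$ via Theorem~\ref{theoremfiniterankgap}. The bounded observable $V_{M'}$ is diagonal in the Fock basis of the $a$-modes, or can be measured after a $b\to a$ basis change. Estimate it by repeated measurement: Hoeffding's inequality needs $O(\|V_{M'}\|^2\eps^{-2}\log(K/\delta))$ samples per node, with a union bound over the nodes. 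The total runtime is then
\[
K\times\text{samples}\times\text{prep cost}=\widetilde{\mathcal O}\big(\eps^{-1}\cdot\eps^{-2}\log(1/\delta)\cdot(\lambda_2^{\min})^{-1}\mathrm{poly}(n)\big),
\]
which matches the claim. A sharper alternative is to estimate the average over nodes by drawing random $s$, so that total samples are $\eps^{-2}$; either way the bound holds. The qubit count is inherited from the preparation theorem.

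\textbf{Main obstacle.} The main obstacle is the uniform control in $s$. The gap bound must be replaced by the minimum over $s$, which is $\lambda_2^{\min}$. That minimum is only positive and finite if the gap is lower semicontinuous in $s$ on the compact interval $[0,1]$. I would argue this from the construction in Theorem~\ref{theoremfiniterankgap}: the perturbation is finite rank and depends continuously on $s$, so the discrete spectrum near $0$ moves continuously, and $0$ stays a simple eigenvalue because the sampler is primitive for each $s$. The polynomial bounds on $\|V_{M'}\|$, and on the initial constant $\mathfrak c\le\mathcal Z_\beta$ uniformly in $s$, are routine once $M'=\mathcal O(n+\log(1/\eps))$.
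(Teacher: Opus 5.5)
\textbf{Gap in the measurement step.} The step that fails is the measurement in Step~3. The operator $V_{M'}=\Pi^b_{M'}V\Pi^b_{M'}$ is not diagonal in the $a$-Fock basis. A $b\to a$ mode transformation does not fix this either: it sends $\Pi^b_{M'}$ to $\Pi^a_{M'}$, but it turns $V$ into the quartic operator $\frac U2\sum_{k,q,r,s}\Lambda_{kqrs}(\cdots)$, which moves particles between modes.

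For instance, take two sites, $b_{1,2}=(a_1\pm a_2)/\sqrt2$ and $M'=1$. The restriction of $\Pi^b_{1}$ to the two-particle sector is the projection onto $(\ket{2,0}_a-\ket{0,2}_a)/\sqrt2$. So $\Pi^b_1V\Pi^b_1$ has a nonzero matrix element between $\ket{2,0}_a$ and $\ket{0,2}_a$ for generic $\eta'$. Computational-basis sampling therefore does not estimate your integrand $\Tr(\sigma_s V_{M'})$, where $\sigma_s:=\sigma_\beta(H^{(M')}_{\operatorname{SF}}(s))$. You have also not supplied any way to measure in the eigenbasis of $V_{M'}$.

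The missing ingredient is an observable-truncation estimate that is decoupled from the Hamiltonian truncation. Measure instead the $a$-diagonal operator $H_M=(P^a_M)^{\otimes n}V(P^a_M)^{\otimes n}$, and show $\sup_{s}|\Tr(\sigma_s(V_{M'}-H_M))|\le\eps$ for $M=\Omega(n+\log(1/\eps))$. This goes as in Lemma~\ref{lemtruncatingobservablesBH}:
\begin{itemize}
\item Because $\sigma_s$ commutes with $\Pi^b_{M'}$ and with $N$, the cross terms drop, giving $\Tr(\sigma_s(V-V_{M'}))=\Tr(\sigma_s\bar\Pi^b_{M'}V\bar\Pi^b_{M'})$ with $\bar\Pi^b_{M'}:=\mathbf 1-\Pi^b_{M'}\le\mathbf 1_{\{N>M'\}}$.
\item This term and $\Tr(\sigma_s(V-H_M))$ are then bounded by Cauchy--Schwarz.
\item The Cauchy--Schwarz bound uses the exponential tail of $\sigma_s$ in $N$, which comes from $H^{(M')}_{\operatorname{SF}}(s)\ge\frac{\kappa}{\beta}N-Cn$, together with $V^2\lesssim(1+N)^4$.
\end{itemize}
This is exactly why the paper's estimator uses $H_M$ rather than the perturbation appearing in the path. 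With this step added, $\|H_M\|=\operatorname{poly}(n,M)$, the Gibbs states are prepared to accuracy $\eps/(3\|H_M\|)$, and your complexity count goes through unchanged.

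\textbf{Comparison with the paper's route.} Apart from this, your route differs from the paper's in a legitimate way.
\begin{itemize}
\item You integrate exactly along the truncated path and compare free energies only at $s=1$.
\item The paper integrates along the untruncated path $H_0+sV$ and compares $\sigma_\beta(H(s))$ with $\sigma_\beta(H^{(M')}_{\operatorname{SF}}(s))$ in trace distance for every $s$.
\end{itemize}
Your version needs only one partition-function comparison. That comparison should be done on the sectors $\{N\le M'\}$, on which $H_{\operatorname{BH}}$ and $H_{\operatorname{SF}}$ coincide and which both preserve. It should not be done on $\operatorname{ran}\Pi^b_{M'}$, which $H_{\operatorname{BH}}$ does not preserve. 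Since the vacuum gives low-sector weight $c\ge 1$, one gets $|\log\mathcal Z(H_{\operatorname{BH}})-\log\mathcal Z(H_{\operatorname{SF}})|\le\max(a,b)$, where $a,b$ are the tails bounded in the proof of Lemma~\ref{lem:mode-truncation-basis}.

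Finally, the uniform-in-$s$ gap you single out as the main obstacle is not what the statement requires. $\lambda^{\min}_2>0$ enters as a given quantity, and the paper does not prove any semicontinuity in $s$ either.
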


\vspace{-1.5em}

\section{Outlook}

\noindent In this Letter, we introduced a general framework for quantum Gibbs sampling in continuous-variable many-body systems and proved positive spectral-gap results for the paradigmatic Bose--Hubbard model. These results provide a first rigorous foundation for analyzing the runtime of quantum algorithms that prepare Gibbs states of interacting, infinite-dimensional quantum systems.

Our approach shows that exactly solvable Gaussian or number-diagonal reference models, combined with perturbative stability arguments and finite-rank approximations, offer a robust strategy for establishing fast thermalization in infinite-dimensional settings.

An important next challenge is to obtain quantitative lower bounds on the spectral gap, both for the Bose--Hubbard model and more broadly for other continuous-variable many-body systems. At present, our results establish positivity of the gap, but do not control its scaling with the number of modes.

More broadly, our work opens several directions. On the algorithmic side, it yields rigorous convergence guarantees for quantum algorithms aimed at estimating free energies, response coefficients, and other thermal observables in bosonic systems. On the complexity side, it points to continuous-variable models as promising candidates for quantum advantages in thermal-state simulation \cite{kuwahara25}. On the physical side, it applies to experimentally relevant platforms such as optical-lattice bosons. Altogether, these results mark a first step toward a complexity theory of thermal-state preparation for interacting continuous-variable quantum matter.

\onecolumngrid

 \subsection*{Acknowledgement}
\noindent SB acknowledges support from the SNF Grant PZ00P2\_216019. 
CR is supported by France 2030 under the
French National Research Agency award number ''ANR-22-EXES-0013''. RS acknowledges support by the European Research Council (ERC Grant Agreement No.~948139 and ERC Grant AlgoQIP, Agreement No. 851716), from the Excellence Cluster Matter and Light for Quantum Computing (ML4Q-2), from the QuantERA II Programme of the
European Union’s Horizon 2020 research and innovation programme under Grant Agreement No
101017733 (VERIqTAS) as well as the government grant managed by the Agence Nationale de la
Recherche under the Plan France 2030 with the reference ANR-22-PETQ-0007.

\appendix

\appendixtableofcontents

\section{Continuous-variable Gibbs sampling}
\label{sec:CVGS}

\subsection{General framework}

\noindent Given a lower bounded Hamiltonian $H\ge -h_0I$ for some $h_0\ge 0$ over a separable Hilbert space $\mathcal{H}$ with spectral decomposition $H=\sum_{E\in\operatorname{Sp}(H)}EP_E$ and a set $\{A^\alpha\}_{\alpha\in\mathcal{A}}$ of bare jumps over $\mathcal{H}$, the projected Davies jump operators are
\begin{align*}
A^\alpha_\omega
:= \sum_{\substack{E,E'\in\operatorname{Sp}(H)\\ E-E'=\omega}}
    P_{E} A^\alpha P_{E'} .
\end{align*}
The rate function $\Upsilon:\mathbb{R}\to\mathbb{R}_+$ determines the jump frequencies and satisfies the KMS condition
$\Upsilon(-\omega)=e^{\beta\omega}\Upsilon(\omega)$, ensuring that the Gibbs state
$\sigma_\beta=\frac{e^{-\beta H}}{\Tr(e^{-\beta H})}$ is stationary. A well-known drawback, already in finite dimensions, is the dependence of $A^\alpha_\omega$ on the generally unknown spectral decomposition of $H$, which complicates both circuit implementations and convergence analyses.

Recent alternatives avoid explicit spectral information \cite{chen2023quantum,gilyen2024quantum,chen2023efficient,ding2025efficient}. In \cite{ding2025efficient}, the jump operators are formally defined by
\begin{align}
L^\alpha
&:= \int e^{itH} A^\alpha e^{-itH}\, f(t)\, dt
 = \sum_{E,E'\in\operatorname{Sp}(H)} \widehat{f}(E{-}E')\, P_E A^\alpha P_{E'}
 = \sum_{\nu\in B(H)}\widehat{f}(\nu)\, A^\alpha_\nu ,
\label{def:Lalpha}
\end{align}
where $f\in L^1(\mathbb{R})$ is a smooth \emph{filter function}, with Fourier transform
$\widehat{f}(\nu)=\int_{\mathbb R} f(t)e^{i\nu t}\,dt$ as in \cite{ding2025efficient} 
which satisfies the symmetry condition
\begin{align}\label{eq:symmetryintro}
\overline{\widehat{f}(\nu)}=\widehat{f}(-\nu)\,e^{-\beta\nu/2}.
\end{align}
As noticed in our companion paper \cite{BeckerRouzeSalzmannToAppearcmp}, for our applications of interacting bosonic systems, we often have to give up on the $f \in L^1(\mathbb R)$ constraint and choose the Metropolis-type filter function, already considered in \cite{ding2025efficient},
\begin{align}
\label{eq:filterFunction}
    \widehat f_{\mathscr M}(\nu) = \exp\left(-\frac{\sqrt{1+(\beta\nu)^2}+ \beta\nu}{4}\right).
\end{align}
The integral form \eqref{def:Lalpha} allows implementation via oracle access to block encodings of the Hamiltonian evolution and the bare jumps $A^\alpha$, after time discretization. The resulting Lindblad generator is
\begin{align}
\mathcal{L}_{\widehat{f},H}(\rho)
&= -i[B,\rho]
 + \sum_{\alpha\in\mathcal{A}} \Big(
 L^\alpha \rho (L^\alpha)^\dagger
 - \tfrac12 \{(L^\alpha)^\dagger L^\alpha,\rho\}
 \Big),
\end{align}
where the Hermitian operator $B$ is chosen so that $\mathcal{L}_{\widehat{f},H}(\sigma_\beta)=0$.
Explicitly, the operator $B$ is given by 
\begin{equation}
\begin{split}\label{eq:Bintro} \langle E'&|B|E\rangle=\frac{i}{2}\operatorname{tanh}\Big(\tfrac{\beta(E'-E)}{4}\Big)\sum_{\alpha\in\mathcal{A}}\sum_{\substack{\nu_1,\nu_2\in B(H)\\\nu_2-\nu_1=E'-E}}\overline{\widehat{f}(\nu_1)}\widehat{f}(\nu_2)\,\langle E'|(A^\alpha)^\dagger P_{E+\nu_2}A^\alpha |E\rangle. 
\end{split}
\end{equation}
In order to overcome the lack of integrability of $f$, we work with the following regularization of $\mathcal{L}_{\widehat{f},H}$: for $\sigma_E\ge0$, define
\begin{align}
\mathcal{L}_{\sigma_E,\widehat{f},H}(\rho)
:=\sum_{\alpha}\sum_{\nu_1,\nu_2}
e^{-\frac{(\nu_1-\nu_2)^2}{8\sigma_E^2}}
\,\overline{\widehat{f}(\nu_1)}\widehat{f}(\nu_2)
\Big(
-i[B^\alpha_{\nu_1,\nu_2},\rho]
+A^\alpha_{\nu_2}\rho (A^\alpha_{\nu_1})^\dagger
-\tfrac12\{(A^\alpha_{\nu_1})^\dagger A^\alpha_{\nu_2},\rho\}
\Big),
\nonumber
\end{align}
where
\begin{align*}
B^\alpha_{\nu_1\nu_2}
=\frac{i}{2}\tanh\!\big(\beta(\nu_1-\nu_2)/4\big)\,
(A^\alpha_{\nu_1})^\dagger A^\alpha_{\nu_2}.
\end{align*}
This generator defines a quantum Markov semigroup with unique fixed state $\sigma_\beta$, which is KMS-symmetric with respect to
\begin{align}\label{eq:KMSsymmetry}
\langle X,Y\rangle_{\sigma_\beta}
:=\Tr\!\left(\sigma_\beta^{1/2}X^\dagger\sigma_\beta^{1/2}Y\right).
\end{align}
Moreover, $\mathcal{L}_{\sigma_E,\widehat{f},H}$ interpolates between
$\mathcal{L}_{\widehat{f},H}$ ($\sigma_E=\infty$) and the Davies generator
$\mathcal{L}_{\mathrm D}$ with rate $\Upsilon=|\widehat h|^2$ ($\sigma_E=0$).
The Gaussian envelope regularizes the drift: We find
\begin{align}\label{eq:LsigmaEintro}
\mathcal{L}_{\sigma_E,\widehat{f},H}(\rho)
=G_{\sigma_E}\rho+\rho G_{\sigma_E}^\dagger
+\Phi_{\sigma_E,\widehat{f},H}(\rho),
\end{align}
where
\begin{align*}
G_{\sigma_E}
:=-\sum_{\alpha\in\mathcal A}\int_{\mathbb R}
g(t)\,e^{itH}((L^\alpha)^\dagger L^\alpha)e^{-itH}\,dt,
\end{align*}
with
\[
g(t)=\frac{1}{2\pi}\int_{\mathbb R}
\frac{e^{-\nu^2/8\sigma_E^2}}{1+e^{\beta\nu/2}}
e^{-i\nu t}\,d\nu,\qquad
X_s^\alpha:=e^{isH}L^\alpha e^{-isH},
\]
and the CP map
\begin{align*}
\Phi_{\sigma_E,\widehat{f},H}(\rho)
:=\sigma_E\sqrt{\frac{2}{\pi}}
\sum_{\alpha}\int_{\mathbb R}
e^{-2\sigma_E^2 s^2}\,
X_s^\alpha\rho (X_s^\alpha)^\dagger\,ds .
\end{align*}
Hence, for Schwartz $f$, this yields a spectral-agnostic generator that exactly fixes the Gibbs state of $H$ at inverse temperature $\beta$.

 In order for the map $\mathcal{L}_{\sigma_E,\widehat{f},H}$ to define the generator of a quantum Markov semigroups, the following conditions were derived in \cite{BeckerRouzeSalzmannToAppearcmp}. In what follows, we make use of the framework of quantum Sobolev spaces \cite{gondolf2024energy}: given $\delta_1,\delta_2\ge 0$ and denoting $\widetilde{H}:=H+(h_0+1)I$, 
\begin{align*}
D(\mathcal{W}_H^{\delta_1,\delta_2}):=\Big\{\widetilde{H}^{-\delta_1}a \widetilde{H}^{-\delta_2}\,\Big|\,a\in\mathscr{T}_1(\mathcal{H})\Big\}\text{ with norms }
\|x\|_{\mathcal{W}_H^{\delta_1,\delta_2}}:=\|\mathcal{W}_H^{\delta_1,\delta_2}(x)\|_1,
\end{align*}
where $\mathcal{W}^{\delta_1,\delta_2}_H(x):=\widetilde{H}^{\delta_1} x \widetilde{H}^{\delta_2}$ and $\mathscr{T}_1(\mathcal{H})$ denotes the Banach space of trace-class operators on $\mathcal{H}$.

 \begin{proposition}[\cite{BeckerRouzeSalzmannToAppearcmp}]\label{condpropdefsampler}
The following condition leads to the well-posedness of the map $\mathcal{L}_{\sigma_E,\widehat{f},H}$ as the generator of a strongly continuous semigroup of quantum channels with domain including $\mathscr{F}:=\operatorname{span}\{|E_i\rangle\langle E_j|\}_{E_i,E_j\in\operatorname{Sp}(H)}$ for some $0\le\gamma\le   \mu$: there exist a constant $C>0$ such that
\begin{align}
\label{eq:BoundBareJumpsWithHam}
\|A^\alpha \widetilde{H}^{-\gamma}\|,\quad \|\widetilde{H}^\gamma\,A^\alpha \widetilde{H}^{-\mu}\|\le C.
\end{align}
Moreover, the function $\widehat{f}:\mathbb{R}\to\mathbb{C}$ satisfies 
\begin{align}
\label{eq:KMSFilterFunction}
\overline{\widehat{f}(\nu)}=\widehat{f}(-\nu)\,e^{-\beta\nu/2}\qquad \forall\nu\in\mathbb{R}.
\end{align}
We also assume there is a constant $C'>0$ such that
\begin{align}
\sup_\nu\,|\widehat{f}(\nu)|,\quad  \sup_{\nu}e^{\frac{\beta\nu}{2}}|\widehat{f}(\nu)| \le C'.
\end{align}
 Moreover, whenever $f$ is a Schwartz function, the integral formula \eqref{eq:LsigmaEintro} holds for any $\rho\in D(\mathcal{W}_H^{\gamma,\gamma})\cap D(\mathcal{W}_H^{\mu,0})\cap D(\mathcal{W}_H^{0,\mu})$. 
\end{proposition}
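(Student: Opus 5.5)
The plan is to verify the three ingredients of a strongly continuous quantum dynamical semigroup generated by $\mathcal{L}_{\sigma_E,\widehat f,H}$. These are (i) well-definedness on the core $\mathscr F$; (ii) the Lindblad form, which yields complete positivity and trace preservation; (iii) generation of a contraction semigroup on $\mathscr T_1(\mathcal H)$, via a perturbation argument. Throughout we use the decomposition \eqref{eq:LsigmaEintro}, $\mathcal{L}=G_{\sigma_E}\bullet+\bullet G_{\sigma_E}^\dagger+\Phi_{\sigma_E,\widehat f,H}$.

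\textbf{Step (i): $L^\alpha$ on the core.} On $\mathscr F$, each $|E_i\rangle\langle E_j|$ is mapped by $L^\alpha=\sum_\nu \widehat f(\nu)A^\alpha_\nu$ to $\sum_{E}\widehat f(E-E_i)P_EA^\alpha|E_i\rangle\langle E_j|$. The first bound in \eqref{eq:BoundBareJumpsWithHam} gives $A^\alpha|E_i\rangle\in\mathcal H$ with norm at most $C(E_i+h_0+1)^\gamma$. Since $\sup|\widehat f|\le C'$, the map $\widehat f(H-E_i)$ is bounded, so $L^\alpha|E_i\rangle\in\mathcal H$ and $\|L^\alpha\widetilde H^{-\gamma}\|\le C'C$. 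For the adjoint, the KMS relation \eqref{eq:KMSFilterFunction} gives
\[
(L^\alpha)^\dagger=\sum_\nu\overline{\widehat f(\nu)}(A^\alpha_\nu)^\dagger=\sum_\nu \widehat f(-\nu)e^{-\beta\nu/2}(A^{\alpha'})_{-\nu}.
\]
Here $A^{\alpha'}=(A^\alpha)^\dagger$ lies in the family. The bound $\sup e^{\beta\nu/2}|\widehat f(\nu)|\le C'$ then controls $(L^\alpha)^\dagger\widetilde H^{-\gamma}$ in the same way.

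The second bound $\|\widetilde H^\gamma A^\alpha\widetilde H^{-\mu}\|\le C$ controls the products. It gives $\|(L^\alpha)^\dagger L^\alpha\widetilde H^{-\mu}\|\lesssim C'^2C^2$, because $\widetilde H^{\gamma}$ commutes with the frequency filters. The same holds for the $B^\alpha_{\nu_1\nu_2}$ terms, since $|\tanh|\le1$ and the Gaussian weight is at most $1$. Hence every term of $\mathcal L_{\sigma_E,\widehat f,H}$ maps $\mathscr F$, and more generally $D(\mathcal W^{\gamma,\gamma}_H)\cap D(\mathcal W^{\mu,0}_H)\cap D(\mathcal W^{0,\mu}_H)$, into $\mathscr T_1$. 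The $\gamma,\gamma$ condition handles the sandwich $L\rho L^\dagger$; the $\mu$ conditions handle the one-sided products.

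\textbf{Step (ii): Lindblad structure and generation.} Write $K:=G_{\sigma_E}$, which is $-\frac12\sum_\alpha(L^\alpha)^\dagger L^\alpha-iB$ smeared by the Gaussian. Its real part is $-\frac12\sum_\alpha\int\ldots$, which is negative and relatively bounded by $\widetilde H^{\mu}$. We show that $K$ generates a contraction semigroup on $\mathcal H$. One route is to check dissipativity on the core $\operatorname{span}\{|E_i\rangle\}$ and prove that the range of $\lambda-K$ is dense. An alternative is to use the Gaussian regularization: with weight $e^{-(\nu_1-\nu_2)^2/8\sigma_E^2}$, the anti-Hermitian part $B$ has matrix elements controlled by the Hermitian part, up to a factor depending on $\sigma_E$. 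This makes $K$ sectorial and gives m-dissipativity. $\Phi$ is completely positive, and $\Tr\Phi(\rho)=-\Tr(K\rho+\rho K^\dagger)$ on the core, by the identity $\int g=\ldots$ that matches the constant $\sigma_E\sqrt{2/\pi}$ to the Fourier inversion of $1/(1+e^{\beta\nu/2})$.

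We then run the Kato/Davies minimal-semigroup construction, a Dyson series in $\Phi$ around $\rho\mapsto e^{tK}\rho e^{tK^\dagger}$. This yields a strongly continuous semigroup of CP trace-nonincreasing maps. Its generator extends $\mathcal L$ on $\mathscr F$.

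\textbf{Step (iii): conservativity and the integral formula.} The main obstacle is showing trace preservation (conservativity), i.e.\ ruling out explosion. My first attempt uses the invariant state: $\sigma_\beta$ is stationary by the KMS condition, and $\sigma_\beta$ is faithful and satisfies the Gibbs hypothesis. By Chebotarev–Fagnola-type criteria, a minimal semigroup with a faithful normal invariant state is conservative. Alternatively, one can use $\widetilde H^{\mu}$ as a Lyapunov function, via $\mathcal L^\dagger(\widetilde H^\mu)\le c\,\widetilde H^{\mu}$ from the bounds above.

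Finally, for Schwartz $f$ the integral formula follows. We substitute $L^\alpha=\int e^{itH}A^\alpha e^{-itH}f(t)\,dt$ into the frequency-domain expression and use Fubini, which is justified on the stated Sobolev domains by the bounds of Step (i). Fourier-transforming the weight $e^{-(\nu_1-\nu_2)^2/8\sigma_E^2}$ produces the Gaussian $s$-integral in $\Phi$. Combining it with the $\tanh$ factor produces the kernel $g$.
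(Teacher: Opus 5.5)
The paper does not prove this proposition; it is quoted from \cite{BeckerRouzeSalzmannToAppearcmp}. Your outline therefore has to stand on its own, and its decisive step, conservativity in Step (iii), does not go through. Stationarity of $\sigma_\beta$ for the \emph{formal} generator does not make it invariant for the \emph{minimal} semigroup. Grant $\sigma_\beta=R_0(\lambda)\big(\lambda\sigma_\beta+\Phi(\sigma_\beta)\big)$, with $R_0(\lambda)$ the resolvent of the no-jump evolution $\rho\mapsto e^{tK}\rho\,e^{tK^\dagger}$. Iterating only gives $\sigma_\beta=\lambda R(\lambda)\sigma_\beta+\lim_{n}\big(R_0(\lambda)\Phi\big)^{n}\sigma_\beta$, with $R(\lambda)$ the minimal resolvent. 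That is subinvariance, not invariance. The positive remainder vanishes exactly when no trace escapes from the faithful state $\sigma_\beta$, which is the conservativity you are trying to prove, so the invariant-state route is circular.

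Detailed balance cannot repair this. The classical birth--death chain with rates $\lambda_n=4^n$, $\mu_n=2\cdot 4^{n-1}$ is reversible for the summable measure $\pi_n=2^{-n}$ and still explodes. Non-explosion must therefore come from the growth bounds \eqref{eq:BoundBareJumpsWithHam}, together with the suppression $|\widehat f(\nu)|\le C'e^{-\beta\nu/2}$ of energy-raising frequencies. Your Lyapunov alternative does not supply this. The inequality $\mathcal L^\dagger(\widetilde H^{\mu})\le c\,\widetilde H^{\mu}$ is not derived: your bounds control $L^\alpha$ from $D(\widetilde H^{\mu})$ into $D(\widetilde H^{\gamma})$, whereas the term $(L^\alpha)^\dagger\widetilde H^{\mu}L^\alpha$ needs control of $\widetilde H^{\mu/2}L^\alpha$ on $D(\widetilde H^{\mu/2})$. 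The Chebotarev--Fagnola criterion also requires domain conditions on $K$ that you never check.

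Two earlier steps also fail as written.

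In Step (i), $L^\alpha\widetilde H^{-\gamma}$ is the Schur product of $(\widehat f(E-E'))_{E,E'}$ with $A^\alpha\widetilde H^{-\gamma}$. Because the filter $\widehat f(H-E_i)$ changes with the column, $\sup|\widehat f|\le C'$ bounds each column but not the operator. Bounded symbols need not be bounded Schur multipliers on $B(\mathcal H)$. For $H=N$, the matrix $(\widehat f_{\mathscr M}(m-n))_{m,n}$ is the triangular truncation $1_{m<n}$ up to a bounded multiplier. So with $\gamma=\mu=0$ and $A$ a bounded self-adjoint Hilbert-type matrix, every hypothesis holds while $L$ is unbounded. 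Hence $\|L^\alpha\widetilde H^{-\gamma}\|\le C'C$ and $\|(L^\alpha)^\dagger L^\alpha\widetilde H^{-\mu}\|\lesssim C'^2C^2$ are false in general.

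These bounds do hold with $C'$ replaced by $\|f\|_{L^1}$ when $f$ is Schwartz, via $L^\alpha\widetilde H^{-\gamma}=\int f(t)\,e^{itH}A^\alpha\widetilde H^{-\gamma}e^{-itH}\,dt$. Your derivation of the integral formula on the Sobolev domains therefore survives. For merely bounded $\widehat f$ you must argue column-wise on $\mathscr F$, for example via $\|(L^\alpha)^\dagger L^\alpha|E_i\rangle\|\le\sum_{G\in\operatorname{Sp}(H)}\|(L^\alpha)^\dagger P_G\|\,\|P_GL^\alpha|E_i\rangle\|<\infty$. This uses that only finitely many levels lie below $E_i$, the upward decay of $\widehat f$, and the Gibbs hypothesis.

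In Step (ii), the entrywise bound $|\tanh|\le 1$ gives no quadratic-form domination of the anti-Hermitian part of $K$ by its Hermitian part. Schur multiplication by $\tanh(\beta(E'-E)/4)$ does not preserve form inequalities. Sectoriality, and the range condition for m-dissipativity, are asserted rather than shown.

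Both the Schur-multiplier and conservativity obstacles are avoided on $\mathscr T_2(\mathcal H)$. This is the setting the paper uses right after the proposition: self-adjoint $L_{\sigma_E,\widehat f,H}$ with core $\mathscr F$, intertwined with $\mathcal L_{\sigma_E,\widehat f,H}$ by $\iota_2$. There, bounded symbols act boundedly on Hilbert--Schmidt operators, and the closed KMS Dirichlet form yields a self-adjoint generator. Trace preservation then reduces to showing $\mathcal E(\sigma_\beta^{1/2}P)\to 0$ along spectral projections $P$ of $H$. This holds because the derivations $\partial^\alpha_t$ annihilate $\sigma_\beta^{1/2}$ up to boundary terms whose squared Hilbert--Schmidt norm is at most a constant times $\Tr\big(e^{-\beta H}\widetilde H^{2\gamma}(1-P)\big)$.
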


\noindent As shown in \cite{BeckerRouzeSalzmannToAppearcmp},
the generators $\mathcal{L}_{\widehat{f},H}$ and $\mathcal{L}_{\sigma_E,\widehat{f},H}$ induce self-adjoint generators $L_{\widehat{f},H}$ and $L_{\sigma_E,\widehat{f},H}$ on the space $\mathscr{T}_2(\mathcal{H})$ of Hilbert-Schmidt operators with core $\mathscr{F}:=\operatorname{span}\{|E_i\rangle\langle E_j|\}_{E_i,E_j\in\operatorname{Sp}(H)}$ such that, given the embedding
\begin{align*}
\iota_2:\mathscr{T}_2(\mathcal{H})\to\mathscr{T}_1(\mathcal{H})\,,\qquad \iota_2(x)=\sigma_\beta^{\frac{1}{4}}x\sigma_\beta^{\frac{1}{4}},
\end{align*}
the semigroups $e^{t L_{\sigma_E,\widehat{f},H}}$ and $e^{t \mathcal{L}_{\sigma_E,\widehat{f},H}}$ satisfy
\begin{align}
e^{t\mathcal{L}_{\sigma_E,\widehat{f},H}}\circ \iota_2(x)=\iota_2\circ e^{tL_{\sigma_E,\widehat{f},H}}(x).
\end{align}
 For $\sigma_E=\infty$, the latter takes the following form: denoting $\Gamma_\tau(X):=e^{\tau H}Xe^{-\tau H}$,
\[
L_{\widehat f,H}(X)
=
-i(B_+X-XB_-)
+\sum_{\alpha\in\mathcal A}
\Big(L^\alpha_+\,X\,(L^\alpha_-)^\dagger-\tfrac12 K^\alpha_+X-\tfrac12 XK^\alpha_-\Big),
\qquad X\in\mathscr F.
\]
with 
\[
B:=\frac{i}{2}\sum_{\alpha\in\mathcal A}\sum_{E,E',G\in\operatorname{Sp}(H)}
\tanh\!\Big(\frac{\beta(E'-E)}{4}\Big)\,
\overline{\widehat f(G-E')}\,\widehat f(G-E)\,
P_{E'}(A^\alpha)^\dagger P_GA^\alpha P_E
\]
and
\[
L^\alpha_\pm:=\Gamma_{\pm\beta/4}(L^\alpha),
\qquad
K^\alpha_\pm:=\Gamma_{\pm\beta/4}\big((L^\alpha)^\dagger L^\alpha\big),\qquad B_\pm:=\Gamma_{\pm\beta/4}(B).
\] 
For a more detailed mathematical treatment of the generator $\mathcal{L}_{\sigma_E,\widehat{f},H}$ for unbounded Hamiltonians over separable Hilbert spaces, we refer the interested reader to \cite{BeckerRouzeSalzmannToAppearcmp}.

\subsection{Unperturbed Gaussian dynamics}

\noindent In this section, we list a few simple examples of bosonic systems for which the gap can be readily controlled.

\subsubsection{Single-mode setting}

\noindent We start with the quantum Ornstein–Uhlenbeck semigroup converging to the Gibbs state of the number operator $N:=a^\dagger a$ on $L^2(\mathbb{R})$, with associated creation and annihilation operators $a$ and $a^\dagger$, whose generator is given by
\begin{align}
\label{eq:qOUgen}
\mathcal{L}_{\widehat{f}, N}(\rho)
=&|\widehat{f}(1)|^2\,\Big(a^\dagger \rho a-\frac{1}{2}\{aa^\dagger,\rho\}\Big)
+|\widehat{f}(-1)|^2\,\Big(a\rho a^\dagger-\frac{1}{2}\{a^\dagger a,\rho\}\Big).
\end{align}

\noindent Denoting the birth rate $\nu_+:=|\widehat{f}(1)|^2$ and death rate $\nu_-:=|\widehat{f}(-1)|^2$, the associated self-adjoint generator on $\mathscr{T}_2(\cH)$ takes the form \cite{OU}:
\begin{align}
\label{eq:HSgenerator}
L_{\widehat{f},N}(x)
=&
-\left( \frac{\nu_- + \nu_+}{2}(Nx + xN) + \nu_+ x \right)
+ \sqrt{\nu_+ \nu_-}( a x a^{\dagger} +  a^{\dagger} x a).
\end{align}
While the spectrum of $L_{\widehat{f},N}$ is explicitly known, see e.g.~\cite{OU}, its eigenmodes are rather intricate and not directly amenable to spectral perturbation analysis. To overcome this difficulty, we shall frequently use that
\[
L_{\widehat{f},N} \;\ge\; L_{\operatorname{LB}},
\]
in the sense of quadratic forms, as shown in \cite[Proof of Thm.~6.3]{OU}. Here $L_{\operatorname{LB}}$ is the so-called \textit{ladder-block} operator defined by
\begin{equation}
\label{eq:ladderblock}
L_{\operatorname{LB}}(x)
:= \frac{1}{2}(\nu_+ - \nu_-)^2 (N x + x N) \;-\; \nu_+(\nu_- - \nu_+)\, x.
\end{equation}
The operator $L_{\operatorname{LB}}$ admits the spectral decomposition
\begin{equation}
\label{eq:LBdecomp}
L_{\operatorname{LB}} = \sum_{k \ge 0} \kappa_k\, R_k,
\end{equation}
where
\begin{equation}
\label{eq:kappak}
\kappa_k := \frac{1}{2}(\nu_+ - \nu_-)^2\, k \;-\; \nu_+(\nu_- - \nu_+),
\end{equation}
and $R_k$ is the orthogonal projection onto the subspace
\begin{equation}
\label{eq:Rkspace}
\operatorname{span}\{\, |n\rangle\langle m| \;:\; n+m = k \,\},
\end{equation}
where $\ket{n}$ denotes the Fock state with $N\ket{n}=n\ket{n}$. This decomposition provides a simpler spectral structure than that of $L_{\widehat{f},N}$, making it a useful comparison operator for perturbative spectral analysis.

\subsubsection{Aubry-Andr\'e model}

\noindent Next, we consider a finite 2D lattice $\Lambda:=\mathbb Z^2/(L\mathbb Z)^2$ in a constant magnetic field. The Hamiltonian is given by
\[ \begin{split}H  &=\sum_{j \in \Lambda} a_j^{\dagger} a_j + t\sum_{j \in \Lambda} \left(e^{ij_2 \gamma} a_{j+e_1}^{\dagger} a_j+e^{-ij_2 \gamma} a_{j+e_1} a_j^{\dagger} \right) + t\sum_{j \in \Lambda} \left(a_{j+e_2} a_j^{\dagger}+ a_{j+e_2}^{\dagger} a_j\right), 
\end{split}\]
here $\gamma$ is the magnetic flux through one square of the lattice, and we impose the assumption that $\gamma=\frac{2\pi p}{L}$ for some $p \in \mathbb N_0$. The magnetic field is incorporated using the so-called Peierls' substitution as an exponential factor. To diagonalize this Hamiltonian, we use the Fourier transform in $e_1$ direction
\[\begin{split}
a_j &= \frac{1}{\sqrt{L}} \sum_{k_1 \in (2\pi/L) \mathbb Z / (L \mathbb Z)} e^{i k_1 j_1} c_{j_2,k_1 }, \text{ and conversely }
c_{j_2,k_1 }= \frac{1}{\sqrt{L}} \sum_{j_1 \in \mathbb Z / (L \mathbb Z)} e^{-i k_1 j_1}  a_{j}.
\end{split}\] 
We then have the central term of the Hamiltonian
\[\begin{split}
\frac{1}{L} \sum_{j \in \Lambda} \sum_{k_1,k_1'} \left(e^{ij_2 \gamma} e^{ik_1(j_1+1)-ik_1'j_1}c_{j_2,k_1 }^{\dagger}c_{j_2,k_1 } +\text{h.c.} \right)
&=  \sum_{k_1,j_2} \left(e^{ij_2 \gamma} e^{ik_1}c_{j_2,k_1 }^{\dagger} c_{j_2,k_1 } +\text{h.c.} \right) \\
&=  \sum_{k_1,j_2} 2\cos(j_2 \gamma + k_1)c_{j_2,k_1 }^{\dagger}c_{j_2,k_1 }.
\end{split}\]
Performing similar computations for the other terms, we find 
\[\begin{split} H &=\sum_{k_1,j_2} c_{j_2,k_1 }^{\dagger}c_{j_2,k_1 } + \sum_{k_1,j_2} 2t\cos(j_2 \gamma + k_1)c_{j_2,k_1 }^{\dagger} c_{j_2,k_1 } + t\sum_{k_1,j_2} \left(c_{j_2+1,k_1 }c_{j_2,k_1 }^{\dagger}+ c_{j_2+1,k_1 }^{\dagger}c_{j_2,k_1 }\right) \\
&=\sum_{k_1} \left(\sum_{i} h_{i,k_1} c_{i_1,k_1 }^{\dagger}c_{i_2,k_1 } \right) \text{ with }h_{i,k_1} = (1+2t\cos(i_1 \gamma+k_1)) \delta_{i_1,i_2} +t(\delta_{i_1-1,i_2}+ \delta_{i_1+1,i_2}).
\end{split}\]
Diagonalizing $h_{k_1} = (h_{i,k_1})_{i}$, we find $h_{k_1} = \sum_{i_1} \varepsilon_{i_1,k_1} \vert u_{i_1,k_1} \rangle \langle u_{i_1,k_1} \vert,$ we then define the unitary matrix $U_{k_1}=(U_{i,k_1})_i$ with $U_{i,k_1}=u_{i_1,k_1}(i_2).$
We then define $b_{i_1,k_1} :=\sum_{i_2=0}^{L-1} U_{i,k_1} c_{i_2,k_1}$ and $c_{i_2,k_1} = \sum_{i_1=0}^{L-1} \overline{U_{i,k_1}} b_{i_1,k_1}$ and can write the Hamiltonian as 
\[ H = \sum_{i_1,k_1} \varepsilon_{i_1,k_1} b_{i_1,k_1}^{\dagger} b_{i_1,k_1}.\]
Combining with the results of \cite{OU}, we find
\begin{theorem}
\label{thm:DFT}
   The associated generator of the Gibbs sampler in $\mathscr{T}_2(L^2(\mathbb{R}^m))$ is a family of qOU generators with coefficients $\nu_{-,i_1,k_1}:=\vert  \widehat{f}(-\varepsilon_{i_1,k_1})\vert^2 $, $\nu_{+,i_1,k_1}:=\vert  \widehat{f}(\varepsilon_{i_1,k_1})\vert^2$, and $\nu_{i_1,k_1}:=\sqrt{\nu_{+,i_1,k_1}\nu_{-,i_1,k_1}},$ given by
    \begin{equation}
    \begin{split}
L(x) =&-\frac{1}{2} \sum_{i_1,k_1}   \bigg(\nu_{-,i_1,k_1} (b_{i_1,k_1}^{\dagger} b_{i_1,k_1} x  + x b_{i_1,k_1} b_{i_1,k_1}^{\dagger}) +\nu_{+,i_1,k_1}(xb_{i_1,k_1}^{\dagger} b_{i_1,k_1}  + b_{i_1,k_1} b_{i_1,k_1}^{\dagger}x) \bigg) \\
&+\sum_{i_1,k_1} \nu_{i_1,k_1} (b_{i_1,k_1} x b_{i_1,k_1}^{\dagger} + b_{i_1,k_1}^{\dagger} x b_{i_1,k_1}). 
\end{split}
\end{equation}
with spectrum 
\[ \operatorname{Sp}(L) =- \sum_{i_1,k_1}\left\{ n\left(\frac{\nu_{-,i_1,k_1}-\nu_{+,i_1,k_1}}{2}\right) ; n \in \mathbb N_0\right\}.\]
\end{theorem}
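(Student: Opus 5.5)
The Hamiltonian has already been brought to the normal-mode form $H=\sum_{i_1,k_1}\varepsilon_{i_1,k_1}b_{i_1,k_1}^\dagger b_{i_1,k_1}$, with $b_{i_1,k_1}$ obtained from the $a_j$ by the unitary composition of the partial Fourier transform and the matrices $U_{k_1}$. I would therefore reduce the statement to a tensor product of independent single-mode quantum Ornstein--Uhlenbeck generators and then quote \cite{OU} mode by mode. I take the set of bare jumps to be $\{b_{i_1,k_1},b^\dagger_{i_1,k_1}\}$. If instead one uses $\{a_j,a_j^\dagger\}$, the first step is to observe that the $\mathcal{L}_{\widehat f,H}$ dissipative part is invariant under unitary rotations of the bare jumps. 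The jump family enters only through sums $\sum_\alpha L^\alpha\rho (L^\alpha)^\dagger$ and $\sum_\alpha (L^\alpha)^\dagger L^\alpha$, as well as through $B$. The map $A\mapsto L(A)$ in \eqref{def:Lalpha} is linear, and $a_j=\sum \overline{W}\, b$ for a unitary $W$, so cross terms cancel by $\sum_j W_{j,p}\overline{W_{j,q}}=\delta_{pq}$. One must treat creation and annihilation separately, but since the rotation does not mix $a$ with $a^\dagger$, this goes through.

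Next I compute the dressed jumps. Because $[H,b_p]=-\varepsilon_p b_p$, we have $e^{itH}b_pe^{-itH}=e^{-it\varepsilon_p}b_p$, hence $L(b_p)=\widehat f(-\varepsilon_p)b_p$ and $L(b_p^\dagger)=\widehat f(\varepsilon_p)b_p^\dagger$. The only Bohr frequencies appearing are $\pm\varepsilon_p$, so in \eqref{eq:Bintro} one has $\nu_1=\nu_2$ and $B=0$. The generator $\mathcal{L}_{\widehat f,H}$ is thus $\sum_p$ of the single-mode generator \eqref{eq:qOUgen} with $N$ replaced by $b_p^\dagger b_p$ and $\pm1$ replaced by $\pm\varepsilon_p$; this reproduces the stated rates $\nu_{\pm,p}$.

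To pass to the $\mathscr{T}_2$ picture, I use $L^\alpha_\pm=\Gamma_{\pm\beta/4}(L^\alpha)$. This gives $\Gamma_{\pm\beta/4}(b_p)=e^{\mp\beta\varepsilon_p/4}b_p$, and $\widehat f(-\varepsilon_p)e^{\mp\beta\varepsilon_p/4}$ combined with the KMS relation \eqref{eq:KMSFilterFunction} yields $L_+X(L_-)^\dagger$ terms with coefficient $|\widehat f(-\varepsilon_p)|\,|\widehat f(\varepsilon_p)|=\nu_p$. The $K_\pm$ terms are unchanged since $b_p^\dagger b_p$ commutes with $H$. This is exactly the computation of \cite{OU} leading to \eqref{eq:HSgenerator}, done per mode, and it produces the displayed formula.

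For the spectrum, $\mathscr{T}_2(\mathcal H)\cong\bigotimes_p\mathscr{T}_2(\ell^2(\N))$, and $L=\sum_p \mathrm{id}\otimes\cdots\otimes L_p\otimes\cdots\otimes\mathrm{id}$ is a sum of commuting self-adjoint operators acting on different factors. Each $L_p$ has spectrum $\{-n(\nu_{-,p}-\nu_{+,p})/2\}_{n\in\N_0}$ with a complete eigenbasis by \cite{OU}, which requires $\nu_{-,p}>\nu_{+,p}$, i.e.\ $\varepsilon_p>0$; I assume this, as needed for the Gibbs hypothesis. Tensor products of these eigenvectors form an orthonormal basis, so the spectrum of $L$ is the closure of the Minkowski sum, which is the stated set. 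The main obstacle is making the tensor-sum argument rigorous for unbounded generators: one has to check that the finite span of product eigenvectors is a core, which I would do via the product semigroup $\bigotimes_p e^{tL_p}$ and Nelson's invariant-core criterion.
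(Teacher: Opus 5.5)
Your proposal is correct and follows the paper's route: write $H=\sum_{i_1,k_1}\varepsilon_{i_1,k_1}b^\dagger_{i_1,k_1}b_{i_1,k_1}$ and apply the single-mode qOU results of \cite{OU} mode by mode, with your unitary-invariance, $B=0$, and tensor-sum/core remarks supplying details the paper leaves implicit. One caveat: per mode your computation yields \eqref{eq:HSgenerator}, with anticommutators $\nu_{-}\{b^\dagger b,x\}$, $\nu_{+}\{bb^\dagger,x\}$ and jump term $\Gamma_{\beta/4}(L)\,x\,\Gamma_{-\beta/4}(L^\dagger)=L_+x(L_+)^\dagger$ (reading $(L_-)^\dagger$ literally gives coefficient $\nu_{-}$, not $\nu_{i_1,k_1}$), whereas the displayed formula pairs $b^\dagger b\,x$ with $x\,bb^\dagger$, so it differs from yours by $-\tfrac12(\nu_{-,i_1,k_1}-\nu_{+,i_1,k_1})\,x$ per mode and would remove $0$ from the spectrum; what you prove is the corrected formula, not the display verbatim, and you should say so rather than claim it reproduces the display.
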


\section{Spectral gap under finite-rank perturbations}\label{gapfiniterankblabla}

\noindent Next, we summarize the mechanism underlying spectral gap stability under finite-rank perturbations, referring to our companion paper \cite{BeckerRouzeSalzmannSchroedinger}. The results below are stated with proof sketches and emphasize the structural ideas: finite-rank perturbations of the Hamiltonian only produce finite-rank corrections in the conjugated Lindblad structure, which is sufficient to preserve discreteness of the spectrum and hence a spectral gap.

The first result shows that perturbing the Hamiltonian in a finite-dimensional sector does not affect the jump operators on the infinite-dimensional bulk, and only creates a finite-rank correction.

\begin{lemma}[Finite-rank stability of conjugated jump operators]
\label{thm:finite-rank-stability-conjugated-jumps}
Let \(H_0\) be self-adjoint, bounded from below, with discrete spectrum, and let \(P\) be a finite-rank orthogonal projection with \([P,H_0]=0\), \(Q:=1-P\). Let \(R=PRP\) be bounded self-adjoint and set \(H:=H_0+R\). Let \(A^\alpha\) satisfy \(P\mathcal H\subset D(A^\alpha)\) and \(PA^\alpha\) bounded. Fix \(s\in\mathbb R\) and assume \(L_s^\alpha(H)\), \(L_s^\alpha(H_0)\) are well-defined. Then
\[
L_s^\alpha(H)=L_s^\alpha(H_0)+\mathcal R_s^\alpha,\qquad Q\mathcal R_s^\alpha Q=0,
\]
with \(\mathcal R_s^\alpha\) finite rank.
\end{lemma}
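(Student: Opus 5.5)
The plan is to compute everything in the spectral representation and use the block structure of $H$ with respect to $\mathcal H=Q\mathcal H\oplus P\mathcal H$. Since $[P,H_0]=0$ and $R=PRP$, the operator $H$ also commutes with $P$, and $QH=QH_0=H_0Q$. Hence every spectral projection of $H$ splits as $P_E(H)=P_E(H)P+P_E(H_0)Q$, and similarly for $H_0$. In particular $H$ has discrete spectrum, because $H|_{P\mathcal H}$ is a finite matrix and $H|_{Q\mathcal H}=H_0|_{Q\mathcal H}$. Consequently any bounded Borel function $g$ satisfies $g(H)Q=g(H_0)Q$, and $g(H)-g(H_0)=P\,(g(H)-g(H_0))\,P$ is of finite rank.

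Next I would write the conjugated jump in its frequency-domain form,
\[
L^\alpha_s(H)=\sum_{E,E'\in\operatorname{Sp}(H)}\widehat f(E-E')\,e^{s(E-E')}\,P_E(H)A^\alpha P_{E'}(H),
\]
interpreted on the core $\mathscr F$, and likewise for $H_0$. Here $s=\pm\beta/4$ gives the relevant $L^\alpha_\pm$. I would then insert $1=P+Q$ on both sides of $A^\alpha$, which produces four blocks.

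For the $QA^\alpha Q$ block, $P_E(H)Q=P_E(H_0)Q$ on the right and $QP_E(H)=QP_E(H_0)$ on the left. This block therefore coincides term by term for $H$ and $H_0$, and so $Q\mathcal R^\alpha_sQ=0$.

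The $PA^\alpha P$ block lives on the finite-dimensional space $P\mathcal H$. It is bounded because $PA^\alpha$ is bounded, so its contribution to the difference is finite rank, with rank at most $\dim P$.

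The $P A^\alpha Q$ block has range contained in $P\mathcal H$, so it is finite rank as long as it is a well-defined operator. Here the boundedness of $PA^\alpha$ is used. Since $P\mathcal H$ is spanned by finitely many eigenvectors of $H$, the sum over $E$ is finite. For each fixed $E$ the remaining sum over $E'$ acts through $QP_{E'}(H_0)$, which is the same for $H$ and $H_0$. Each term is then a bounded finite-rank operator, provided the assumed well-definedness of $L^\alpha_s(H)$ and $L^\alpha_s(H_0)$ gives convergence on $\mathscr F$.

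The $QA^\alpha P$ block is the step I expect to be the main obstacle. Its domain is finite-dimensional, so its rank is at most $\dim P$, but we need $A^\alpha P$ to be bounded, and this does not follow immediately from $PA^\alpha$ being bounded. I would argue that $P\mathcal H\subset D(A^\alpha)$ and $\dim P<\infty$ together force $A^\alpha P$ to be bounded, since any linear map defined on a finite-dimensional space is bounded. Alternatively, if $A^\alpha$ comes from a symmetric family (the set of bare jumps is closed under $\dagger$), one can use $(PA^{\alpha\dagger})^\dagger\supset A^\alpha P$. The frequency weights $\widehat f(E-E')e^{s(E-E')}$ summed over the finitely many $E'$ in $\operatorname{Sp}(H|_{P})$ then give a finite sum of rank-one-type operators.

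Collecting the three nonzero blocks gives $\mathcal R^\alpha_s=P(\cdots)P+P(\cdots)Q+Q(\cdots)P$, which is finite rank with vanishing $Q$–$Q$ block. The only remaining subtlety is the unboundedness of the weight $e^{s\nu}$ for large $|\nu|$ in the mixed blocks. Because one energy index always ranges over a finite set, $e^{s(E-E')}$ multiplies $QA^\alpha P$ or $PA^\alpha Q$ by bounded functions of $H_0$ restricted to $Q$, composed with finitely many vectors. I would handle this by using the standing well-definedness hypothesis on $L^\alpha_s(H_0)$.
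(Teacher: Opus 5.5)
Your proposal is correct and follows essentially the same route as the paper's sketch. Because $H$ and $H_0$ agree on $Q\mathcal H$ and share their spectral projections there, the $Q$--$Q$ block of the Bohr-frequency expansion is unchanged, and every remaining contribution factors through the finite-dimensional space $P\mathcal H$. One simplification: once $Q\mathcal R^\alpha_s Q=0$ is known, finite rank follows directly from $\mathcal R^\alpha_s=P\mathcal R^\alpha_s+Q\mathcal R^\alpha_s P$ (rank at most $2\dim P$), so you do not need to claim that the mixed-block weights $\widehat f(E-E')e^{s(E-E')}$ are bounded. That claim can in fact fail for $s=-\beta/4$ (e.g.\ for the Metropolis-type filter), and this is precisely what the well-definedness hypothesis is there to cover.
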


\begin{proof}[Proof sketch]
The perturbation only modifies the spectral data inside the finite-dimensional subspace \(P\mathcal H\); inserting this into the Bohr-frequency expansion shows that all corrections factor through \(P\mathcal H\), hence are finite rank, as proved in \cite[Lemma 3.2]{BeckerRouzeSalzmannSchroedinger}.
\end{proof}

\noindent The next step shows that this finite-rank stability propagates to the quadratic expressions appearing in the Lindblad generator. This is crucial, since the generator depends on terms such as \(L^\dagger L\) and \(LL^\dagger\).

\begin{lemma}[Finite-rank stability of quadratic coefficients]
\label{thm:finite-rank-stability-quadratic}
Under the assumptions of Theorem~\ref{thm:finite-rank-stability-conjugated-jumps}, write \(L=L_s^\alpha(H)\), \(L_0=L_s^\alpha(H_0)\). Then,
\[
L^\dagger L-L_0^\dagger L_0=L_0^\dagger\mathcal R+\mathcal R^\dagger L_0+\mathcal R^\dagger\mathcal R,
\quad
LL^\dagger-L_0L_0^\dagger=L_0\mathcal R^\dagger+\mathcal R L_0^\dagger+\mathcal R\mathcal R^\dagger,
\]
and both differences are finite rank.
\end{lemma}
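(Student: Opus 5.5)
The plan is to substitute the decomposition $L=L_0+\mathcal R$ from Lemma~\ref{thm:finite-rank-stability-conjugated-jumps} into the two quadratic expressions and expand. Before doing so, I would fix a common domain on which every product is defined, for instance the core $\mathscr F$ of finite linear combinations of eigenvectors. On $\mathscr F$ we have
\[
L^\dagger L=(L_0+\mathcal R)^\dagger(L_0+\mathcal R)=L_0^\dagger L_0+L_0^\dagger\mathcal R+\mathcal R^\dagger L_0+\mathcal R^\dagger\mathcal R .
\]
The same expansion with the roles of the factors exchanged gives the identity for $LL^\dagger$. The identities are therefore purely algebraic. The one point to check is $(L_0+\mathcal R)^\dagger\supseteq L_0^\dagger+\mathcal R^\dagger$. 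This holds because $\mathcal R$ is finite rank and hence bounded, so $(L_0+\mathcal R)^\dagger=L_0^\dagger+\mathcal R^\dagger$ exactly.

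For the finite-rank claims, I would treat each term separately, using that finite-rank operators form an ideal.

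\emph{The terms $\mathcal R^\dagger\mathcal R$ and $\mathcal R\mathcal R^\dagger$.} These are products of bounded finite-rank operators, so they are finite rank.

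\emph{The terms $L_0^\dagger\mathcal R$ and $\mathcal R L_0^\dagger$.} Write $\mathcal R=\sum_{k=1}^r|u_k\rangle\langle v_k|$. Then $L_0^\dagger\mathcal R=\sum_k|L_0^\dagger u_k\rangle\langle v_k|$, and its range lies in the span of the $L_0^\dagger u_k$. The condition needed is $\operatorname{Ran}\mathcal R\subset D(L_0^\dagger)$.

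\emph{The terms $\mathcal R^\dagger L_0$ and $L_0\mathcal R^\dagger$.} Here the range of $\mathcal R^\dagger L_0$ already lies in $\operatorname{Ran}\mathcal R^\dagger$, which is finite dimensional. For boundedness, or at least closability, one needs the $v_k\in D(L_0^\dagger)$, because $\langle v_k|L_0=\langle L_0^\dagger v_k|$. The term $L_0\mathcal R^\dagger$ is handled the same way.

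The main obstacle is to verify that the finitely many vectors spanning $\operatorname{Ran}\mathcal R$ and $\operatorname{Ran}\mathcal R^\dagger$ lie in the domains of $L_0$ and $L_0^\dagger$. For this I would return to the structure in Lemma~\ref{thm:finite-rank-stability-conjugated-jumps}. Since $Q\mathcal R Q=0$, we can write
\[
\mathcal R=P\mathcal R+Q\mathcal R P .
\]
Both $P\mathcal R$ and $\mathcal R P$ factor through the finite-dimensional space $P\mathcal H$.

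In the Bohr-frequency expansion, the corrections have the form $\widehat f(\nu)e^{\pm s\nu}P_E A^\alpha P_{E'}$, with $E$ or $E'$ in the perturbed block. Hence $\operatorname{Ran}\mathcal R$ is contained in $P\mathcal H+QA^\alpha P\mathcal H$, and similarly with $(A^\alpha)^\dagger$ for $\operatorname{Ran}\mathcal R^\dagger$.

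Now $P\mathcal H$ is spanned by finitely many eigenvectors of $H_0$, and these lie in $\mathscr F\subset D(L_0)\cap D(L_0^\dagger)$. It therefore remains to show that $A^\alpha P\mathcal H\subset D(L_0^\dagger)$. I would establish this using the standing bounds \eqref{eq:BoundBareJumpsWithHam}: the vectors $A^\alpha\psi$ with $\psi\in P\mathcal H$ satisfy $\widetilde H^{\gamma}A^\alpha\widetilde H^{-\mu}$-type bounds, so they lie in $D(\widetilde H^\gamma)$. The bound $\|L_0\widetilde H^{-\gamma}\|<\infty$, which is inherited from the bare jump through the bounded filter, then gives membership in the domain. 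If that argument falls short, I would instead impose this membership as part of the hypothesis that $L^\alpha_s(H)$ and $L^\alpha_s(H_0)$ are well-defined.
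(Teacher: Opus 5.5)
Your proposal is correct and is essentially the paper's argument: substituting $L=L_0+\mathcal R$ gives both identities, and every term of the two differences contains a factor $\mathcal R$ or $\mathcal R^\dagger$, so its range has dimension at most $\operatorname{rank}\mathcal R$. This part needs no domain hypothesis, since $\operatorname{Ran}(L_0^\dagger\mathcal R)\subseteq L_0^\dagger(\operatorname{Ran}\mathcal R\cap D(L_0^\dagger))$ and $\operatorname{Ran}(\mathcal R^\dagger L_0)\subseteq\operatorname{Ran}\mathcal R^\dagger$.

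Your extra domain discussion goes beyond the paper's sketch, but as written it is inaccurate in three places:
\begin{itemize}
\item The range of $Q\mathcal R P$ lies in $\sum_{E'}g_{E'}(H_0)QA^\alpha P\mathcal H$ with $g_{E'}(x)=\widehat f(x-E')e^{s(x-E')}$, not in $QA^\alpha P\mathcal H$.
\item A bounded filter does not by itself give a bounded Schur multiplier $X\mapsto\sum_{E,E'}\widehat f(E-E')e^{s(E-E')}P_EXP_{E'}$, so $\|L_0\widetilde H^{-\gamma}\|<\infty$ is not automatic. Moreover, for $A^\alpha P\mathcal H\subset D(L_0^\dagger)$ you would need the analogous bound for $L_0^\dagger$, not $L_0$.
\item In $\mathcal R^\dagger L_0=\sum_k|v_k\rangle\langle u_k|L_0$ it is the $u_k$, not the $v_k$, that must lie in $D(L_0^\dagger)$.
\end{itemize}
None of this matters in the paper's bosonic applications, where $A^\alpha P\mathcal H$ consists of finite Fock-state combinations and hence lies in $\mathscr F$.
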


\begin{proof}[Proof sketch]
Each term contains at least one factor of \(\mathcal R\), so it factors through a finite-dimensional space; this is exactly the mechanism formalized in \cite[Lemma 3.3]{BeckerRouzeSalzmannSchroedinger}.
\end{proof}

\noindent We can now combine these two ingredients to understand how the full conjugated Lindblad generator changes: all modifications remain confined to finite-rank coefficients.

\begin{lemma}[Finite-rank perturbation of the conjugated generator]
\label{thm:finite-rank-perturbation-conjugated-generator}
With the above notation, let \(L_{\widehat f,H}\), \(L_{\widehat f,H_0}\) be the conjugated generators and assume
\[
L_{\pm,H}^\alpha=L_{\pm,0}^\alpha+\mathcal R_\pm^\alpha
\]
with \(\mathcal R_\pm^\alpha\) finite rank. Then, the difference \(L_{\widehat f,H}-L_{\widehat f,H_0}\) decomposes into commutator terms and quadratic contributions whose coefficients are all finite-rank perturbations.
\end{lemma}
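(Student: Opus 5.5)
Starting from the explicit form of the conjugated generator on $\mathscr F$,
\[
L_{\widehat f,H}(X)=-i(B_{+,H}X-XB_{-,H})+\sum_{\alpha\in\mathcal A}\Big(L^\alpha_{+,H}X(L^\alpha_{-,H})^\dagger-\tfrac12 K^\alpha_{+,H}X-\tfrac12 XK^\alpha_{-,H}\Big),
\]
and the analogous formula for $H_0$, I will subtract the two term by term. First, the jump (sandwich) part: substituting $L^\alpha_{\pm,H}=L^\alpha_{\pm,0}+\mathcal R^\alpha_\pm$ and expanding bilinearly gives
\[
L^\alpha_{+,H}X(L^\alpha_{-,H})^\dagger-L^\alpha_{+,0}X(L^\alpha_{-,0})^\dagger=\mathcal R^\alpha_+X(L^\alpha_{-,0})^\dagger+L^\alpha_{+,0}X(\mathcal R^\alpha_-)^\dagger+\mathcal R^\alpha_+X(\mathcal R^\alpha_-)^\dagger ,
\]
so every term carries at least one finite-rank factor $\mathcal R^\alpha_\pm$. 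This is the map $\mathfrak G_\alpha$ (with the $\pm$ indices adjusted appropriately).

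Second, the anticommutator-type parts. I will note that $K^\alpha_\pm=\Gamma_{\pm\beta/4}((L^\alpha)^\dagger L^\alpha)$ can be written as $(L^\alpha_{\mp})^\dagger L^\alpha_{\pm}$, because $\Gamma_{\tau}(Y)^\dagger=\Gamma_{-\tau}(Y^\dagger)$ for real $\tau$. The difference $K^\alpha_{\pm,H}-K^\alpha_{\pm,0}$ then has exactly the form treated in Lemma~\ref{thm:finite-rank-stability-quadratic}, applied with the mixed pair $(L_-,L_+)$. Expanding as before gives $(L^\alpha_{\mp,0})^\dagger\mathcal R^\alpha_\pm+(\mathcal R^\alpha_\mp)^\dagger L^\alpha_{\pm,0}+(\mathcal R^\alpha_\mp)^\dagger\mathcal R^\alpha_\pm$, which is finite rank. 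The right-multiplication term is handled the same way, with $LL^\dagger$ in place of $L^\dagger L$. Together these yield the maps $\mathfrak A_\alpha$ and $\mathfrak B_\alpha$.

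Third, the commutator part $-i((B_{+,H}-B_{+,0})X-X(B_{-,H}-B_{-,0}))$. I will show that $B_H-B_{H_0}$ is finite rank by running the Bohr-frequency argument of Lemma~\ref{thm:finite-rank-stability-conjugated-jumps} on the explicit formula for $B$. Since $H$ and $H_0$ share their spectral projections on $Q\mathcal H$, a term $P_{E'}(A^\alpha)^\dagger P_GA^\alpha P_E$ differs between the two Hamiltonians only if at least one of $E,E',G$ comes from the $P$-block. In each such term a projection onto a subspace of $P\mathcal H$ appears next to $A^\alpha$ or $(A^\alpha)^\dagger$, and $PA^\alpha$ is bounded, so the term factors through the finite-dimensional space $P\mathcal H$. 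Conjugating by $e^{\pm\beta H/4}$ preserves finite rank, because $e^{\pm\beta H/4}$ and $e^{\pm\beta H_0/4}$ coincide on $Q\mathcal H$ and act boundedly on $P\mathcal H$.

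The main obstacle is the middle case in the third step, where $G$ lies in the perturbed block but $E,E'$ are bulk energies. This produces $Q(A^\alpha)^\dagger P\,A^\alpha Q$, which is finite rank provided $PA^\alpha$ and $(A^\alpha)^\dagger P=(PA^\alpha)^\dagger$ are bounded. However, the $\tanh$ weights and the infinite sums over bulk $E,E'$ must converge in a suitable sense. I would control this by summing the bulk indices first, which is legitimate because $\sum_E P_E=Q$ on $Q\mathcal H$ and $|\tanh|\le1$, together with the bounds on $\widehat f$ from Proposition~\ref{condpropdefsampler}. Collecting the three steps gives the claimed decomposition of $L_{\widehat f,H}-L_{\widehat f,H_0}$ on the core $\mathscr F$.
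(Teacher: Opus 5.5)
Your first two steps match the paper's sketch: insert $L^\alpha_{\pm,H}=L^\alpha_{\pm,0}+\mathcal R^\alpha_\pm$, expand the gain term, and treat $K^\alpha_\pm=(L^\alpha_\mp)^\dagger L^\alpha_\pm$ with the algebra of Lemma~\ref{thm:finite-rank-stability-quadratic}. Two bookkeeping corrections: with $\iota_2(x)=\sigma_\beta^{1/4}x\sigma_\beta^{1/4}$ the gain term is $L^\alpha_+X(L^\alpha_+)^\dagger$, and the right coefficient $K^\alpha_-=(L^\alpha_+)^\dagger L^\alpha_-$ is again of $L^\dagger L$ type, so no $LL^\dagger$ appears. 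The paper's sketch takes the remaining coefficients from \cite[Prop.~3.4]{BeckerRouzeSalzmannSchroedinger}. Your own argument for the commutator part has a gap, exactly at the middle case you flag.

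Set $L^\alpha_H:=L^\alpha(H)$, $L^\alpha_0:=L^\alpha(H_0)$, $\mathcal R^\alpha:=L^\alpha_H-L^\alpha_0$ (so $Q\mathcal R^\alpha Q=0$), and $Y_\alpha:=(L^\alpha_H)^\dagger L^\alpha_H-(L^\alpha_0)^\dagger L^\alpha_0$. Since $\sum_G\overline{\widehat f(G-E')}\widehat f(G-E)P_{E'}(A^\alpha)^\dagger P_GA^\alpha P_E=P_{E'}(L^\alpha)^\dagger L^\alpha P_E$, the $Q$--$Q$ block of $B_H-B_{H_0}$ is
\[
\frac i2\sum_{\alpha}\sum_{E,E'}\tanh\Big(\frac{\beta(E'-E)}{4}\Big)\,P_{E'}\,QY_\alpha Q\,P_E,\qquad QY_\alpha Q=Q\big((\mathcal R^\alpha)^\dagger P\mathcal R^\alpha+(L^\alpha_0)^\dagger P\mathcal R^\alpha+(\mathcal R^\alpha)^\dagger PL^\alpha_0\big)Q,
\]
where $E,E'$ run over the spectrum of $H_0$ on $Q\mathcal H$. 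Each summand is finite rank, but the double sum need not be. The weight couples $E$ and $E'$ and does not factor as $\phi(E')\psi(E)$, so you cannot ``sum the bulk indices first'' via $\sum_E P_E=Q$. The expression is a Schur multiplier applied to the finite-rank operator $QY_\alpha Q$, not something of the form $\phi(H_0)X\psi(H_0)$.

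With $s=e^{\beta E'/2}$ and $t=e^{\beta E/2}$, the weight is $\frac{s-t}{s+t}=1-\frac{2t}{s+t}$. This is a rank-one perturbation of a nonsingular Cauchy kernel. So once $QY_\alpha Q$ has components in infinitely many eigenspaces of $H_0$, the result generally has infinite rank. The bound $|\tanh|\le 1$ only makes it Hilbert--Schmidt. The same issue undercuts your claim that conjugation by $e^{\pm\beta H/4}$ preserves finite rank: on the $Q$--$Q$ block it multiplies the symbol by $e^{\pm\beta(E'-E)/4}$, which is unbounded.

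What closes the step is a structural input your argument never uses. For all $\alpha$, the range of $Q(A^\alpha)^\dagger P$ must lie in finitely many eigenspaces of $H_0$; then so do the ranges of $Q(\mathcal R^\alpha)^\dagger P$ and $Q(L^\alpha_0)^\dagger P$. Under this condition only finitely many bulk $E,E'$ contribute, the double sum is finite, every term is finite rank, and the conjugation acts by scalars on finitely many eigenvectors.

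The condition holds in the paper's applications. There the jumps are $a_i,a_i^\dagger$, with either $H_0=\sum_k\varepsilon_k b_k^\dagger b_k$ and $P=(P^b_{M'})^{\otimes n}$, or $H_0=\sum_i h(N_i)$ and $P=(P^a_{M'})^{\otimes n}$. In both cases $a_i$ and $a_i^\dagger$ map the finitely many Fock states spanning $P\mathcal H$ to finite combinations of Fock states. Under only the hypotheses of Lemma~\ref{thm:finite-rank-stability-conjugated-jumps} ($P\mathcal H\subset D(A^\alpha)$, $PA^\alpha$ bounded), your third step does not go through.
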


\begin{proof}[Proof sketch]
Insert the decomposition of the jump operators into the Lindblad structure and expand; finite-rank stability of all coefficients follows from Lemma~\ref{thm:finite-rank-stability-quadratic} and \cite[Prop.~3.4]{BeckerRouzeSalzmannSchroedinger}.
\end{proof}

\subsection{Perturbations of Gaussian Hamiltonians}

\noindent To make this structure concrete, we specialize to the Gaussian reference case. Here the unperturbed generator is explicit, and the perturbation inherits a very rigid structure.
Moreover, this structural control is strong enough to preserve the spectral properties of the generator.

\begin{theorem}[Persistence of discrete spectrum and spectral gap]
\label{thm:gap-persistence}
Let \(H_0=\sum_i N_i\), \(H=H_0+V\) with \(V=PVP\), and take \(A^{(i,-)}=a_i\), \(A^{(i,+)}=a_i^\dagger\). If \(L_{\widehat f,H_0}\) has compact resolvent, then \(L_{\widehat f,H}\) also has compact resolvent; in particular the spectrum remains discrete, and simplicity of the eigenvalue \(0\) implies a nonzero spectral gap.
\end{theorem}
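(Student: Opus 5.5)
The plan is to write $L_{\widehat f,H}=L_{\widehat f,H_0}+\mathcal K$ and show that the difference $\mathcal K$ is relatively compact with respect to $L_{\widehat f,H_0}$. Compactness of the resolvent is then transferred through the second resolvent identity, in the spirit of Lemma~\ref{introlemmaGaussianperturb}.

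\textbf{Structure of the perturbation.} By Lemma~\ref{thm:finite-rank-stability-conjugated-jumps} applied with $s=\pm\beta/4$, we have $L^{(i,\pm)}_{\pm,H}=L^{(i,\pm)}_{\pm,H_0}+\mathcal R$ with $\mathcal R$ finite rank and $Q\mathcal RQ=0$. Since $P\mathcal H$ is a finite span of Fock states, $a_iP$ and $a_i^\dagger P$ are bounded. Lemma~\ref{thm:finite-rank-stability-quadratic} then shows that the coefficients $K^\alpha_\pm$ and $B_\pm$ change by finite-rank operators $F$. Here I would use the explicit formula for $B$: only the matrix elements $\langle E'|\cdot|E\rangle$ touching $P\mathcal H$ change, because $(A^\alpha)^\dagger P_G A^\alpha$ connects states whose occupations differ by at most $2$. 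Lemma~\ref{thm:finite-rank-perturbation-conjugated-generator} then writes $\mathcal K$ as a finite sum of three kinds of maps:
\begin{itemize}
\item left/right multiplications $X\mapsto FX$ and $X\mapsto XF$ with $F$ finite rank;
\item sandwiches $X\mapsto \mathcal R X (L_{-,0})^\dagger$ and $X\mapsto L_{+,0}X\mathcal R^\dagger$;
\item the term $X\mapsto\mathcal R X\mathcal R^\dagger$.
\end{itemize}

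\textbf{Relative compactness.} I would show that $\mathcal K\,(L_{\widehat f,H_0}-\lambda)^{-1}$ is compact on $\mathscr T_2(\mathcal H)$ for $\lambda$ in the resolvent set. The resolvent $(L_{\widehat f,H_0}-\lambda)^{-1}$ is compact by hypothesis. The operators $X\mapsto FX$ and $X\mapsto \mathcal RX\mathcal R^\dagger$ are bounded on $\mathscr T_2$, so their compositions with a compact operator are compact.

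The mixed terms $X\mapsto \mathcal RX L_{-,0}^\dagger$ are the main obstacle, because $L_{-,0}^\dagger$ is a dressed $a_i$ or $a_i^\dagger$ and is therefore unbounded. In the Gaussian case $L^{(i,-)}_{\pm,H_0}=c_\pm a_i$ (and similarly for $a_i^\dagger$), so the issue reduces to controlling $X\mapsto XN^{1/2}$ relative to $L_{\widehat f,H_0}$. I would first try to use $L_{\widehat f,H_0}\ge L_{\operatorname{LB}}$ as quadratic forms (after a sign convention: $-L$ dominates $\frac12(\nu_--\nu_+)^2(NX+XN)$ up to a constant). This bounds $\|X N^{1/2}\|_2^2$ by the form of $-L_{\widehat f,H_0}+c$, which gives that $X\mapsto X a_i^{\dagger}$ is bounded from the form domain into $\mathscr T_2$. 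Hence $\mathcal K(L_{\widehat f,H_0}-\lambda)^{-1/2}$ is bounded. Compactness of $(L_{\widehat f,H_0}-\lambda)^{-1/2}$, which follows from compactness of the resolvent by self-adjointness, then yields relative form-compactness. A KLMN / Weyl-type argument shows that the form sum has compact resolvent, and the form sum agrees with $L_{\widehat f,H}$ on the core $\mathscr F$.

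An alternative for the mixed terms is to note that $\mathcal R=P\mathcal R+Q\mathcal RP$, so the left factor is localized to finite rank on at least one side. The multiplication $X\mapsto \mathcal RX$ is then Hilbert–Schmidt-compact composed with something of bounded-relative size. I would first try the form-bound route, since it uses only the ladder-block comparison already available from~\cite{OU}.

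\textbf{Conclusion.} From the resolvent identity, $(L_{\widehat f,H}-\lambda)^{-1}=(L_{\widehat f,H_0}-\lambda)^{-1}\bigl(1+\mathcal K(L_{\widehat f,H_0}-\lambda)^{-1}\bigr)^{-1}$ whenever the inverse exists. If necessary, $\lambda$ is taken far in the resolvent set so that the perturbed resolvent exists; this is possible since $L_{\widehat f,H}$ is self-adjoint and nonpositive by KMS symmetry. It follows that $L_{\widehat f,H}$ has compact resolvent and hence purely discrete spectrum. Because $0\in\operatorname{Sp}(L_{\widehat f,H})$ is isolated, simplicity of $0$ (with eigenvector $\sigma_\beta(H)^{1/2}$) gives a strictly positive gap $\operatorname{gap}(L_{\widehat f,H})$.
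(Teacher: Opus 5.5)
Your proposal is correct and is essentially the paper's argument: the difference $L_{\widehat f,H}-L_{\widehat f,H_0}$ splits into bounded finite-rank coefficient terms plus the sandwich terms $X\mapsto \mathcal R X (L_{-,0})^\dagger$ and $X\mapsto L_{+,0}X\mathcal R^\dagger$. These sandwich terms are infinitesimally bounded relative to the reference generator, so compactness of the resolvent persists, and with it discreteness of the spectrum and the gap.

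The only difference is presentation. You make the infinitesimal bound explicit at the form level, through the ladder-block estimate $\|X a_i^{(\dagger)}\|_2^2\lesssim \mathcal E_0(X)+\|X\|_2^2$, and conclude via KLMN. The paper instead states an operator relative bound $0$. That bound follows from your form estimate via $\mathcal E_0(X)\le \epsilon\|L_{\widehat f,H_0}X\|_2^2+\epsilon^{-1}\|X\|_2^2$, and it is also what justifies the equal operator domains implicitly used in your final resolvent-identity step.
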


\begin{proof}[Proof sketch]
The perturbation splits into finite-rank terms and operators that are infinitesimally bounded with respect to the reference generator; hence it is relatively bounded with bound \(0\), which preserves compactness of the resolvent and therefore discreteness and the spectral gap, see \cite[Theorem 3.6]{BeckerRouzeSalzmannSchroedinger} for details.
\end{proof}

\subsection{Perturbations of powers of the number operator}
\noindent For powers of the number operator, we restrict ourselves to the   filter function \eqref{eq:filterFunction}
\begin{align}
    \widehat f_{\mathscr M}(\nu) = \exp\left(-\frac{\sqrt{1+(\beta\nu)^2}+ \beta\nu}{4}\right).
\end{align}

\begin{lemma}[Generator for \(H=h(N)\)]
\label{lem:conjugated-hN-compact}
Let \(h:\mathbb N_0\to\mathbb R\), set \(H:=h(N)\), and let \(\widehat f_{\mathscr M}\) be a filter function. Then
\begin{equation}
\label{eq:compact_pert}
\begin{aligned}
L_{\widehat f_{\mathscr M},h(N)}(X)
&=a^\dagger g_+(N)\,X\,g_+(N)a+a\,g_-(N)\,X\,g_-(N)a^\dagger-\tfrac12\{A_+,X\}-\tfrac12\{A_-,X\},
\end{aligned}
\end{equation}
where
\[
g_+(N)=\widehat f_{\mathscr M}(h(N+1)-h(N))e^{\beta(h(N+1)-h(N))/4},\quad
g_-(N)=\widehat f_{\mathscr M}(h(N-1)-h(N))e^{-\beta(h(N)-h(N-1))/4},
\]
and
\[
A_+=(N+1)|\widehat f_{\mathscr M}(h(N+1)-h(N))|^2,\qquad
A_-=\frac{N|\widehat f_{\mathscr M}(h(N-1)-h(N))|^2}{2}.
\]
Assume
\[
\sqrt{n+1}\,|g_+(n)|\to0,\qquad \sqrt n\,|g_-(n)|\to0, \text{ and }
(n+1)\,|\widehat f_{\mathscr M}(h(n+1)-h(n))|^2\to0.
\]
Then \(a^\dagger g_+(N),\,g_+(N)a,\,a\,g_-(N),\,g_-(N)a^\dagger\) are compact on \(\mathcal H\), the gain terms define compact maps on \(\mathscr T_2(\mathcal H)\), and \(A_+\) is compact. Consequently,
\[
L_{\widehat f_{\mathscr M},h(N)}=-\{A_-,\cdot\}+K,
\]
where \(K\) is compact on \(\mathscr T_2(\mathcal H)\).
\end{lemma}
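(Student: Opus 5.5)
We plan to compute $L_{\widehat f_{\mathscr M},h(N)}$ directly from the $\sigma_E=\infty$ formula for $L_{\widehat f,H}$ recalled in Appendix~\ref{sec:CVGS}, using that $H=h(N)$ is diagonal in the Fock basis. Take the bare jumps $a$ and $a^\dagger$. The operator $a$ shifts $|n\rangle\mapsto\sqrt n|n-1\rangle$ with Bohr frequency $h(n-1)-h(n)$, so the decorated jump is
\begin{equation*}
L^-=\widehat f_{\mathscr M}(h(N)-h(N+1))\,a = a\,\widehat f_{\mathscr M}(h(N-1)-h(N)).
\end{equation*}
Similarly $L^+=a^\dagger\widehat f_{\mathscr M}(h(N+1)-h(N))$. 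Conjugating with $\Gamma_{\pm\beta/4}$ multiplies each matrix element by $e^{\pm\beta\nu/4}$. This produces the diagonal weights $g_\pm(N)$ in the gain terms $L^\alpha_+X(L^\alpha_-)^\dagger$. Since the KMS condition \eqref{eq:KMSFilterFunction} holds, $\widehat f(\nu)e^{\beta\nu/4}$ and $\overline{\widehat f(-\nu)}e^{-\beta\nu/4}$ coincide up to conjugation, so the two sides carry the same function and we obtain the symmetric form $a^\dagger g_+(N)Xg_+(N)a+\dots$. The products $(L^\alpha)^\dagger L^\alpha$ are diagonal in the Fock basis, for instance $(L^+)^\dagger L^+=(N+1)|\widehat f_{\mathscr M}(h(N+1)-h(N))|^2$. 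Because they commute with $H$, they are invariant under $\Gamma_{\pm\beta/4}$, which gives $K^\alpha_\pm=A_\pm$ up to the stated normalization.

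For $B$, each summand $P_{E'}(A^\alpha)^\dagger P_GA^\alpha P_E$ is nonzero only if $E=E'$, because $a^\dagger a$ and $aa^\dagger$ preserve the Fock index. Then $\tanh(0)=0$ and $B=0$. This needs care if $h$ is not injective, since then a spectral projection $P_E$ may span several Fock states. We plan to either assume genericity, or observe that $(A^\alpha)^\dagger P_GA^\alpha$ still maps $|n\rangle$ to a multiple of $|n\rangle$, so the off-diagonal entries inside a degenerate eigenspace vanish anyway. Collecting these pieces gives \eqref{eq:compact_pert}.

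For compactness, $a^\dagger g_+(N)$ acts as $|n\rangle\mapsto\sqrt{n+1}\,g_+(n)|n+1\rangle$. This is a weighted shift whose weights tend to zero, so it is the norm limit of its finite-rank truncations and hence compact. The same argument applies to its adjoint $g_+(N)a$ (up to conjugation) and to $a g_-(N)$, $g_-(N)a^\dagger$.

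A gain map $X\mapsto CXD$ with $C,D$ compact is compact on $\mathscr T_2(\mathcal H)$. Indeed, under $\mathscr T_2\cong\mathcal H\otimes\overline{\mathcal H}$ it corresponds to $C\otimes D^{T}$, and a tensor product of compact operators is compact; equivalently, approximate $C$ and $D$ in norm by finite-rank operators. The operator $A_+$ is diagonal with entries $(n+1)|\widehat f_{\mathscr M}(h(n+1)-h(n))|^2\to0$, hence compact. The map $X\mapsto\{A_+,X\}$ is then compact as a sum of $A_+\otimes I$-type terms. Strictly speaking $A_+\otimes I$ alone is not compact on $\mathcal H\otimes\overline{\mathcal H}$, so this step must be handled with care.

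\textbf{Main obstacle.} The last point above is the real difficulty: left multiplication by a compact $A_+$ is not a compact map on $\mathscr T_2$, since it acts as $A_+\otimes I$. The statement's claim that $K$ is compact therefore has to be read carefully. To resolve this, we would first try to absorb $-\tfrac12\{A_+,\cdot\}$ into the diagonal part, i.e. redefine $A(N)=\tfrac12A_-+\tfrac12A_+$ as in \eqref{decompositionLfHintro}. The perturbation $\{A_+,\cdot\}$ is bounded and relatively compact with respect to $\{A_-,\cdot\}$, because $A_-$ has diverging eigenvalues: for $(\{A_-,\cdot\}-i)^{-1}$, the matrix entries $A_+(n)/(A_-(n)+A_-(m)-i)$ tend to zero as $n+m\to\infty$. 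This is what the downstream resolvent argument (Lemma~\ref{corr:perturbation2}) actually needs. So the plan is to prove compactness of the gain terms rigorously, and for the $A_+$ anticommutator either fold it into the diagonal operator or prove relative compactness in this way.
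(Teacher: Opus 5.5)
Up to the $A_+$ term your argument is the paper's own. You compute $L^\pm$ from the Bohr frequencies, conjugate to get $g_\pm$, note that $(L^\pm)^\dagger L^\pm$ commute with $H$, and use the weighted-shift argument plus compactness of $X\mapsto CXD$ for compact $C,D$. Your check that $B=0$, including for non-injective $h$, covers a point the paper passes over.

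At the $A_+$ term you depart from the paper, and you are right. The paper's proof infers from compactness of $A_+$ that $X\mapsto-\tfrac12\{A_+,X\}$ is compact on $\mathscr T_2(\mathcal H)$, and that inference is false. The operators $E_{0m}:=|0\rangle\langle m|$ form an orthonormal, hence weakly null, sequence in $\mathscr T_2(\mathcal H)$. Yet $\|\{A_+,E_{0m}\}\|_2=A_+(0)+A_+(m)\to A_+(0)=|\widehat f_{\mathscr M}(h(1)-h(0))|^2>0$, because $\widehat f_{\mathscr M}>0$. So the lemma's final assertion that $K$ is compact does not hold, and you were right to flag it rather than prove it.

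What your argument does establish is the corrected decomposition
$L_{\widehat f_{\mathscr M},h(N)}=-\{A_-,\cdot\}+K$ with $K=K_{\mathrm{gain}}-\tfrac12\{A_+,\cdot\}$, i.e. compact plus bounded. Equivalently, $L_{\widehat f_{\mathscr M},h(N)}=-\{A_-+\tfrac12A_+,\cdot\}+K_{\mathrm{gain}}$ with $K_{\mathrm{gain}}$ compact.

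Watch the normalization here. Since $(L^-)^\dagger L^-=N|\widehat f_{\mathscr M}(h(N-1)-h(N))|^2=2A_-$, the loss term is $-\{A_-,X\}$; the $-\tfrac12\{A_-,X\}$ in \eqref{eq:compact_pert} is a slip. The diagonal operator to absorb into is therefore $A_-+\tfrac12A_+$, not $\tfrac12A_-+\tfrac12A_+$.

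Your relative-compactness argument is correct: the diagonal entries $(A_+(n)+A_+(m))/(A_-(n)+A_-(m)-i)$ tend to zero as $n+m\to\infty$ when $A_-(n)\to\infty$. It is, however, more than the downstream argument needs. In the second resolvent identity of Corollary~\ref{corr:perturbation2}, $K(\Phi-i)^{-1}$ is compact for \emph{every} bounded $K$ once $(\Phi-i)^{-1}$ is compact. So boundedness of $K$ alone already gives $L_{\widehat f_{\mathscr M},h(N)}$ compact resolvent and discrete spectrum.
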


\begin{proof}
Let \((e_n)_{n\ge0}\) be the number basis. Since \(H=h(N)\) is diagonal, the Bohr frequencies of \(a^\dagger\) and \(a\) are \(h(n+1)-h(n)\) and \(h(n-1)-h(n)\), hence
\[
L_+(H)=a^\dagger \widehat f_{\mathscr M}(h(N+1)-h(N)),\qquad
L_-(H)=a\,\widehat f_{\mathscr M}(h(N-1)-h(N)).
\]
Conjugation gives
\[
e^{\beta H/4}a^\dagger e^{-\beta H/4}e_n=\sqrt{n+1}\,e^{\beta(h(n+1)-h(n))/4}e_{n+1},
\]
hence \(e^{\beta H/4}a^\dagger e^{-\beta H/4}=a^\dagger e^{\beta(h(N+1)-h(N))/4}\), and similarly for \(a\). Since \(\widehat f_{\mathscr M}(h(N\pm1)-h(N))\) commutes with \(H\), this yields the stated form with \(g_\pm\). Moreover,
\[
L_+(H)^\dagger L_+(H)=(N+1)|\widehat f_{\mathscr M}(h(N+1)-h(N))|^2,\quad
L_-(H)^\dagger L_-(H)=N|\widehat f_{\mathscr M}(h(N-1)-h(N))|^2,
\]
which commute with \(H\), hence are unchanged by conjugation. For compactness, 
\[
a^\dagger g_+(N)e_n=\sqrt{n+1}g_+(n)e_{n+1},\quad
a\,g_-(N)e_n=\sqrt n g_-(n)e_{n-1},
\]
and similarly for adjoints, so these are weighted shifts with weights tending to zero and thus compact. If \(K_1,K_2\) are compact on \(\mathcal H\), then \(X\mapsto K_1XK_2\) is compact on \(\mathscr T_2\), hence both gain terms are compact. The assumption on \((n+1)|\widehat f_{\mathscr M}(h(n+1)-h(n))|^2\) implies that \(A_+\) is compact as a diagonal operator. Therefore \(X\mapsto -\tfrac12\{A_+,X\}\) is compact on \(\mathscr T_2\), and the claim follows.
\end{proof}

\noindent The previous result shows that, under certain conditions, 
\[
L_{\widehat f_{\mathscr M},h(N)}=-\tfrac12\{A_-,\cdot\}+K,
\]
with $K$ a compact operator. We may thus analyze the spectrum of the anticommutator $\tfrac12\{A_-,\cdot\}$ in the next Lemma and see it is discrete, as well.

\begin{lemma}
\label{lemm:spec}
Let $A$ be a self-adjoint operator on a separable Hilbert space 
$\mathcal H$, bounded from below, with purely discrete spectrum. 
Let $(e_n)_{n\in\mathbb N}$ be an orthonormal basis of eigenvectors,
$A e_n = \lambda_n e_n$, where $\lambda_n \to +\infty$.
Define an operator $\Phi$ on the Hilbert–Schmidt space 
$\mathscr T_2(\mathcal H)$ by
\[\begin{split}
    \Phi(X) = AX + XA, \quad 
      X \in D(\Phi)
        := \{ X\in\mathscr T_2(\mathcal{H}) : AX + XA \in \mathscr T_2(\mathcal{H}) \}.
        \end{split}
\]
Then $\Phi$ is self-adjoint with compact resolvent, and its spectrum is
given by
\[
    \operatorname{Sp}(\Phi)
    = \{ \lambda_m + \lambda_n : m,n\in\mathbb N \},
\]
where each value $\lambda_m + \lambda_n$ appears as an eigenvalue with
eigenvector $E_{mn} := |e_m\rangle\langle e_n|$. 
\end{lemma}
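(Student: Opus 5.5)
The plan is to use the unitary identification of $\mathscr T_2(\mathcal H)$ with $\ell^2(\mathbb N\times\mathbb N)$ given by the orthonormal basis $E_{mn}=|e_m\rangle\langle e_n|$. Under this identification $\Phi$ should act as the multiplication operator by the real sequence $\lambda_m+\lambda_n$, and then everything reduces to standard facts about multiplication operators on $\ell^2$.

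\textbf{Step 1: the basis.} First I will record that $\{E_{mn}\}$ is an orthonormal basis of $\mathscr T_2(\mathcal H)$. Indeed,
\[
\langle E_{mn},E_{m'n'}\rangle_{\mathrm{HS}}=\delta_{mm'}\delta_{nn'},
\]
and every $X\in\mathscr T_2(\mathcal H)$ expands as $X=\sum_{m,n}x_{mn}E_{mn}$ with $x_{mn}=\langle e_m,Xe_n\rangle$ and $\sum|x_{mn}|^2=\|X\|_2^2$.

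\textbf{Step 2: identify the domain.} For $X\in D(\Phi)$ I will compute the matrix coefficients of $AX+XA$ in this basis. The subtlety is that $AX$ and $XA$ are not separately defined for arbitrary Hilbert–Schmidt $X$. I will interpret $D(\Phi)$ in the natural weak sense: the sesquilinear pairing $\langle e_m,(AX+XA)e_n\rangle$, taken with $A$ acting on the eigenvectors. This gives
\[
\langle e_m,\Phi(X)e_n\rangle=(\lambda_m+\lambda_n)x_{mn}.
\]
Hence $D(\Phi)=\{X:\sum_{m,n}|\lambda_m+\lambda_n|^2|x_{mn}|^2<\infty\}$, which is exactly the maximal domain of the multiplication operator $M$ by $\mu_{mn}:=\lambda_m+\lambda_n$. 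If the authors intend the stricter reading with $AX,XA$ defined separately, I would show that the closure of $\Phi$ restricted to finite linear combinations of the $E_{mn}$ (a core) coincides with $M$. This domain identification is the only step with any content, and I expect it to be the main, if mild, obstacle.

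\textbf{Step 3: self-adjointness and spectrum.} A multiplication operator by a real sequence on its maximal domain is self-adjoint. Each $E_{mn}$ is an eigenvector with eigenvalue $\mu_{mn}$. The spectrum is the closure of $\{\mu_{mn}\}$. Since $A$ is bounded below, $\lambda_n\ge -c$ for some $c$, and $\lambda_n\to\infty$ implies that $\mu_{mn}\to\infty$ as $(m,n)\to\infty$ in $\mathbb N^2$: for any bound $K$, only finitely many pairs satisfy $\mu_{mn}\le K$, because $\lambda_m\le K+c$ holds for only finitely many $m$, and likewise for $n$. So the set $\{\mu_{mn}\}$ has no finite accumulation point and is closed, which gives $\operatorname{Sp}(\Phi)=\{\lambda_m+\lambda_n\}$.

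\textbf{Step 4: compact resolvent.} For $z\notin\operatorname{Sp}(\Phi)$, the resolvent $(\Phi-z)^{-1}$ is multiplication by $(\mu_{mn}-z)^{-1}$. This sequence tends to $0$ because only finitely many $\mu_{mn}$ lie in any bounded set. The resolvent is therefore a norm limit of finite-rank truncations, hence compact.
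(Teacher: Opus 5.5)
Your proof is correct and follows the same route as the paper. Both arguments expand in the orthonormal basis $E_{mn}$ of $\mathscr T_2(\mathcal H)$, identify $\Phi$ with the multiplication operator by the real sequence $\lambda_m+\lambda_n$ on $\ell^2(\mathbb N^2)$, and then read off self-adjointness, the spectrum, and compactness of the resolvent from the fact that these values have no finite accumulation point. Your version is more careful than the paper's about the meaning of $AX+XA$ for a general Hilbert--Schmidt $X$ and about the finite-counting argument. Your worry about the stricter reading of $D(\Phi)$ is also unnecessary: since $\lambda_k\ge -c$ for all $k$, one has $|\lambda_m|\le|\lambda_m+\lambda_n|+c$, so $AX$ and $XA$ are separately Hilbert--Schmidt on your weak domain, and the two readings coincide exactly.
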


\begin{proof}
The rank-one operators $E_{mn} := |e_m\rangle\langle e_n|$ form an 
orthonormal basis of $\mathscr T_2(\mathcal H)$, and one computes
\[\begin{split}
    \Phi(E_{mn})
    &= A E_{mn} + E_{mn} A = \lambda_m E_{mn} + \lambda_n E_{mn} = (\lambda_m + \lambda_n) E_{mn}.
    \end{split}
\]
Thus $\Phi$ is diagonal in this basis, with eigenvalues 
$\lambda_m + \lambda_n$.  Since $\lambda_n \to +\infty$, the set
$\{\lambda_m + \lambda_n : m,n\}$ has no finite accumulation point,
so $\Phi$ has a compact resolvent.  Self-adjointness follows because
$\Phi$ is unitarily equivalent to the diagonal multiplication operator
on $\ell^2(\mathbb N^2)$ with real diagonal entries
$(\lambda_m+\lambda_n)_{m,n}$.
\end{proof}

\noindent We can apply this Lemma to 
\[ A_-(N)=\frac{N|\widehat f_{\mathscr M}(h(N-1)-h(N))|^2}{2} \]
and define
\[\Phi(X):=-(A_-(N)X+XA_-(N)). \]
We thus conclude that 
\begin{corollary}
\label{corr:perturbation2}
  Under the assumption of Lemma \ref{lem:conjugated-hN-compact} and if $A_-(N)$ has compact resolvent, then the generator $L_{\widehat{f}_{\mathscr M},h(N)}$ has a compact resolvent and thus a purely discrete spectrum. In particular, any perturbation of $\Phi$ that is relatively $\Phi$-bounded, with a relative bound $<1$, leads to a generator with a discrete spectrum and thus a spectral gap.
\end{corollary}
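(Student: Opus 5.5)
The plan is to treat $L:=L_{\widehat f_{\mathscr M},h(N)}$ as a compact perturbation of $\Phi=-\{A_-(N),\bullet\}$. By Lemma~\ref{lem:conjugated-hN-compact}, $L=\Phi+K$ with $K$ compact on $\mathscr T_2(\mathcal H)$; the normalisation is absorbed into the definition of $A_-$. If $A_-(N)$ has compact resolvent, then it is diagonal in the Fock basis with eigenvalues $\lambda_n=\tfrac{n}{2}|\widehat f_{\mathscr M}(h(n-1)-h(n))|^2\to+\infty$. Lemma~\ref{lemm:spec} then gives that $\Phi$ is self-adjoint with compact resolvent and spectrum $\{-(\lambda_m+\lambda_n)\}$.

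First step: compactness of the resolvent of $L$. Fix $z$ with $\operatorname{Im} z\neq 0$, say $z=i$. Since $K$ is bounded, $D(L)=D(\Phi)$, and the second resolvent identity reads
\[
(L-z)^{-1}=(\Phi-z)^{-1}-(L-z)^{-1}K(\Phi-z)^{-1}.
\]
Here $(L-z)^{-1}$ exists and is bounded, because $L$ is self-adjoint on $\mathscr T_2$ (it is the KMS-symmetric generator, and a bounded symmetric perturbation of the self-adjoint $\Phi$ is self-adjoint by Kato--Rellich). The right-hand side is then a compact operator plus a bounded operator composed with a compact one, so $(L-z)^{-1}$ is compact. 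Hence $\operatorname{Sp}(L)$ is purely discrete, consisting of isolated eigenvalues of finite multiplicity accumulating only at $-\infty$.

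Second step: the gap. Since $L\le 0$ generates a contraction semigroup and $0$ is an eigenvalue with eigenvector $\sigma_\beta^{1/2}$, discreteness implies that $0$ is isolated. The gap is then $\min(\operatorname{Sp}(-L)\setminus\{0\})>0$, provided $\ker L$ is one-dimensional, i.e.\ the fixed point is unique. Uniqueness is part of the framework (Proposition~\ref{condpropdefsampler} and the cited companion results). If needed, it can also be checked directly from the Dirichlet form: $\langle X,-LX\rangle=0$ forces $X$ to commute, after conjugation, with $a$ and $a^\dagger$, which are irreducible.

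Third step: the statement about perturbations. Let $P$ be $\Phi$-bounded with relative bound $b<1$. Write $L+P=\Phi+(K+P)$. Since $K$ is bounded, $K+P$ is also $\Phi$-bounded with bound $b<1$. The standard estimate $\|P(\Phi-z)^{-1}\|\le a/|{\operatorname{Im}z}|+b$ gives $\|(K+P)(\Phi-z)^{-1}\|<1$ for $|{\operatorname{Im} z}|$ large. A Neumann series then yields
\[
(\Phi+K+P-z)^{-1}=(\Phi-z)^{-1}\big(1+(K+P)(\Phi-z)^{-1}\big)^{-1},
\]
which is compact as a compact operator times a bounded one. Discreteness follows, and the gap follows as above given a simple zero eigenvalue.

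The main obstacle I expect is this last part. The resolvent argument needs a point $z$ in the resolvent set of the perturbed operator, which requires either self-adjointness or sectoriality of the perturbed generator. For non-self-adjoint $P$ I would rely on the Neumann-series point $z=i t$ with $t$ large, which needs no self-adjointness. For the gap itself, however, one needs $L+P$ to be self-adjoint and nonpositive, so that the discrete spectrum lies in $(-\infty,0]$. In the application of Corollary~\ref{corr:perturbation3} this holds because the perturbed operator is again a KMS generator.
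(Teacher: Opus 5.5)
Your proposal is correct and matches the paper's proof for the part the paper actually proves. You write $L_{\widehat f_{\mathscr M},h(N)}=\Phi+K$ with $K$ compact (Lemma~\ref{lem:conjugated-hN-compact}) and use the second resolvent identity at $z=i$ to pass compactness from $(\Phi-i)^{-1}$ (Lemma~\ref{lemm:spec}) to $(L_{\widehat f_{\mathscr M},h(N)}-i)^{-1}$. The paper stops there, and your additions are correct and supply what it leaves implicit: self-adjointness of $L$ to ensure $i$ is in its resolvent set; a kernel spanned by $\sigma_\beta^{1/2}$ alone, so that discreteness yields a positive gap; and the Neumann-series argument at $z=it$, $|t|$ large, for the relatively $\Phi$-bounded clause, which the paper states without proof.
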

\begin{proof}
    By Lemma \ref{lem:conjugated-hN-compact}, the difference $L_{\widehat{f}_{\mathscr M},h(N)}-\Phi$ is compact and by Lemma \ref{lemm:spec} the operator $\Phi$ has discrete spectrum. 
 Since $L_{\widehat f_M,h(N)}=\Phi+K$
with \(K\) compact, the second resolvent identity gives
\[
(L_{\widehat f_M,h(N)}-i)^{-1}-(\Phi-i)^{-1}
=
-(L_{\widehat f_M,h(N)}-i)^{-1}K(\Phi-i)^{-1}.
\]
Because \(K\) is compact and both resolvents are bounded, the right-hand side is compact. Since \((\Phi-i)^{-1}\) is compact, it follows that \((L_{\widehat f_M,h(N)}-i)^{-1}\) is compact as well. Hence \(L_{\widehat f_M,h(N)}\) has compact resolvent, and therefore discrete spectrum.
\end{proof}
\noindent By the same argument combined with Lemma \ref{thm:finite-rank-perturbation-conjugated-generator}, we have
\begin{corollary}
\label{corr:perturbation3}
 Let $H_0=\sum_{i=1}^n h(N_i)$ be defined. Then, under the assumptions of Corollary \ref{corr:perturbation2}, the generator $L_{\widehat{f}_\mathscr M,H}$ associated with the Hamiltonian $H$ as in Lemma \ref{thm:finite-rank-perturbation-conjugated-generator}, i.e., \(H=H_0+V\), where \(V=PVP\) is bounded and self-adjoint for some finite-rank orthogonal projection \(P\) with \([P,H_0]=0\), has a discrete spectrum and thus a positive spectral gap.
\end{corollary}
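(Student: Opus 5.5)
We first treat the unperturbed multi-mode generator for $H_0=\sum_{i=1}^n h(N_i)$. Since the modes do not interact and the bare jumps $\{a_i,a_i^\dagger\}$ act on single sites, the Bohr-frequency expansion \eqref{def:Lalpha} factorizes. Concretely, $L^{(i,\pm)}(H_0)$ equals the single-mode jump of Lemma~\ref{lem:conjugated-hN-compact} acting on site $i$, tensored with the identity elsewhere. The $B$-term \eqref{eq:Bintro} vanishes because $(L^\alpha)^\dagger L^\alpha$ is diagonal in the product Fock basis. Hence $L_{\widehat f_{\mathscr M},H_0}=\sum_i L^{(i)}$, where $L^{(i)}$ is the single-mode generator on site $i$.

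Applying Lemma~\ref{lem:conjugated-hN-compact} sitewise gives $L^{(i)}=-\{A_-(N_i),\bullet\}+K_i$. However, $K_i$ is only compact on $\mathscr T_2$ of the $i$-th factor, so $K_i\otimes\mathrm{id}$ is not compact. We therefore do not sum compact pieces. We instead set $A:=\sum_i A_-(N_i)$, which has discrete spectrum with eigenvalues tending to $+\infty$ provided $A_-(n)\to\infty$, an assumption implicit in "$A_-(N)$ has compact resolvent". By Lemma~\ref{lemm:spec}, $\Phi:=-\{A,\bullet\}$ has compact resolvent. We then show $\sum_i K_i\otimes\mathrm{id}$ is $\Phi$-bounded with relative bound $0$. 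The gain terms are weighted shifts $X\mapsto a_i^\dagger g_+(N_i)Xg_+(N_i)a_i$ whose weights vanish at high occupation of site $i$. Off a finite set, these weights can be bounded by $\epsilon\,\|\{A,X\}\|+C_\epsilon\|X\|$ using $\{A,X\}\ge$ the $i$-th occupation contribution in the product basis. The same holds for $\{A_+(N_i),\bullet\}$. By the last clause of Corollary~\ref{corr:perturbation2}, $L_{\widehat f_{\mathscr M},H_0}$ has compact resolvent.

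Next we pass to $H=H_0+V$. Lemma~\ref{thm:finite-rank-stability-conjugated-jumps} applies, since $[P,H_0]=0$ and $PA^\alpha$ is bounded because $P$ is finite rank and creation/annihilation operators are bounded on finite particle sectors. It gives $L^\alpha_{\pm,H}=L^\alpha_{\pm,H_0}+\mathcal R^\alpha_\pm$ with $\mathcal R^\alpha_\pm$ finite rank. Lemmas~\ref{thm:finite-rank-stability-quadratic} and \ref{thm:finite-rank-perturbation-conjugated-generator} then express $L_{\widehat f_{\mathscr M},H}-L_{\widehat f_{\mathscr M},H_0}$ through the following terms:
\begin{itemize}
\item sandwich terms $X\mapsto \mathcal R X (L_0)^\dagger$, $L_0X\mathcal R^\dagger$ and $\mathcal RX\mathcal R^\dagger$;
\item one-sided multiplications by the finite-rank operators $L_0^\dagger\mathcal R+\mathcal R^\dagger L_0+\mathcal R^\dagger\mathcal R$;
\item the commutator terms $B_{\pm,H}-B_{\pm,0}$.
\end{itemize}

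A sandwich $X\mapsto F X T$ with $F$ finite rank and $T$ bounded is a finite-rank map on $\mathscr T_2$, hence compact. Here $L_0=L^\alpha_{\pm,H_0}$ is bounded because $\sqrt{n}\,|g_\pm(n)|\to0$. One-sided multiplication $X\mapsto FX$ by a finite-rank $F$ is not compact on $\mathscr T_2$, since it equals $F\otimes\mathrm{id}$. We handle it by relative boundedness instead: write $F=\Pi F\Pi'$ with $\Pi$ a projection onto finitely many occupations. Then $\|FX\|_2\le\|F\|\,\|\Pi' X\|_2$, and $\|\Pi'X\|_2$ is controlled by $\epsilon\|\Phi X\|_2+C_\epsilon\|X\|_2$, because $\Phi$ acts as $\lambda_m+\lambda_n$ on $|e_m\rangle\langle e_n|$ and $\lambda_n\to\infty$. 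The $B$-difference is treated in the same way, since each term again carries a factor of $\mathcal R$ on one side.

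Altogether, $L_{\widehat f_{\mathscr M},H}=\Phi+(\text{compact})+(\Phi\text{-bounded with bound }0)$. A relatively bounded perturbation with bound $0$ preserves compactness of the resolvent; this is the Kato-type argument of Corollary~\ref{corr:perturbation2} combined with the second resolvent identity. Hence the spectrum is discrete. Since $\sigma_\beta(H)$ is the unique fixed state by Proposition~\ref{condpropdefsampler} (irreducibility, as with the jumps $a_i,a_i^\dagger$), the eigenvalue $0$ is simple. Discreteness then forces a positive gap.

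We expect the main obstacle to be the failure of compactness for the tensorized one-sided terms in the multi-mode setting. Both $K_i\otimes\mathrm{id}$ and $\{F,\bullet\}$ for finite-rank $F$ are noncompact on $\mathscr T_2(\mathcal H^{\otimes n})$. The argument therefore rests on establishing the infinitesimal $\Phi$-boundedness estimates above, rather than literal compactness.
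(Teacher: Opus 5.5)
Your argument is sound, and it departs from the paper's proof at exactly the point where that proof is weakest.

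The paper asserts that a ``straightforward multi-mode extension'' of Lemma~\ref{lem:conjugated-hN-compact} makes $L_{\widehat f_{\mathscr M},H_0}-\Phi$ compact. It then asserts that every term of $L_{\widehat f_{\mathscr M},H}-L_{\widehat f_{\mathscr M},H_0}$ is compact, and concludes with the second resolvent identity. You are right that neither claim holds:
\begin{itemize}
\item For $n\ge 2$ the single-site remainders act as $K_i\otimes\mathrm{id}$. Their sum is not compact either: test it on $x\otimes\kb{k}$ with $k\to\infty$ in a second site.
\item The map $X\mapsto FX$ is $F\otimes\mathrm{id}$ on $\mathcal H\otimes\overline{\mathcal H}$, which is never compact for $F\neq 0$. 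This already affects the term $\{A_+,\cdot\}$ in the single-mode lemma.
\end{itemize}
Replacing compactness by relative $\Phi$-boundedness with bound $0$ is the right repair. It is also the mechanism the paper itself sketches for Theorem~\ref{thm:gap-persistence}. In addition, you make explicit that the eigenvalue $0$ is simple, which the paper uses tacitly.

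One step needs correcting, though the conclusion survives. A sandwich $X\mapsto FXT$ with $F$ finite rank and $T$ bounded is \emph{not} a finite-rank map on $\mathscr T_2$. For $F=|u\rangle\langle v|$ its range is $\{|u\rangle\langle w|:w\in\mathcal H\}$, and the map is compact only if $T$ is. In the multi-mode case $L^\alpha_{+,0}$ is a weighted shift on one site tensored with the identity, so it is not compact. These terms therefore belong in your relatively bounded bucket, and since $\|FXT\|_2\le\|F\|\,\|T\|\,\|X\|_2$, nothing breaks.

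You can in fact drop the $\epsilon$-estimates altogether. Your stated reason for $\|\Pi'X\|_2\le\epsilon\|\Phi X\|_2+C_\epsilon\|X\|_2$ is off, since $\Pi'$ restricts only the left index, but that bound holds trivially with $C_\epsilon=1$. More to the point, every piece of $W:=L_{\widehat f_{\mathscr M},H}-\Phi$ is a bounded map on $\mathscr T_2$:
\begin{itemize}
\item the weights $\sqrt{n+1}\,g_+(n)$ and $\sqrt n\,g_-(n)$ are bounded, as is $A_+$;
\item all the finite-rank coefficients are bounded. This includes $(L^\alpha_{-,0})^\dagger\mathcal R^\alpha_+$, which arises from $K^\alpha_+=(L^\alpha_-)^\dagger L^\alpha_+$. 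Here $\mathcal R^\alpha_+$ has range in finitely many Fock states, even though $L^\alpha_{-,0}$ is unbounded for $A^\alpha=a_i$.
\end{itemize}
Hence $\|W(\Phi-i\mu)^{-1}\|\le\|W\|/\mu<1$ for large $\mu$. It follows that $(L_{\widehat f_{\mathscr M},H}-i\mu)^{-1}=(\Phi-i\mu)^{-1}\bigl(1+W(\Phi-i\mu)^{-1}\bigr)^{-1}$ is compact, which gives discreteness directly.
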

\begin{proof}
A straightforward multi-mode extension of Lemma \ref{lem:conjugated-hN-compact} shows that $L_{\widehat{f}_{\mathscr M},H_0}$ is a compact perturbation of $\Phi(X)=-(\sum_{i=1}^n A_-(N_i)X+X \sum_{i=1}^n A_-(N_i))$, in the sense that $L_{\widehat{f}_{\mathscr M},H_0}-\Phi$ is a compact operator. 

We can then invoke Lemma \ref{thm:finite-rank-perturbation-conjugated-generator} to see that $L_{\widehat{f}_\mathscr M,H}-\Phi$ is also a compact operator. To be precise, every Lindblad operator $L_{\pm}^{\alpha}$ associated with $H$ is a finite rank perturbation of $L_{\pm,0}^{\alpha}$ in the eigenbasis of $H_0$. This implies that all terms in $L_{\widehat{f}_\mathscr M,H}-L_{\widehat{f}_\mathscr M,H_0}$ are compact, and thus $L_{\widehat{f}_\mathscr M,H}-\Phi$ is compact. 
Arguing as above, since $L_{\widehat f_M,H}=\Phi+K$
with \(K\) compact, the second resolvent identity gives
\[
(L_{\widehat f_M,H}-i)^{-1}-(\Phi-i)^{-1}
=
-(L_{\widehat f_M,H}-i)^{-1}K(\Phi-i)^{-1}.
\]
Because \(K\) is compact and both resolvents are bounded, the right-hand side is compact. Since \((\Phi-i)^{-1}\) is compact, it follows that \((L_{\widehat f_M,H}-i)^{-1}\) is compact as well. Hence, \(L_{\widehat f_M,H}\) has a compact resolvent, and therefore a discrete spectrum.

\end{proof}

\subsection{Gap estimates via Dirichlet form perturbative analysis}

\noindent Beyond perturbations of Gaussian models, in Lemma~\ref{lem:GapPertDirichlet2} below, we prove stability under small perturbations of the spectral gap of the self-adjoint generators $L_{\widehat{f},H}$. 
We consider the derivation operator for $t\in\R$ and $x\in\mathscr{F}\subset \mathscr{T}_2(\mathcal{H})$
\begin{align*}
    \partial^\alpha_t(x) &:= \sum_{\nu\in B(H)} \widehat f(\nu) e^{i\nu t}  e^{\beta \nu/4} \delta^\alpha_\nu(x) = \sum_{\nu\in B(H)} \widehat f(\nu) e^{i\nu t} \left(A^\alpha_\nu x - e^{\beta \nu/2}xA^\alpha_\nu\right) \\
    &= \sum_{E', E\in\spec(H)} \widehat f(E'-E) e^{i(E'-E) t} \left(P_{E'}A^\alpha P_E x - e^{\beta (E'-E)/2}xP_{E'}A^\alpha P_{E}\right).
\end{align*}
With that, we can write the Dirichlet form $\mathcal{E}_{\widehat{f},H}(x)=-\langle x, L_{\widehat{f},H} x\rangle$ as \cite{BeckerRouzeSalzmannToAppearcmp}
\begin{align}
\cE_{\widehat{f},H}(x) = \sum_{\alpha\in\mathcal{A}}\int_{-\infty}^\infty \frac{\langle \partial^\alpha_t(x),\partial^\alpha_t(x)\rangle}{\beta \cosh(2\pi t/\beta)} \,dt.
\end{align}
\noindent We consider below Hamiltonians $H$ and $\widetilde H$ and denote for simplicity the associated objects in the space of Hilbert Schmidt operators by  $\partial^{\alpha}_t$ and $\widetilde \partial^{\alpha}_t$, $\mathcal{E}$ and $\widetilde{\mathcal{E}}$, and $L$ and $\widetilde{L}$ for a same fixed function $\widehat{f}$, respectively.
\begin{lemma}
\label{lem:GapPertDirichlet2}
Let $H$ and $\widetilde H$ be such that the difference $\partial^{\alpha}_t - \widetilde \partial^{\alpha}_t$ of their corresponding derivations is bounded with operator norm satisfying the bound
\begin{align}
 \Delta :=\sum_{\alpha\in\mathcal{A}}\int_{-\infty}^{\infty}\frac{\|\partial^{\alpha}_t - \widetilde \partial^{\alpha}_t\|^2}{\beta\cosh(2\pi t/\beta)}\,dt<\infty.
\end{align}
Then, for all $\kappa_1,\kappa_2>0$, we have
\begin{align}
\label{eq:tildeDirichiLowerBound}
\widetilde \cE(x) \ge  
 \tfrac{(1-\frac{\kappa_1}{2})\cE(x) - (\kappa_1^{-1}+\kappa_2^{-1})\frac{\Delta}{2}\|x\|_{2}^2}{ 1+\frac{\kappa_2}{2}}.
\end{align}
Assume further that $L$ has a positive spectral gap that satisfies $\operatorname{gap}(L) >\Delta.$ Then also $\widetilde L$ has a positive spectral gap that satisfies
\begin{align}
\label{eq:tildeGapLowerbound2}
\operatorname{gap}(\widetilde L)  \ge \left(\sqrt{\operatorname{gap}(L)} - \sqrt{\Delta}\right)^{2}.
\end{align}
\end{lemma}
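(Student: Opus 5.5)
The plan is to compare the two Dirichlet forms through their integral representations over $t$, using a pointwise Young-type inequality on the integrand. Set $D^\alpha_t:=\widetilde\partial^\alpha_t-\partial^\alpha_t$. Then $\widetilde\partial^\alpha_t(x)=\partial^\alpha_t(x)+D^\alpha_t(x)$, and we work with the weight $w(t):=(\beta\cosh(2\pi t/\beta))^{-1}$.

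\emph{First inequality.} I would prove it in two elementary steps, each one a weighted triangle/Young inequality, and then integrate against $w$.

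1. From $\partial^\alpha_t x=\widetilde\partial^\alpha_t x-D^\alpha_t x$ and $\|u+v\|^2\le(1+\tfrac{\kappa_2}{2})\|u\|^2+(1+\tfrac{2}{\kappa_2})\|v\|^2$, we get
\[
\|\partial^\alpha_t x\|^2\le (1+\tfrac{\kappa_2}{2})\|\widetilde\partial^\alpha_t x\|^2+(1+\kappa_2^{-1}\cdot 2)\|D^\alpha_t x\|^2 .
\]
2. The constants in the target inequality suggest doing the lower bound more carefully. Expand
\[
\|\widetilde\partial x\|^2=\|\partial x\|^2+2\operatorname{Re}\langle\partial x,Dx\rangle+\|Dx\|^2,
\]
and bound the cross term by $2|\langle\partial x,Dx\rangle|\le\tfrac{\kappa_1}{2}\|\partial x\|^2+\tfrac{2}{\kappa_1}\|Dx\|^2$. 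Combining this with a second split involving $\kappa_2$ and rearranging should reproduce exactly
\[
(1+\tfrac{\kappa_2}{2})\widetilde{\mathcal E}\ge(1-\tfrac{\kappa_1}{2})\mathcal E-(\kappa_1^{-1}+\kappa_2^{-1})\tfrac{\Delta}{2}\|x\|_2^2 .
\]
Here $\Delta\|x\|^2$ comes from $\int w\,\|D^\alpha_t\|^2\,\|x\|_2^2\,dt$ summed over $\alpha$.

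I would tune the Young constants at the end to match the stated coefficients. Any valid constants give the same structure, so exact bookkeeping is routine.

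\emph{Gap bound.} The cleaner route is a direct triangle inequality in the weighted $L^2$ space $\mathcal K:=\bigoplus_\alpha L^2(\mathbb R,w\,dt;\mathscr T_2)$. There $\mathcal E(x)=\|\partial x\|_{\mathcal K}^2$ and $\|Dx\|_{\mathcal K}\le\sqrt\Delta\,\|x\|_2$. Hence $\sqrt{\widetilde{\mathcal E}(x)}\ge\sqrt{\mathcal E(x)}-\sqrt\Delta\|x\|_2$.

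For $x$ orthogonal to the kernel, the variational characterization gives $\mathcal E(x)\ge\operatorname{gap}(L)\|x\|_2^2$. Since $\operatorname{gap}(L)>\Delta$, the right-hand side is then positive and $\widetilde{\mathcal E}(x)\ge(\sqrt{\operatorname{gap}L}-\sqrt\Delta)^2\|x\|_2^2$. (Equivalently, optimize the first inequality over $\kappa_1,\kappa_2$.)

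\emph{Main obstacle: the kernel.} The gap of $\widetilde L$ is taken relative to its own kernel, spanned by $\widetilde\sigma^{1/2}$, not the kernel of $L$ (spanned by $\sigma^{1/2}$). So the above estimate only controls $\widetilde{\mathcal E}$ on the orthocomplement of $\sigma^{1/2}$. I would handle this with the min-max principle. The estimate holds on a subspace of codimension one, so the second-smallest eigenvalue of $-\widetilde L$ is at least $(\sqrt{\operatorname{gap}L}-\sqrt\Delta)^2$. Since $0$ is an eigenvalue of $-\widetilde L$, this lower bound is the spectral gap, which gives \eqref{eq:tildeGapLowerbound2}.

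Domain issues can be avoided by working on the common core $\mathscr F$ and extending by closure of the forms.
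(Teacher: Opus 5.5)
\textbf{Gap: the first inequality is never derived.} Your argument for the gap bound is correct, but the first inequality, with the stated constants and for all $\kappa_1,\kappa_2>0$, is not actually proved.

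Your step~1 gives $(1+\tfrac{\kappa_2}{2})\widetilde{\mathcal E}(x)\ge\mathcal E(x)-(1+\tfrac{2}{\kappa_2})\Delta\|x\|_2^2$. Your step~2 gives $\widetilde{\mathcal E}(x)\ge(1-\tfrac{\kappa_1}{2})\mathcal E(x)-(\tfrac{2}{\kappa_1}-1)\Delta\|x\|_2^2$ (for $\kappa_1\le 2$). Neither is the claimed bound, and the ``second split involving $\kappa_2$'' that should combine them is left unspecified. Different constants give different statements, so this cannot be left as routine bookkeeping.

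Step~1 can in fact be made to imply the claim for $\kappa_1<2$ (the case $\kappa_1\ge 2$ is trivial). Replace its $\kappa_2$ with $2\kappa_1\kappa_2(2+\kappa_2)/(\kappa_1+\kappa_2)$. Checking that this works needs the identity $(\kappa_1+\kappa_2)^2-\kappa_1\kappa_2(2-\kappa_1)(2+\kappa_2)=(\kappa_1-\kappa_2+\kappa_1\kappa_2)^2$, which is not a routine tuning.

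The paper gets the inequality in one line by splitting the difference symmetrically between $\partial^\alpha_t x$ and $\widetilde\partial^\alpha_t x$. From $\|u\|^2-\|v\|^2=\langle u-v,u\rangle+\langle v,u-v\rangle$ and Cauchy--Schwarz,
\[
|\mathcal E(x)-\widetilde{\mathcal E}(x)|\le\sum_{\alpha\in\mathcal A}\int_{-\infty}^{\infty}\frac{\|(\partial^\alpha_t-\widetilde\partial^\alpha_t)(x)\|_2\,\big(\|\partial^\alpha_t(x)\|_2+\|\widetilde\partial^\alpha_t(x)\|_2\big)}{\beta\cosh(2\pi t/\beta)}\,dt .
\]
Then apply $ab\le\tfrac12(\kappa a^2+\kappa^{-1}b^2)$ with $\kappa=\kappa_1$ on the $\partial$-product and $\kappa=\kappa_2$ on the $\widetilde\partial$-product. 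This gives $\mathcal E-\widetilde{\mathcal E}\le\tfrac{\kappa_1}{2}\mathcal E+\tfrac{\kappa_2}{2}\widetilde{\mathcal E}+(\kappa_1^{-1}+\kappa_2^{-1})\tfrac{\Delta}{2}\|x\|_2^2$, which rearranges exactly to the claim.

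You already have what is needed. Your triangle inequality $\sqrt{\mathcal E(x)}\le\sqrt{\widetilde{\mathcal E}(x)}+\|Dx\|_{\mathcal K}$ gives $\mathcal E(x)-\widetilde{\mathcal E}(x)\le\|Dx\|_{\mathcal K}\big(\sqrt{\mathcal E(x)}+\sqrt{\widetilde{\mathcal E}(x)}\big)$, and the same two Young splits finish.

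\textbf{The gap bound: a different and more direct route.} The paper inserts $\mathcal E(x)\ge\operatorname{gap}(L)\|x\|_2^2$ on $\{\sqrt{\sigma_\beta}\}^\perp$ into the two-parameter inequality and invokes the max--min theorem. It then substitutes $\kappa_1^*=\sqrt{\Delta/\operatorname{gap}(L)}$ and $\kappa_2^*=\sqrt\Delta/(\sqrt{\operatorname{gap}(L)}-\sqrt\Delta)$, which after some algebra yields $(\sqrt{\operatorname{gap}(L)}-\sqrt\Delta)^2$.

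Your reverse triangle inequality in $\mathcal K$ produces the same constant at once and makes the first inequality unnecessary for the second claim. Since the first inequality is itself a consequence of that triangle inequality, the paper's two-parameter form carries no extra information; it serves only as the vehicle for the optimization.

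Your handling of the kernel matches what the paper's appeal to the max--min theorem does. You bound $\widetilde{\mathcal E}$ on the codimension-one space $\{\sqrt{\sigma_\beta}\}^\perp=(\ker L)^\perp$, then use min--max together with $0\in\operatorname{Sp}(-\widetilde L)$. Like the paper, this relies on $\ker L$ being spanned by $\sqrt{\sigma_\beta}$.

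A minor point: $\mathscr F$ is built from the eigenbasis of $H$, not $\widetilde H$, so calling it a common core needs justification. The paper is equally formal there.
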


\begin{proof} By the Cauchy Schwarz inequality, we see for all $\kappa_1,\kappa_2>0$ that
\begin{align*}
    |\cE(x) - \widetilde\cE(x)|&\le \sum_{\alpha\in \mathcal{A}} \int_{-\infty}^\infty  \frac{\left(\|(\partial^{\alpha}_t - \widetilde \partial^{\alpha}_t)(x)\|_2\|\partial^\alpha_t(x)\|_{2} +\|(\partial^{\alpha}_t - \widetilde \partial^{\alpha}_t)(x)\|_2\|\widetilde \partial^\alpha_t(x)\|_{2}\right)}{\beta \cosh(2\pi t/\beta)}dt\\
    &\le \left(\kappa_1^{-1} +\kappa^{-1}_2\right) \frac{\Delta}{2}\|x\|_{2}^2 +\frac{1}{2}\sum_{\alpha\in \mathcal{A}} \int_{-\infty}^\infty dt \frac{\left(\kappa_1\langle\partial^\alpha_t(x),\partial^\alpha_t(x)\rangle +  \kappa_2 \langle\widetilde\partial^\alpha_t(x),\widetilde\partial^\alpha_t(x)\rangle\right)}{\beta \cosh(2\pi t/\beta)} \\
    &=\left(\kappa_1^{-1} +\kappa^{-1}_2\right) \frac{\Delta}{2} \|x\|_{2}^2+\frac{\kappa_1}{2}\cE(x) + \frac{\kappa_2}{2} \widetilde\cE(x) 
\end{align*}
where we used the standard inequality $ab\le \tfrac{\kappa a^2+\kappa^{-1}b^2}{2}$ which holds for all $a,b\in\R$ and $\kappa>0.$ \bin{and further that $\int_{-\infty}^\infty  \frac{1}{\beta \cosh(2\pi t/\beta)} dt= 1/2.$} From this, we immediately see \eqref{eq:tildeDirichiLowerBound}.
To prove \eqref{eq:tildeGapLowerbound2} assume $L$ is gapped with $\operatorname{gap}(L) > \Delta$. By definition, we have 
\begin{align*}
    \cE(x) \ge \operatorname{gap}(L) \|x\|^2_{2} 
\end{align*}
for all $x\in\{\sqrt{\sigma_\beta}\}^\perp.$ Hence, we see by \eqref{eq:tildeDirichiLowerBound} and the max-min theorem for eigenvalues of self-adjoint operators that the spectral gap of $\widetilde L$ is lower bounded as 
\begin{align}
\label{eq:TildeGapLowerKappa}
   \operatorname{gap}(\widetilde L) \ge  
   \left(1+\frac{\kappa_2}{2}\right)^{-1}\left((1-\frac{\kappa_1}{2})\operatorname{gap}(L) - (\kappa_1^{-1}+\kappa_2^{-1})\frac{\Delta}{2}\right).
\end{align}
For $\operatorname{gap}(L) >\Delta$, the optimal $\kappa_1$ and $\kappa_2$ are given by
\begin{align*}
\kappa_1^* &= \sqrt{\frac{\,\Delta}{\operatorname{gap}(L)}}\text{ and }
\kappa_2^* = \frac{\sqrt{\Delta}}{\sqrt{\operatorname{gap}(L)} - \sqrt{\Delta}}.
\end{align*}
Inserting these into \eqref{eq:TildeGapLowerKappa} yields \eqref{eq:tildeGapLowerbound2}.

\end{proof}

\section{Mean-field Bose-Hubbard model}
\label{sec:meanfieldBH}

We first consider the generator corresponding to the mean-field Bose-Hubbard model $H_{\operatorname{MF}}$ obtained by decoupling the hopping term in terms of the superfluid order parameter $\psi = \langle  a_i \rangle\in\mathbb{C}$. By a thorough perturbative spectral analysis, we show that for small enough $|\psi|$, $L_{\smash{\sigma_E,\hatfM,H_{\operatorname{MF}}}}$ is gapped. Our results also extend beyond the mean-field regime, for certain regularizations of the Bose-Hubbard Hamiltonian $H$ which prove relevant when studying the superfluid and Mott-insulating phases, see Section \ref{sec.regulBH} for more details.

 More precisely, the mean-field description is obtained by decoupling the hopping term in terms of the superfluid order parameter $\psi = \langle  a_i \rangle$, 
\begin{equation}
 a_i^\dagger  a_j \approx \psi^*  a_j +  a_i^\dagger \psi - \vert \psi \vert^2 ,
\end{equation}
which neglects quadratic fluctuations around $\psi$. This reduces the problem to an effective single-site Hamiltonian
\begin{equation*}
 H_{\text{MF}} = \sum_i 
 -J \left( \psi  a_i^\dagger + h.c.-\vert\psi \vert^2  \right)
 + \tfrac{U N_i^2}{2} 
 - (\mu+\tfrac{U}{2})  N_i ,
\end{equation*}
where the mean-field phase diagram follows from the self-consistency condition $\psi = \langle  a_i \rangle_{\text{MF}}$. Thus, in this section, we study the single-particle Hamiltonian defined by
\begin{equation}
\label{eq:mfBHHam}
    H:=-\mu N +U \tfrac{N(N-1)}{2} -\overline{\psi} a - \psi a^\dagger  + \vert \psi \vert^2
\end{equation}
for some parameters $\mu\in\mathbb{R}$, $U>0$, and $\psi\in\mathbb{C}$. Furthermore, we consider $\mathcal{A}=\{+,-\}$, bare jumps $A^+ = a^\dagger$, and $A^-=a$, as well as the filter function 
\begin{align}
\label{eq:Function}
    \hatfM(\nu) = \exp\left(-\frac{\sqrt{1+(\beta\nu)^2}+ \beta\nu}{4}\right),\qquad \qquad \beta>0.
\end{align}
 \begin{figure}[h!]
     \centering
     \includegraphics[width=0.3\linewidth]{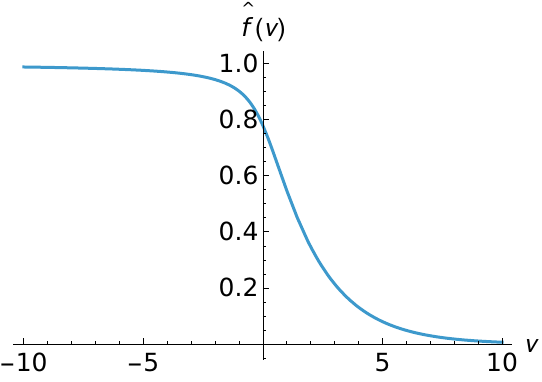}
     \caption{Metropolis-type filter $\hatfM$ in \eqref{eq:Function} for $\beta=1$ smoothly approximating a step function.}
     \label{fig:placeholder}
 \end{figure}

\noindent Let $\mathcal{L}_{\operatorname{MF}}$ be the corresponding generator in the Schrödinger picture as defined in Section \ref{sec:CVGS}.%\ref{propDirichlettoSchrointro} 
In the usual way, we can define the corresponding self-adjoint generator, $L_{\hatfM,H}$, on the space of Hilbert-Schmidt generators through the relation 
\begin{align}\label{eqcLtoL}
\Phi_t\circ \iota_2(x)=\iota_2\circ e^{tL_{\hatfM,H}}(x).
\end{align}
 Note that the Gibbs state $\sigma_\beta$ is a well-defined state since $H$ has a discrete spectrum with sufficiently fast growing eigenvalues, as we establish in Lemma~\ref{lem:mfBHEigPert} below. The main result of this section is that $L_{\hatfM,H}$ has a positive spectral gap, as shown in the following theorem.
\begin{theorem}
\label{thm:mfBHGap}
Let $\beta, U>0$ and $\mu\in\R$. Then, for all $\psi\in\C$ with $|\psi|$ small enough, we have
$\operatorname{gap}(L_{\hatfM,H})>0.$
\end{theorem}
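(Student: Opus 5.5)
The plan is to compare $L_{\hatfM,H}$ with the generator $L_{\hatfM,H_0}$ for the diagonal reference Hamiltonian $H_0:=h(N)$, where $h(n)=-\mu n+\tfrac U2 n(n-1)$. The comparison is made through the Dirichlet-form perturbation bound of Lemma~\ref{lem:GapPertDirichlet2}, applied only on a high-energy part of the space. The low-energy part is treated separately.

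\textbf{Step 1 (reference gap).} For $H_0=h(N)$ with quartic $h$, the Bohr frequencies of $a^\dagger$ are $h(n+1)-h(n)=Un-\mu\to+\infty$. Since $\hatfM(\nu)\sim e^{-\beta\nu/2}$ as $\nu\to+\infty$ and $\hatfM(\nu)\to e^{-1/4}$-like behaviour as $\nu\to-\infty$, the decay hypotheses of Lemma~\ref{lem:conjugated-hN-compact} hold. Moreover, $A_-(N)\sim N/2$ has compact resolvent. Corollary~\ref{corr:perturbation2} then gives discrete spectrum for $L_{\hatfM,H_0}$. Together with uniqueness of the fixed point (irreducibility: $a,a^\dagger$ connect all Fock states), this yields $\operatorname{gap}(L_{\hatfM,H_0})=:g_0>0$.

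\textbf{Step 2 (spectral perturbation of $H$).} Write $H=H_0+|\psi|^2+W$ with $W=-\overline\psi a-\psi a^\dagger$. Since $\|W(N+1)^{-1/2}\|\lesssim|\psi|$ while the gaps $h(n+1)-h(n)$ grow linearly, $W$ is infinitesimally small relative to $H_0$ in the sense relevant for isolated eigenvalues. By analytic (Kato–Rellich) perturbation theory, for $|\psi|$ small each eigenvalue $E_n(\psi)$ stays isolated and satisfies $E_n(\psi)=h(n)+|\psi|^2+O(|\psi|^2/(Un))$ uniformly in $n$. The corresponding eigenvectors satisfy $\|e_n(\psi)-e_n\|\lesssim|\psi|(n+1)^{-1/2}$. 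Near-degeneracies among finitely many low levels (possible when $\mu/U$ is near an integer) are handled by grouping them into one spectral projection. This gives a unitary $U_\psi=1+O(|\psi|)$ mapping Fock states to eigenvectors of $H$. This is Lemma~\ref{lem:mfBHEigPert}.

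\textbf{Step 3 (Dirichlet comparison).} Transport everything by $U_\psi$. The derivation $\partial_t^\alpha$ for $H$ is then expressed in the basis $\{e_n(\psi)\}$ with frequencies $E_m(\psi)-E_n(\psi)$, and is compared with $\widetilde\partial_t^\alpha$ built from $H_0$ in the Fock basis. The difference has two sources. First, the matrix elements $\langle e_m(\psi)|a^\dagger|e_n(\psi)\rangle-\sqrt{n+1}\,\delta_{m,n+1}$ are of size $O(|\psi|)$: there are additional off-diagonal entries and $O(|\psi|)$ corrections to the matrix elements. Second, the Bohr frequencies shift by $O(|\psi|^2/n)$.

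\textbf{Main obstacle.} Since $a^\dagger$ is unbounded, the difference $\partial_t-\widetilde\partial_t$ need not be bounded. The key observation is that the filter damps large positive frequencies by $e^{-\beta\nu/2}$, and the conjugation weights $e^{\pm\beta\nu/4}$ combine with it so that the effective weights of the type $g_\pm(n)$ decay. Corrections along upward transitions are therefore suppressed. Corrections along downward transitions carry a factor $\sqrt n$, but they multiply derivatives of $\hatfM$ at very negative frequency. There $\hatfM$ is nearly constant, with derivative $O(\nu^{-2})=O(n^{-2})$, so frequency shifts of size $O(|\psi|^2/n)$ contribute $O(|\psi|^2 n^{-5/2})$. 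The eigenvector corrections of size $|\psi| n^{-1/2}$ multiply $\sqrt n$ to give $O(|\psi|)$. Hence $\Delta=O(|\psi|^2)$.

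If boundedness fails for some term, the fallback is a relative-form bound: show $|\mathcal E-\widetilde{\mathcal E}|\le C|\psi|(\mathcal E+\|x\|_2^2)$, using that $\mathcal E_{\hatfM,H_0}\gtrsim\langle x,\{A_-,x\}\rangle$ up to compact terms. This enters through the first inequality \eqref{eq:tildeDirichiLowerBound} with $\kappa_1,\kappa_2$ chosen accordingly. With either bound, \eqref{eq:tildeGapLowerbound2} gives $\operatorname{gap}(L_{\hatfM,H})\ge(\sqrt{g_0}-C|\psi|)^2>0$ for $|\psi|$ small enough.
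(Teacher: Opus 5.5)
\textbf{Genuine gap: the Bohr-frequency mismatch.} Your skeleton (reference gap, eigen-perturbation of $H$, comparison via Lemma~\ref{lem:GapPertDirichlet2}) is the paper's. However, taking the \emph{unperturbed} $h(n)=-\mu n+\frac U2 n(n-1)$ as reference makes the Bohr frequencies of the two derivations differ, and you have not controlled that term.

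Which terms are unsuppressed matters here. One is the left-multiplied downward part of $\partial^-_t$. The other is the right-multiplied upward part of $\partial^+_t$, whose weight is $\hatfM(\nu)e^{\beta\nu/2}=\hatfM(-\nu)\to1$, so upward transitions are \emph{not} suppressed there. Both carry coefficients $\sqrt n\,\hatfM(\nu)e^{i\nu t}$, and your estimate drops the phase. Its $\nu$-derivative has size $|t|$ and does not decay in $n$. With $\delta_n:=E_n-E^{(0)}_n$, the mismatch term is therefore of order $\sqrt n\,|t|\,|\delta_{n-1}-\delta_n|$, not $O(|\psi|^2n^{-5/2})$.

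Bounding it requires $\sup_n\sqrt n\,|\delta_{n-1}-\delta_n|\to0$; your relative-form fallback still needs at least $\sup_n|\delta_{n-1}-\delta_n|\to0$. Neither follows from Lemma~\ref{lem:mfBHEigPert}, which only gives $|\delta_n|<r\sqrt{n+1}$. Nor does Kato--Rellich as such supply it: it gives analyticity of each $E_n$ but no $n$-uniform decay.

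Your claimed $\delta_n=O(|\psi|^2/(Un))$ is true for large $n$, but it rests on a cancellation you do not mention. Each of the two second-order contributions has size $|\psi|^2/U$, and only their sum
\[
|\psi|^2\Bigl(\frac{n}{U(n-1)-\mu}-\frac{n+1}{Un-\mu}\Bigr)=\frac{|\psi|^2(U+\mu)}{(U(n-1)-\mu)(Un-\mu)}
\]
decays. Higher orders must then be controlled separately. One way is Cauchy estimates for $\lambda\mapsto H_0+\lambda W$ on a circle of radius $\sim Un$ around $E^{(0)}_n$: the analyticity radius is $\sim U\sqrt n$, and odd orders vanish by the symmetry $(-1)^N a(-1)^N=-a$. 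Until this is supplied, Step~3 does not close.

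\textbf{How the paper avoids this, and smaller repairs.} The paper takes as reference $h(N)=\sum_n E_n\kb{n}$ with the \emph{perturbed} eigenvalues. Both derivations then carry the same coefficients $\hatfM(E_n-E_m)e^{i(E_n-E_m)t}$, and only $\kb{E_n}a\kb{E_m}$ versus $\kb{n}a\kb{m}$ has to be compared. The price is a $\psi$-dependent reference. Its gap is bounded below uniformly via \cite[Theorem 3.5]{BeckerRouzeSalzmannToAppearcmp}, using $E_{n+1}-E_n\ge U/4$ for $n\ge 3+4\mu/U$ and the uniform bound on $\Delta_E$. Your route would instead give a fixed, $\psi$-independent $g_0$ from Corollary~\ref{corr:perturbation2}, at the cost of the sharper eigenvalue asymptotics above.

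Two smaller repairs are needed in either route:
\begin{itemize}
\item Entrywise $O(|\psi|)$ bounds on $\langle E_m|a|E_n\rangle-\sqrt n\,\delta_{m+1,n}$ do not give operator-norm bounds. The paper gets these from exponential decay in $|m-n|$ plus Schur's test (Lemma~\ref{lem:mfBHEigVec}).
\item When low levels are exactly degenerate (e.g.\ $\mu/U\in\N$), no unitary $U_\psi=1+O(|\psi|)$ sends Fock states to eigenvectors. One can only compare the cluster projections $P_{\mathcal C}$ and $P^{(0)}_{\mathcal C}$, using continuity of $\hatfM(\nu)e^{i\nu t}$ across each cluster, as the paper does.
\end{itemize}
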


\noindent We prove this perturbatively. In particular, we consider
\[ H_0:=-\mu N +U \tfrac{N(N-1)}{2}+\vert \psi \vert^2 \text{ and } V:= -\overline{\psi} a - \psi a^\dagger\]
such that $H= H_0 + V.$
By definition, we see that $H_0$ is diagonal in the Fock basis, and we denote the eigenvalue corresponding to eigenvector $\ket{n}$ by $E^{(0)}_n.$. 
Explicitly, the eigenvalues of $H_0$ satisfy
\begin{equation}\label{eq:H0Eigenvalues}
 E^{(0)}_n =-\mu n+\frac{U}{2}n(n-1)+|\psi|^2,\text{ and }E^{(0)}_{n+1}-E^{(0)}_{n} = -\mu + Un , 
\end{equation}
which in particular shows that  the sequence $E^{(0)}_{n}$ is non-decreasing for  $n\ge \mu/U.$ 

In the following lemma, we show some bounds on the eigenenergies of $H_0$, in particular that the large energies are far separated from the rest of the spectrum. In Lemma~\ref{lem:mfBHEigPert}, we use this to show that $H$ also has a discrete spectrum  $\spec(H)=\left(E_n\right)_{n\in\N_0}$ satisfying similar bounds as the unperturbed eigenvalues. In Lemma~\ref{lem:mfBHEigVec}, we then use this to derive several perturbative bounds on the eigenvectors of the mean field Bose Hubbard Hamiltonian $H.$ We then combine all of these results with the perturbative result on the spectral gap contained in Lemma~\ref{lem:GapPertDirichlet2} to prove Theorem~\ref{thm:mfBHGap} at the end of this section.
\begin{lemma}[Eigenvalue bounds for $H_0$]
\label{lem:E_n0}
For all $n\ge 3+\tfrac{4\mu}{U}$ we have
\begin{equation} 
\label{eq:E_n0SeperationLargen}
|E^{(0)}_m-E^{(0)}_n| \ge \frac{U}{4} (m+n+1) \text{ for all }m\neq n.
\end{equation}

On the other hand, the low eigenenergies of $H_0$ are given in clusters of $E^{(0)}_n$ which are individually close to each other and, furthermore, separated from the rest of the spectrum. More precisely, for $\delta\in [0,\frac{U}{16})$ and $n<3+\tfrac{4\mu}{U}$ we consider $\mathcal{C}_n\subset \N_0$ to be the smallest set such that $n\in \mathcal{C}_n$ and 
\begin{align} 
\label{eq:C_nDefOutside}
|E^{(0)}_{m_1}-E^{(0)}_{m_2}| > 4\delta (m_1+m_2+1)
\end{align}
for all $m_1\in\mathcal{C}_n$ and $m_2\in\N\setminus \mathcal{C}_n.$ These sets $\mathcal{C}_n$ define equivalence classes and in particular satisfy $\mathcal{C}_m=\mathcal{C}_n$ if $m\in\mathcal{C}_n.$
Furthermore, we have $\mathcal{C}_n \subseteq\{0,\cdots ,3+\floor{\tfrac{4\mu}{U}}\}$ and
\begin{align}
\label{eq:C_nDefInside}
\max_{m_1,m_2\in\mathcal{C}_n}|E^{(0)}_{m_1}-E^{(0)}_{m_2}|\le 128\,\delta\left(1+\tfrac{\mu}{U}\right)^2.
\end{align} 
\end{lemma}
\begin{proof}
Note that for all $n,m\in\N_0$, we have \begin{align}E^{(0)}_m-E^{(0)}_n= (m-n)(\tfrac{U}{2}(m+n-1)-\mu).
\end{align}
From this, we see that for $n\ge 3+\tfrac{4\mu}{U}$ and $m\in\N_0$, 
\begin{align*}
 |E^{(0)}_m - E^{(0)}_n| &=|m-n|\!\left|\frac{U}{2}(m+n-1) - \mu \right|\ge \frac{U}{4} |m-n|(m+n+1), 
\end{align*}
which, in particular, shows \eqref{eq:E_n0SeperationLargen}.

For $\delta\in[0,\frac{U}{16})$ and $n<3+\tfrac{4\mu}{U}$, we construct
$\mathcal{C}_n\subseteq \N_0$ satisfying \eqref{eq:C_nDefOutside} and  \eqref{eq:C_nDefInside} in the following iterative way:
We start with the set $\{n\}$ and then attach all $m_2\in\N_0$ such that
\begin{align} 
\label{eq:C_nDef}
\left|E^{(0)}_{m_1}-E^{(0)}_{m_2}\right| \le  4\delta (m_1+m_2+1)
\end{align} 
with $m_1\equiv n,$ resulting in a larger set. Then we iterate this procedure by attaching all $m_2\in\N_0$ such that there exists $m_1$ in the current set satisfying \eqref{eq:C_nDef}. We call the set that we obtain when this iterative procedure terminates $\mathcal{C}_n.$ From this iterative procedure, it is also clear that the $\mathcal{C}_n$ define equivalence classes.

By construction, the set $\mathcal{C}_n$ satisfies \eqref{eq:C_nDefOutside}.
 Further, note that for all $n<3+\tfrac{4\mu}{U}$ we have $\mathcal{C}_n \subseteq \{0,\cdots,3+\floor{\tfrac{4\mu}{U}}\}$; as for $m_2\ge 3+\tfrac{4\mu}{U}$ with $m_2\neq n \equiv m_1$, the inequality \eqref{eq:C_nDef} cannot hold true by using $\delta<\frac{U}{16}$ and \eqref{eq:E_n0SeperationLargen}. Using this, we can pick, by the construction of $\mathcal{C}_n$, for all $m_1,m_2\in \mathcal{C}_n$, a chain of mutually different elements  $k_1,\cdots,k_K\in\mathcal{C}_n$ such that $k_1=m_1$, $k_K=m_2$, and  
\begin{align*} 
\left|E^{(0)}_{k_i}-E^{(0)}_{k_{i+1}}\right| &\le 4\delta (k_i+k_{i+1}+1) \le 4\delta \left(7+\frac{8\mu}{U}\right)\le 32\delta\left(1+\frac{\mu}{U}\right).
\end{align*} 
Since $\mathcal{C}_n\subseteq \{0,\cdots,3+\floor{\tfrac{4\mu}{U}}\}$, there can only exist $4+\floor{\tfrac{4\mu}{U}}$ such elements, i.e., $K\le 4+\floor{\tfrac{4\mu}{U}}$ such that \eqref{eq:C_nDefInside} holds.

\end{proof}
\begin{lemma}[Eigenvalue perturbation theory]
\label{lem:mfBHEigPert}
Let $H$ be the mean-field Bose-Hubbard Hamiltonian defined in \eqref{eq:mfBHHam} with $U>0.$ Then $H$ has a discrete spectrum consisting of eigenvalues $\left(E_n\right)_{n\in\N_0},$ which, for $|\psi|<\frac{U}{16}$, satisfy the following:

For $n\in\N$ large enough, i.e., precisely
\begin{equation}
\label{eq:Large_n_mfBH}
     n\ge 3 + \frac{4\mu}{U}\quad\bin{\text{and}\quad |\psi| < \frac{U\sqrt{n+1}}{16}},
\end{equation}
 we have that $H$ has a unique eigenvalue, $E_n$, satisfying, for any $r\in(2|\psi|,\frac{U\sqrt{n+1}}{8}],$
\begin{align}\label{eq.boundperturb}
    \left|E_n-E^{(0)}_n\right|&< r\sqrt{n+1}.
\end{align}
Hence, for any $\beta>0$, the Gibbs state corresponding to $H$ satisfies the Gibbs hypothesis for $|\psi|<\frac{U}{16}$.
Furthermore, this $E_n$ has multiplicity $1$ and satisfies
\begin{align}
\label{eq:ENEm0lowerbound}
     \left|E_n-E^{(0)}_m\right|&\ge \frac{U}{8}\left(n+m+1\right)
\end{align}
for all $m\neq n.$
  
   For $n<3+\frac{4\mu}{U}$, $\delta\in[|\psi|,\frac{U}{16})$, and $\mathcal{C}_n\subseteq\left\{0,\cdots,3+\floor{\frac{4\mu}{U}}\right\}$ defined in Lemma~\ref{lem:E_n0}, we have $n\in\mathcal{C}_n$, 
\begin{align}  
\label{eq:EnCnCloseness}
\max_{m_1,m_2\in\mathcal{C}_n}\left|E_{m_1}-E_{m_2}\right| \le 144\,\delta\left(1+\frac{\mu}{U}\right)^2
\end{align}
and \begin{align}
\label{eq:Enlowlowerbound}
\min_{m_1\in\mathcal{C}_n}\left|E_{m_1} -E^{(0)}_{m_2}\right|>2\delta\left(m_2+1\right)
\end{align}
for all $m_2\in\N_0\setminus\mathcal{C}_n.$

Lastly, we have that
\begin{align}
\label{eq:UniformDeltaEBound}
    \Delta_E = \sup_{|\psi|\in I}\max_{0\le j,m\le 6+\frac{8\mu}{U}}|E_j-E_m|<\infty
\end{align}
for all closed intervals $I\subseteq [0,\frac{U}{16}).$

\end{lemma}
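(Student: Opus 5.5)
Throughout, $H=H_0+V$ with $V=-\overline{\psi}a-\psi a^\dagger$. The basic estimate is $\|V|n\rangle\|\le 2|\psi|\sqrt{n+1}$, and more generally $\|V\phi\|\le 2|\psi|\,\|(N+1)^{1/2}\phi\|$. Since $E^{(0)}_n$ grows quadratically in $n$, $V$ is infinitesimally $H_0$-bounded. Hence $H$ is self-adjoint on $D(H_0)$ with compact resolvent (Kato--Rellich), and so has discrete spectrum.

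For large $n\ge 3+4\mu/U$, I localize eigenvalues with a Riesz projection around the circle $\Gamma_n=\{z:|z-E^{(0)}_n|=r\sqrt{n+1}\}$.

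\begin{itemize}
\item For $z\in\Gamma_n$, I bound $\|V(H_0-z)^{-1}\|$ in the Fock basis:
\[
\sup_m \frac{2|\psi|\sqrt{m+1}}{|E^{(0)}_m-z|}.
\]
\item For $m=n$ this ratio is $2|\psi|/r<1$.
\item For $m\neq n$, Lemma~\ref{lem:E_n0} gives $|E^{(0)}_m-E^{(0)}_n|\ge \frac U4(m+n+1)$. With $r\sqrt{n+1}\le \frac U8(n+1)$, this yields $|E^{(0)}_m-z|\ge \frac U8(m+n+1)$. The ratio is then at most $16|\psi|\sqrt{m+1}/(U(m+n+1))<1$, using $|\psi|<U/16$.
\end{itemize}

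Strictly, one needs the sup of the ratio to be $<1$ uniformly. I would take $\max(2|\psi|/r,\ 16|\psi|/U)<1$ and absorb the slack.

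Given this, the Neumann series shows $\Gamma_n\subset\rho(H)$. A homotopy $H_0+tV$, $t\in[0,1]$, keeps $\|tV(H_0-z)^{-1}\|<1$ on $\Gamma_n$, so the Riesz projection has constant rank, namely $1$. This gives a unique simple eigenvalue $E_n$ satisfying \eqref{eq.boundperturb}.

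Then \eqref{eq:ENEm0lowerbound} follows from the triangle inequality, choosing $r$ with $r\sqrt{n+1}\le\frac U8(n+1)$:
\[
|E_n-E^{(0)}_m|\ge \tfrac U4(m+n+1)-\tfrac U8(n+1)\ge\tfrac U8(n+m+1).
\]
Because $E_n\ge E^{(0)}_n-\frac U8(n+1)$ grows quadratically, $\sum_n e^{-\lambda E_n}<\infty$ for every $\lambda>0$, which is the Gibbs hypothesis.

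For the low sector I treat each cluster $\mathcal C_n$ at once, using a contour $\Gamma$ enclosing the points $\{E^{(0)}_m\}_{m\in\mathcal C_n}$. The contour is built as a union of discs of radius $2\delta(m+1)$ (suitably merged), so it stays at distance at least $2\delta(m_2+1)$ from the $E^{(0)}_{m_2}$ with $m_2\notin\mathcal C_n$, which is possible by \eqref{eq:C_nDefOutside}. Writing the resolvent in the Fock basis, the bound $\|V(H_0-z)^{-1}\|<1$ on $\Gamma$ uses:
\begin{itemize}
\item $\operatorname{dist}(z,E^{(0)}_m)\gtrsim\delta(m+1)$ for all $m$;
\item $2|\psi|\sqrt{m+1}\le 2\delta\sqrt{m+1}$, from $|\psi|\le\delta$.
\end{itemize}

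The same homotopy argument then shows $\Gamma$ encloses exactly $|\mathcal C_n|$ eigenvalues of $H$, which I label $E_m$, $m\in\mathcal C_n$, consistently with the large-$n$ labelling. Their spread \eqref{eq:EnCnCloseness} is bounded by the diameter of $\Gamma$, namely the cluster width $128\delta(1+\mu/U)^2$ from \eqref{eq:C_nDefInside} plus the disc radii, giving the constant $144$. The separation \eqref{eq:Enlowlowerbound} holds by construction of $\Gamma$.

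Finally, \eqref{eq:UniformDeltaEBound} follows because, for $|\psi|$ in a compact $I\subset[0,U/16)$, all $E_j$ with $j\le 6+8\mu/U$ are bounded by $|E^{(0)}_j|$ plus the uniform localization radii above.

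I expect the main obstacle to be the cluster contour. Points of distinct clusters may be separated only by $4\delta(m_1+m_2+1)$, so the discs must be sized so they neither overlap other clusters nor break the Neumann bound. The sharp numerical constants ($2\delta$, $144$) also need careful bookkeeping. If the factor $\sqrt{m+1}$ spoils the estimate for the Neumann bound near the cluster, I would fall back on the relative bound with respect to $(N+1)^{1/2}$, applied separately on the finite cluster block and on its complement.
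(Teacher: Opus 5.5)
Your proposal is correct and follows essentially the same route as the paper: Fock-basis Neumann-series bounds on the resolvent, Riesz projections over the circles $\Gamma_n$ for large $n$ and over a cluster contour at distance $>2\delta(m+1)$ from every $E^{(0)}_m$ for small $n$, and a homotopy in the coupling ($\psi'\in[0,\psi]$, equivalently $H_0+tV$) to keep the rank constant. The deviations are minor: you get discreteness from Kato--Rellich rather than from compactness of the Neumann-series resolvent, you estimate $V(H_0-z)^{-1}$ instead of $(H_0-z)^{-1}V$, and you build the cluster contour as the boundary of a union of slightly enlarged discs, which is a somewhat more careful construction than the paper's single path.
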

\begin{proof}
We start by proving that $H$ has discrete spectrum. For this, we use that for $\lambda\in \C\setminus\spec(H_0)$
we have for all $m\in\N_0$
\[(H_0-\lambda)^{-1}a\ket{m}=\frac{\sqrt{m}}{E^{(0)}_{m-1}-\lambda}\,\ket{m-1}\] and \[(H_0-\lambda)^{-1}a^\dagger\ket{m}=\frac{\sqrt{m+1}}{E^{(0)}_{m+1}-\lambda}\,\ket{m+1},\] and therefore
\begin{align*}
  \left\|(H_0-\lambda)^{-1}V \right\|_{\infty} &\le|\psi|\left( \sup_{m\ge 1} \frac{\sqrt{m}}{|E^{(0)}_{m-1}-\lambda|}+\sup_{m\ge 0} \frac{\sqrt{m+1}}{|E^{(0)}_{m+1}-\lambda|}\right)\le 2|\psi|\sup_{m\ge 0} \frac{\sqrt{m+1}}{|E^{(0)}_{m}-\lambda|} =: |\psi|M(\lambda).
\end{align*}
Hence, for $\lambda\in \C\setminus\spec(H_0)$ such that \begin{align}
\label{eq:SmallM(lambda)}
|\psi|M(\lambda)<1,
\end{align}
we have, by 
a standard von Neumann series argument, that $\lambda\in\C\setminus \spec(H)$ with resolvent given by the convergent series
\begin{align}
\label{eq:VonNeumannResolvent}
    (H-\lambda)^{-1} =\sum_{k=0}^{\infty} (-(H_0-\lambda)^{-1}V)^k (H_0-\lambda)^{-1}.
\end{align}
Below, we show that $\lambda$ satisfying \eqref{eq:SmallM(lambda)} always exists, which in particular implies by \eqref{eq:VonNeumannResolvent}
 that $(H-\lambda)^{-1}$ is compact since $(H_0-\lambda)^{-1}$ is compact. This already shows that the spectrum of $H$ is discrete, i.e. consists of eigenvalues of finite multiplicity.

 We proceed to locate the eigenvalues of $H:$ For $n\in\N_0$ large enough, i.e. satisfying \eqref{eq:Large_n_mfBH}, let $\Gamma_n\subset \C$ be a circle in the complex plane with center of origin $E^{(0)}_n$ and radius $r\sqrt{n+1}$ for some $r \in (2|\psi|,\frac{U\sqrt{n+1}}{8}].$ By \eqref{eq:E_n0SeperationLargen} of Lemma~\ref{lem:E_n0} we have for all $m\neq n$
 \begin{align}
 \label{eq:E0nmlowerbound}
     \inf_{\lambda\in\Gamma_n}|E^{(0)}_m - \lambda| &\ge |E^{(0)}_m-E^{(0)}_n| -r\sqrt{n+1}\ge  \frac{U}{4}(m+n+1) - r\sqrt{n+1} \ge \frac{U}{8}(m+n+1).
 \end{align}
 Therefore, we have 
 \begin{align}
\label{eq:VonNeumannConvergence}
|\psi|\inf_{\lambda\in\Gamma_n}M(\lambda)
&\le 2|\psi|\inf_{\lambda\in\Gamma_n}\max\!\left\{
\frac{1}{r},\;
\sup_{m \ne n} \frac{\sqrt{m+1}}{\left|E^{(0)}_m -\lambda\right|}
\right\}
\le 2|\psi|\max\!\left\{
\frac{1}{r},\;
\frac{8}{U\sqrt{n+1}}
\right\}= \frac{2|\psi|}{r} <1
\end{align}
which by \eqref{eq:VonNeumannResolvent} gives that $(H-\lambda)^{-1}$ exists for all $\lambda\in\Gamma_n.$

\bin{For $\lambda\in\Gamma_n$ and $n \ge 3 + \frac{4\mu}{U}$, we see
\begin{align}
\label{eq:VonNeumannConvergence}
\nn
|\psi|M(\lambda)
&\le 2|\psi|\max\!\left\{
\frac{1}{r},\;
\sup_{m \ne n} \frac{\sqrt{m+1}}{\left||E^{(0)}_m -E^{(0)}_n| -r\sqrt{n+1}\right|}
\right\}\\&\le2|\psi|\max\!\left\{
\frac{1}{r},\;
\frac{8}{U\sqrt{n+1}}
\right\}= \frac{2|\psi|}{r} <1
\end{align}
where the first inequality follows by the definition of $M(\lambda)$ and $\Gamma_n$ and the triangle inequality, and for the second inequality, since
\begin{align*}
 |E^{(0)}_m - E^{(0)}_n| &=|m-n|\!\left|\frac{U}{2}(m+n-1) - \mu \right|\\&\ge \frac{U}{4} |m-n|(m+n+1).   
\end{align*}
we have, together with the assumed upper bound on $r$, that for $m\neq n$
\begin{align}
\label{eq:E_nE_mrlower} \nn
\left||E^{(0)}_m -E^{(0)}_n| -r\sqrt{n+1}\right|&\ge \frac{U}{8}|m-n|(m+n+1) \\& \ge \frac{U}{8}(m+n+1).
\end{align}}

 For $\psi'\in[0,\psi]$, we use that by holomorphic functional calculus, the spectral projection of $H(\psi') = H_0 +\overline{\psi'} a-\psi' a^\dagger$  corresponding to spectral values in the interval $(E^{(0)}_n-r\sqrt{n+1},E^{(0)}_n+r\sqrt{n+1})$  is given by \begin{align*}
     P_n(\psi')= \frac{1}{2\pi i }\oint_{\Gamma_n} (\lambda-H(\psi'))^{-1}d\lambda.
 \end{align*}  
 By \eqref{eq:VonNeumannResolvent} and \eqref{eq:VonNeumannConvergence}, we see that $\psi'\mapsto P_n(\psi') $ is continuous (in fact, analytic) on the interval $[0,\psi]$, and therefore 
 \begin{align}
 \label{eq:rankEquality}
     \operatorname{rank}(P_n(\psi')) =\operatorname{rank}(P_n(0)) =1  
 \end{align}
 for all $\psi'\in[0,\psi].$ This in particular implies that 
 also $H\equiv H(\psi)$ has a unique eigenvalue, $E_n,$ in the interval $(E^{(0)}_n-r\sqrt{n+1},E^{(0)}_n+r\sqrt{n+1})$ which furthermore has multiplicity 1. Further, using the same argument as for \eqref{eq:E0nmlowerbound} we see \eqref{eq:ENEm0lowerbound}. 

 Similarly, we can proceed to locate the low eigenenergies of $H:$ For $n<3+\tfrac{4\mu}{U}$, $|\psi|<\tfrac{U}{16}$ and $\delta\in[|\psi|,\frac{U}{6})$ we consider the subset $\mathcal{C}_n\subseteq \{0,\cdots,3+\floor{\frac{4\mu}{U}}\}$ constructed in Lemma~\ref{lem:E_n0}. 
 By \eqref{eq:C_nDefOutside} the eigenvalues $E^{(0)}_m$ of $H_0$ for $m\in\mathcal{C}_n$ are separated from the ones $E_{m'}^{(0)}$ satisfying $m'\notin\mathcal{C}_n.$ In particular,  \eqref{eq:C_nDefOutside} gives that we can choose a path $\Gamma'_n\subset \C$ surrounding exactly all $E^{(0)}_m$ for $m\in\mathcal{C}_n$ and no other eigenvalues of $H_0$ and further satisfies
\begin{align}
\label{eq:ClusterEncoiondition}
\inf_{\lambda \in \Gamma'_n}|E^{(0)}_m-\lambda| > 2\delta(m+1)
  \end{align}
  for all $m\in\mathbb{N}_0$. In particular this gives that $\sup_{\lambda\in\Gamma'_n}|\psi|M(\lambda)<1$ which gives by \eqref{eq:VonNeumannResolvent} that $(H-\lambda)^{-1}$ is well-defined for all $\lambda\in\Gamma'_n.$ Arguing the same way as for \eqref{eq:rankEquality}, we see that $H$ has the same number of eigenvalues (including multiplicity) as $H_0$ being surrounded by $\Gamma'_n,$ i.e. for all $m\in\mathcal{C}_n$ the Hamiltonian $H$ has an eigenvalue $E_m$ in the interior of $\Gamma'_n$. Using \eqref{eq:ClusterEncoiondition} we already see that \eqref{eq:Enlowlowerbound} holds true. Furthermore, by choosing $\Gamma'_n$ to be the smallest circle around the $E^{(0)}_n$ such that \eqref{eq:ClusterEncoiondition} holds, and using \eqref{eq:C_nDefInside} we see that 
 \begin{align*}
\max_{m_1,m_2\in\mathcal{C}_n}\left|E_{m_1}-E_{m_2}\right| &\le \max_{m_1,m_2\in\mathcal{C}_n}\left|E^{(0)}_{m_1}-E^{(0)}_{m_2}\right|+4\delta\left(4+\frac{4\mu}{U}\right)\le 128\,\delta\left(1+\frac{\mu}{U}\right)^2 + 4\delta\left(4+\frac{4\mu}{U}\right) \le 144 \, \delta\left(1+\frac{\mu}{U}\right)^2
 \end{align*}
 which shows \eqref{eq:EnCnCloseness}.

 Lastly, since the path $\Gamma_n$ and $\Gamma'_n$ to locate the eigenvalues $E_n$ of $H$ were independent of the specific $\psi$ with $|\psi| \in[0,\frac{U}{16})$ we see that   $$\Delta_E = \sup_{|\psi|\in I}\max_{0\le j,m\le 6+\frac{8\mu}{U}}|E_j-E_m|<\infty$$ for all closed intervals $I\subseteq [0,\frac{U}{16}).$ 
\end{proof}

By the previous lemma, we know that $H$ has a discrete spectrum $(E_n)_{n\in\N_0}.$ Further, for $n$ large enough, i.e., \eqref{eq:Large_n_mfBH}, we have seen that the $E_n$ are non-degenerate and separated from each other. For small $n$, the $E_n$ can, in principle, be degenerate. To take this into account, we consider $\spec(H)= (E_n)_{n\in\N_0}$ to be the spectrum of $H$, including multiplicities; i.e., $E_n$ appearing multiple times if it is degenerate. We then fix some eigenbasis of $H$, which we denote by $(\ket{E_n})_{n\in\N_0}$ by slight abuse of notation (e.g.~$|E_n\rangle\ne |E_{n+1}\rangle$, even if the energies may coincide $E_n=E_{n+1}$). We use the convention that $c_n := \langle n|E_n\rangle \ge 0$ by possibly changing the phase factor of $\ket{E_n}.$ In the following, we study the state $\ket{E_n}$ as a perturbation of the corresponding Fock basis vector $\ket{n}$. 

\begin{lemma}[Eigenvector perturbation theory]
\label{lem:mfBHEigVec}
Let $\mu\in\R$, $\psi\in\C$, and $U>0$ be such that $\frac{16 |\psi|}{U} <1.$ Then for $n\ge 3+\frac{4\mu}{U}$ and using the convention $c_n = \bra{n} E_n\rangle\in [0,1]$, we have that \bin{$c_n\ge 1-\frac{16|\psi|}{U\sqrt{n+1}}$ and furthermore}
\begin{align}
\label{eq:EnvsFocknLarge}
    \left\|\ket{E_n}-\ket{n}\right\|  \le \frac{32|\psi|}{U\sqrt{n+1}}.
\end{align}
Furthermore, taking $\frac{16|\psi|}{U}\le 1-\kappa$ for some $\kappa>0$, we have for all sequences $\left(u_m\right)_{m\in\N_0}\subseteq \C$ such that $|u_m|\le 1$ for all $m$ that
\begin{align}
\label{eq:EvenWeirdOpAreBounded}   
&\left\|\sum_{m\ge 4+\frac{4\mu}{U}} \!\!u_m\sqrt{m}\left(\ket{E_{m-1}}\!\bra{E_m} \!- \!\ket{m-1}\!\bra{m}\right)\right\|_{\infty} \!\!\!\le \!C_\kappa \frac{|\psi|}{U},\\ \label{eq:EvenWeirdOpAreBoundedCreation}
&\left\|\sum_{m\ge 3+\frac{4\mu}{U}}\!\!\!\!\!\!u_m\sqrt{m+1}\left(\ket{E_{m+1}}\!\bra{E_m}\! -\! \ket{m+1}\!\bra{m}\right)\right\|_{\infty} \!\!\!\!\!\!\le\! C_\kappa \frac{|\psi|}{U},
\end{align}
for some $C_\kappa\ge 0$ depending only on $\kappa.$

For $n<3+\frac{4\mu}{U}$, $\delta\in(|\psi|,\frac{U}{16})$ and $\mathcal{C}_n\subseteq\left\{0,\cdots,3+\floor{\frac{4\mu}{U}}\right\}$ defined in Lemma~\ref{lem:E_n0}, the projections $P_{\mathcal{C}_n}=\sum_{m\in\mathcal{C}_n}\kb{E_m}$ and $P^{(0)}_{\mathcal{C}_n}=\sum_{m\in\mathcal{C}_n}\kb{m}$ satisfy
\begin{align}
\label{eq:P_CnCloseness}
   \left\|P_{\mathcal{C}_n}-P^{(0)}_{\mathcal{C}_n}\right\|_\infty \le \frac{24|\psi|}{\delta}\left(1+\frac{\mu}{U}\right), \left\|a\left(P_{\mathcal{C}_n}-P^{(0)}_{\mathcal{C}_n}\right)\right\|_\infty  &\le \nn\frac{48|\psi|}{\delta}\left(1+\frac{\mu}{U}\right)^{3/2},\\\left\|a^\dagger\left(P_{\mathcal{C}_n}-P^{(0)}_{\mathcal{C}_n}\right)\right\|_\infty  &\le \frac{48|\psi|}{\delta}\left(1+\frac{\mu}{U}\right)^{3/2}.
\end{align}

Furthermore, for $n,m\ge 3+\frac{4\mu}{U}$ and $\frac{16|\psi|}{U} \le 1-\kappa$ for some $\kappa\in(0,1)$, we have
\begin{align}
\label{eq:mfBHMatrixDecayAnn}
&|\bra{ E_m}a\ket{E_n} - \sqrt{n}\delta_{m+1,n}|\le  C_\kappa \frac{|\psi|}{U}\left(1-\frac{\kappa}{2}\right)^{|m-n|},\\\label{eq:mfBHMatrixDecayCrea}&|\bra{ E_m}a^\dagger\ket{ E_n} - \sqrt{n+1}\delta_{m,n+1}| \le  C_\kappa \frac{|\psi|}{U} \left(1-\frac{\kappa}{2}\right)^{|m-n|}
\end{align}
for some $C_\kappa\ge 0$ depending only on $\kappa.$

Lastly, for $n<3+\frac{4\mu}{U}$, $m\ge 3+\frac{4\mu}{U}$, $\psi, \delta$, and $U$ such that $\frac{|\psi|}{\delta}\le 1-\kappa$ for some $\kappa\in(0,1)$, we have 
\begin{equation}
\begin{split}
\label{eq:nLowvsmLargeAnn}
  &\left\|P_{\mathcal{C}_n}a\ket{E_m} - \sqrt{m}P^{(0)}_{\mathcal{C}_n}\ket{m-1}\right\| \le C_\kappa\left(\left(1+\frac{\mu}{U}\right)^{3/2}+1\right) \,\frac{|\psi|}{\delta}\left(1-\frac{\kappa}{2}\right)^{\operatorname{dist}(\mathcal{C}_n,m)}\\
&\left\|P_{\mathcal{C}_n}a^{\dagger}\ket{E_m}\right\|\le C_\kappa\left(1+\frac{\mu}{U}\right) \,\frac{|\psi|}{\delta}\left(1-\frac{\kappa}{2}\right)^{\operatorname{dist}(\mathcal{C}_n,m)}
\end{split}
\end{equation}
for $\operatorname{dist}(\mathcal{C}_n,m) := \min_{k\in\mathcal{C}_n}|k-m|$ and some $C_\kappa\ge 0$ only depending on $\kappa.$
\end{lemma}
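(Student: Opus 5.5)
The plan is to treat all four groups of estimates with Riesz projections. The contours $\Gamma_n$ and $\Gamma'_n$ from the proof of Lemma~\ref{lem:mfBHEigPert} already give convergence of the von Neumann series \eqref{eq:VonNeumannResolvent} for the resolvent. Throughout, the key tool is the bound $|\psi|\sup_{\lambda\in\Gamma_n}M(\lambda)\le 2|\psi|/r$.

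\textbf{Step 1: the bound \eqref{eq:EnvsFocknLarge}.} For $n\ge 3+\frac{4\mu}{U}$, take the maximal admissible radius $r=\frac{U\sqrt{n+1}}{8}$. Write
\[
P_n-P_n^{(0)}=\frac{1}{2\pi i}\oint_{\Gamma_n}\sum_{k\ge1}\big(-(H_0-\lambda)^{-1}V\big)^k(H_0-\lambda)^{-1}\,d\lambda .
\]
Apply this to $\ket{n}$. The contour has length $2\pi r\sqrt{n+1}$ and $\|(H_0-\lambda)^{-1}\ket{n}\|=1/(r\sqrt{n+1})$ on it. Together with the geometric factor $\sum_{k\ge1}(2|\psi|/r)^k$, this gives $\|P_n\ket{n}-\ket{n}\|\lesssim |\psi|/(U\sqrt{n+1})$, with the constant bookkept to produce $16$.

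Since $P_n\ket{n}=c_n\ket{E_n}$ with $c_n\ge0$, the elementary inequality $\|\ket{E_n}-\ket{n}\|\le 2\|P_n\ket{n}-\ket{n}\|$ gives the factor $32$.

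The cluster estimate \eqref{eq:P_CnCloseness} is proved in the same way, now on $\Gamma'_n$. There we use \eqref{eq:ClusterEncoiondition}, so that $|\psi|M(\lambda)\le |\psi|/\delta$. The contour length is controlled by \eqref{eq:C_nDefInside}, which produces the factor $(1+\mu/U)$. Composing with $a$ or $a^\dagger$ costs $\sqrt{m+1}\le\sqrt{4+4\mu/U}$ on the range of $P^{(0)}_{\mathcal C_n}$, which accounts for the power $3/2$. One should write $a(P-P^{(0)})=aP^{(0)}(P-P^{(0)})+a(1-P^{(0)})(P-P^{(0)})$ and bound the second term through the resolvent expansion.

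\textbf{Step 2: the decay estimates \eqref{eq:mfBHMatrixDecayAnn}--\eqref{eq:mfBHMatrixDecayCrea} and \eqref{eq:nLowvsmLargeAnn}.} The key observation is that $V$ is tridiagonal in the Fock basis. Hence the $k$-th term of the series, applied to $\ket{n}$, is supported on $\{m:|m-n|\le k\}$. Consequently $|\braket{m|E_n}|\lesssim (2|\psi|/r)^{|m-n|}$ up to normalisation, and similarly for the cluster vectors.

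The assumption $\frac{16|\psi|}{U}\le1-\kappa$ makes the ratio $2|\psi|/r$ (with $r\ge U/8$) at most $1-\kappa$. Absorbing polynomial prefactors $\sqrt{m}$ into the geometric decay costs passing from $1-\kappa$ to $1-\kappa/2$, with a constant $C_\kappa$. The matrix elements are then handled by expansion:
\[
\bra{E_m}a\ket{E_n}-\sqrt n\,\delta_{m+1,n}=\sum_{j,l}\overline{\braket{j|E_m}}\braket{l|E_n}\bra{j}a\ket{l}-\sqrt n\,\delta_{m+1,n}.
\]
The weights $\sqrt l$ in $\bra{j}a\ket{l}$ must be compensated. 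For this, use the refined version of Step 1 in which each off-diagonal amplitude carries a factor $|\psi|/(U\sqrt{n+1})$, so that $\sqrt{l}\,|\braket{l|E_n}|\lesssim |\psi|/U$ times the decay.

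\textbf{Step 3: the operator-norm bounds \eqref{eq:EvenWeirdOpAreBounded}--\eqref{eq:EvenWeirdOpAreBoundedCreation}, which I expect to be the main obstacle.} A termwise bound diverges: each term has norm of order $\sqrt m\cdot|\psi|/(U\sqrt m)$, and there are infinitely many of them. The plan is to write
\[
\ket{E_{m-1}}\bra{E_m}-\ket{m-1}\bra{m}=(\ket{E_{m-1}}-\ket{m-1})\bra{E_m}+\ket{m-1}(\bra{E_m}-\bra{m}).
\]
For the second piece, the decomposition $\bra{E_m}-\bra m=\sum_j d_{m,j}\bra{j}$ with $|d_{m,j}|\lesssim \frac{|\psi|}{U\sqrt{m+1}}(1-\kappa)^{|m-j|}$ yields a matrix in the Fock basis. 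Its entries $u_m\sqrt m\, d_{m,j}$ are bounded by $\frac{|\psi|}{U}(1-\kappa/2)^{|m-j|}$, so it is Toeplitz-dominated, and the Schur test gives the bound $C_\kappa|\psi|/U$.

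The first piece is an orthonormal-basis-change sum of the same type. It is handled by applying the Schur test in the mixed bases $\{\ket{j}\}$ and $\{\ket{E_m}\}$; the latter is orthonormal, so the Schur test applies unchanged. The delicate point is to obtain the amplitude bound $d_{m,j}$ with the sharp $1/\sqrt{m+1}$ prefactor uniformly in $j$. For this I would rerun the contour estimate of Step 1 for each fixed $j$, extracting from the $k=|m-j|$ term the factor $\sqrt{\max(m,j)}/(U(m+j+1))$ via \eqref{eq:E0nmlowerbound}.
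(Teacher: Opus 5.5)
Your Riesz-projection route differs from the paper's. The paper never integrates the full resolvent series. Instead it uses the reduced resolvent evaluated at the exact eigenvalue, $R_n=Q_n(E_n-H_0)^{-1}Q_nV$, and its cluster analogue $\widetilde R_n$ built with $Q^{(0)}_{\mathcal C_n}$. The identity $Q_n\ket{E_n}=R_n\ket{E_n}$, together with $\|R_n\|_\infty\le 16|\psi|/(U\sqrt{n+1})$ from \eqref{eq:ENEm0lowerbound}, gives $c_n\ge 1-\|R_n\|_\infty$ and $\|\ket{E_n}-\ket{n}\|\le 2\|R_n\|_\infty$ with no geometric-series loss, for every $16|\psi|/U<1$. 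The expansion $\ket{E_n}=c_n\sum_j R_n^j\ket{n}$ and its band structure then drive all the decay estimates. Your mixed-basis Schur test in Step~3 is a cleaner, more unified alternative to the paper's split into $0<j+l<4$ and $j+l\ge 4$. The uniform amplitude bound it needs already follows from the series ratio $x_m:=16|\psi|/(U\sqrt{m+1})$, since $x_m^{|m-j|}\le \frac{16|\psi|}{U\sqrt{m+1}}(1-\kappa)^{|m-j|-1}$. So the contour rerun you describe is unnecessary, modulo the normalisation issue below.

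Two of your bookkeeping claims fail, however.

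First, summing the full series on $\Gamma_n$ gives $\|P_n\ket{n}-\ket{n}\|\le x_n/(1-x_n)$, not $x_n$, so the constants $16$ and $32$ cannot come out. More seriously, ``up to normalisation'' hides a real problem. The only lower bound on $c_n$ your expansion provides is $c_n^2=\langle n|P_n|n\rangle\ge 1-\sum_{k\ge 2\text{ even}}x_n^k=1-x_n^2/(1-x_n^2)$, which is vacuous once $x_n\ge 1/\sqrt2$. This happens inside the lemma's range: for $\mu\le -3U/4$ the index $n=0$ is admissible, and $x_0=16|\psi|/U$ may equal $1-\kappa$ for small $\kappa$. There $\langle l|E_n\rangle=\langle l|P_n|n\rangle/c_n$ is uncontrolled. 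You need either the paper's identity $Q_n\ket{E_n}=R_n\ket{E_n}$, or the normalisation-free formula $|\langle l|E_n\rangle|^2=\langle l|P_n|l\rangle$. In the latter, only paths passing through $n$ survive the contour integral, which gives $|\langle l|E_n\rangle|\le x_n^{|l-n|}/\sqrt{1-x_n}$.

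Second, \eqref{eq:C_nDefInside} bounds the spread of the cluster only by $128\,\delta(1+\mu/U)^2$. So ``contour length times $\sup_{\Gamma'_n}\|(H_0-\lambda)^{-1}\|_\infty\le 1/(2\delta)$'' yields $\|P_{\mathcal C_n}-P^{(0)}_{\mathcal C_n}\|_\infty\lesssim(1+\mu/U)^2|\psi|/\delta$, and $(1+\mu/U)^{5/2}$ after composing with $a$ or $a^\dagger$. That is worse than the linear factor and the power $3/2$ in \eqref{eq:P_CnCloseness}. The paper obtains the linear factor by controlling each cluster eigenvector separately, $\|Q^{(0)}_{\mathcal C_n}\ket{E_m}\|\le\|\widetilde R_m\|_\infty\le|\psi|/\delta$, and summing over $|\mathcal C_n|\le 4+4\mu/U$.

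Both losses are harmless for Theorem~\ref{thm:mfBHGap}, which only needs $|\psi|$ small. As written, though, your argument establishes a weaker statement than the lemma.
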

\begin{proof}
As established in Lemma~\ref{lem:mfBHEigPert}, the mean-field Bose-Hubbard Hamiltonian, $H=H_0+V$, has a discrete spectrum with eigenvalues denoted by $(E_n)_{n\in\N_0}.$ For $n\ge 3 + \frac{4\mu}{U}$ and $Q_n=\1 -\kb{n}$, we define
\begin{align*}
    R_n= Q_n(E_n-H_0)^{-1}Q_n V,
\end{align*}
where we interpret $Q_n(E_n-H_0)^{-1}Q_n$ as the resolvent of $H_0$ in the space $Q_n\mathcal{H}$, which is well-defined by \eqref{eq:ENEm0lowerbound} of Lemma~\ref{lem:mfBHEigPert}.
Since, given $V=-\overline{\psi}a-\psi a^\dagger$, for $m\in\N$
\begin{equation}
\label{eq:RnOnFock}
   \nn R_n \vert m\rangle = -\psi(E_n-E^{(0)}_{m+1})^{-1}\delta_{m+1 \neq n}\sqrt{m+1}\vert m+1 \rangle -\overline{\psi} (E_n-E^{(0)}_{m-1})^{-1}\delta_{m-1 \neq n}\sqrt{m}\vert m-1 \rangle 
\end{equation}
 we see by \eqref{eq:ENEm0lowerbound} that, by the Schur test,
\begin{align}
\label{eq:Rnbound}
  \|R_n\|_{\infty} \le \frac{16|\psi|}{U\sqrt{n+1}},\ \|aR_n\|_{\infty} \le \frac{16|\psi|}{U}, \text{ and }\|a^\dagger R_n\|_{\infty} \le \frac{16|\psi|}{U}.
\end{align}
    From the eigenvalue equation $H \ket{E_n} =E_n \ket{E_n}$, we find that $(E_n-H_0)\ket{E_n} = V\ket{E_n}$ and 
\begin{align*}
    Q_n \ket{E_n} = R_n\ket{E_n}. 
\end{align*}
Therefore
\begin{equation}
\label{eq:EnNLarge}
\ket{E_n} =\ket{n}\!\bra{ n}\ket{E_n} +Q_n\ket{E_n}= c_n\ket{n} +R_n \ket{E_n} 
\end{equation}
with $c_n = \langle n|E_n\rangle\in[0,1].$ From this and \eqref{eq:Rnbound}, we already see that for large $n$, the $\ket{E_n}$ essentially becomes $\ket{n}$ as 
\begin{equation}
\label{eq:cnLowerBound}
    c_n \ge 1 - \frac{16|\psi|}{U\sqrt{n+1}}.
\end{equation}
In particular, using the triangle inequality with \eqref{eq:EnNLarge}, together with \eqref{eq:cnLowerBound} and \eqref{eq:Rnbound}, we get \eqref{eq:EnvsFocknLarge}:
\begin{align*}
    \left\|\ket{E_n}-\ket{n}\right\| \le (1-c_n) + \|R_n\|_\infty \le \frac{32|\psi|}{U\sqrt{n+1}}.
\end{align*}

Using that, by \eqref{eq:Rnbound}, we see $\|R_n\|_{\infty}<1,$  
\begin{equation}
\label{eq:EnVonNeumannFock}
\ket{E_n} = c_n \left(\1 -R_n\right)^{-1}\ket{n}  =c_n \sum_{j=0}^{\infty} (R_n)^j\ket{n}.
\end{equation}
We use this to prove \eqref{eq:EvenWeirdOpAreBounded}. Note that
\begin{equation}
\begin{split}
\label{eq:TruncAnnDiff}
\sum_{m\ge 4+\frac{4\mu}{U}} u_m\sqrt{m}\Big(\ket{E_{m-1}}\!\bra{E_m} - \ket{m-1}\!\bra{m}\Big) &= \sum_{m\ge 4+\frac{4\mu}{U}}u_m\sqrt{m}(c_mc_{m-1}-1)\ket{m-1}\!\bra{m} \\& \quad+ \sum_{m\ge 4+\frac{4\mu} {U}} u_m c_{m}c_{m-1}\sqrt{m}\sum_{(j,l)\in\N_0^2\setminus \{(0,0)\}} (R_{m-1})^j\ket{m-1}\!\bra{m}(R^\dagger_m)^l.
\end{split}
\end{equation}
Since we have $|u_m|\,\sqrt{m}|1-c_mc_{m-1}|\le\frac{32|\psi|}{U}$, we see that the first term in \eqref{eq:TruncAnnDiff} is a bounded operator satisfying
\begin{align}
\label{eq:Weirdterm1}
    \left\|\sum_{m\ge 4+\frac{4\mu}{U}}u_m\sqrt{m}(c_mc_{m-1}-1)\ket{m-1}\!\bra{m}\right\|_{\infty} \le \frac{32|\psi|}{U}.
\end{align}
For the second term in \eqref{eq:TruncAnnDiff}, we denote $\widetilde u_m = u_mc_mc_{m-1}$ which satisfies $|\widetilde u_m|\le 1$ and write
\begin{equation}
\begin{split}
\label{eq:TwoWeirdTerms}
   \sum_{m\ge 4+\frac{4\mu}{U}}\widetilde u_m\sum_{(j,l)\in\N_0^2\setminus\{(0,0)\}} \sqrt{m}(R_{m-1})^j\ket{m-1}\!\bra{m}(R_m)^l &=  \sum_{m\ge 4+\frac{4\mu}{U}}\widetilde u_m\sum_{0<j+l< 4} \sqrt{m}(R_{m-1})^j\ket{m-1}\!\bra{m}(R_m)^l \\&\quad+\sum_{m\ge 4+\frac{4\mu}{U}}\widetilde u_m\sum_{j+l\ge 4} \sqrt{m}(R_{m-1})^j\ket{m-1}\!\bra{m}(R_m)^l. 
   \end{split}
\end{equation}
We bound both terms individually. For the first term, we use the fact that from \eqref{eq:RnOnFock}, we have for all $j\in\N_0$
\begin{align*}
    (R_m)^j\ket{m} = \sum_{-j\le k\le j} d^{(j)}_{k,m}\ket{m+k}
\end{align*}
for some coefficients $d^{(j)}_{k,m}$ satisfying $|d^{(j)}_{k,m}| \le \|R_m\|_\infty^j\le \left(\frac{16|\psi|}{U\sqrt{m+1}}\right)^j$ by \eqref{eq:Rnbound}, and therefore 
\begin{align*}
   & \sum_{m\ge 4+\frac{4\mu}{U}} \widetilde u_m\sqrt{m}(R_{m-1})^j\ket{m-1}\!\bra{m}(R_m)^l = \sum_{\substack{-j\le k_1\le j\\-l\le k_2\le l}} \sum_{m\ge 4+\frac{4\mu}{U}}  \widetilde u_m\sqrt{m}d^{(j)}_{k_1,m-1}d^{(l)}_{k_2,m}\ket{m-1+k_1}\!\bra{m+k_2}.
\end{align*}
Using that for $j+l>0$ and $\frac{16|\psi|}{U}<1$, we have $|\widetilde u_m|\,\sqrt{m}\,|d^{(j)}_{k_1,m-1}d^{(l)}_{k_2,m}|\le \frac{16|\psi|}{U}$; we see that this operator is, in fact, bounded, which, in particular, gives for the first term in \eqref{eq:TwoWeirdTerms} that
\begin{equation}
\label{eq:WeirdTerm2}
    \left\|\sum_{m\ge 4+\frac{4\mu}{U}}\sum_{0<j+l< 4} \widetilde u_m\sqrt{m}(R_{m-1})^j\ket{m-1}\!\bra{m}(R_m)^l\right\|_\infty \le \sum_{0<j+l<4} \sum_{\substack{-j\le k_1\le j\\-l\le k_2\le l}}\frac{16|\psi|}{U}= \frac{1104|\psi|}{U}.
\end{equation}
Denoting $\gamma:= \frac{16|\psi|}{U}<1$, the second term in \eqref{eq:TwoWeirdTerms} can be estimated using \eqref{eq:Rnbound} as
\begin{align*}
    \left\|\sum_{m\ge 4+\frac{4\mu}{U}}\widetilde u_m\sum_{j+l\ge 4} \sqrt{m}(R_{m-1})^j\ket{m-1}\!\bra{m}(R_m)^l \right\|_\infty& \le \left(\sum_{m\ge 4+\frac{4\mu}{U}}\frac{1}{(m+1)^{3/2}} \right)\left(\sum_{k=4}^\infty  (k+1)\widetilde\gamma^{k}\right) \lesssim  \frac{\widetilde{\gamma}^{4}}{(1-\widetilde{\gamma})^{2}}.
\end{align*}
Combining this with \eqref{eq:Weirdterm1} and \eqref{eq:WeirdTerm2} proves \eqref{eq:EvenWeirdOpAreBounded}. The statement \eqref{eq:EvenWeirdOpAreBoundedCreation} follows analogously.

We continue to prove \eqref{eq:mfBHMatrixDecayAnn}: Using \eqref{eq:EnVonNeumannFock}, we have for $n,m\ge 3 + \frac{4\mu}{U}$ 
\begin{align}
\label{eq:OffdiagonalLargemnProof}
    \nn&\bra{ E_m}a\ket{ E_n} -\sqrt{n}\delta_{m+1,n} = (c_nc_m-1)\sqrt{n}\delta_{m+1,n}+c_nc_m\sum_{(j,l)\in \N^2_0\setminus\{(0,0)}\} \bra{m}(R^\dagger_m)^l a(R_n)^j\ket{n}.
\end{align}
Using \eqref{eq:cnLowerBound}, the first term can be bounded as $|c_nc_m-1|\sqrt{n} \delta_{m+1,n}\le \frac{32|\psi|}{U} \delta_{m+1,n}.$
For the second term in \eqref{eq:OffdiagonalLargemnProof}, we first note that, by the orthogonality of the Fock states and by the definition of $R_n$, we have that
\begin{align*}
    \bra{m}(R^\dagger_m)^l a(R_n)^j\ket{n} =0
\end{align*}
for all $k := l+j <|m-n|-1\le|m-(n-1)|$. Hence, denoting once again $\gamma:=\frac{16|\psi|}{U} <1$ and $K:= \max\{|m-n|-1,1\}$, and noting further that $|c_n|\le 1$, we see that
\begin{align*}
  \left|c_nc_m\sum_{(j,l)\in \N^2_0\setminus\{(0,0)\}} \bra{m}(R^\dagger_m)^l a(R_n)^j\ket{n}\right| &\le \sum_{k =K}^{\infty} (k+ 1) \gamma^{k}  =\frac{\gamma^{K}\,(K+1-K\gamma)}{(1-\gamma)^{2}}\le \frac{16|\psi|}{U}\, \frac{\gamma^{K-1}(K+1)}{(1-\gamma)^2}.
\end{align*}
Using  $\gamma \le 1-\kappa$ for some $\kappa\in(0,1)$ and $ 
\sup_{K\ge0}\left(\frac{1-\kappa}{1-\kappa/2}\right)^{K-1}\frac{K+1}{\kappa^2}<\infty$  shows \eqref{eq:mfBHMatrixDecayAnn}. The corresponding statement for the creation operator, i.e. \eqref{eq:mfBHMatrixDecayCrea}, follows analogously.

For $n< 3+\frac{4\mu}{U}$ and $\delta\in(|\psi|,\frac{U}{16})$, we consider the set $\mathcal{C}_n$ from Lemma~\ref{lem:E_n0} and~\ref{lem:mfBHEigPert} and define $P^{(0)}_{\mathcal{C}_n}=\sum_{m\in\mathcal{C}_n}\kb{m}$, $Q^{(0)}_{\mathcal{C}_n} = \1 - P^{(0)}_{\mathcal{C}_n}$
and 
\begin{align}
   \widetilde R_n= Q^{(0)}_{\mathcal{C}_n}(E_n-H_0)^{-1}Q^{(0)}_{\mathcal{C}_n} V,
\end{align}
where we interpret $ Q^{(0)}_{\mathcal{C}_n}(E_n-H_0)^{-1}Q^{(0)}_{\mathcal{C}_n}$ as a resolvent on the space $Q^{(0)}_{\mathcal{C}_n}\mathcal{H}$, which is well-defined by \eqref{eq:Enlowlowerbound} of Lemma~\ref{lem:mfBHEigPert}. Using \eqref{eq:Enlowlowerbound} and arguing  similarly as for \eqref{eq:Rnbound} we find
\begin{align}
\label{eq:tildeRnbound}
\|\widetilde R_n\|_\infty,\,\|a\widetilde R_n\|_\infty,\,\|a^\dagger \widetilde R_n\|_\infty \le \frac{|\psi|}{\delta}<1. 
\end{align}
From the eigenvalue equation $H \ket{E_n} =E_n \ket{E_n}$, we find that $(E_n-H_0)\ket{E_n} = V\ket{E_n}$, 
\begin{align*}
    Q^{(0)}_{\mathcal{C}_n} \ket{E_n} = \widetilde R_n\ket{E_n}
\end{align*}
and therefore
\begin{align} 
\label{eq:EnvectorLow}
\ket{E_n} = P^{(0)}_{\mathcal{C}_n}\ket{E_n} + \widetilde R_n \ket{E_n}, 
\end{align}
which gives
\begin{equation}
\label{eq:EnVonNeumannLow}
\ket{E_n} = \left(\1-\widetilde R_n\right)^{-1}P^{(0)}_{\mathcal{C}_n}\ket{E_n} =\sum_{j=0}^{\infty}(\widetilde R_n)^jP^{(0)}_{\mathcal{C}_n}\ket{E_n}.
\end{equation}
Denoting $P_{\mathcal{C}_n}= \sum_{m\in\mathcal{C}_n}\kb{E_m},$, we 
see from \eqref{eq:EnvectorLow} and \eqref{eq:tildeRnbound}, together with the fact that $\mathcal{C}_n$ forms equivalence classes (cf.~Lemma \ref{lem:E_n0}), that 
\begin{equation}
\begin{split}
 \left\|P_{\mathcal{C}_n}-P^{(0)}_{\mathcal{C}_n} P_{\mathcal{C}_n}P^{(0)}_{\mathcal{C}_n}\right\|_\infty &\le  \left\|\sum_{m\in\mathcal{C}_n}P^{(0)}_{\mathcal{C}_n}\kb{E_m}\widetilde R_m\right\|_\infty +\left\|\sum_{m\in\mathcal{C}_n}\widetilde R_m\kb{E_m}P^{(0)}_{\mathcal{C}_n}\right\|_\infty\!\!\!\!\!\!+\!\left\|\sum_{m\in\mathcal{C}_n}\widetilde R_m\kb{E_m}\widetilde R_m\right\|_\infty\\
 &\le \frac{12|\psi|}{\delta}\left(1+\frac{\mu}{U}\right),
 \end{split}
\end{equation}
where in the last inequality we have used that $|\mathcal{C}_n|\le \left(4 +\frac{4\mu}{U}\right)$.
Since $P^{(0)}_{\mathcal{C}_n} P_{\mathcal{C}_n}P^{(0)}_{\mathcal{C}_n}$ has support in $P^{(0)}_{\mathcal{C}_n}\mathcal{H}$ with eigenvalues lying above in $[1-\frac{12|\psi|}{\delta}\left(1+\frac{\mu}{U}\right),1]$, we also see that $\|P^{(0)}_{\mathcal{C}_n}-P^{(0)}_{\mathcal{C}_n} P_{\mathcal{C}_n}P^{(0)}_{\mathcal{C}_n}\|_\infty \le \frac{12|\psi|}{\delta}\left(1+\frac{\mu}{U}\right)$ and therefore \begin{align*}
    \left\|P_{\mathcal{C}_n}-P^{(0)}_{\mathcal{C}_n}\right\|_\infty \le \frac{24|\psi|}{\delta}\left(1+\frac{\mu}{U}\right).
\end{align*}
Using the same argument together with \eqref{eq:tildeRnbound} and the fact that $\|a P^{(0)}_{\mathcal{C}_n}\|_\infty, \|a^\dagger P^{(0)}_{\mathcal{C}_n}\|_\infty \le 2\sqrt{1+\frac{\mu}{U}}$, we see
\begin{align*}
    \left\|a\left(P_{\mathcal{C}_n}-P^{(0)}_{\mathcal{C}_n}\right)\right\|_\infty  \le \frac{48|\psi|}{\delta}\left(1+\frac{\mu}{U}\right)^{3/2}\text{ and }\left\|a^\dagger\left(P_{\mathcal{C}_n}-P^{(0)}_{\mathcal{C}_n}\right)\right\|_\infty  \le \frac{48|\psi|}{\delta}\left(1+\frac{\mu}{U}\right)^{3/2}.
\end{align*}
This shows \eqref{eq:P_CnCloseness}.

Lastly, to show \eqref{eq:nLowvsmLargeAnn}, we first consider the sum
\begin{align*}
   \sum_{j=0}^{\infty} P^{(0)}_{\mathcal{C}_n}(\widetilde R^\dagger_n)^ja\ket{E_m} = \!c_m\!\!\sum_{j,l=0}^\infty \!P^{(0)}_{\mathcal{C}_n}(\widetilde R^\dagger_n)^ja(R_m)^l\!\ket{m},
 \end{align*}
 where we used \eqref{eq:EnVonNeumannFock} once again. Denoting $\operatorname{dist}(\mathcal{C}_n,m) = \min_{m_1\in \mathcal{C}_n}|m_1-m|$, we see
 \begin{align*}
   P^{(0)}_{\mathcal{C}_n}(\widetilde R^\dagger_n)^ja(R_m)^l\ket{m} = 0
 \end{align*}
 for all $k:=l+j<\operatorname{dist}(\mathcal{C}_n,m) -1$. Therefore, denoting $K=\max\{\operatorname{dist}(\mathcal{C}_n,m) -1,1\}$ and $\eta = \frac{|\psi|}{\delta}$, and using \eqref{eq:Rnbound} and \eqref{eq:tildeRnbound}, we get that 
 \begin{align*}
\left\|\sum_{j=0}^{\infty} P^{(0)}_{\mathcal{C}_n}(\widetilde R^\dagger_n)^ja\ket{E_m} - \sqrt{m}P^{(0)}_{\mathcal{C}_n}\ket{m-1}\right\| &\le \sum_{(j,l)\in\N^2_0\setminus\{(0,0)\}}\left\|\ P^{(0)}_{\mathcal{C}_n}(\widetilde R^\dagger_n)^ja(R_m)^l\ket{m}\right\|\\& \le \sum_{k=K}^\infty  (k+1)\,\eta^k=\frac{\eta^{K}\bigl(K+1 - K\eta\bigr)}{(1 - \eta)^2} \le \eta\,\frac{\eta^{K-1}\bigl(K+1\bigr)}{(1 - \eta)^2}.
\end{align*}
Using $\eta\le 1-\kappa$ for some $\kappa\in(0,1)$ and $
C'_\kappa:=\left(1-\frac{\kappa}{2}\right)^{-2}\sup_{K\ge0}\left(\frac{1-\kappa}{1-\kappa/2}\right)^{K-1}\frac{\bigl(K+1\bigr)}{\kappa^2}<\infty$ shows
\begin{equation}
\label{eq:LastExpDecayEver}
\left\|\sum_{j=0}^{\infty} P^{(0)}_{\mathcal{C}_n}(\widetilde R^\dagger_n)^ja\ket{E_m} - \sqrt{m}P^{(0)}_{\mathcal{C}_n}\ket{m-1}\right\|\le  C'_\kappa\,\eta \left(1-\frac{\kappa}{2}\right)^{\operatorname{dist}(\mathcal{C}_n,m)}. 
\end{equation}
Using \eqref{eq:EnVonNeumannLow} together with the equivalence-class structure of $\mathcal{C}_n$ (cf.~Lemma \ref{lem:E_n0}), we obtain
\begin{align*}
    P_{\mathcal{C}_n} = \sum_{k\in\mathcal{C}_n}\kb{E_k} =\sum_{\substack{k\in\mathcal{C}_n\\j\in\N_0}} \kb{E_k}P^{(0)}_{\mathcal{C}_n}(\widetilde R^{\dagger}_k)^j.
\end{align*}
Thus, we see that using \eqref{eq:LastExpDecayEver} and  $|\mathcal{C}_n|\le \left(4 +\frac{4\mu}{U}\right)$ 
\begin{equation}
\begin{split}
\label{eq:MoreInequalitiesMoreFun}\left\|P_{\mathcal{C}_n}a\ket{E_m}-\sqrt{m}P_{\mathcal{C}_n}P^{(0)}_{\mathcal{C}_n}\ket{m-1}\right\|  &\le |\mathcal{C}_n|C'_\kappa\,\eta \left(1-\frac{\kappa}{2}\right)^{\operatorname{dist}(\mathcal{C}_n,m)} \le C'_\kappa\left(4+\frac{4\mu}{U}\right)\eta \left(1-\frac{\kappa}{2}\right)^{\operatorname{dist}(\mathcal{C}_n,m)}.
     \end{split}
\end{equation}
Lastly, by \eqref{eq:P_CnCloseness} we have 
\begin{equation}
\sqrt{m}\left\|P_{\mathcal{C}_n}P^{(0)}_{\mathcal{C}_n}\ket{m-1} - P^{(0)}_{\mathcal{C}_n}\ket{m-1}\right\|  \le \begin{cases} \frac{48|\psi|}{\delta}\left(1+\frac{\mu}{U}\right)^{3/2},\quad &\text{if } m-1\in\mathcal{C}_n,\\ 0,&\text{otherwise}\end{cases}
\end{equation}
shows \eqref{eq:nLowvsmLargeAnn} for the annihilation operator. The corresponding statement involving the creation operator can be proven in an analogous fashion using that $P^{(0)}_{\mathcal{C}_n}a^\dagger\ket{m}=\sqrt{m+1}P^{(0)}_{\mathcal{C}_n}\ket{m+1}=0$ for all $n<3+\frac{4\mu}{U}$ and $m\ge3+\frac{4\mu}{U}.$ In fact, in this case, the argument in \eqref{eq:MoreInequalitiesMoreFun} already suffices, which leads to the improved scaling of the right hand side of \eqref{eq:nLowvsmLargeAnn} with respect to $\mu$ and $U.$
 \end{proof}

\noindent We are now ready to prove Theorem~\ref{thm:mfBHGap}.

\begin{proof}[Proof of Theorem~\ref{thm:mfBHGap}]
Consider for $\psi \in\C$ with $|\psi|\in I$ for some closed interval $I\subseteq [0,\frac{U}{16})$ the self-adjoint generator $L_{\hatfM,h(N)}$ on the space of Hilbert-Schmidt generators corresponding to the Hamiltonian $h(N)= \sum_{n=0}^\infty E_n\kb{n}$, with $h(n):=E_n$ being the energies of the mean field Bose-Hubbard Hamiltonian, $H,$ defined in \eqref{eq:mfBHHam}. 
We want to employ the spectral gap result for generators of number preserving Hamiltonians in our companion paper, i.e. \cite[Theorem 3.5]{BeckerRouzeSalzmannToAppearcmp}, to show that $\operatorname{gap}(L_{\hatfM,h(N)})>0.$  
In particular combining \eqref{eq:H0Eigenvalues} with 
Lemma~\ref{lem:mfBHEigPert}, we see that the $E_n$ satisfy the growth assumption of \cite[Theorem 3.5]{BeckerRouzeSalzmannToAppearcmp}:%Theorem~\ref{thm:SpectralGapGeneralh(N)} 
 In the language of \cite[Theorem 3.5]{BeckerRouzeSalzmannToAppearcmp} , we pick $n_0 = 3 +\frac{4\mu}{U}$ and note that by \eqref{eq:H0Eigenvalues} and \eqref{eq.boundperturb} with $r=U/8$ we have for $n\ge n_0$
\begin{align*}
    E_{n+1} -E_n \ge  Un -\mu - \frac{U\sqrt{n+2}}{4} \ge \frac{U}{4}.
\end{align*}
Therefore, we can pick $\delta= \frac{U}{4}>0$ and $s=1$ in \cite[Theorem 3.5]{BeckerRouzeSalzmannToAppearcmp}. From this and \eqref{eq:UniformDeltaEBound}, we can employ 
\begin{align}
    \label{eq:UniformGapBound}
    \operatorname{gap}(L_{\hatfM,h(N)}) \ge \kappa(\beta,n_0,\delta,s,\Delta_E) >0,
\end{align} which shows that 
\begin{align*}
    \operatorname{gap}(L_{\hatfM,h(N)}) \ge \kappa
\end{align*}
for some $\kappa>0$, which is independent of $\psi$ with $|\psi|\in I.$

Furthermore, the corresponding Dirichlet form, defined through $\mathcal{E}(x) = -\langle x,L_{\hatfM,h(N)}x\rangle_{2}$, can explicitly be written as (cf. \cite[Section 2.1]{BeckerRouzeSalzmannToAppearcmp})
\begin{align*}
\cE(x) &= \int_{-\infty}^\infty \frac{\langle\partial^+_t(x),\partial^+_t(x)\rangle_{2} +\langle\partial^-_t(x),\partial^-_t(x)\rangle_{2}}{\beta \cosh(2\pi t/\beta)} dt
\end{align*}
with
\begin{align*}
    \partial^{+}_t(x) &= \sum_{n=0}^{\infty} \hatfM(E_{n+1}-E_n) e^{i(E_{n+1}-E_n) t} \sqrt{n+1}\left(\ket{n+1}\!\!\bra{n} x - e^{\beta (E_{n+1}-E_n)/2}x\ket{n+1}\!\!\bra{n}\right)
\end{align*}
and \begin{align*}
    \partial^{-}_t(x) &= \sum_{n=1}^\infty \hatfM(E_{n-1}-E_n) e^{i(E_{n-1}-E_n) t} \sqrt{n}\left(\ket{n-1}\!\!\bra{n} x - e^{\beta (E_{n-1}-E_n)/2}x\ket{n-1}\!\!\bra{n}\right).
\end{align*}

In the following, we use the perturbation theory result for the spectral gap, Lemma~\ref{lem:GapPertDirichlet2}, to establish that $L_{\hatfM,H}$ also has a positive spectral gap due to the positive gap of $L_{\hatfM,h(N)}$ defined above. 

For that, we use the notation and results corresponding to the spectral decomposition of $H$, established in Lemma~\ref{lem:mfBHEigPert} and Lemma~\ref{lem:mfBHEigVec}, to 
write the corresponding Dirichlet form $\widetilde\cE(x) = - \langle x,L_{\hatfM,H}\,x\rangle$ as
\begin{align*}
\widetilde\cE(x) &= \int_{-\infty}^\infty \frac{\langle\widetilde\partial^+_t(x),\widetilde\partial^+_t(x)\rangle_{2} +\langle\widetilde\partial^-_t(x),\widetilde\partial^-_t(x)\rangle_{2}}{\beta \cosh(2\pi t/\beta)}dt.
\end{align*} 
with
\begin{align*}
    \widetilde \partial^{\pm}_t(x) &= \sum_{n,m=0}^\infty \hatfM(E_n-E_m) e^{i(E_n-E_m) t} \Big(\kb{E_n}a^{\pm} \kb{E_m} x- e^{\beta (E_n-E_m)/2}x\kb{E_n}a^{\pm} \kb{E_m}\Big).
\end{align*}
In the following, we show that $(\widetilde \partial^{+}_t -\partial^{+}_t)(x) = B^+_Lx - xB^+_R$ and $(\widetilde \partial^{-}_t -\partial^{-}_t)(x) = B^-_Lx - xB^-_R$, where 
\begin{align}
\label{eq:PartialOHMYGOD}
\nn B^+_L&=\sum_{n,m=0}^\infty \hatfM(E_n-E_m) e^{i(E_n-E_m) t} \big(\kb{E_n}a^\dagger \kb{E_m}- \delta_{n,m+1}\sqrt{m+1}\ket{m+1}\!\bra{m} \big),\\\nn
B^+_R&=\sum_{n,m=0}^\infty \hatfM(E_n-E_m) e^{i(E_n-E_m) t}e^{\beta (E_n-E_m)/2}\big(\kb{E_n}a^\dagger \kb{E_m}-\delta_{n,m+1}\sqrt{m+1}\ket{m+1}\!\bra{m}\big),\\ \nn
B^-_L&=\sum_{n,m=0}^\infty \hatfM(E_n-E_m) e^{i(E_n-E_m) t} \big(\kb{E_n}a \kb{E_m}- \delta_{n,m-1}\sqrt{m}\ket{m-1}\!\bra{m} \big),\\
B^-_R&=\sum_{n,m=0}^\infty \hatfM(E_n-E_m) e^{i(E_n-E_m) t}e^{\beta (E_n-E_m)/2}\big(\kb{E_n}a \kb{E_m}-\delta_{n,m-1}\sqrt{m}\ket{m-1}\!\bra{m}\big),
\end{align}
define bounded operators with operator norm going to zero as $|\psi|\to 0.$ We focus in the following on the $B^-_L$ term, as the other terms can be treated similarly. To prove its boundedness, we split the sums occurring in \eqref{eq:PartialOHMYGOD} into four different regions of indices $n,m$, namely: 
\[
\begin{array}{c|cc}
 & m < 3+\frac{4\mu}{U} & m \ge 3+\frac{4\mu}{U} \\ \hline
n < 3+\frac{4\mu}{U}
  & \text{(1)} 
  & \text{(3)} \\[1mm]
n \ge 3+\frac{4\mu}{U}
  & \text{(4)} 
  & \text{(2)}
\end{array}
\]

and treat each of them individually. Throughout, we assume $\frac{16|\psi|}{U}\le 1-\kappa$ for some $\kappa\in(0,1)$ such that the results from Lemma~\ref{lem:mfBHEigVec} can be employed. 

We start with the region of small $n,m:$ i.e. region (1): 

\bigskip

\noindent \underline{Case $n,m <3 +\frac{4\mu}{U}:$}

\smallskip

\noindent For this, we need the following continuity bounds: Since $|\frac{d}{d\nu}\hatfM(\nu)| \le \frac{\beta}{2},$, we see by the mean value theorem that for all $\nu,\nu'\in \R$, we have
\bin{\begin{align*}
    |\hatfM(\nu) - \hatfM(\nu')| \le \frac{\beta}{2}|\nu-\nu'|
    \end{align*}}
     \begin{equation}
    \label{eq:ContBound1}
    |\hatfM(\nu)e^{it\nu} - \hatfM(\nu')e^{it\nu'}| \le \left(\frac{\beta}{2}+|t|\right)|\nu-\nu'|
    \end{equation}
    and similarly, using $|\frac{d}{d\nu}(\hatfM(\nu)e^{\beta/2})| \le \frac{\beta}{2}$ that
    \begin{align}
    \label{eq:ContBound2}
    |\hatfM(\nu)e^{it\nu}e^{\beta\nu/2} - \hatfM(\nu')e^{it\nu'}e^{\beta\nu'/2}| \le \left(\frac{\beta}{2}+|t|\right)|\nu-\nu'|.
    \end{align}

For $n<3+\frac{4\mu}{U}$ and $\delta\in(|\psi|,\frac{U}{16})$ we consider the set $\mathcal{C}_n\subset \{0,\cdots,3+\frac{4\mu}{U}\}$ defined and studied in Lemma~\ref{lem:E_n0},~\ref{lem:mfBHEigPert}
and~\ref{lem:mfBHEigVec}.  \bin{Furthermore, for $n=3+\ceil{\frac{4\mu}{U}}$ we denote $\mathcal{C}_n =\{n\}.$ }
Using this, in particular the bound \eqref{eq:EnCnCloseness}, and \eqref{eq:ContBound1} we see 
\begin{align}
\label{eq:SmallEnergyFromCont}
&\nn\left\|\sum_{n,m<3+\frac{4\mu}{U}} \hatfM(E_n-E_m) e^{i(E_n-E_m) t} \kb{E_n}a\kb{E_m}-  \sum_{\substack{[n],[m]\\n,m<3+\frac{4\mu}{U}}}\hatfM(E_n-E_m) e^{i(E_n-E_m) t} P_{\mathcal{C}_n}aP_{\mathcal{C}_m}\right\|_\infty \\
&\nn\le \left\|\sum_{\substack{[n],[m]\\n,m<3+\frac{4\mu}{U}}}\sum_{k\in[n],l\in[m]}\Big(\hatfM(E_k-E_l) e^{i(E_k-E_l) t}-\hatfM(E_n-E_m) e^{i(E_n-E_m) t}\Big)\kb{E_k}a\kb{E_l}\right\|_\infty
\\&\lesssim\delta \left(\frac{\beta}{2} +|t|\right) \max\left\{\left(1+\frac{\mu}{U}\right)^{7/2},1\right\}
\end{align}
In the second sum on the left-hand side above, we used the notation $[n] \equiv \mathcal{C}_n$ to denote that we are not summing over the individual elements in $\mathcal{C}_n$ but rather the different sets occurring, and the energies $E_n$, resp.~$E_m$, are chosen with respect to any representative $n\in[n]$, resp~$m\in[m]$. Furthermore, in the last inequality, we used Schur's test\footnotetext[1]{For some orthonormal basis $(\ket{i})_i$ and operator $B= \sum_{i,j}b_{ij}\ket{i}\!\bra{j}$ satisfying $M:=\max\{\sup_i\sum_j|b_{ij}|,\sup_i\sum_j|b_{ij}|\}<\infty,$ Schur's test gives that $B$ is a bounded operator with $\|B\|_\infty \le M.$}\footnotemark[1] and the fact that $\|a\ket{E_l}\|\le \sqrt{3+\frac{4\mu}{U}} + 1$ for $l<3 +\frac{4\mu}{U}$, which follows from \eqref{eq:EnvectorLow}, \eqref{eq:tildeRnbound}, with 
$\|a\widetilde R_n \ket{E_n}\|\le \frac{|\psi|}{\delta }\le 1$ and $\|a P^{(0)}_{\mathcal{C}_n}\|_\infty \le \sqrt{3+\frac{4\mu}{U}}$. \bin{for $n<3+\frac{4\mu}{U}$ and \eqref{eq:EnNLarge} and \eqref{eq:Rnbound} for $n=3+\ceil{\frac{4\mu}{U}}.$}

Similarly, we can replace the corresponding sums over terms involving $\kb{n}a\kb{m}= \delta_{n,m-1}\sqrt{m}\ket{m-1}\!\bra{m}$ with $n,m<3+\frac{4\mu}{U}$ in $B^-_L$ of \eqref{eq:PartialOHMYGOD} by corresponding terms involving  $P^{(0)}_{\mathcal{C}_n}aP^{(0)}_{\mathcal{C}_m}.$ Lastly, we use \eqref{eq:P_CnCloseness} which gives 

\begin{align*}
\left\|\sum_{\substack{[n],[m]\\n,m<3+\frac{4\mu}{U}}}\hatfM(E_n-E_m) e^{i(E_n-E_m) t} \Big(P_{\mathcal{C}_n}aP_{\mathcal{C}_m} -P^{(0)}_{\mathcal{C}_n}aP^{(0)}_{\mathcal{C}_m}\Big)\right\|_\infty \lesssim \frac{|\psi|}{\delta}\max\left\{\left(1+\frac{\mu}{U}\right)^{7/2}, 1\right\}.
\end{align*}

In total, this gives

    \begin{align}
\label{eq:BoundRegion1}
     \left\|\sum_{n,m<3+\frac{4\mu}{U}} \hatfM(E_n-E_m) e^{i(E_n-E_m) t}\left( \kb{E_n}a\kb{E_m} -\kb{n}a\kb{m}\right)\right\|_\infty\lesssim   \left(\delta \left(\tfrac{\beta}{2} +|t|\right)+ \tfrac{|\psi|}{\delta}\right)\max\left\{\left(1+\tfrac{\mu}{U}\right)^{7/2}, 1\right\}.
    \end{align}

We continue with the region of both $n,m$ large, i.e., region (2):

\bigskip

\noindent \underline{Case $ n,m\ge 3+\frac{4\mu}{U}:$}

\noindent 
We choose $\psi$ small enough such that $\frac{16|\psi|}{U}\le 1-\kappa$ for some $\kappa\in(0,1)$
and employ \eqref{eq:mfBHMatrixDecayAnn}
from Lemma~\ref{lem:mfBHEigVec} to see 
\begin{align*}
    \left|\bra{E_n}a\ket{E_m}-\delta_{n,m-1}\sqrt{m}\right| \le C_\kappa \,\frac{|\psi|}{U} \left(1-\frac{\kappa}{2}\right)^{|m-n|}
\end{align*}
for some $C_\kappa\ge 0.$ 
Furthermore, note that
\begin{align*}
    \sup_{n\in\N_0}\sum_{m=0}^{\infty} \left(1-\frac{\kappa}{2}\right)^{|m-n|} =:\widetilde C_\kappa <\infty
\end{align*}
for some $ \widetilde C_\kappa\ge 0$ depending only on $\kappa$. Combining this $|\hatfM|\le 1$, we can employ Schur's test\footnotemark[1] which gives

    \begin{align*}
       &\left\|\sum_{n,m\ge 3+\frac{4\mu}{U}}^\infty \hatfM(E_n-E_m) e^{i(E_n-E_m) t} \big(\kb{E_n}a \kb{E_m}- \delta_{n,m-1}\sqrt{m}\ket{E_{m-1}}\!\bra{E_m} \big) \right\|_\infty \\&=\left\|\sum_{n,m\ge 3+\frac{4\mu}{U}}^\infty \hatfM(E_n-E_m) e^{i(E_n-E_m) t} \big(\bra{E_n}a\ket{ E_m}- \delta_{n,m-1}\sqrt{m}\big) \ket{E_n}\!\bra{E_m}\right\|_\infty \le C_\kappa \widetilde C_\kappa \frac{|\psi|}{U}.    \end{align*}

Combining this with \eqref{eq:EvenWeirdOpAreBounded} from Lemma~\ref{lem:mfBHEigVec} and the choice $u_m=\hatfM(E_{m-1}-E_m)e^{it(E_{m-1}-E_m)}$  which satisfies $|u_m|\le 1$, we see

    \begin{align}
 \label{eq:BoundRegion2}
       &\left\|\sum_{n,m\ge 3+\frac{4\mu}{U}}^\infty \hatfM(E_n-E_m) e^{i(E_n-E_m) t} \big(\kb{E_n}a \kb{E_m}- \delta_{n,m-1}\sqrt{m}\ket{m-1}\!\bra{m} \big) \right\|_\infty  \le C'_\kappa\frac{|\psi|}{U}   \end{align}

for some $C'_\kappa\ge 0$ depending only on $\kappa.$

We continue with the region (3).

\bigskip

\noindent \underline{Case $n<3+\frac{4\mu}{U}\text{ and } m\ge 3+\frac{4\mu}{U}:$} 
For $n<3+\frac{4\mu}{U}$ and $\delta\in(|\psi|,\frac{U}{16})$, we consider the set $\mathcal{C}_n\subset \{0,\cdots,3+\frac{4\mu}{U}\}$ defined and studied in Lemma~\ref{lem:E_n0},~\ref{lem:mfBHEigPert}, and~\ref{lem:mfBHEigVec}.  \bin{Furthermore, for $n=3+\ceil{\frac{4\mu}{U}}$ we denote $\mathcal{C}_n =\{n\}.$} 
Arguing similarly to \eqref{eq:SmallEnergyFromCont}, in particular, using the bounds \eqref{eq:EnCnCloseness} and \eqref{eq:ContBound1}, we see 

\noindent 

    \begin{align*}
        &\left\|\sum_{\substack{n<3+\frac{4\mu}{U}\\m\ge 3+\frac{4\mu}{U}}} \hatfM(E_n-E_m) e^{i(E_n-E_m) t} \kb{E_n}a\kb{E_m}-  \sum_{\substack{[n] \text{ s.t. } n<3+\frac{4\mu}{U},\\m\ge 3+\frac{4\mu}{U}}}\hatfM(E_n-E_m) e^{i(E_n-E_m) t} P_{\mathcal{C}_n}a\kb{E_m}\right\|_\infty \\&\le
         \left\|\sum_{\substack{[n] \text{ s.t. } n<3+\frac{4\mu}{U},\\m\ge 3+\frac{4\mu}{U}}}\sum_{k\in[n]}\left( \hatfM(E_k-E_m) e^{i(E_k-E_m) t}- \hatfM(E_n-E_m) e^{i(E_n-E_m) t}\right) \kb{E_k}a\kb{E_m}\right\|_\infty \\&\le\sum_{\substack{[n] \text{ s.t. } n<3+\frac{4\mu}{U}}}\sum_{k\in[n]}\|\kb{E_k}a\|_\infty \left\|\sum_{m\ge 3+\frac{4\mu}{U}}\left( \hatfM(E_k-E_m) e^{i(E_k-E_m) t}- \hatfM(E_n-E_m) e^{i(E_n-E_m) t}\right) \kb{E_m}\right\|_\infty\\&\lesssim \delta\left(\frac{\beta}{2} +|t|\right) \max\left\{ \left(1+\frac{\mu}{U}\right)^{7/2},1\right\}
    \end{align*}

Here again, we used the notation $[n] \equiv \mathcal{C}_n$ to denote that we are not summing over the individual elements in $\mathcal{C}_n$ but rather the different sets occurring. Furthermore, in the last inequality, we used that $\|\kb{E_k}a\|_\infty=\|a^\dagger\ket{E_k}\|\le \sqrt{4+\frac{4\mu}{U}} + 1$ for $k<3 +\frac{4\mu}{U}$, which follows from \eqref{eq:EnvectorLow} and \eqref{eq:tildeRnbound}, with 
$\|a^\dagger\widetilde R_n \ket{E_n}\|\le \frac{|\psi|}{\delta }\le 1$ and $\|a^\dagger P^{(0)}_{\mathcal{C}_n}\|_\infty \le \sqrt{4+\frac{4\mu}{U}}$.

\bin{ for $n<3+\frac{4\mu}{U}$ and \eqref{eq:EnNLarge} and \eqref{eq:Rnbound} for $n=3+\ceil{\frac{4\mu}{U}}.$}

Similarly, we can replace the corresponding sums over terms involving $\kb{n}a\kb{m}= \delta_{n,m-1}\sqrt{m}\ket{m-1}\!\bra{m}$ with $n<3+\frac{4\mu}{U}$ in $B^-_L$ of \eqref{eq:PartialOHMYGOD} by corresponding terms involving  $P^{(0)}_{\mathcal{C}_n}a\kb{m}.$

 Lastly, we choose $\psi,\delta$ such that $\frac{|\psi|}{\delta}\le 1-\kappa$ for some $\kappa\in(0,1)$ and use \eqref{eq:nLowvsmLargeAnn} to see

\begin{align*}
 &\left\|\sum_{\substack{[n] \text{ s.t. } n<3+\frac{4\mu}{U},\\m\ge 3+\frac{4\mu}{U}}} \hatfM(E_n-E_m) e^{i(E_n-E_m) t} \left(P_{\mathcal{C}_n}a\kb{E_m}-  P^{(0)}_{\mathcal{C}_n}a\kb{m}\right)\right\|_\infty   \\&\le C_\kappa\max\left\{ \left(1+\frac{\mu}{U}\right)^{3/2},1\right\} \,\frac{|\psi|}{\delta}\sum_{\substack{[n] \text{ s.t. } n<3+\frac{4\mu}{U},\\m\ge 3+\frac{4\mu}{U}}} \left(1-\frac{\kappa}{2}\right)^{\operatorname{dist}(\mathcal{C}_n,m)} \le C_k\widetilde C_\kappa \max\left\{ \left(1+\frac{\mu}{U}\right)^{5/2},1\right\} \,\frac{|\psi|}{\delta},
\end{align*}

for some $C_\kappa,\widetilde C_\kappa\ge 0$, depending on $\kappa$, but on no other parameters. 

In total, we have proven

\begin{align}
\label{eq:BoundRegion3}
  \nn& \left\|\sum_{\substack{n<3+\frac{4\mu}{U}\\m\ge 3+\frac{4\mu}{U}}} \hatfM(E_n-E_m) e^{i(E_n-E_m) t} \left(\kb{E_n}a\kb{E_m}-\kb{n}a\kb{m}\right)\right\|_\infty \\&\lesssim \left(\frac{\beta}{2} +|t|\right) \max\left\{ \left(1+\frac{\mu}{U}\right)^{7/2},1\right\} + \max\left\{ \left(1+\frac{\mu}{U}\right)^{5/2},1\right\}\,\frac{|\psi|}{\delta},
\end{align}

where we have hidden constants that depend only on $\kappa$ with the $\lesssim$ notation.

We continue with the region (4):

\bigskip

\noindent\underline{ Case $m<3+\frac{4\mu}{U}\text{ and } n\ge 3+\frac{4\mu}{U}:$} 

\smallskip

\noindent This term can similarly be estimated using $\kb{n}a\kb{m} =\sqrt{m}\langle n|m-1\rangle=0$ in the considered regime and the fact that by \eqref{eq:nLowvsmLargeAnn}   \begin{align*}
    &\|\kb{E_n}a\kb{E_m}\|_\infty = |\bra{E_m}a^\dagger\ket{E_n}|\le C_\kappa\left(1+\frac{\mu}{U}\right) \,\frac{|\psi|}{\delta}\left(1-\frac{\kappa}{2}\right)^{\operatorname{dist}(\mathcal{C}_{m},n)}
\end{align*} which yields 
\begin{align}
\label{eq:BoundRegion4}
   \nn \left\|\sum_{\substack{m<3+\frac{4\mu}{U}\\n\ge 3+\frac{4\mu}{U}}} \hatfM(E_n-E_m) e^{i(E_n-E_m) t} \kb{E_n}a\kb{E_m}\right\|_\infty &\le C_\kappa\left(1+\frac{\mu}{U}\right) \,\frac{|\psi|}{\delta}\sum_{\substack{m<3+\frac{4\mu}{U}\\n\ge 3+\frac{4\mu}{U}}}\left(1-\frac{\kappa}{2}\right)^{\operatorname{dist}(\mathcal{C}_{m},n})\\&\le C_\kappa\widetilde C_\kappa \left(1+\frac{\mu}{U}\right)^2\, \frac{|\psi|}{\delta}.
\end{align}
for some $C_\kappa$ and $\widetilde C_\kappa$ depending only on $\kappa.$

\medskip

Collecting the estimates from all four regions, i.e. \eqref{eq:BoundRegion1}, \eqref{eq:BoundRegion2}, \eqref{eq:BoundRegion3}, and \eqref{eq:BoundRegion4}, and using that $\frac{|\psi|}{U}\le \frac{|\psi|}{\delta}$ since $\delta<\frac{U}{16},$
, we have proven that in the case $1+\frac{\mu}{U}<0$, we interpret $\max\left\{ \left(1+\frac{\mu}{U}\right)^{5/2},1\right\}$ and $\max\left\{ \left(1+\frac{\mu}{U}\right)^{7/2},1\right\}$ as 1. The reason for this is that the regions (1), (3), and  (4), where either $n$ or $m$ or both are small, are all empty. Hence, in this case, we only need to take the bound from region (2) into account and obtain the simpler bound $\|B^{-}_L\|_\infty \lesssim \frac{\psi}{U}.$
\begin{align*}
    \|B^-_L\|_\infty &\lesssim\left( \delta \left(\frac{\beta}{2} +|t|\right)+\frac{|\psi|}{\delta}\right)\max\left\{ \left(1+\frac{\mu}{U}\right)^{7/2},1\right\}  \bin{\\&\quad +\max\left\{ \left(1+\frac{\mu}{U}\right)^{5/2},1\right\} \,\frac{|\psi|}{\delta}}.
\end{align*}
Assuming $|\psi|$ is small enough, in particular $|\psi| <\left(\frac{U}{16}\right)^2\beta$, and furthermore $|\psi|\le \beta^{-1}(1-\kappa')$ for some $\kappa'\in(0,1)$, we can choose \footnotemark{Note that this $\delta$ satisfies all requirements we have needed so far, namely $\delta< \frac{U}{16}$ and $\frac{|\psi|}{\delta}\le 1-\kappa$  for some $\kappa\in(0,1)$.} $\delta=\sqrt{\frac{|\psi|}{\beta}}$, which yields \bin{$\delta = \sqrt{\frac{|\psi|}{\frac{\beta}{2}+|t|}}$} 
\bin{\begin{align*}
   \|B^-_L\|_\infty \lesssim\left( \sqrt{|\psi|\left(\frac{\beta}{2}+|t|\right)} +\frac{|\psi|}{U}\right)\,\left\{\left(1+\frac{\mu}{U}\right)^{7/2},1\right\}. 
\end{align*}}

\begin{align*}
   \|B^-_L\|_\infty \lesssim \sqrt{|\psi|}\left(\sqrt{\beta}+\frac{|t|}{\sqrt{\beta}}\right)\,\max\left\{\left(1+\frac{\mu}{U}\right)^{7/2},1\right\}. 
\end{align*}

The terms $B^+_L, B^+_R$ and $B^-_R$ in \eqref{eq:PartialOHMYGOD} can be estimated in the same fashion, yielding the same upper bound on their respective operator norms. Therefore, the $\widetilde \partial^{\pm}_t -\partial^\pm_t$ defines a bounded operator on the space of Hilbert-Schmidt operators, with the corresponding operator norm being bounded as
\begin{align*}
   &\left\|\widetilde \partial^{\pm}_t -\partial^\pm_t\right\|\lesssim \sqrt{|\psi|}\left(\sqrt{\beta}+
   \frac{|t|}{\sqrt{\beta}}\right) \,\max\left\{\left(1+\frac{\mu}{U}\right)^{7/2},1\right\}
   . 
\end{align*}
From that, we can see \begin{align*}
    \Delta &:= \int_{-\infty}^{\infty } \frac{\big\|\widetilde \partial^{+}_t -\partial^+_t\big\|^2+\big\|\widetilde \partial^{-}_t -\partial^-_t\big\|^2}{\beta\cosh(2\pi t/\beta)} dt\lesssim |\psi|\beta\, \max\left\{\left(1+\frac{\mu}{U}\right)^7,1\right\},
\end{align*}
where in the inequality we have used that $\int_{-\infty}^{\infty } \frac{|t|^2\beta^{-1} +\beta}{\beta\cosh(2\pi t/\beta)} dt = C \beta$ for some $C\ge 0.$ Hence, we can use Lemma~\ref{lem:GapPertDirichlet2}, which gives for $|\psi|$ small enough such that $\operatorname{gap}(L_{\hatfM,h(N)}) > \Delta $ that $L_{\hatfM,H}$ has a positive spectral gap satisfying
\begin{align}
    \operatorname{gap}(L_{\hatfM,H}) >\left(\sqrt{\operatorname{gap}(L_{\hatfM,h(N)})} - \sqrt{\Delta}\right)^2.
\end{align}
\end{proof}

\section{Regularized Bose-Hubbard models}\label{sec.regulBH}
\noindent Next, we turn our attention to the  full Bose-Hubbard Hamiltonian over $L^2(\mathbb{R}^n)$: 
\begin{align}\label{BHFULL}
H_{\operatorname{BH}} &= -J \sum_{\langle i,j\rangle} (a_i^{\dagger}a_j  + \operatorname{h.c.} )+ \eta \sum_{i}  N_i  + \frac{U}{2} \left( \sum_i \Big(N_i^2 - \eta'  N_i \Big) \right)\equiv H_0+V, 
\end{align}
where $\eta,\eta',J \in \mathbb R$, and $U>0$, with two different regularizations \cite{Fisher1989}, $H_0:= -J\sum_{\langle i,j\rangle}(a_i^\dagger a_j+\mathrm{h.c.})
+\eta\sum_i N_i$ and $V=H_{\operatorname{BH}}-H_0$. Here $\langle i,j\rangle$ stand for nearest neighbour sites on a $D$-dimensional lattice. One regularizes the on-site interaction to study the \emph{superfluid phase}: we first diagonalize the reference Hamiltonian
\[
H_0=\sum_k \varepsilon_k\, b_k^\dagger b_k
\]
for some normal modes 
\begin{align}
\label{eq:Defbk}
    b_k=\sum_x \phi_k(x)\,a_x
\end{align} 
with orthonormal single particle eigenfunctions
$\phi_k(x)=\smash{\big(\frac{2}{L+1}\big)^{D/2}\prod_{\mu=1}^D \sin(\pi k_\mu x_\mu/(L+1))}$ and single-particle energies $\epsilon_k=\eta-2J\sum_{\mu=1}^D \cos (\pi k_\mu/(L+1))$. We then get immediately
\[
V=
\sum_k \Bigl(-\eta'+\frac{U}{2}\Bigr)b_k^\dagger b_k
+\frac{U}{2}\sum_{k,q,r,s}\Lambda_{kqrs}\,b_k^\dagger b_q^\dagger b_r b_s,
\]
where $\Lambda_{kqrs}
=
\sum_x \phi_k(x)\phi_q(x)\phi_r(x)\phi_s(x)$. Choosing $\Pi^b_{M'}=(P^b_{M'})^{\otimes n}$ the $n$-fold product projection on to the first $M'+1$ Fock states associated to the mode operators $\{b,b^\dagger\}$, we consider the truncated Hamiltonian
\begin{equation}
\begin{split}
\label{eq:Bose-Hubbard_mod}
H_{\operatorname{SF}} &= H_0  +(P^b_{M'})^{\otimes n}  V (P^b_{M'})^{\otimes n} , 
\end{split}
\end{equation}
 with $\kappa:=\beta(\eta-2D|J|)>0$ to ensure the Gibbs hypothesis. We also use the regularization of the hopping-term to emphasize the on-site interaction, which is particularly relevant for the \textit{Mott-insulating phase}
\begin{equation}
\begin{split}
\label{eq:Bose-Hubbard_mott}
H_{\operatorname{MI}}&=  -J(P^a_{M})^{\otimes n}\sum_{\langle i,j\rangle} (a_i^{\dagger}a_j + \operatorname{h.c.})(P^a_{M})^{\otimes n}+ \eta (P^a_{M})^{\otimes n}\sum_{i} N_i (P^a_{M})^{\otimes n}   + \frac{U}{2}  \sum_i \Big(N_i^2 - \eta' N_i \Big), 
\end{split}
\end{equation}
where here $P^{a}_{M}$ stands for the projection onto the first $M+1$ Fock states associated to the bosonic operators $\{a, a^\dagger\}$. In words, $H_{\operatorname{SF}}$ is a finite-rank perturbation of a quadratic Hamiltonian, while $H_{\operatorname{MI}}$ is a finite-rank perturbation of a commuting sum of quartic operators $\frac{U}{2} 
\sum_i (N_i^2 - \eta' N_i)$. Next we argue that both truncations approximate the full Bose Hubbard model for large enough $M$.

\subsection{Superfluid phase}

\noindent We start with the super-fluid truncation \eqref{eq:Bose-Hubbard_mod}.

\begin{lemma}\label{lem:mode-truncation-basis}
Set \(U>0\) and $\kappa:=\beta(\eta-2D|J|)>0$. Let \({P^b_{M'}}\) be the projection onto the first \(M'+1\) lowest-energy single-mode Fock states associated to the mode operators $\{b,b^\dagger\}$. Then
\[
\bigl\|\sigma_\beta(H)-\sigma_\beta(H_{\operatorname{SF}})\bigr\|_1\le \varepsilon\qquad \text{ for }\qquad M'=\Omega\Big(n+\log\frac1\varepsilon\Big).
\]
\end{lemma}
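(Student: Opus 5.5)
The proof rests on particle-number conservation. It shows that $H:=H_{\operatorname{BH}}$ and $H_{\operatorname{SF}}$ coincide exactly on all low total-number sectors, and then bounds the Gibbs weight of the remaining high-number sectors by that of the quadratic model $H_0$.

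\emph{Step 1: sector decomposition.} Let $\Ntot:=\sum_x a_x^\dagger a_x=\sum_k b_k^\dagger b_k$; the two expressions agree because \eqref{eq:Defbk} is a unitary change of modes. The operators $H_0$ and $V$ both commute with $\Ntot$. The projection $\Pi:=(P^b_{M'})^{\otimes n}$ is diagonal in the $b$-Fock basis, so it also commutes with $\Ntot$. Hence $H$ and $H_{\operatorname{SF}}$ are block diagonal with respect to the finite-dimensional eigenspaces $\mathcal H_K$ of $\Ntot$ (eigenvalue $K$). In a $b$-Fock state with total number $K\le M'$, every single-mode occupation is at most $M'$, so $\Pi|_{\mathcal H_K}=\1$ for $K\le M'$. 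Therefore $H|_{\mathcal H_K}=H_{\operatorname{SF}}|_{\mathcal H_K}$ for all $K\le M'$, and the two unnormalized Gibbs operators agree blockwise there.

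\emph{Step 2: reduction to tail weights.} Write $Q_{>}:=\1_{\{\Ntot>M'\}}$, $T:=\Tr(e^{-\beta H}Q_>)$ and $T':=\Tr(e^{-\beta H_{\operatorname{SF}}}Q_>)$. Both Hamiltonians annihilate the vacuum, so $\mathcal Z,\mathcal Z'\ge 1$. Since the blocks agree for $K\le M'$ and $|\mathcal Z-\mathcal Z'|\le T+T'$, an elementary computation gives
\[
\|\sigma_\beta(H)-\sigma_\beta(H_{\operatorname{SF}})\|_1\le 2(T+T').
\]

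\emph{Step 3: bounding the tails.} Each on-site term satisfies $\tfrac U2(N_i^2-\eta' N_i)\ge -c_0$ with $c_0:=U\eta'^2/8$, so $V\ge -nc_0$. This lower bound is inherited by the truncation: $\Pi V\Pi\ge -nc_0\Pi\ge -nc_0$. All operators involved commute with $Q_>$, and $X\mapsto\Tr(Q_>e^{-\beta X}Q_>)$ is monotone decreasing on each block. Hence
\[
T,\,T'\le e^{\beta nc_0}\Tr\big(e^{-\beta H_0}Q_>\big).
\]
Next, $\varepsilon_k\ge\eta-2D|J|$ for every $k$, so $H_0\ge(\eta-2D|J|)\Ntot$. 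Counting states in each sector then gives
\[
\Tr\big(e^{-\beta H_0}Q_>\big)\le\sum_{K>M'}\binom{K+n-1}{n-1}e^{-\kappa K}.
\]
Use $\binom{K+n-1}{n-1}\le\big(e(K+n)/n\big)^n\le e^{\kappa K/2}$, which holds once $K\ge C_\kappa n$ for a constant $C_\kappa$ depending only on $\kappa$. For $M'\ge C_\kappa n$ this yields
\[
T+T'\le \frac{2e^{\beta nc_0}e^{-\kappa M'/2}}{1-e^{-\kappa/2}}.
\]
This is at most $\varepsilon/2$ as soon as $M'\ge C_\kappa n+\tfrac2\kappa\big(\beta nc_0+\log(1/\varepsilon)+O(1)\big)$, which is $M'=\Omega(n+\log(1/\varepsilon))$ with constants depending on $\beta,U,\eta',\eta,J,D$.

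\emph{Main obstacle.} The delicate point is Step 1. One has to make sure that the commutation $[\Pi,\Ntot]=0$ and the equality $\Pi|_{\mathcal H_K}=\1$ for $K\le M'$ really hold in the $b$-mode basis. One also has to check that the monotonicity in Step 3 is applied blockwise, where all operators are finite matrices, so that no domain issues arise for the unbounded $V$. Once the exact agreement on low sectors is established, the rest is a routine tail estimate.
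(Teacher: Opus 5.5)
Your proposal is correct and follows essentially the same route as the paper: exact agreement of $H$ and $H_{\operatorname{SF}}$ on the sectors $\{\Ntot\le M'\}$, reduction of the trace distance to the two tail weights using $\mathcal Z_\beta\ge 1$, and then the bounds $V\ge -Cn$ and $H_0\ge(\eta-2D|J|)\Ntot$. The only difference is the final tail sum: instead of your pointwise binomial estimate, which needs the threshold $M'\ge C_\kappa n$, the paper uses the exponential-tilting bound $\sum_{r>M'}\binom{n+r-1}{r}e^{-\kappa r}\le e^{-\frac{\kappa}{2}(M'+1)}(1-e^{-\kappa/2})^{-n}$, valid for all $M'$, and both give $M'=\Omega(n+\log(1/\varepsilon))$.
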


\begin{proof}
Let $
\Pi_{\le M'}:=\mathbf 1_{\{N\le M'\}}$ and $
\Pi_{>M'}:=\mathbf 1-\Pi_{\le M'}$, where $N=\sum_{i}b_i^\dagger b_i=\sum_i a_i^\dagger a_i$ denotes the total number operator:
Since both \(H_0\) and \(V\) preserve the total particle number, we have $[H,N]=0$. Also, because \(P_{\smash{M'}}\) is a function of the mode occupations \(b_k^\dagger b_k\), it commutes with \(N\), and therefore $[H_{\operatorname{SF}},N]=0$.
Now observe that on the sector \(\{N\le M'\}\), no single mode can have occupation larger than \(M'\). Hence $(P^b_{M'})^{\otimes n}\Pi_{\le M'}=\Pi_{\le M'}$. Therefore, on this low-particle-number sector,
\[
H_{\operatorname{SF}}\Pi_{\le M'}
=
\bigl(H_0+(P^b_{M'})^{\otimes n}V(P^b_{M'})^{\otimes n}\bigr)\Pi_{\le M'}
=
(H_0+V)\Pi_{\le M'}
=
H\Pi_{\le M'}.
\]
Since both Hamiltonians commute with \(\Pi_{\le M'}\), we get by functional calculus that $e^{-\beta H}\Pi_{\le M'}=e^{-\beta H_{\operatorname{SF}}}\Pi_{\le M'}$. Next, we set $A:=e^{-\beta H}$, $B:=e^{-\beta H_{\operatorname{SF}}}$, $C:=A\Pi_{\le M'}=B\Pi_{\le M'}$, and
$A_{>}:=A\Pi_{>M'}$, $B_{>}:=B\Pi_{>M'}$. Because \(A\), \(B\), \(\Pi_{\le M'}\), and \(\Pi_{>M'}\) all commute, we have the block decompositions $A=C+A_{>}$,
$B=C+B_{>}$. Then, write $c:=\Tr(C)$, $a:=\Tr(A_{>})$ and $b:=\Tr(B_{>})$ so that $
\sigma_\beta(H)=\frac{C+A_{>}}{c+a}$, $\sigma_\beta(H_{\operatorname{SF}})=\frac{C+B_{>}}{c+b}$, and thus
\begin{align*}
\bigl\|\sigma_\beta(H)-\sigma_\beta(H_{\operatorname{SF}})\bigr\|_1
&\le
\left\|C\Bigl(\frac{1}{c+a}-\frac{1}{c+b}\Bigr)\right\|_1
+\frac{\|A_{>}\|_1}{c+a}
+\frac{\|B_{>}\|_1}{c+b}=
c\left|\frac{1}{c+a}-\frac{1}{c+b}\right|
+\frac{a}{c+a}
+\frac{b}{c+b}.
\end{align*}
Since $c\left|\frac{1}{c+a}-\frac{1}{c+b}\right|=
\frac{c|a-b|}{(c+a)(c+b)}
\le
\frac{ca}{(c+a)(c+b)}+\frac{cb}{(c+a)(c+b)}
\le
\frac{a}{c+a}+\frac{b}{c+b}$, we further obtain
\[
\bigl\|\sigma_\beta(H)-\sigma_\beta(H_{\operatorname{SF}})\bigr\|_1
\le
2\frac{a}{c+a}+2\frac{b}{c+b}
=
2\,\Tr\bigl(\Pi_{>M'}\sigma_\beta(H)\bigr)
+
2\,\Tr\bigl(\Pi_{>M'}\sigma_\beta(H_{\operatorname{SF}})\bigr).
\]
So it remains to bound the large-total-number tails. There is $m_{\eta'}$ such that \(m^2-\eta' m\ge 0\) for every \(m\ge m_{\eta'}\), and hence \(V\ge -C_{\eta',U}n\) for some constant $C_{\eta',U}\ge 0$. Therefore $H\ge H_0-C_{\eta',U}n$ and $H_{\operatorname{SF}}\ge H_0-C_{\eta',U}n$. Moreover, for the open \(D\)-dimensional square lattice, $\epsilon_k=\eta-2J\sum_{\mu=1}^D \cos\Bigl(\frac{\pi k_\mu}{L+1}\Bigr)$,
so in particular $\epsilon_k\ge \eta-2D|J|$. Thus
\begin{align}\label{eqHH0lowerbound}
H_0\ge (\eta-2D|J|)N=\frac{\kappa}{\beta}N,
\qquad \Longrightarrow\qquad  
H\ge \frac{\kappa}{\beta}N-C_{\eta',U}n,
\qquad
H_{\operatorname{SF}}\ge \frac{\kappa}{\beta}N-C_{\eta',U}n.
\end{align}
Moreover, since both Hamiltonians annihilate the vacuum, their partition functions satisfy $\mathcal Z_\beta(H),\, 
\mathcal Z_\beta(H_{\operatorname{SF}})\ge 1$. Hence
\begin{align*}
\Tr\bigl(\Pi_{>M'}\sigma_\beta(H)\bigr)
&=
\frac{\Tr\bigl(\Pi_{>M'}e^{-\beta H}\bigr)}{\mathcal Z_\beta(H)}
\le
\Tr\bigl(\Pi_{>M'}e^{-\beta H}\bigr)
\le
\Tr\bigl(\Pi_{>M'}e^{-\kappa N}\bigr)e^{C_{\eta',U}\beta n},
\end{align*}
and the same bound holds with \(H_{\operatorname{SF}}\) in place of \(H\). Therefore
\[
\bigl\|\sigma_\beta(H)-\sigma_\beta(H_{\operatorname{SF}})\bigr\|_1
\le
4\,\Tr\bigl(\Pi_{>M'}e^{-\kappa N}\bigr)e^{C_{\eta',U}\beta n}.
\]
Now \(N\) is the total number operator of \(n\) bosonic modes, so
\[
\Tr(e^{-sN})
=
\sum_{r=0}^\infty \binom{n+r-1}{r}e^{-sr}
=
(1-e^{-s})^{-n},
\qquad s>0.
\]
Hence, for any \(0<\lambda<\kappa\),
\begin{align*}
\Tr\bigl(\Pi_{>M'}e^{-\kappa N}\bigr)=
\sum_{r=M'+1}^\infty \binom{n+r-1}{r}e^{-\kappa r}\le
e^{-\lambda(M'+1)}
\sum_{r=0}^\infty \binom{n+r-1}{r}e^{-(\kappa-\lambda)r}=
e^{-\lambda(M'+1)}(1-e^{-(\kappa-\lambda)})^{-n}.
\end{align*}
Choosing \(\lambda=\kappa/2\), we get
\[
\Tr\bigl(\Pi_{>M'}e^{-\kappa N}\bigr)
\le
e^{-\frac{\kappa}{2}(M'+1)}(1-e^{-\kappa/2})^{-n}.
\]
Therefore
\[
\bigl\|\sigma_\beta(H)-\sigma_\beta(H_{\operatorname{SF}})\bigr\|_1
\le
4\,e^{-\frac{\kappa}{2}(M'+1)}(1-e^{-\kappa/2})^{-n}e^{C_{\eta',U}\beta n}.
\]
This proves the claim.
\end{proof}

\subsection{Mott-insulator regime}\label{sec:HMI}

\noindent We now turn to the Mott-insulator truncation \eqref{eq:Bose-Hubbard_mott}.

\begin{lemma}\label{tracedistanceboundHBHHMI}
Assume \(U>0\). Then
\begin{align*}
\Big\|\sigma_\beta(H_{\operatorname{BH}})-\sigma_\beta(H_{\operatorname{MI}})\Big\|_1\le \varepsilon
\qquad \text{ for }\qquad 
M'=\Omega\Big(n+\log\Big(\frac{1}{\varepsilon}\Big)\Big).
\end{align*}

\end{lemma}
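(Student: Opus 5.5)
I will follow the proof of Lemma~\ref{lem:mode-truncation-basis} almost verbatim, with the total number operator $N=\sum_i N_i$ and the cutoff $\Pi_{\le M}:=\mathbf 1_{\{N\le M\}}$ (here $M\equiv M'$). Both $H_{\operatorname{BH}}$ and $H_{\operatorname{MI}}$ commute with $N$. For $H_{\operatorname{MI}}$ this holds because $(P^a_M)^{\otimes n}$ is a function of the occupations $N_i$, and the quadratic part preserves $N$. On the sector $\{N\le M\}$ every $N_i\le M$, so $(P^a_M)^{\otimes n}\Pi_{\le M}=\Pi_{\le M}$. The hopping and chemical-potential terms preserve $N$, so they map $\Pi_{\le M}\mathcal H$ into itself. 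Hence
\[
H_{\operatorname{MI}}\Pi_{\le M}=H_{\operatorname{BH}}\Pi_{\le M}.
\]
Since both Hamiltonians commute with $\Pi_{\le M}$, it follows that $e^{-\beta H_{\operatorname{MI}}}\Pi_{\le M}=e^{-\beta H_{\operatorname{BH}}}\Pi_{\le M}$. The same block-decomposition estimate as before then gives
\[
\|\sigma_\beta(H_{\operatorname{BH}})-\sigma_\beta(H_{\operatorname{MI}})\|_1\le 2\Tr(\Pi_{>M}\sigma_\beta(H_{\operatorname{BH}}))+2\Tr(\Pi_{>M}\sigma_\beta(H_{\operatorname{MI}})).
\]
Both Hamiltonians annihilate the vacuum, so $\mathcal Z_\beta\ge 1$ and each tail is at most $\Tr(\Pi_{>M}e^{-\beta H})$.

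The new ingredient, and the main obstacle, is a lower bound on the Hamiltonians without assuming $\eta-2D|J|>0$. The quartic term must dominate the hopping. I would bound the hopping operator by $|J|\,|a_i^\dagger a_j+\mathrm{h.c.}|\le |J|(N_i+N_j)$, which gives
\[
-J\sum_{\langle i,j\rangle}(a_i^\dagger a_j+\mathrm{h.c.})\ge -2D|J|N .
\]
For $H_{\operatorname{MI}}$ the compressed hopping $P(\cdot)P$ with $P=(P^a_M)^{\otimes n}$ satisfies the same bound, because $P$ commutes with each $N_i$. Consequently, for both $H\in\{H_{\operatorname{BH}},H_{\operatorname{MI}}\}$,
\[
H\ge \sum_i g(N_i),\qquad g(m):=\tfrac U2 m^2-\big(\tfrac{U\eta'}{2}+|\eta|+2D|J|\big)m .
\]
Next, since $g(m)\ge \tfrac U4 m^2-c$ and $\tfrac U4 m^2\ge \lambda m-\lambda^2/U$ for any $\lambda>0$, fix $\lambda=1/\beta$. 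This yields $\beta H\ge N-C n$ for a constant $C$ depending on $U,\eta,\eta',J,D,\beta$ only. The point requiring care is that this lower bound uses only $U>0$, as the statement demands.

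Finally, as in the superfluid case,
\[
\Tr(\Pi_{>M}e^{-\beta H})\le e^{Cn}\Tr(\Pi_{>M}e^{-N})\le e^{Cn}e^{-(M+1)/2}(1-e^{-1/2})^{-n}.
\]
The total bound is therefore $4e^{-(M+1)/2}e^{C'n}$, which is at most $\varepsilon$ once $M=\Omega(n+\log(1/\varepsilon))$.
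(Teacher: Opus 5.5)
Your proof is correct, but for this lemma it takes a different route from the paper. The paper does not reuse the block argument of Lemma~\ref{lem:mode-truncation-basis}. Instead it proceeds as follows:
\begin{itemize}
\item it applies Duhamel's formula to $e^{-\beta H_{\operatorname{BH}}}-e^{-\beta H_{\operatorname{MI}}}$ and decomposes into total-number sectors $\mathcal H_k$;
\item it uses that $(H_{\operatorname{BH}}-H_{\operatorname{MI}})_k=0$ for $k\le M'$, together with $\|(H_{\operatorname{BH}}-H_{\operatorname{MI}})_k\|_1\lesssim k\,d_k$;
\item it controls $\|e^{-sH_k}\|$ sector by sector via $\sum_i N_i^2\ge N^2/n$, i.e.\ $Q_k\ge \tfrac U2\big(\tfrac{k^2}{n}-\eta' k\big)$;
\item it finishes with a negative-binomial tail estimate.
\end{itemize}

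You use the same two structural facts, $[H_{\operatorname{BH}},N]=[H_{\operatorname{MI}},N]=0$ and exact agreement on $\{N\le M'\}$, more directly. The unnormalized Gibbs operators have identical low blocks, so the trace distance is bounded by two tail probabilities. The quartic term is then needed only to produce the linear bound $\beta H\ge N-Cn$, by completing the square.

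Your route is more elementary and handles the superfluid and Mott truncations uniformly. It also avoids two points the paper's write-up glosses over:
\begin{itemize}
\item The Duhamel estimate as written drops the factors $e^{-uH}$ and $e^{-uH_{\operatorname{MI}}}$, implicitly using $\|e^{-uH}\|,\|e^{-uH_{\operatorname{MI}}}\|\le 1$. This fails when the Hamiltonians have negative spectrum, although it only costs a harmless $e^{O(\beta n)}$ factor.
\item The claimed bound $H_{\operatorname{MI}}\ge Q+(\eta-2D|J|)N$ does not hold for the compressed hopping term when $\eta-2D|J|>0$. Vectors killed by $P$ are a counterexample, and they sit exactly in the sectors $k>M'$ where the bound is used. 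Your estimate $PTP\ge-(|\eta|+2D|J|)NP\ge-(|\eta|+2D|J|)N$ for the compressed quadratic part $T$ is valid for all signs.
\end{itemize}

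The Duhamel approach is the more flexible tool: it would still apply if the two Hamiltonians only approximately agreed on low sectors. It also retains the faster $e^{-ck^2/n}$ decay of the sector weights. For the stated scaling $M'=\Omega(n+\log(1/\varepsilon))$, however, both arguments give the same result.
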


\begin{proof}
Set $T:=-J\sum_{\langle i,j\rangle}(a_i^\dagger a_j+\operatorname{h.c.})+\eta N$, $Q:=\frac{U}{2}\sum_i\bigl(N_i^2-\eta'N_i\bigr)$, $P:=(P^a_{M'})^{\otimes n}$, and $\overline P:=1-P$. Then \(H:=H_{\operatorname{BH}}=T+Q\), while $H_{\operatorname{MI}}=PTP+Q$, so that 
and therefore
\begin{align*}
H-H_{\operatorname{MI}}=T-PTP=\overline PT+PT\overline P.
\end{align*}
Using that the partition functions of both $H$ and $H_{\operatorname{MI}}$ are lower bounded by $1$, we get that
\begin{align*}
\|\sigma_\beta(H)-\sigma_\beta(H_{\operatorname{MI}})\|_1
\le 2\,\Big\|e^{-\beta H}-e^{-\beta H_{\operatorname{MI}}}\Big\|_1.
\end{align*}
Using once again Duhamel's formula, we have that
\begin{align*}
\Big\|e^{-\beta H}-e^{-\beta H_{\operatorname{MI}}}\Big\|_1
&\le 
\int_0^{\beta/2}\Big\|e^{-(\beta-u)H}(H-H_{\operatorname{MI}})\Big\|_1\,du
+\int_{\beta/2}^{\beta}\Big\|(H-H_{\operatorname{MI}})e^{-uH_{\operatorname{MI}}}\Big\|_1\,du.
\end{align*}
We now decompose with respect to the eigenspaces of the total number operator \(N\). Let
\[
\mathcal H_k:=\ker(N-k),
\qquad 
d_k:=\dim(\mathcal H_k)=\binom{n+k-1}{k},
\]
and denote by \(T_k,P_k,Q_k,H_k,(H_{\operatorname{MI}})_k\) the restrictions of the corresponding operators to \(\mathcal H_k\). Since both \(T\) and \(P\) commute with \(N\), each of these operators preserves \(\mathcal H_k\). Moreover, for \(k\le M'\), one has \(P_k=1_{\mathcal H_k}\), and hence $(H-H_{\operatorname{MI}})_k=0$. Thus only the sectors \(k\ge M'+1\) contribute. Next, we have, by the standard bound
$\big|\sum_{\langle i,j\rangle}(a_i^\dagger a_j+\operatorname{h.c.})\big|\le 2DN$, that $\|T_k\|
\le (2D|J|+\eta)\,k$. Therefore,
\begin{align*}
\big\|(H-H_{\operatorname{MI}})_k\big\|_1=
\big\|T_k-P_kT_kP_k\big\|_1\le 
\|T_k\|_1+\|P_kT_kP_k\|_1
\le 2\,d_k\,\|T_k\|\le 2(2D|J|+\eta)\,k\,d_k.
\end{align*}
It remains to bound the Gibbs weights. We have
\begin{align*}
Q_k
=
\frac{U}{2}\sum_i\bigl(N_i^2-\eta'N_i\bigr)\Big|_{\mathcal H_k}
\ge
\frac{U}{2}\left(\frac{k^2}{n}-\eta'k\right)1_{\mathcal H_k},
\end{align*}
where we used \(\sum_i N_i^2\ge \frac{1}{n}N^2\). Since \(H,H_{\operatorname{MI}}\ge Q+(\eta-2D|J|)N\), it follows that for every \(s>0\),
\begin{align*}
\|e^{-s(H_{\operatorname{MI}})_k}\|\,,\,\|e^{-sH_k}\|
\le 
e^{-s\frac{U}{2}\left(\frac{k^2}{n}-\eta'k\right)-s(\eta-2D|J|)k}\equiv e^{-s\frac{U}{2}\big(\frac{k^2}{n}-C_{\eta',\eta,J}k\big)},
\end{align*}
for some constant $C_{\eta',\eta,J}> 0$. Now assume \(M'+1\ge 2C_{\eta',\eta,J} n\). Then for every \(k\ge M'+1\),
$
\frac{k^2}{n}-C_{\eta',\eta,J} k
\ge kC_{\eta',\eta,J}$. Hence, for \(s\in[\beta/2,\beta]\), 
\begin{align*}
e^{-s\frac{U}{2}\left(\frac{k^2}{n}-C_{\eta',\eta,J} k\right)}
\le 
\exp\left(-\frac{\beta U}{4}kC_{\eta',\eta,J} \right).
\end{align*}
Therefore, for \(s\in[\beta/2,\beta]\),
\begin{align*}
\big\|e^{-sH}(H-H_{\operatorname{MI}})\big\|_1\le 
\sum_{k\ge M'+1}\|e^{-sH_k}\|\,\big\|(H-H_{\operatorname{MI}})_k\big\|_1\le 2(2D|J|+\eta)\sum_{k\ge M'+1}k\,d_k\,\exp\left(-\frac{\beta U}{4}kC_{\eta',\eta,J} \right),
\end{align*}
and the same bound holds for $\big\|(H-H_{\operatorname{MI}})e^{-sH_{\operatorname{MI}}}\big\|_1$. Next, let
$B:=2(2D|J|+\eta)$, $\Gamma:=\frac{\beta U}{4}\,C_{\eta',\eta,J}$, $q:=e^{-\Gamma}\in(0,1)$. 
Consider $S_{M'}:=\sum_{k\ge M'+1} k\,d_k\,q^k$. Using $k\binom{n+k-1}{k}=n\binom{n+k-1}{k-1}$,
and the change of variable \(m=k-1\), we get
\[
S_{M'}
=
nq\sum_{m\ge M'} \binom{n+m}{m} q^m.
\]
Now let \(Y\) be a negative-binomial random variable with parameters \((n+1,q)\), namely
\[
\mathbb P(Y=m)=(1-q)^{n+1}\binom{n+m}{m}q^m,
\qquad m=0,1,2,\dots
\]
Then $S_{M'}
=
\frac{nq}{(1-q)^{n+1}}\,\mathbb P(Y\ge M')$.
For any \(\lambda>1\) with \(q\lambda<1\), Markov's inequality gives $\mathbb P(Y\ge M')
\le
\lambda^{-M'}\,\mathbb E[\lambda^Y]$. Moreover, choosing \(\lambda=q^{-1/2}\), so that \(q\lambda=\sqrt q\), yields $\mathbb E[\lambda^Y]
=
\left(\frac{1-q}{1-q\lambda}\right)^{n+1}
=
\left(\frac{1-q}{1-\sqrt q}\right)^{n+1}
=
(1+\sqrt q)^{n+1}$. Hence $\mathbb P(Y\ge M')
\le
q^{M'/2}(1+\sqrt q)^{n+1}$, and therefore
\[
S_{M'}
\le
\frac{nq}{(1-q)^{n+1}}\,q^{M'/2}(1+\sqrt q)^{n+1}
=
n\,q^{1+M'/2}\,(1-\sqrt q)^{-(n+1)}.
\]
We have proved that
\[
2(2D|J|+\eta)\sum_{k\ge M'+1}k\,d_k\,
\exp\!\left(-\frac{\beta U}{4}k\,C_{\eta',\eta,J}\right)
\le
B\,n\,
\exp\!\left(-\frac{\Gamma}{2}M'\right)\,
e^{-\Gamma}\,
\bigl(1-e^{-\Gamma/2}\bigr)^{-(n+1)}.
\]
Thus a sufficient condition for the above to be below $\varepsilon$ is
\[
M'
\ge
\frac{2}{\Gamma}
\left[
(n+1)\log\!\frac{1}{1-e^{-\Gamma/2}}
+\log n
+\log\!\frac{2(2D|J|+\eta)}{\epsilon}
\right]=\Omega(n+\log(1/\varepsilon)).
\]
The claim follows.
\end{proof}

\subsection{Spectral gap analysis of finite rank truncations}

\noindent We move on to proving the spectral gap for the truncated models introduced in the previous section. It follows directly from Theorem \ref{thm:gap-persistence} that the Lindbladian associated with $H_{\operatorname{SF}}$ exhibits a spectral gap. Similarly, we may directly make use of Corollary \ref{corr:perturbation3} in order to conclude positivity of the spectral gap for the Mott-insulator truncated Hamiltonian. 
\begin{corollary}
    The Lindbladian $\mathcal{L}_{\widehat{f},H_{\operatorname{SF}}}$ associated with the Hamiltonian $H_{\operatorname{SF}}$ in \eqref{eq:Bose-Hubbard_mod} and $f$ a Schwartz function has a positive spectral gap. Likewise, the generator $L_{\widehat{f}_{\mathscr M},H_{\operatorname{MI}}}$ exhibits a positive spectral gap. 
\end{corollary}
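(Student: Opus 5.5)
The plan is to derive both statements from the abstract finite-rank machinery of Section~\ref{gapfiniterankblabla}. For each model we identify a reference Hamiltonian whose sampler is already known to have compact resolvent, and a perturbation $R=PRP$ with $P$ finite rank and $[P,H_0]=0$.

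\emph{Superfluid truncation.} Take $H_0=\sum_k\varepsilon_k b_k^\dagger b_k$ and $R=(P^b_{M'})^{\otimes n}V(P^b_{M'})^{\otimes n}$. Here $P=(P^b_{M'})^{\otimes n}$ is finite rank (of rank $(M'+1)^n$) and commutes with $H_0$, since it is a function of the occupations $b_k^\dagger b_k$. The operator $R$ is bounded because $V$ restricted to a finite-dimensional range is bounded.

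The bare jumps $a_i=\sum_k\overline{\phi_k(i)}\,b_k$ satisfy $\|Pa_i\|<\infty$, so Lemma~\ref{thm:finite-rank-stability-conjugated-jumps} and Lemma~\ref{thm:finite-rank-stability-quadratic} apply. We then invoke Theorem~\ref{thm:gap-persistence}, adapted to $H_0=\sum_k\varepsilon_k N_k$ with $\varepsilon_k\ge\eta-2D|J|>0$. The jumps $a_i$ are linear combinations of the $b_k$, and the unperturbed sampler is the multimode qOU generator of Theorem~\ref{thm:DFT}, with a Gram-matrix coupling.

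The one genuine input is compactness of the resolvent of $L_{\widehat f,H_0}$. I would obtain it from the explicit spectrum in Theorem~\ref{thm:DFT}: the spectrum is $-\sum_k n_k(\nu_{-,k}-\nu_{+,k})/2$, which is discrete with finite multiplicities provided $\nu_{-,k}>\nu_{+,k}$. This condition follows from the KMS relation $|\widehat f(\varepsilon)|^2=e^{-\beta\varepsilon}|\widehat f(-\varepsilon)|^2$ with $\varepsilon_k>0$, together with $\widehat f(-\varepsilon_k)\neq0$ for a Schwartz $f$ chosen appropriately. Multiplicities are finite because there are finitely many modes and each $\nu_{-,k}-\nu_{+,k}>0$.

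Theorem~\ref{thm:gap-persistence} then gives compact resolvent for $L_{\widehat f,H_{\operatorname{SF}}}$. Simplicity of $0$ follows from uniqueness of the fixed state $\sigma_\beta(H_{\operatorname{SF}})$, which holds because the jumps $\{a_i,a_i^\dagger\}$ generate an irreducible algebra. Discrete spectrum together with simple $0$ gives a positive gap.

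\emph{Mott truncation.} Take $H_0=V=\sum_i h(N_i)$ with $h(n)=\tfrac U2 n^2-\tfrac{U\eta'}{2}n$, and $R=P(T)P$ with $P=(P^a_{M})^{\otimes n}$, which is finite rank and commutes with $H_0$. Corollary~\ref{corr:perturbation3} applies once the hypotheses of Lemma~\ref{lem:conjugated-hN-compact} and Corollary~\ref{corr:perturbation2} are checked for $\widehat f_{\mathscr M}$.

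Since $h(n+1)-h(n)\sim Un\to+\infty$, we have $\widehat f_{\mathscr M}(\nu)\approx e^{-\beta\nu/2}$ for $\nu\to+\infty$. Hence $g_+(n)\approx e^{-\beta Un/4}$ and $(n+1)|\widehat f_{\mathscr M}|^2\to0$ superexponentially. For negative frequencies $\widehat f_{\mathscr M}(\nu)\approx e^{-1/(4\beta|\nu|)}\to1$, so $g_-(n)\approx e^{-\beta Un/4}\to0$ as well. Finally $A_-(N)\approx N/2$ has compact resolvent. The multimode extension has already been described in the proof of Corollary~\ref{corr:perturbation3}. Simplicity of the zero eigenvalue again follows from irreducibility.

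\emph{Main obstacle.} I expect the main difficulty to be the superfluid case, namely making Theorem~\ref{thm:gap-persistence} apply with jumps $a_i$ rather than the normal-mode jumps $b_k$. The unperturbed Dirichlet form contains cross terms between modes. If direct diagonalization fails, I would first change the bare jump set to $\{b_k,b_k^\dagger\}$, which is permitted because the theorem asserts only the existence of suitable filters and jumps. Alternatively, I would compare Dirichlet forms: the $a_i$ and $b_k$ families are related by a unitary, so $\sum_i$ of the dissipative parts coincides for both choices of jumps.
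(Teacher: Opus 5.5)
Your proposal is correct and is essentially the paper's argument. The superfluid case is Theorem~\ref{thm:gap-persistence} on top of the compact resolvent of the multimode qOU generator of Theorem~\ref{thm:DFT}. The Mott case is Corollary~\ref{corr:perturbation3} after checking the decay hypotheses of Lemma~\ref{lem:conjugated-hN-compact} for $\widehat f_{\mathscr M}$. Your extra checks fill in what the paper leaves implicit: that $\widehat f(-\varepsilon_k)\neq 0$, and that $0$ is simple (where the irreducibility argument really needs $\widehat f\neq 0$ on the relevant Bohr frequencies, not only irreducibility of the bare jumps).

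The obstacle you anticipate does not arise, and you do not need to change the jump set, which the theorem fixes as $\{a_i,a_i^\dagger\}$. Every term of the generator, including the coherent part $B$, is a sum over $\alpha$ of expressions linear in $A^\alpha$ and conjugate-linear in $A^\alpha$, such as $L^\alpha\rho(L^\alpha)^\dagger$, $(L^\alpha)^\dagger L^\alpha$ and $(A^\alpha)^\dagger P_G A^\alpha$. These are invariant under the orthogonal change of modes $a_i=\sum_k\phi_k(i)\,b_k$, so there are no cross terms and Theorem~\ref{thm:DFT} applies directly to the jumps $\{a_i,a_i^\dagger\}$.
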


\section{Finite-dimensional circuit preparations of Bose-Hubbard Gibbs states}
\label{sec:CircuitImplementationBH}
In the following we employ the general framework of \cite[Section 4.3 \& 4.5]{BeckerRouzeSalzmannToAppearcmp} to provide a finite-dimensional circuit implementation for the Gibbs state of certain Bose-Hubbard Hamiltonians $H$ specified below. In particular we employ the Gibbs sampler generated by $\cL_{\sigma_E,\widehat f,H}$ with bare jumps $\{A^\alpha\}_{\alpha\in\cA} \equiv \{a_i,a^\dagger_i\}_{i=1}^n.$  As argued in \cite[Theorem 4.12]{BeckerRouzeSalzmannToAppearcmp} for the case of $\widehat f$ being a Schwartz function and \cite[Theorem 4.20]{BeckerRouzeSalzmannToAppearcmp} for the case of filter function $\hatfM$ defined in \eqref{eq:MetropolisFilter}, the dynamics $e^{t\cL_{\sigma_E,\widehat f,H}}$ can be well-approximated by the finite-dimensional dynamics whose generator can be obtained by replacing the unbounded bare jumps and Hamiltonian by their truncated, finite-dimensional counter parts
\begin{align}
\label{eq:DefTruncOperators}
    a^{\le \widetilde M}_i := P^{a}_{\widetilde M} a_i P^{a}_{\widetilde M},\qquad\qquad \left(a_i^{\le\widetilde M}\right)^\dagger = P^{a}_{\widetilde M}\,a_i^\dagger P^{a}_{\widetilde M} =: \left(a_i^\dagger\right)^{\le \widetilde M}, \qquad\qquad H_{\le \widetilde M} := \Pi^{a}_{\widetilde M} \,H\,\Pi^{a}_{\widetilde M}
\end{align}
for some truncation level $\widetilde M\in \N$ and  where $P^{a}_{\widetilde M}$ denotes the projection onto the $(\widetilde M+1)$-first Fock states with respect to the ladder operators $a_i$ and $a^\dagger_i$ and $\Pi^{a}_{\widetilde M} := \left(P^{a}_{\widetilde M}\right)^{\otimes n}.$

We start by recalling some of the results in \cite[Section~4.3.1]{BeckerRouzeSalzmannToAppearcmp} in order to verify the conditions in \cite[Theorem~4.12]{BeckerRouzeSalzmannToAppearcmp} on the bare jumps:
By definition we have
\begin{align}
\label{eq:NormTrunc}
    \left\|a^{\le \widetilde M}\right\| =\left\|\left(a^{\le \widetilde M}\right)^\dagger\right\| = \sqrt{\widetilde M}.   
\end{align}
Furthermore, throughout this section we consider the choice  $\NA\equiv\Ntot = \sum_{i=1}^{n} a_i^\dagger a_i.$ We see for all $\kappa\in(0,1/2)$ and $l=1,2$
\begin{align}
\label{eq:2CondTruncJumps}
\nn\left\|e^{\Ntot^\kappa}a_ie^{-2\Ntot^\kappa}\right\|\,,\,\left\|e^{\Ntot^\kappa}\left(a_i\right)^\dagger e^{-2\Ntot^\kappa}\right\|\,,\,\left\|a_ie^{-\Ntot^\kappa}\right\|\,,\, \left\|e^{-\Ntot^\kappa}a_i\right\|&<\infty\\ 
\nn \left\|e^{\Ntot^\kappa}\, a^{\le \widetilde M}_i e^{-\Ntot^\kappa} \right\| \ ,\   \left\|e^{\Ntot^\kappa}\, \left(a^{\le \widetilde M}_i\right)^\dagger e^{-\Ntot^\kappa} \right\|& \lesssim \sqrt{\widetilde M},\\
\left\|e^{(l-1)\Ntot^\kappa}\left(a_i- a^{\le\widetilde M}_i\right)e^{-l\Ntot^\kappa}\right\|\ , \ \left\|e^{(l-1)\Ntot^\kappa}\left(a^\dagger_i- \left(a^{\le\widetilde M}_i\right)^\dagger\right)e^{-l\Ntot^\kappa}\right\| &\lesssim \sqrt{\widetilde M}e^{-\widetilde M^\kappa},
\end{align}
where the $\lesssim $ hides universal constants independent of $\widetilde M$ and where we used
\cite[Lemma 4.5]{BeckerRouzeSalzmannToAppearcmp} for the second line.

Furthermore, in \cite[Theorem 4.12]{BeckerRouzeSalzmannToAppearcmp} the following conditions on the Hamiltonian and its truncation are needed: It is assumed that $H$ is well-approximated by its truncated analogue on states which are in the domain of $e^{\Ntot^\kappa}$, i.e.
\begin{align}
\label{eq:HamTruncApprox}
 \left\|\left(H-H_{\le \widetilde M}\right)e^{-\Ntot^\kappa}\right\| \le p(n,\widetilde M) e^{-\widetilde M^\kappa}  
\end{align}
for some function $p,$ which is polynomially bounded in its entries. It is, further, assumed that the truncated Hamiltonian has operator norm bounded as
\begin{align}
\label{eq:TruncHamNorm}
    \left\|H_{\le \widetilde M}\right\| \le p_1(n,\widetilde M)
\end{align}
for some function $p_1,$ which is polynomially bounded in its entries. Lastly,  it is assumed that the Hamiltonian evolution with respect to the untruncated Hamiltonian $H$ is compatible with exponential energy constraints with respect to $\Ntot$, i.e.~for $k=2,4$ and all $t\in\R$ we have 
\begin{align}
\label{eq:ExpHamBound}
    e^{-itH}e^{k\Ntot^\kappa}e^{itH} = e^{r|t|^{2\kappa}}e^{k\Ntot^\kappa}
\end{align}
for some $r\ge 0.$ 

In the following section we verify the above conditions for $H_{\operatorname{SF}}$ defined in Section~\ref{sec.regulBH} and show efficient Gibbs state preparation for the corresponding Gibbs state using \cite[Corollary~4.33]{BeckerRouzeSalzmannToAppearcmp}. 

We remark that one could argue similarly to obtain efficient Gibbs state preparation for the mean-field Bose Hubbard Hamiltonian and $H_{\operatorname{MI}}$ using the filter function $\hatfM$ defined in  \eqref{eq:MetropolisFilter} and \cite[Corollary ~4.35]{BeckerRouzeSalzmannToAppearcmp}.

\subsection{Finite-dimensional circuit implementation of $\sigma_\beta(H_{\operatorname{SF}})$}

In this section we show that the Gibbs state of the Hamiltonian $H_{\operatorname{SF}}$ defined in Section~\ref{sec.regulBH} can be efficiently prepared by a qubit based quantum computer, c.f.~Theorem~\ref{thm:CircuitGibbsSF} below. For that we first state and prove the following two supporting lemmas.

\begin{lemma}
\label{lem:VacuumBoundSF}
Let $\beta,U>0,$ $\eta,\eta',J\in\R$  with $\eta-2D|J|>0$ and consider for $M'\in\N$
\begin{align*}
H_{\operatorname{SF}}:=H_0+\frac{U}{2}(P^b_{M'})^{\otimes n}\sum_i (N_i^2-\eta' N_i) (P^b_{M'})^{\otimes n} 
\end{align*} 
where $H_0$ is defined in Section~\ref{sec.regulBH}, $P^b_{M'}$ being the projector onto the first $M'+1$ Fock states with respect to the ladder operators $\{b_k,b^\dagger_k\}$ defined in \eqref{eq:Defbk}. 
Then we have that
\begin{align}
\label{eq:VacuumSF}
  \kb{0}^{\otimes n} \le \mathcal{Z}_\beta \ \sigma_\beta(H_{\operatorname{SF}}),
\end{align}
where we defined the partition function $\mathcal{Z}_\beta := \Tr\left(e^{-\beta H_{\operatorname{SF}}}\right).$
Furthermore, the partition function satisfies the upper bound
\begin{align}
\label{eq:ZScalingSF}
\mathcal{Z}_\beta \le \exp\left(p_2(n,M')\right) 
\end{align}
for some function $p_2,$ which is polynomially bounded in its entries and whose finite coefficients can depend on the other parameters specified above.
\end{lemma}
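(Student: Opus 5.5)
The plan has two parts, matching the two claims of the lemma. For \eqref{eq:VacuumSF}, I would first show that $\ket{0}^{\otimes n}$ is an eigenvector of $H_{\operatorname{SF}}$ with eigenvalue $0$. The vacuum is annihilated by every $b_k$. Hence $H_0\ket{0}^{\otimes n}=\sum_k\varepsilon_k b_k^\dagger b_k\ket{0}^{\otimes n}=0$.

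The vacuum also lies in the range of $(P^b_{M'})^{\otimes n}$. The total number operator $\sum_i N_i=\sum_k b_k^\dagger b_k$ kills it, and each $N_i^2$ kills it as well. Therefore the truncated interaction term annihilates it too.

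It follows that $e^{-\beta H_{\operatorname{SF}}}\ket{0}^{\otimes n}=\ket{0}^{\otimes n}$. Since $H_{\operatorname{SF}}$ is self-adjoint and the vacuum is one of its eigenvectors, the spectral decomposition gives $e^{-\beta H_{\operatorname{SF}}}=\kb{0}^{\otimes n}+(\text{positive operator on the orthogonal complement})$. This yields $\kb{0}^{\otimes n}\le e^{-\beta H_{\operatorname{SF}}}=\mathcal{Z}_\beta\,\sigma_\beta(H_{\operatorname{SF}})$.

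For \eqref{eq:ZScalingSF}, I would reuse the operator lower bound \eqref{eqHH0lowerbound} from the proof of Lemma~\ref{lem:mode-truncation-basis}, namely $H_{\operatorname{SF}}\ge \frac{\kappa}{\beta}N-C_{\eta',U}n$ with $\kappa=\beta(\eta-2D|J|)>0$. One point needs checking: that bound used $V\ge -C_{\eta',U}n$ and then compressed it by the projection. I would verify $(P^b_{M'})^{\otimes n}V(P^b_{M'})^{\otimes n}\ge -C_{\eta',U}n\,(P^b_{M'})^{\otimes n}\ge -C_{\eta',U}n$. This holds because $V\ge -C_{\eta',U}n$ as a sum of $n$ on-site terms, each bounded below by $-\frac{U}{2}\eta'^2/4$, and compression preserves operator lower bounds up to the projection. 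Strictly speaking this needs a nonpositive constant, and $-c P\ge -c$ holds for $c\ge0$.

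Operator monotonicity of the trace of $e^{-\beta\,\cdot}$ (via min-max on eigenvalues) then gives
\[
\mathcal{Z}_\beta\le e^{\beta C_{\eta',U}n}\Tr(e^{-\kappa N})=e^{\beta C_{\eta',U}n}(1-e^{-\kappa})^{-n},
\]
where the last equality is the negative-binomial computation already carried out in the proof of Lemma~\ref{lem:mode-truncation-basis}. The right-hand side is $\exp(p_2(n,M'))$ with $p_2$ linear in $n$ and in fact independent of $M'$, which suffices.

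The only mildly delicate step is the eigenvalue-comparison argument for the trace when $H_{\operatorname{SF}}$ is unbounded. I would handle it by noting that $H_{\operatorname{SF}}\ge \frac{\kappa}{\beta}N-c$ in the form sense, with $N$ having compact resolvent. By min-max, the $j$-th eigenvalue of $H_{\operatorname{SF}}$ dominates the $j$-th eigenvalue of $\frac{\kappa}{\beta}N-c$. Summing the exponentials then gives the trace bound and, in particular, discreteness of the spectrum of $H_{\operatorname{SF}}$.
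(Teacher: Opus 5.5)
Your proof is correct. The vacuum part matches the paper's. The paper notes $H_{\operatorname{SF}}\ket{0}^{\otimes n}=0$ and concludes via $\bra{0}^{\otimes n}\sigma_\beta^{-1}(H_{\operatorname{SF}})\ket{0}^{\otimes n}=\mathcal{Z}_\beta$. You instead split $e^{-\beta H_{\operatorname{SF}}}$ along the eigenvector, which is, if anything, more direct.

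For the partition-function bound your route differs. The paper uses a two-sided estimate. On the range of $(P^b_{M'})^{\otimes n}$ one has $\Ntot\le nM'$, so $\|V'\|\le \frac{U}{2}(1+|\eta'|)(nM')^2$. Min-max then gives $E_j(H_{\operatorname{SF}})\ge E_j(H_0)-\|V'\|$, hence $\mathcal{Z}_\beta\le e^{\beta\|V'\|}\prod_k(1-e^{-\beta\varepsilon_k})^{-1}$, with $p_2$ growing like $(nM')^2$.

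You use only the one-sided bound $V'\ge -\frac{U\eta'^2}{8}\,n$. This comes from $m^2-\eta' m\ge -\eta'^2/4$ and the fact that compression by a projection preserves a lower bound with a nonpositive constant. Combined with $H_0\ge \frac{\kappa}{\beta}\Ntot$, it gives the sharper, truncation-independent estimate $\mathcal{Z}_\beta\le e^{\beta U\eta'^2 n/8}(1-e^{-\kappa})^{-n}$. For instance, this would make the vacuum constant $\mathfrak{c}\le\mathcal{Z}_\beta$ used in the mixing-time bound $e^{O(n)}$ uniformly in $M'$.

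What the paper's cruder argument buys is robustness. It only needs the truncated perturbation to be bounded with polynomially growing norm, so it does not use $U>0$ or the on-site form of the interaction; it would still work for a truncated attractive interaction. Your argument relies on the confining sign of $U$, which the hypotheses do provide.

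Your min-max comparison against the unbounded operator $\frac{\kappa}{\beta}\Ntot-cn$ is also fine. $H_{\operatorname{SF}}$ is a bounded perturbation of $H_0$, so its form domain is that of $\Ntot$, and the eigenvalue comparison and trace bound follow as you describe.
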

\begin{proof}
First note that by definition $a_i\ket{0}^{\otimes n} = b_k\ket{0}^{\otimes n} = 0$ for all $i,k\in[n].$ Therefore, we immediately see $H_{\operatorname{SF}}\ket{0}^{\otimes n} = 0,$ 
which implies \eqref{eq:VacuumSF} as
\begin{align*}
    \bra{0}^{\otimes n}\sigma^{-1}_\beta(H_{\operatorname{SF}})\ket{0}^{\otimes n} = \mathcal{Z}_\beta.
\end{align*}
We continue with bounding the operator norm of the perturbation $V':=\frac{U}{2}(P^b_{M'})^{\otimes n}\sum_i (N_i^2-\eta' N_i) (P^b_{M'})^{\otimes n} :$
\begin{align}
\label{eq:V'OperatorNormBound}
\nn\|V'\|
=
\left|\frac{U}{2}\right|\,
\left\|
\bigl(P^b_{M'}\bigr)^{\otimes n}
\Bigl(\sum_{i=1}^n N_i^2-\eta'\Ntot \Bigr)
\bigl(P^b_{M'}\bigr)^{\otimes n}
\right\| 
&= \nn\left|\frac{U}{2}\right|\sup_{\|\psi\|=1}
\left|\langle \psi,
\bigl(P^b_{M'}\bigr)^{\otimes n}
\bigl(\sum_{i=1}^n N_i^2-\eta'\Ntot\bigr)
\bigl(P^b_{M'}\bigr)^{\otimes n}
\psi\rangle\right|\\&\nn\le \left|\frac{U}{2}\right|\sup_{\|\psi\|=1}
\left|\langle \psi,
\bigl(P^b_{M'}\bigr)^{\otimes n}
\bigl(N_{\mathrm{tot}}^2+|\eta'|N_{\mathrm{tot}}\bigr)
\bigl(P^b_{M'}\bigr)^{\otimes n}
\psi\rangle\right|\\&\le \left|\frac{U}{2}\right|(1+|\eta'|)(nM')^2.
\end{align}
where we used $
\sum_{i=1}^n N_i^2 \le \Bigl(\sum_{i=1}^n N_i\Bigr)^2 = N_{\mathrm{tot}}^2
$ in the second to last inequality. By the min-max principle of eigenvalues we have 
\begin{align*}
E_j(H_{\operatorname{SF}}) \ge E_j(H_0) - \|V'\| \ge E_j(H_0) - \left|\frac{U}{2}\right|(1+|\eta'|)(nM')^2,
\end{align*}
where $E_j(H_{\operatorname{SF}})$ and $E_j(H_0)$ denote the $j^{th}$ smallest eigenvalues of $H_{\operatorname{SF}}$ and $H_0$ respectively. Therefore, we have
\begin{align*}
    \mathcal{Z}_\beta = \Tr\left(e^{-\beta H_{\operatorname{SF}}}\right) \le e^{\beta |\frac{U}{2}|(1+|\eta'|)(nM')^2}\Tr\left(e^{-\beta H_0}\right) = e^{\beta |\frac{U}{2}|(1+|\eta'|)(nM')^2} \prod_{k}\frac{1}{1-\exp(-\beta\eps_k)},
\end{align*}
where the $\eps_k$ are defined in Section~\ref{sec.regulBH}. As by definition we have for all $k\in[n]$ the uniform bound $\eps_k \ge \eta-2|J|D >0,$ we see \eqref{eq:ZScalingSF}.
\end{proof}
\begin{lemma}
\label{lem:EGibbsSF}
Let $\beta,U>0,$ $\eta,\eta',J\in\R$ with $\eta'\le 1$  and $\eta-2D|J|>0$ and consider for $M'\in\N$
\begin{align*}
H_{\operatorname{SF}}:=H_0+\frac{U}{2}(P^b_{M'})^{\otimes n}\sum_i (N_i^2-\eta' N_i) (P^b_{M'})^{\otimes n} 
\end{align*} 
where $H_0$ is defined in Section~\ref{sec.regulBH}, $P^b_{M'}$ being the projector onto the first $M'+1$ Fock states with respect to the ladder operators $\{b_k,b^\dagger_k\}$ defined in \eqref{eq:Defbk}. Then we have for all $\kappa\in(0,1)$
\begin{align}
E_{\operatorname{Gibbs}}:=\Tr\left(e^{4\Ntot^\kappa} \sigma_\beta(H_{\operatorname{SF}})\right) \le \exp(p_3(n,M'))
\end{align}
for some function $p_3,$ which is polynomially bounded in its entries and whose finite coefficients can depend on the other parameters specified above.
\end{lemma}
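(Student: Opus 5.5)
The plan is to write $E_{\operatorname{Gibbs}} = \mathcal{Z}_\beta^{-1}\Tr\big(e^{4\Ntot^\kappa}e^{-\beta H_{\operatorname{SF}}}\big)$ and bound numerator and denominator separately. For the denominator I will use, as in the proof of Lemma~\ref{lem:VacuumBoundSF}, that $H_{\operatorname{SF}}\ket{0}^{\otimes n}=0$, so $\mathcal{Z}_\beta\ge 1$. It then suffices to show $\Tr\big(e^{4\Ntot^\kappa}e^{-\beta H_{\operatorname{SF}}}\big)\le \exp(p_3(n,M'))$.

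The structural fact I will exploit is that $\Ntot=\sum_i a_i^\dagger a_i=\sum_k b_k^\dagger b_k$ commutes with $H_{\operatorname{SF}}$. Indeed, $H_0$ is number preserving, $(P^b_{M'})^{\otimes n}$ is a function of the occupations $b_k^\dagger b_k$, and $\sum_i(N_i^2-\eta' N_i)$ preserves $\Ntot$; this is the same observation used in Lemma~\ref{lem:mode-truncation-basis}. Hence both operators can be decomposed simultaneously over the sectors $\mathcal H_r=\ker(\Ntot-r)$, each of dimension $d_r=\binom{n+r-1}{r}$, which gives
\[
\Tr\big(e^{4\Ntot^\kappa}e^{-\beta H_{\operatorname{SF}}}\big)=\sum_{r\ge 0} e^{4r^\kappa}\,\Tr\big(e^{-\beta H_{\operatorname{SF}}}\big|_{\mathcal H_r}\big).
\]
On each sector I will use the lower bound $H_{\operatorname{SF}}\ge H_0-\|V'\|\ge (\eta-2D|J|)\Ntot - \tfrac{U}{2}(1+|\eta'|)(nM')^2$. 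This combines $\epsilon_k\ge \eta-2D|J|$ with the operator norm bound \eqref{eq:V'OperatorNormBound}. Alternatively, using $\eta'\le 1$, one can note that $N_i^2-\eta'N_i\ge 0$ on Fock states, so that $V'\ge 0$ and the $(nM')^2$ loss disappears altogether. Either route yields
\[
\Tr\big(e^{-\beta H_{\operatorname{SF}}}\big|_{\mathcal H_r}\big)\le d_r\, e^{-\beta(\eta-2D|J|) r}\, e^{\beta\frac{U}{2}(1+|\eta'|)(nM')^2}.
\]

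The main (mild) obstacle is then controlling the growth $e^{4r^\kappa}$ against the decay $e^{-c r}$, where $c:=\beta(\eta-2D|J|)>0$. Since $\kappa<1$, we have $4r^\kappa\le \tfrac{c}{2}r + C_{\kappa,c}$ with $C_{\kappa,c}=\sup_{r}(4r^\kappa-\tfrac c2 r)<\infty$, a constant independent of $n$ and $M'$. Summing over sectors with the negative-binomial identity already used in Lemma~\ref{lem:mode-truncation-basis} gives
\[
\sum_r d_r e^{-\frac c2 r}=(1-e^{-c/2})^{-n}.
\]
Combining the pieces produces
\[
E_{\operatorname{Gibbs}}\le e^{C_{\kappa,c}}\,(1-e^{-c/2})^{-n}\,e^{\beta\frac U2(1+|\eta'|)(nM')^2},
\]
which is of the form $\exp(p_3(n,M'))$ with $p_3$ polynomial (and in fact linear in $n$ if the positivity of $V'$ is used instead of the norm bound).
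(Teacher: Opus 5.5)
Your proof is correct. It shares the paper's core ingredients, $[H_{\operatorname{SF}},\Ntot]=0$ and the absorption $4x^\kappa\le tx+C_\kappa t^{-\kappa/(1-\kappa)}$, but handles the rest differently.

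The paper works at the operator level. It folds $e^{t\Ntot}$ into the shifted Hamiltonian $H_{\operatorname{SF}}-\tfrac{t}{\beta}\Ntot$. It then uses the min-max comparison $E_j(A_0+V')\ge E_j(A_0)-\|V'\|$ twice: once for the numerator, and once to lower bound $\mathcal{Z}_\beta\ge e^{-\beta\|V'\|}\Tr(e^{-\beta H_0})$. Finally it evaluates the ratio of free-boson traces exactly, mode by mode, and arrives at $e^{C_\kappa t^{-\kappa/(1-\kappa)}+2\beta\|V'\|}(1+e^{-c/2})^{n}$ with $t=c/2$ and $c=\beta(\eta-2D|J|)$.

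You instead bound the denominator by $\mathcal{Z}_\beta\ge1$ from the vacuum. You control the numerator sector by sector, using only $\dim\mathcal{H}_r=d_r$ and $H_0\ge(\eta-2D|J|)\Ntot$. This avoids the second min-max comparison and the normal-mode product, and pays $e^{\beta\|V'\|}$ only once. The cost is replacing $(1+e^{-c/2})^{n}$ by the slightly larger $(1-e^{-c/2})^{-n}$, which is immaterial for a bound of the form $\exp(p_3(n,M'))$.

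Your alternative remark is the more substantive difference. The paper's proof never uses the hypothesis $\eta'\le1$, since its norm bound on $V'$ holds for any $\eta'$. You observe that $\eta'\le1$ gives $m(m-\eta')\ge0$ for all $m\in\mathbb{N}_0$, hence $V'\ge0$ and $H_{\operatorname{SF}}\ge H_0$. This removes the $(nM')^2$ loss and yields $E_{\operatorname{Gibbs}}\le e^{C_{\kappa,c}}(1-e^{-c/2})^{-n}$, uniformly in the truncation level $M'$.
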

\begin{proof}
 Note that as $a_i$ and $b_k$ are related via unitary transformation, we have
\[
N_{\mathrm{tot}}=\sum_{i=1}^n a_i^\dagger a_i=\sum_{k} b_k^\dagger b_k,
\]
By that we see that $\Ntot$ commutes with $a^\dagger_ia_i$ and $b^\dagger_kb_k$ for all $i,k\in[n]$ and, therefore, also with $H_0$ and $\sum_i (N_i^2-\eta' N_i).$ Furthermore, as also $[(P^b_{M'})^{\otimes n},\Ntot]=0$ we have in total 
\begin{align}
\label{eq:H_SFNtotCommute}
[H_{\operatorname{SF}},\Ntot] = 0. 
\end{align}
By the diagonalization argument in Section~\ref{sec.regulBH} we can write
$H_0 = \sum_{k} \eps_k b^\dagger_kb_k $
and by definition we have for all $k\in[n]$ that $\eps_k \ge \eta - 2|J|D =:\underline\varepsilon>0. $
Fix any $
0<t<\beta \underline\varepsilon$ and use that for $x\ge 0$ we have
\[
4x^\kappa \le t x + C_\kappa\, t^{-\frac{\kappa}{1-\kappa}}
\]
for some $C_\kappa\ge0$ only depending on $\kappa.$ Therefore, we have $
e^{4N_{\mathrm{tot}}^\kappa}
\le
\exp\!\Bigl(C_\kappa\, t^{-\frac{\kappa}{1-\kappa}}\Bigr)e^{tN_{\mathrm{tot}}},
$
and so
\[
\Tr\!\bigl(e^{4N_{\mathrm{tot}}^\kappa}e^{-\beta H_{\mathrm{SF}}}\bigr)
\le
\exp\!\Bigl(C_\kappa\, t^{-\frac{\kappa}{1-\kappa}}\Bigr)
\Tr\!\bigl(e^{tN_{\mathrm{tot}}}e^{-\beta H_{\mathrm{SF}}}\bigr) =\exp\!\Bigl(C_\kappa\, t^{-\frac{\kappa}{1-\kappa}}\Bigr)\Tr\!\Bigl(e^{-\beta\bigl(H_{\mathrm{SF}}-\frac{t}{\beta}N_{\mathrm{tot}}\bigr)}\Bigr),\]
where we in the last equality we have used that 
 $N_{\mathrm{tot}}$ and $H_{\mathrm{SF}}$ commute  .
Define
$
A_0:=H_0-\frac{t}{\beta}N_{\mathrm{tot}}$
and $A:=A_0+V'$ with $V':=\frac{U}{2}(P^b_{M'})^{\otimes n}\sum_i (N_i^2-\eta' N_i) (P^b_{M'})^{\otimes n} .$ 

Since \(V'\) is bounded and self-adjoint, the min-max principle implies that for every 
\[ E_j(A) \ge 
E_j(A_0)-\|V'\|,
\]
where we denoted by $E_j(A)$ and $E_j(A_0)$ the $j^{th}$ smallest eigenvalue of $A$ and $A_0$ respectively.
Therefore, we see
\[
\Tr(e^{-\beta A})
\le
e^{\beta\|V'\|}\Tr(e^{-\beta A_0}).
\]
Combining these estimates, we obtain
\begin{align}
\label{eq:E_GibbsProofSF}
\Tr\!\bigl(e^{4N_{\mathrm{tot}}^\kappa}\sigma_\beta\bigr)
\le
\exp\!\Bigl(C_\kappa\, t^{-\frac{\kappa}{1-\kappa}}+2\beta\|V'\|\Bigr)
\frac{\Tr(e^{tN_{\mathrm{tot}}}e^{-\beta H_0})}{\Tr(e^{-\beta H_0})}.
\end{align}
Since \(H_0\) and \(N_{\mathrm{tot}}\) are diagonal in the Fock basis corresponding to the $\{b_k,b^\dagger_k\}$ operators we have
\[
\frac{\Tr(e^{tN_{\mathrm{tot}}}e^{-\beta H_0})}{\Tr(e^{-\beta H_0})}
=
\prod_{k=1}^n
\frac{1-e^{-\beta\varepsilon_k}}{1-e^{-(\beta\varepsilon_k-t)}}.
\]
Choosing $
t=\frac{\beta\underline\varepsilon}{2},
$
yields 
\[
\frac{1-e^{-\beta\varepsilon_k}}{1-e^{-(\beta\varepsilon_k-t)}}
=
\frac{1-e^{-\beta\varepsilon_k}}{1-e^{-\beta\varepsilon_k+\beta\underline\varepsilon/2}}
\le
1+e^{-\beta\underline\varepsilon/2},
\]
which together with \eqref{eq:E_GibbsProofSF} and the bound on the operator norm of $V'$ in \eqref{eq:V'OperatorNormBound} finishes the proof.
\end{proof}
The following result is a consequence of \cite[Corollary 4.33] {BeckerRouzeSalzmannToAppearcmp}.
\begin{theorem}
\label{thm:CircuitGibbsSF}
Let $\eps,\beta,U>0,$ $\eta,\eta',J\in\R$ with and $\eta-2D|J|>0$ and consider 
\begin{align*}
H_{\operatorname{SF}}\equiv H^{(M')}_{\operatorname{SF}}:=H_0+\frac{U}{2}(P^b_{M'})^{\otimes n}\sum_i (N_i^2-\eta' N_i) (P^b_{M'})^{\otimes n} \quad\text{with}\quad M' = p_1(n,\log(1/\eps)),
\end{align*} 
where $H_0$ is defined in Section~\ref{sec.regulBH}, $P^b_{M'}$ being the projector onto the first $M'+1$ Fock states with respect to the ladder operators $\{b_k,b^\dagger_k\}$ defined in \eqref{eq:Defbk} and  $p_1$ denoting some polynomially bounded function in its entries.
For $\widehat f\in\mathcal{S}(\R)$ satisfying \cite[Condition 4.26]{BeckerRouzeSalzmannToAppearcmp} denote $\lambda_2 := \operatorname{gap}(L^{(M')}_{\widehat f,H_{\operatorname{SF}}})>0,$ where positivity of the gap follows by Theorem~\ref{theoremfiniterankgap}.

Then we have that Gibbs state $\sigma_\beta(H^{(M')}_{\operatorname{SF}})$ can be prepared within $\eps$-trace distance on a quantum computer with $\mathcal{O}\left(n\log n\,\log\log(1/(\lambda_2\eps))\right)$ many qubits with circuit depth
\begin{align*}
    \widetilde{ \mathcal{O}}\left(\frac{1}{\lambda_2}\operatorname{poly}\left(n, \log(1/\eps)\right)\right)
\end{align*}
where the $\widetilde{\mathcal{O}}$ notation hides constants independent of the displayed parameters and additionally suppresses subdominant $\operatorname{poly}\log$ factors in $1/\lambda_2$.
\end{theorem}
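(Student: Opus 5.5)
The plan is to apply \cite[Corollary 4.33]{BeckerRouzeSalzmannToAppearcmp} directly to the sampler $e^{t\cL_{\sigma_E,\widehat f,H_{\operatorname{SF}}}}$ with bare jumps $\{a_i,a_i^\dagger\}_{i=1}^n$ and $\NA\equiv\Ntot$. All the corollary's hypotheses will be checked with polynomial dependence on $n$ and $M'$, and then $M'=p_1(n,\log(1/\eps))$ will be substituted. The hypotheses fall into three groups: (i) conditions on the bare jumps and their truncations, (ii) conditions \eqref{eq:HamTruncApprox}, \eqref{eq:TruncHamNorm} and \eqref{eq:ExpHamBound} on $H_{\operatorname{SF}}$ and its Fock truncation $H_{\le\widetilde M}=\Pi^a_{\widetilde M}H_{\operatorname{SF}}\Pi^a_{\widetilde M}$, and (iii) the initial-state and moment quantities $\mathfrak c$ and $E_{\operatorname{Gibbs}}$ together with positivity of the gap.

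Group (i) is already recorded in \eqref{eq:NormTrunc}–\eqref{eq:2CondTruncJumps}. For (iii), Lemma~\ref{lem:VacuumBoundSF} gives $\rho_{\operatorname{ini}}=\kb{0}^{\otimes n}\le \mathcal Z_\beta\,\sigma_\beta(H_{\operatorname{SF}})$ with $\log\mathcal Z_\beta\le p_2(n,M')$. Lemma~\ref{lem:EGibbsSF} gives $\log E_{\operatorname{Gibbs}}\le p_3(n,M')$. Theorem~\ref{theoremfiniterankgap} gives $\lambda_2>0$. Through \eqref{mixingtimeintro}, $\log\mathfrak c$ enters the evolution time only polynomially, as $\lambda_2^{-1}(p_2+\log(1/\eps))$. $E_{\operatorname{Gibbs}}$ enters the required cutoff only logarithmically, so $\widetilde M=\operatorname{poly}(n,\log(1/(\lambda_2\eps)))$. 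This accounts for the $\log\log$ in the qubit count: one needs $n$ registers of $\log\widetilde M$ qubits each, plus the ancillas for Hamiltonian simulation.

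For group (ii), write $H_{\operatorname{SF}}=H_0+V'$ with $V'$ bounded, $\|V'\|\le \tfrac{U}{2}(1+|\eta'|)(nM')^2$ by \eqref{eq:V'OperatorNormBound}. Since $H_0$ is quadratic and number-preserving, each hopping term satisfies $\|\Pi^a_{\widetilde M}H_0\Pi^a_{\widetilde M}\|\lesssim (\eta+2D|J|)n\widetilde M$, which gives \eqref{eq:TruncHamNorm} with $p_1=O(n\widetilde M+(nM')^2)$. For \eqref{eq:HamTruncApprox}, the difference $H-H_{\le\widetilde M}$ only involves states with some site occupation above $\widetilde M$, so in particular $\Ntot>\widetilde M$. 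Bounding $\|H_0\,\1_{\Ntot>\widetilde M}e^{-\Ntot^\kappa}\|\lesssim n\sup_{k>\widetilde M}k e^{-k^\kappa}$ and treating $V'$ the same way yields $p(n,\widetilde M)e^{-\widetilde M^\kappa}$ up to a polynomial factor, which can be absorbed by slightly reducing $\kappa$. Condition \eqref{eq:ExpHamBound} is immediate: \eqref{eq:H_SFNtotCommute} gives $[H_{\operatorname{SF}},\Ntot]=0$, so $e^{-itH}e^{k\Ntot^\kappa}e^{itH}=e^{k\Ntot^\kappa}$ and we may take $r=0$.

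Finally, apply Corollary 4.33 with $\widehat f$ satisfying Condition 4.26. The Gaussian-regularized generator must be compared to the $\sigma_E=\infty$ one whose gap is $\lambda_2$. I expect this step to be the main obstacle, because Theorem~\ref{theoremfiniterankgap} only gives the gap at $\sigma_E=\infty$, and the corollary must propagate it to finite $\sigma_E$ with a loss polynomial in $\beta$, $\|H_{\le\widetilde M}\|$ and $\widetilde M$. I would take $\sigma_E$ as large as $\operatorname{poly}(n,\widetilde M)/\lambda_2$, so that by the corollary's interpolation estimate the gap of $L_{\sigma_E,\widehat f,H}$ is at least $\lambda_2/2$. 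The Hamiltonian-simulation time per unit Lindblad time then scales like $\sigma_E$ times the poly norms. Multiplying by $t_{\operatorname{mix}}=O(\lambda_2^{-1}(p_2(n,M')+\log(1/\eps)))$ and adding the truncation and discretization errors, each set to $\eps/3$, yields depth $\widetilde{\mathcal O}(\lambda_2^{-1}\operatorname{poly}(n,\log(1/\eps)))$, with poly-log factors in $1/\lambda_2$ suppressed as stated.
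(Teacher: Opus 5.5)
Your verification of the hypotheses matches the paper's proof. This covers the bare-jump bounds, $r=0$ from $[H_{\operatorname{SF}},\Ntot]=0$, and $\mathfrak c, E_{\operatorname{Gibbs}}\le e^{\operatorname{poly}(n,M')}$ from Lemmas~\ref{lem:VacuumBoundSF} and~\ref{lem:EGibbsSF}. Your route to \eqref{eq:HamTruncApprox} via $H\,\mathbf 1_{\{\Ntot>\widetilde M\}}e^{-\Ntot^\kappa}$ is a slightly cleaner version of the paper's term-by-term estimate. No reduction of $\kappa$ is needed there, since a polynomial prefactor $p(n,\widetilde M)$ is allowed. The genuine gap is in your last step, where you relate the gap of the implemented finite-$\sigma_E$ generator to $\lambda_2$.

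You treat this as a perturbative problem. You invoke an unspecified ``interpolation estimate'' and take $\sigma_E\sim\operatorname{poly}(n,\widetilde M)/\lambda_2$ so that the gap at $\sigma_E$ stays above $\lambda_2/2$. Nothing available to you supplies such an estimate. Even granting it, your own accounting breaks: you say the simulation cost per unit Lindblad time grows like $\sigma_E$. Multiplying that by $t_{\operatorname{mix}}=O(\lambda_2^{-1}\log(\mathfrak c/\eps))$ gives depth of order $\lambda_2^{-2}\operatorname{poly}(n,\log(1/\eps))$. The extra full factor $1/\lambda_2$ cannot be hidden among the suppressed poly-log factors, so the claimed $\widetilde{\mathcal O}(\lambda_2^{-1}\operatorname{poly})$ does not follow.

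The missing idea is that the gap is monotone in $\sigma_E$. By \cite[Proposition 4.2]{BeckerRouzeSalzmannToAppearcmp}, $\operatorname{gap}(L_{\sigma_E,\widehat f,H_{\operatorname{SF}}})\ge\operatorname{gap}(L_{\widehat f,H_{\operatorname{SF}}})=\lambda_2$ for every $\sigma_E\in(0,\infty)$. This matches the main-text remark that the gap decreases as $\sigma_E$ increases: $\sigma_E=\infty$ is the worst case, not a limit you must approach. The paper therefore fixes a finite $\sigma_E$ independent of $\lambda_2$ and uses $\lambda_2$ directly as a lower bound on the implemented sampler's gap. Then $1/\lambda_2$ enters the depth only once, through the mixing time, and the cutoff $\widetilde M$ depends on it only logarithmically.
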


\begin{proof}

We start by verifying all the required conditions in \cite[Theorem 4.12]{BeckerRouzeSalzmannToAppearcmp}.
For the bare jumps we have already verified all relevant conditions in \eqref{eq:NormTrunc} and \eqref{eq:2CondTruncJumps}.
We continue with the verifying the conditions on the Hamiltonian and its truncation, i.e.~\eqref{eq:HamTruncApprox}, \eqref{eq:TruncHamNorm} and \eqref{eq:ExpHamBound}.
From \eqref{eq:H_SFNtotCommute}, we see that \eqref{eq:ExpHamBound} is trivially satisfied for $r=0$.

We continue with the conditions on the truncated Hamiltonian. We show in the following for $\widetilde M\in\N$ that
\begin{align}
\label{eq:VerifyHamConditionSF}
\left\|\left(H_{\operatorname{SF}}-\left(H_{\operatorname{SF}}\right)_{\le \widetilde M}\right)e^{-\Ntot^\kappa}\right\| \lesssim (1+\left|U\right|)(1+|\eta'|)(1+|J|) n^2M'^2\widetilde M  e^{-\widetilde M^\kappa} \bin{\lesssim(1+|\eta'|)n^4  \log^2(1/\eps) \widetilde M e^{-\widetilde M^\kappa}}
\end{align}
 To see this write
\begin{align*}
H_{\operatorname{SF}} = T + \eta \Ntot + V', \qquad\text{with}\qquad
T := -J\sum_{<i,j>}\left(a^\dagger_ia_j +h.c.\right)\qquad\text{and}\qquad V' := \frac{U}{2}(P^b_{M'})^{\otimes n}\sum_i (N_i^2-\eta' N_i) (P^b_{M'})^{\otimes n}
\end{align*}
and treat each term by term. As $\Ntot$ commutes with $\Pi^{a}_{\widetilde M}$ we have
\begin{align*}
\left\|\left(\Ntot-\Pi^{a}_{\widetilde M}\Ntot\Pi^{a}_{\widetilde M}\right)e^{-\Ntot^\kappa}\right\| = \left\|\left(\1-\Pi^{a}_{\widetilde M}\right)\Ntot e^{-\Ntot^\kappa}\right\| \lesssim (\widetilde M+1)e^{-\widetilde M^\kappa},
\end{align*}
where in the last inequality we have used that $\Ntot\ge \widetilde M+1$ on $\operatorname{im}(\1-\Pi^{a}_{\widetilde M}).$

Furthermore, treating each term in $T$ individually as 
\begin{align*}
\left\|\left((a^\dagger_ia_j + a^\dagger_ja_i)-\Pi^{a}_{\widetilde M}(a^\dagger_ia_j + a^\dagger_ja_i)\Pi^{a}_{\widetilde M}\right)e^{-\Ntot^\kappa}\right\|  &\le   \left\|(\1-\Pi^{a}_{\widetilde M})(a^\dagger_ia_j + a^\dagger_ja_i)e^{-\Ntot^\kappa}\right\| + \left\|\Pi^{a}_{\widetilde M}(a^\dagger_ia_j + a^\dagger_ja_i)(\1-\Pi^{a}_{\widetilde M})e^{-\Ntot^\kappa}\right\|\\& \lesssim \widetilde M e^{-\widetilde M^\kappa}
\end{align*}
gives $ \left\|\left(T-\Pi^{a}_{\widetilde M}T\Pi^{a}_{\widetilde M}\right)e^{-\Ntot^\kappa}\right\| \lesssim |J|n^2(\widetilde M+1)e^{-\widetilde M^\kappa}.$ 
Using the bound on the operator norm of $V',$ i.e. \eqref{eq:V'OperatorNormBound}, we see
\begin{align*}
    \left\|\left(V-\Pi^{a}_{\widetilde M}V\Pi^{a}_{\widetilde M}\right)e^{-\Ntot^\kappa}\right\|  &\le   \left\|(\1-\Pi^{a}_{\widetilde M})Ve^{-\Ntot^\kappa}\right\| + \left\|\Pi^{a}_{\widetilde M}V'(\1-\Pi^{a}_{\widetilde M})e^{-\Ntot^\kappa}\right\|\\& \lesssim \left|\frac{U}{2}\right|(1+|\eta'|)(nM')^2 e^{-\widetilde M^\kappa},
\end{align*}
where for the last inequality we used that $V'$ commutes with $\Ntot.$
This implies \eqref{eq:VerifyHamConditionSF}.
Similarly, we can also easily convince ourselves that \eqref{eq:TruncHamNorm} is satisfied for the Hamiltonian $H_{\operatorname{SF}}.$ 

We prepare the Gibbs state $\sigma_\beta(H_{\operatorname{SF}})$ using result a circuit implementation of the Gibbs sampler generated by $\cL_{\sigma_E,\widehat{f},H_{\operatorname{SF}}}$ for some $\sigma_E\in(0,\infty),$ where the relevant complexity analysis has been carried out in \cite[Corollary 4.33]{BeckerRouzeSalzmannToAppearcmp}:
We take as initial state of the circuit of the circuit the multi-mode vacuum state $\rho_{\operatorname{ini}}= \kb{0}^{\otimes n}.$ From Lemma~\ref{lem:VacuumBoundSF} and Lemma~\ref{lem:EGibbsSF} we see that the constants $\mathfrak{c}$ and $E_{\operatorname{Gibbs}}$ in  \cite[Corollary 4.33]{BeckerRouzeSalzmannToAppearcmp}
satisfy
\begin{align*}
    \mathfrak{c}\,,\, E_{\operatorname{Gibbs}} \le \exp(\operatorname{poly}(n,M')) = \exp(\operatorname{poly}(n,\log(1/\eps))).
\end{align*}
We take truncation level $\widetilde M$ as in \cite[Corollary 4.33]{BeckerRouzeSalzmannToAppearcmp}, namely
\begin{align*}
    \widetilde M = \Theta\left(\operatorname{poly}\left(\log\left(\frac{\mathfrak{c} E_{\operatorname{Gibbs}} n}{\lambda_2\eps}\right)\right)\right) = \Theta\left(\operatorname{poly}\left(n,\log\left(\frac{1}{\lambda_2\eps}\right)\right)\right),
\end{align*}
where we used the fact that by \cite[Proposition 4.2]{BeckerRouzeSalzmannToAppearcmp} we have $\operatorname{gap}(L_{\sigma_E,\widehat f, H_{\operatorname{SF}}}) \ge \operatorname{gap}(L_{\widehat f, H_{\operatorname{SF}}}) \equiv \lambda_2. $
Using \cite[Remark 4.29 \&  4.30]{BeckerRouzeSalzmannToAppearcmp} we see that the oracle access to block encodings of $a_i^{\le \widetilde M}/\sqrt{\widetilde M}$ and $\left(a_i^{\le \widetilde M}\right)^{\dagger}/\sqrt{\widetilde M}$ and Hamiltonian simulation $e^{itH_{\le \widetilde M}}$ required in \cite[Corollary 4.33]{BeckerRouzeSalzmannToAppearcmp} can be both be obtained using a circuit depth of $\mathcal{O}\left(\operatorname{poly}(\log(\widetilde M),\log(1/\eps))\right)$  and $\mathcal{O}\left(|t|\operatorname{poly}(\widetilde M,\log(1/\eps))\right)$ respectively.

We have now all the checked all required assumptions of \cite[Corollary 4.33]{BeckerRouzeSalzmannToAppearcmp}. We hence see that  $\sigma_\beta(H_{\operatorname{SF}})$ can be prepared within $\eps$-trace distance on a quantum computer with $\mathcal{O}(n\log(\widetilde M))=\mathcal{O}\left(n\log n\,\log\log(1/(\lambda_2\eps))\right)$ many qubits using a total Hamiltonian simulation time corresponding to the Hamiltonian $(H_{\operatorname{SF}})_{\le \widetilde M}$ of order
\begin{align*}
   \widetilde{ \mathcal{O}}\left(\frac{1}{\lambda_2}\operatorname{poly}\left(n, \log\left(\frac{\mathfrak{c}E_{\operatorname{Gibbs}}}{\eps}\right)\right)\right) =
   \widetilde{ \mathcal{O}}\left(\frac{1}{\lambda_2}\operatorname{poly}\left(n, \log(1/\eps)\right)\right).
\end{align*} 
Plugging this into the required estimate on the circuit depth for the Hamiltonian simulation of $(H_{\operatorname{SF}})_{\le \widetilde M}$ finishes the proof.
\end{proof}

\subsection{End-to-end simulation cost for free energies}\label{appendixforfreeenergy}

\noindent In this section, we consider the task of computing the difference in free energies of two Hamiltonians $(H_0,D(H_0))$ and $(H_1,D(H_1))$ with $H_0,H_1\ge -h_0$, over a Hilbert space $\mathcal{H}$ at inverse temperature $\beta>0$:
\begin{align*}
\Delta F(\beta,H):=F(\beta,H_1)-F(\beta,H_0),\qquad \text{ where }\quad  F(\beta,H):=-\beta^{-1} \log\Tr(e^{-\beta H}).
\end{align*}
We consider  a path $H(s):=(1-s)H_0+sH_1$, so that
\begin{align}
\Delta F(\beta,H)=\int_0^1 \frac{d}{ds} F(\beta ,H(s))\,ds&=-\beta^{-1}\int_0^1\, \frac{1}{\Tr(e^{-\beta H(s)})}\frac{d}{ds}\Tr(e^{-\beta H(s)})\,ds \nonumber\\
&=\beta^{-1}\int_0^1\,\int_0^\beta\frac{1}{\Tr(e^{-\beta H(s)})}\Tr\Big(e^{-(\beta-\tau) H(s)}(H_1-H_0)e^{-\tau H(s)}\Big)\,d\tau ds\nonumber\\
&= \int_0^1\,\Tr\Big(\sigma_\beta(H(s))(H_1-H_0)\Big)\, ds.\label{integralpartitions}
\end{align}
where we denote the Gibbs state of $H$ at inverse temperature $\beta$ by $\sigma_\beta(H)$. In words, it suffices to estimate the average of the potential $V:=H_1-H_0$ in the Gibbs state at different values of $s$ to obtain a good estimate of the free energy difference.

\noindent To illustrate the method, we consider the \(n\)-mode Hamiltonians
\begin{align}
H_0=-J\sum_{\langle i,j\rangle}\Big(a_i^\dagger a_j+\operatorname{h.c.}\Big)+\eta \sum_i N_i\qquad \text{ and }\qquad H_1=H_0+\frac{U}{2}\sum_i N_i^2-\eta' N_i,
\end{align}
and assume that \(\eta>2DJ\). We first need to truncate \(H_1-H_0\) into a finite-rank operator \(H_M\) such that the error induced on the integral \eqref{integralpartitions} remains small:

\begin{lemma}\label{lemtruncatingobservablesBH}
Assume \(U>0\) and $\kappa:=\beta(\eta-2D|J|)>0$, let \(P_M\) be the projection onto the first \(M+1\) lowest-energy single-mode Fock states, and denote \(H_M:=(P^a_M)^{\otimes n}(H_1-H_0)(P^a_M)^{\otimes n}\). Then
\begin{align}
\sup_{s\in[0,1]}
\left|\Tr\Big(\sigma_\beta(H(s))\bigl(H_1-H_0-H_M\bigr)\Big)\right|
\le \varepsilon
\qquad \text{ for }\qquad
M={\Omega}\Big(n+\log\Big(\frac{1}{\varepsilon}\Big)\Big) .
\end{align}
The same conclusion holds when replacing $H(s)$ by $H_{\operatorname{SF}}(s):=H_0+s\frac{U}{2}(P^b_{M'})^{\otimes n}\sum_i (N_i^2-\eta' N_i) (P^b_{M'})^{\otimes n}$ for any $M'>0$.
\end{lemma}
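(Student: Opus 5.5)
The plan is to reduce the trace to a tail sum over total-particle-number sectors, exactly as in Lemma~\ref{lem:mode-truncation-basis}.

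\emph{Reduction to large sectors.} Write $W:=H_1-H_0=\frac{U}{2}\sum_i(N_i^2-\eta'N_i)$. Both $W$ and $P:=(P^a_M)^{\otimes n}$ commute with $\Ntot$ and with each $N_i$. Moreover, on $\mathcal H_k=\ker(\Ntot-k)$ with $k\le M$ we have $P_k=1$. Hence $W-H_M=W-PWP=(1-P)W$ vanishes on all sectors $k\le M$. Since $H(s)$ also commutes with $\Ntot$ (and so does $H_{\operatorname{SF}}(s)$, by \eqref{eq:H_SFNtotCommute}), $\sigma_\beta(H(s))$ is block diagonal. It follows that
\[
\big|\Tr(\sigma_\beta(H(s))(W-H_M))\big|\le \sum_{k\ge M+1}\|W_k\|\,\Tr\big(\Pi_k\sigma_\beta(H(s))\big),
\]
with $\|W_k\|\le \frac U2(1+|\eta'|)k^2$, because $\sum_iN_i^2\le \Ntot^2$.

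\emph{Tail bound uniform in $s$.} Next I would lower bound $H(s)$ uniformly in $s\in[0,1]$. As in \eqref{eqHH0lowerbound}, $H_0\ge \frac{\kappa}{\beta}\Ntot$. For $W$, the function $m\mapsto \frac U2(m^2-\eta'm)$ is bounded below by $-C_{\eta',U}$, so $sW\ge -C_{\eta',U}n$ for all $s\in[0,1]$. The same bound holds for the truncated $sV'$: it is the compression $P^bWP^b$ and therefore has the same lower bound, since $W$ is diagonal in the $b$-basis as well? This is not quite right, because $W$ is diagonal in $a$, not $b$. Instead I would use $V'\ge P^b(-C_{\eta',U}n)P^b\ge -C_{\eta',U}n$, which follows from $W\ge -C_{\eta',U}n$ by compression. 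Both $H(s)$ and $H_{\operatorname{SF}}(s)$ annihilate the vacuum, so their partition functions are at least $1$. Using the commutation with $\Ntot$, this gives
\[
\Tr(\Pi_k\sigma_\beta)\le e^{\beta C_{\eta',U}n}\,\binom{n+k-1}{k}e^{-\kappa k}.
\]

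\emph{Summation.} It then remains to bound $\sum_{k\ge M+1}k^2\binom{n+k-1}{k}e^{-\kappa k}$. I would absorb $k^2$ into $e^{\kappa k/4}$ at the cost of a constant depending on $\kappa$, and then apply the negative-binomial estimate from the proof of Lemma~\ref{lem:mode-truncation-basis} with $\lambda=\kappa/4$. This yields the bound
\[
C_\kappa(1+|\eta'|)U\,e^{\beta C n}e^{-\kappa (M+1)/4}(1-e^{-\kappa/2})^{-n},
\]
which is at most $\varepsilon$ once $M=\Omega(n+\log(1/\varepsilon))$, uniformly in $s$ and in $M'$. The main point needing care is the uniform lower bound on the truncated interaction $V'$ in the $b$-basis. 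The compression argument above handles it, because the lower bound on $W$ is a scalar multiple of the identity.
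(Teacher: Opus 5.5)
Your proof is correct, and it reaches the same rate as the paper by a somewhat different route.

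\textbf{The paper's route.} The paper never uses that $W-H_M$ vanishes on low $\Ntot$-sectors. It applies Cauchy--Schwarz,
$\bigl|\Tr\bigl(\sigma(1-P)Q\bigr)\bigr|\le \Tr\bigl(\sigma(1-P)\bigr)^{1/2}\,\Tr\bigl(\sigma Q^2\bigr)^{1/2}$,
and then bounds the two factors separately:
\begin{itemize}
\item the tail $\Tr(\sigma(1-P))$, as in Lemma~\ref{lem:mode-truncation-basis};
\item the second moment, via $Q^2\le(1+|\eta'|)^2\Ntot^4\le \frac{24(1+|\eta'|)^2}{\tau^4}e^{\tau\Ntot}$.
\end{itemize}
Both are controlled by the same ingredients you use: $H(s)\ge\frac{\kappa}{\beta}\Ntot-C_{\eta',U}n$, $[\Ntot,H(s)]=0$ and $\mathcal Z_\beta\ge 1$. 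With $\lambda=\tau=\kappa/2$ this gives $e^{-\kappa(M+1)/4}(1-e^{-\kappa/2})^{-n}$, exactly your rate.

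\textbf{Your route.} You exploit the block-diagonal structure of $\sigma_\beta$ in the $\Ntot$-sectors and the identity $W-H_M=(1-P)W$, which vanishes for $k\le M$. After that you only need the sector norms $\|W_k\|\lesssim k^2$ and a single negative-binomial sum.

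\textbf{Comparison.} Your argument is more elementary and yields fully explicit constants, with no fourth-moment estimate. The Cauchy--Schwarz argument separates ``probability of being truncated'' from ``size of the observable''. It is therefore less tied to $W$ commuting with the truncation, and would also handle observables controlled only through moments.

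\textbf{Two small points in your favour.}
\begin{itemize}
\item You keep the prefactors $e^{\beta C_{\eta',U}n}$ and $U(1+|\eta'|)$, which the paper's final display silently drops. This is harmless, since they are absorbed into $M=\Omega(n+\log(1/\varepsilon))$.
\item Your compression bound $V'\ge -C_{\eta',U}n\,(P^b_{M'})^{\otimes n}\ge -C_{\eta',U}n$ makes the $H_{\operatorname{SF}}(s)$ case explicit, uniformly in $M'$. The paper only asserts this case.
\end{itemize}
Your mid-proof hesitation was unnecessary. Compression preserves any lower bound of the form $-c\,I$ with $c\ge 0$, whatever basis $W$ is diagonal in, so your first phrasing was already valid.
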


\begin{proof}
Below, we denote $P_M\equiv P_M^a$ for sake of simplicity. Then, it suffices to observe that 
\begin{align*}
\Tr\Big(\sigma_\beta(H(s))\bigl(H_1-H_0-H_M\bigr)\Big)&=\frac{U}{2}\Tr\Big(\sigma_\beta(H(s)) \big(1-P_M^{\otimes n}\big)\sum_i \big(N_i^2-\eta' N_i\big)\Big)\\
&\le \frac{U}{2}\,\Tr\Big(\sigma_\beta(H(s))\,(1-P_M^{\otimes n})\Big)^{\frac{1}{2}}\, \Tr\Big(\sigma_{\beta}(H(s))\Big(\sum_i N_i^2-\eta' N_i\Big)^2\Big)^{\frac{1}{2}}
\end{align*}
Then, we denote $Q:=\sum_{i=1}^n \bigl(N_i^2-\eta' N_i\bigr)$ so that $
H(s):=H_0+\frac{sU}{2}\,Q$. Since \(Q\ge -C_{\eta'}n\), and \(Q\le(1+|\eta'|) N^2\), and therefore $Q^2\le (1+|\eta'|)^2N^4$. Moreover, as proved in \eqref{eqHH0lowerbound},
\[
H(s)\ge H_0-C_{\eta',U}n\ge (\eta-2D|J|)N-C_{\eta',U}n=\frac{\kappa}{\beta}N-C_{\eta',U}n,
\]
and \([N,H(s)]=0\). Hence, for every \(0<\lambda<\kappa\),
\[
\Tr\bigl(\sigma_\beta(H(s))(1-P_M^{\otimes n})\bigr)
\le
e^{-\lambda(M+1)}
\bigl(1-e^{-(\kappa-\lambda)}\bigr)^{-n}e^{\beta C_{\eta',U}n}.
\]
Next, for every \(0<\tau<\kappa\), using $x^4\le \frac{24}{\tau^4}e^{\tau x}$, $x\ge 0$, we get
$Q^2\le (1+|\eta'|)^2 N^4\le \frac{24(1+|\eta'|)^2}{\tau^4}e^{\tau N}$. Therefore
\[
\Tr\bigl(\sigma_\beta(H(s))Q^2\bigr)
\le
\frac{24(1+|\eta'|)^2}{\tau^4}\,
\Tr\bigl(\sigma_\beta(H(s))e^{\tau N}\bigr).
\]
Since \(\mathcal Z_\beta(H(s))\ge 1\),
\[
\Tr\bigl(\sigma_\beta(H(s))e^{\tau N}\bigr)
\le
\Tr\bigl(e^{-\beta H(s)}e^{\tau N}\bigr)
\le
\Tr\bigl(e^{-(\kappa-\tau)N}\bigr)e^{\beta C_{\eta',U}n}
=
\bigl(1-e^{-(\kappa-\tau)}\bigr)^{-n}e^{\beta C_{\eta',U}n}.
\]
Thus, we have proved that
\[
\Tr\bigl(\sigma_\beta(H(s))Q^2\bigr)
\le
\frac{24(1+|\eta'|)^2}{\tau^4}
\bigl(1-e^{-(\kappa-\tau)}\bigr)^{-n}e^{\beta C_{\eta',U}n}.
\]
Combining the two bounds yields, for all \(0<\lambda,\tau<\kappa\),
\[
\Tr\bigl(\sigma_\beta(H(s))(1-P_M^{\otimes n})\bigr)^{1/2}
\,
\Tr\bigl(\sigma_\beta(H(s))Q^2\bigr)^{\frac{1}{2}}
\le
\frac{5(1+|\eta'|)}{\tau^2}\,
e^{-\frac{\lambda}{2}(M+1)}
\bigl(1-e^{-(\kappa-\lambda)}\bigr)^{-n/2}
\bigl(1-e^{-(\kappa-\tau)}\bigr)^{-n/2}e^{\beta C_{\eta',U}n}.
\]
Choosing \(\lambda=\tau=\kappa/2\), we obtain
\[
\Tr\bigl(\sigma_\beta(H(s))(1-P_M^{\otimes n})\bigr)^{1/2}
\,
\Tr\bigl(\sigma_\beta(H(s))Q^2\bigr)^{1/2}
\le
\frac{20}{\kappa^2}\,
e^{-\frac{\kappa}{4}(M+1)}
\bigl(1-e^{-\kappa/2}\bigr)^{-n}.
\]
If \(\kappa\ge 2\log 2\), then $1-e^{-\kappa/2}\ge \frac12$, and thus
\[
\Tr\bigl(\sigma_\beta(H(s))(1-P_M^{\otimes n})\bigr)^{1/2}
\,
\Tr\bigl(\sigma_\beta(H(s))Q^2\bigr)^{1/2}
\le
\frac{20}{\kappa^2}\,2^{\frac{2n-M-1}{2}}.
\]
The result follows.
\end{proof}
\noindent By an argument identical to that of Lemma \ref{lemtruncatingobservablesBH}, we also argue that on the observable $H_M$, the states $\sigma_\beta(H(s))$ and $\sigma_\beta(H_{\operatorname{SF}}(s))$ have comparable statistics: assuming \(\kappa>0\), then
\begin{align*}
\Big\|\sigma_\beta(H(s))-\sigma_\beta(H_{\operatorname{SF}}(s))\Big\|_1\le \varepsilon'\qquad \text{ for }\qquad M'=\Omega\Big(n+\log\Big(\frac{1}{\varepsilon'}\Big)\Big).
\end{align*}

\noindent We have shown that, 
\begin{align*}
\Big|\Delta F(\beta,H)-\int_0^1 \Tr\Big(\sigma_\beta(H_{\operatorname{SF}}(s))H_M\Big)\,ds\Big|\lesssim \varepsilon+\|H_M\|^2\,\varepsilon'  \lesssim \epsilon+nM^2 \varepsilon'.
\end{align*}
In order for this bound to be below $\varepsilon$, we thus need to choose $M=\Omega(n+\log(1/\varepsilon))$ and $\varepsilon'\le \varepsilon/(nM^2)$, which means $M'=\Omega(n+\log(1/\varepsilon'))=\Omega(n+\log(1/\epsilon))$. Moreover, using similar bounds, by standard Riemann-sum approximation
 \begin{align*}
 \left|\int_0^1\Tr\Big(\sigma_\beta(H_{\operatorname{SF}}(s))H_{M} \Big)\,ds-\frac{1}{L}\sum_{k=0}^{L-1}\Tr\Big(\sigma_\beta(H_{\operatorname{SF}}(k/L))H_{M} \Big)\right|&\lesssim \frac{1}{2L}\,\sup_{s\in [0,1]}\Big|\frac{d}{ds}\Tr\Big(\sigma_\beta(H_{\operatorname{SF}}(s))H_{M} \Big)\Big|\\
 &\lesssim \frac{1}{2L}\,n^2M^2(M')^2.
 \end{align*}
This implies the following:
\begin{proposition}
\label{prop:FreeEnergyEstimateSF}
Assume \(U>0\) and $\kappa:=\beta(\eta - 2|J|D)>0$. Then, choosing $M,M'=\widetilde{\Omega}(n+\log(1/\varepsilon))$, and thus $L=\widetilde{\Omega}\Big(\frac{n^6}{\varepsilon}\Big)$, we get that
\begin{align*}
\Big|\Delta F(\beta,H)-\frac{1}{L}\sum_{k=0}^{L-1}\Tr\Big(\sigma_\beta(H_{\operatorname{SF}}(k/L))H_{M} \Big)\Big|\le \varepsilon
\end{align*}
where $H_{\operatorname{SF}}(s):=H_0+s(P^b_{M'})^{\otimes n}\sum_i (N_i^2-\eta' N_i) (P^b_{M'})^{\otimes n}$. 
\end{proposition}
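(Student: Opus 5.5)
The plan is to combine the three error sources already isolated above the statement: the observable truncation of Lemma~\ref{lemtruncatingobservablesBH}, the Hamiltonian truncation $H(s)\to H_{\operatorname{SF}}(s)$, and the Riemann-sum discretization of the thermodynamic integral \eqref{integralpartitions}. The budget is $\varepsilon/3$ for each source, and the parameters are chosen so that each share holds uniformly in $s\in[0,1]$.

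The first step starts from $\Delta F(\beta,H)=\int_0^1\Tr(\sigma_\beta(H(s))V)\,ds$ with $V=\frac U2\sum_i(N_i^2-\eta'N_i)$ and replaces $V$ by $H_M$. Lemma~\ref{lemtruncatingobservablesBH} gives an error $\lesssim \kappa^{-2}e^{-\kappa(M+1)/4}(1-e^{-\kappa/2})^{-n}e^{\beta C n}$, uniformly in $s$. This is below $\varepsilon/3$ once $M=\Omega(n+\log(1/\varepsilon))$.

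The second step replaces $\sigma_\beta(H(s))$ by $\sigma_\beta(H_{\operatorname{SF}}(s))$. For this I will rerun the argument of Lemma~\ref{lem:mode-truncation-basis} with $sV$ in place of $V$. The two Hamiltonians coincide on $\{N\le M'\}$ and commute with $N$, and the lower bound $H(s),H_{\operatorname{SF}}(s)\ge \frac{\kappa}{\beta}N-C n$ holds uniformly in $s$ because $sV\ge -sCn\ge -Cn$. This yields $\|\sigma_\beta(H(s))-\sigma_\beta(H_{\operatorname{SF}}(s))\|_1\le\varepsilon'$ for $M'=\Omega(n+\log(1/\varepsilon'))$. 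By Hölder, the resulting error is at most $\|H_M\|\varepsilon'$. Since $\|H_M\|\lesssim (1+|\eta'|)\,n M^2$, we take $\varepsilon'=\varepsilon/(3nM^2)$, which only adds logarithmic terms, so $M'=\widetilde\Omega(n+\log(1/\varepsilon))$.

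The third step, which I expect to be the main obstacle, bounds the Riemann error. We use the left-endpoint rule with error $\le \frac{1}{2L}\sup_s|g'(s)|$, where $g(s)=\Tr(\sigma_\beta(H_{\operatorname{SF}}(s))H_M)$. Differentiating as in \eqref{integralpartitions} gives a Kubo–Mori covariance:
\[
g'(s)=-\int_0^\beta \Big(\Tr\big(\sigma_\beta(H_{\operatorname{SF}}(s))\,e^{\tau H_{\operatorname{SF}}(s)} W e^{-\tau H_{\operatorname{SF}}(s)} H_M\big)-\Tr(\sigma_\beta W)\Tr(\sigma_\beta H_M)\Big)d\tau,
\]
with $W=(P^b_{M'})^{\otimes n}V(P^b_{M'})^{\otimes n}$. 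By Cauchy–Schwarz in the KMS/Kubo–Mori inner product, $|g'(s)|\le \beta\|W\|\,\|H_M\|\cdot 2$. With $\|W\|\lesssim (nM')^2$ from \eqref{eq:V'OperatorNormBound} and $\|H_M\|\lesssim nM^2$, this gives $|g'|\lesssim n^3M^2M'^2$ up to constants depending on $\beta,U,\eta'$; the text's $n^2M^2M'^2$ may use a sharper bound than this norm estimate.

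If the covariance bound cannot be made cleanly via the Kubo–Mori inner product, the fallback is to bound the two terms separately by $\beta\|W\|\|H_M\|$ each, using that $e^{\tau H}We^{-\tau H}$ paired with $\sigma_\beta$ satisfies $|\Tr(\sigma_\beta^{1-\tau/\beta}W\sigma_\beta^{\tau/\beta}H_M)|\le\|W\|\|H_M\|$ by Hölder with exponents $\beta/(\beta-\tau)$ and $\beta/\tau$. Inserting $M,M'=\widetilde O(n+\log(1/\varepsilon))$ shows that $L=\widetilde\Omega(n^6/\varepsilon)$ suffices, with the precise power of $n$ depending on which norm bound is used, and this makes the discretization error $\le\varepsilon/3$. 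Summing the three contributions proves the proposition.
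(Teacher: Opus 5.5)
Your three-way split is the paper's own. First, the observable truncation via Lemma~\ref{lemtruncatingobservablesBH}. Second, replacing $\sigma_\beta(H(s))$ by $\sigma_\beta(H_{\operatorname{SF}}(s))$ at cost $\|H_M\|\varepsilon'$ with $\varepsilon'\sim\varepsilon/(nM^2)$. Third, the left-endpoint Riemann bound $\frac{1}{2L}\sup_s|g'(s)|$. Your Duhamel formula for $g'$ and the H\"older estimate $|g'(s)|\le 2\beta\|W\|\,\|H_M\|$ are correct, and more explicit than the paper's one-line appeal to ``similar bounds''.

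The gap is your final sentence. With $\|W\|\lesssim (nM')^2$ and $\|H_M\|\lesssim nM^2$, your estimate gives $\sup_s|g'(s)|\lesssim n^3M^2M'^2=\widetilde{O}(n^7)$. That yields only $L=\widetilde\Omega(n^7/\varepsilon)$, so the claim that $n^6$ follows ``depending on which norm bound is used'' is unsupported. No better norm bound exists, because both norms are essentially sharp:
\begin{itemize}
\item $\|H_M\|$ is attained with every site at occupation $M$.
\item $\|W\|$ is of order $n^2M'^2$ once $M'\gg\log n$: place $\Theta(nM')$ particles on a single site. The $b$-mode occupations of that state are multinomial with means $O(M')$, so it lies almost entirely in the range of $(P^b_{M'})^{\otimes n}$.
\end{itemize}
The paper's asserted $n^2M^2(M')^2$ does not follow from its own bound \eqref{eq:V'OperatorNormBound} either, so you cannot simply import it.

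To close the gap, keep your Kubo--Mori Cauchy--Schwarz step, but measure $W$ and $H_M$ in the thermal state rather than in operator norm.
\begin{itemize}
\item Set $\sigma:=\sigma_\beta(H_{\operatorname{SF}}(s))$ and $\langle X,Y\rangle_{\mathrm{KM}}:=\int_0^1\Tr(\sigma^{1-u}X^\dagger\sigma^{u}Y)\,du$. Then $g'(s)=-\beta\,\langle \widetilde W,\widetilde H_M\rangle_{\mathrm{KM}}$, where $\widetilde X:=X-\Tr(\sigma X)$.
\item For self-adjoint $X$, $\Tr(\sigma^{1-u}X\sigma^{u}X)\le\Tr(\sigma X^2)$, by weighted AM--GM in the eigenbasis of $\sigma$. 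Hence $|g'(s)|\le\beta\,\big(\Tr(\sigma W^2)\big)^{1/2}\big(\Tr(\sigma H_M^2)\big)^{1/2}$.
\item Both $W$ and $H_M$ commute with $N$ and have norm $\lesssim k^2$ on the sector $\{N=k\}$. Therefore $W^2,H_M^2\lesssim N^4$.
\item As in the proof of Lemma~\ref{lemtruncatingobservablesBH}, $\Tr(\sigma e^{\tau N})\le(1-e^{-(\kappa-\tau)})^{-n}e^{\beta Cn}=e^{O(n)}$, uniformly in $s$.
\item By Chernoff, the law of $N$ under $\sigma$ has tails at most $e^{O(n)-\tau x}$. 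This gives $\Tr(\sigma N^4)=O(n^4)$. Bounding $x^4\lesssim e^{\tau x}$ directly, as that lemma does, would only give $e^{O(n)}$.
\end{itemize}
Thus $\sup_s|g'(s)|=O(n^4)$, independently of $M$ and $M'$. So $L=O(n^4/\varepsilon)$ already makes the discretization error at most $\varepsilon/3$, which covers the stated $L=\widetilde\Omega(n^6/\varepsilon)$.
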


It remains to estimate the number of samples of $\sigma_\beta(H_{\operatorname{SF}}(s))$ for different values of $s\in[0,1]$ needed and the required circuit depth for preparing each Gibbs state to bound the runtime for estimating $\Delta F(\beta,H).$ This is obtained in the following theorem. 
\begin{theorem}
\label{thm:FreeEnergyQuantumAlgorithm}
The free energy $F(\beta,H_1)$ can be estimated with accuracy $\eps>0$ and probability of failure bounded by $\delta>0$ on a quantum computer with $\widetilde{\mathcal{O}}\left(n\log n\,\log\log(1/(\lambda^{\min}_{2}\eps))\right)$ many qubits with total runtime of order 
\begin{align*}
    \widetilde{ \mathcal{O}}\left(\frac{1}{\lambda^{\min}_{2}\eps^3}\log\left(1/\delta\right)\operatorname{poly}\left(n\right)\right)
\end{align*}
where $\lambda^{\min}_{2}:= \min_{s\in[0,1]} \lambda_2(s)>0$ and $\lambda_2(s)\equiv\operatorname{gap}(L_{\widehat f,H^{(M')}_{\operatorname{SF}}(s)})$ denotes the spectral gap of the Gibbs sampler defined in Theorem~\ref{thm:CircuitGibbsSF}.
 ~Moreover, the $\widetilde{\mathcal{O}}$ notation hides constants independent of the displayed parameters and additionally suppresses subdominant $\operatorname{poly}\log$ in the leading order.
\end{theorem}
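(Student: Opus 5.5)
The estimator is the Riemann sum of Proposition~\ref{prop:FreeEnergyEstimateSF}, plus the closed-form free energy of the quadratic reference. Since $H_0$ is quadratic with single-particle energies $\eps_k\ge \eta-2D|J|>0$, its free energy is explicit:
\[
F(\beta,H_0)=\beta^{-1}\sum_k \log\bigl(1-e^{-\beta\eps_k}\bigr),
\]
and it is computable classically in $\operatorname{poly}(n)$ time. It therefore suffices to estimate $\Delta F(\beta,H)$ to accuracy $\eps$.

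By Proposition~\ref{prop:FreeEnergyEstimateSF}, with $M,M'=\widetilde\Omega(n+\log(1/\eps))$ and $L=\widetilde\Omega(n^6/\eps)$, the Riemann sum $\frac1L\sum_{k}\Tr(\sigma_\beta(H_{\operatorname{SF}}(k/L))H_M)$ approximates $\Delta F$ within $\eps/3$. Each term is then estimated using approximate Gibbs states produced by Theorem~\ref{thm:CircuitGibbsSF}, applied to $H_{\operatorname{SF}}(s)$. That theorem is stated for $s=1$, but its proof goes through uniformly in $s\in[0,1]$: Lemmas~\ref{lem:VacuumBoundSF} and~\ref{lem:EGibbsSF} only use $\|sV'\|\le\|V'\|$, and the gap is bounded below by $\lambda_2^{\min}$. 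We prepare each state to trace distance $\eps_1:=\eps/(3\|H_M\|)$. Since $\|H_M\|\lesssim (1+|\eta'|)nM^2=\operatorname{poly}(n,\log(1/\eps))$, the bias per term is at most $\eps/3$.

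The main step is estimating the observable $H_M$. It is a bounded operator with norm $\|H_M\|$. It is diagonal in the $a$-Fock basis, so we measure the occupations $N_i$ in the computational basis of the truncated registers, which are already of size $\widetilde M\ge M$ in the circuit encoding, and evaluate $H_M$ classically. Rather than estimating each of the $L$ terms separately, we use a randomized estimator: draw $k$ uniformly from $\{0,\dots,L-1\}$, prepare $\sigma_\beta(H_{\operatorname{SF}}(k/L))$ approximately, measure, and output the value of $H_M$. This is an unbiased estimator of the (approximate) Riemann sum and is bounded by $\|H_M\|$. By Hoeffding's inequality,
\[
S=\mathcal O\bigl(\|H_M\|^2\eps^{-2}\log(1/\delta)\bigr)=\widetilde{\mathcal O}\bigl(\operatorname{poly}(n)\,\eps^{-2}\log(1/\delta)\bigr)
\]
samples give statistical error $\eps/3$ with failure probability at most $\delta$.

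For the total cost, each sample has depth $\widetilde{\mathcal O}(\operatorname{poly}(n,\log(1/\eps))/\lambda_2^{\min})$ by Theorem~\ref{thm:CircuitGibbsSF}, and the qubit count is as stated there, with $\eps$ replaced by $\eps_1$, which only changes polylogarithmic factors. This gives runtime $\widetilde{\mathcal O}(\operatorname{poly}(n)\log(1/\delta)/(\lambda_2^{\min}\eps^2))$. The claimed $\eps^{-3}$ bound leaves room for a less careful scheme. For instance, one can estimate each of the $L=\widetilde{\mathcal O}(n^6/\eps)$ terms separately to accuracy $\eps$, using $\widetilde{\mathcal O}(\operatorname{poly}(n)\eps^{-2}\log(L/\delta))$ samples each with a union bound. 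That costs an extra factor $L\propto\eps^{-1}$, matching the stated $\eps^{-3}$.

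The main obstacle I expect is uniformity in $s$ of the hypotheses of Theorem~\ref{thm:CircuitGibbsSF}, together with the absorption of $\log(1/\eps)$ factors into $\widetilde{\mathcal O}$. Concretely, one must check that $\mathfrak c$, $E_{\operatorname{Gibbs}}$, the truncation error \eqref{eq:VerifyHamConditionSF}, and the norm bound \eqref{eq:TruncHamNorm} are bounded by $\exp(\operatorname{poly}(n,M'))$ independently of $s$. This holds because every bound depends on $V'$ only through $\|V'\|$.
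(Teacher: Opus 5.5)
Your proposal is correct and follows the paper's proof step for step, with one exception in the sampling stage. The shared steps are: the closed-form $F(\beta,H_0)$, Proposition~\ref{prop:FreeEnergyEstimateSF} for the Riemann sum, Theorem~\ref{thm:CircuitGibbsSF} applied along the path with accuracy $\eps/(3\|H_M\|)$, and computational-basis measurement of the Fock-diagonal $H_M$. The paper justifies uniformity along the path by ``applying Theorem~\ref{thm:CircuitGibbsSF} for varying on-site repulsion $U$'', which is the same observation as your $\|sV'\|\le\|V'\|$.

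The difference is in the sampling step. The paper estimates each of the $L$ terms $\Tr(\sigma_\beta(H_{\operatorname{SF}}(k/L))H_M)$ separately to accuracy $\eps/3$ and takes a union bound. This costs $\mathcal O(L\,n^4M^4\eps^{-2}\log(L/\delta))$ Gibbs samples, and the factor $L=\widetilde{\mathcal O}(n^6/\eps)$ is exactly where the $\eps^{-3}$ in the statement comes from.

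Your random-index estimator is a single bounded i.i.d. variable whose mean is the approximate Riemann sum. One Hoeffding bound therefore gives $\widetilde{\mathcal O}(\|H_M\|^2\eps^{-2}\log(1/\delta))$ samples and runtime $\widetilde{\mathcal O}(\operatorname{poly}(n)\log(1/\delta)/(\lambda^{\min}_2\eps^2))$. This removes the whole factor $L$ (both the $1/\eps$ and the $n^6$) as well as the $\log L$, and it implies the stated bound a fortiori. Going one step further, drawing $s$ uniformly from $[0,1]$ would make the Riemann discretization and its derivative bound unnecessary, since $\lambda^{\min}_2$ already controls every $s$. What the paper's per-term scheme gives in exchange is simultaneous estimates of the integrand on the whole grid, which your estimator does not provide.

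A minor point: your remark that the registers have size $\widetilde M\ge M$ is not needed. Trace-distance closeness of the prepared state to $\sigma_\beta$ already controls the measured mean of $H_M$.
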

\begin{proof}

Note that we can analytically compute $F(\beta,H_0)$ from Section~\ref{sec.regulBH} as
\begin{align*}
\Tr\left(e^{-\beta H_0}\right) = \prod_{k} \frac{1}{1-e^{-\beta \eps_k}}\qquad\text{and}\qquad F(\beta,H_0) = \beta^{-1}\sum_{k}\log\left(1-e^{-\beta \eps_k}\right).
\end{align*}
Hence, in order to estimate $ F(\beta,H_1) = \Delta F(\beta,H) + F(\beta,H_0),$ it remains to estimate the difference of free energies $\Delta F(\beta,H):$

 For $M\in\N$ we denote $H_M =\frac{U}{2} \left(P^a_M\right)^{\otimes n} \sum_{i=1}^n\left(N_i^2 -\eta' N_i\right)\left(P^a_M\right)^{\otimes n}$. Using 
 $\|H_M\| \lesssim n^2 M^2$
 and Hoeffding's inequality we see that for all $k=0,\cdots,L-1$ we can estimate the expectation value $\Tr(\sigma_\beta(H_{\operatorname{SF}}(k/L)) H_M)$ with accuracy $\eps/3$ and probability of failure $\delta>0$ using a number of samples of $\sigma_\beta(H_{\operatorname{SF}}(k/L))$ and measurements in the eigenbasis of $H_M$ of order
 \begin{align}
 \label{eq:SampleComplexitySigma1}
     \mathcal{O}\left(\,\frac{\|H_M\|^2}{\eps^2}\log\left(\frac{1}{\delta}\right)\,\right) = \mathcal{O}\left(\,\frac{n^4M^4}{\eps^2}\log\left(\frac{1}{\delta}\right)\,\right).
 \end{align}
Therefore, using the union bound, we can estimate \begin{align}
\label{eq:Riemann}
   \frac{1}{L}\sum_{k=0}^{L-1}\Tr(\sigma_\beta(H_{\operatorname{SF}}(k/L)) H_M) 
\end{align} with same accuracy and probability of failure using 
 \begin{align}
 \label{eq:SampleComplexitySigmaL}
      \mathcal{O}\left(\,\frac{L n^4M^4}{\eps^2}\log\left(\frac{L}{\delta}\right)\,\right).
 \end{align}
 many samples of the Gibbs states and measurements in the eigenbasis of $H_M.$ Using Proposition~\ref{prop:FreeEnergyEstimateSF} we can estimate $\Delta F(\beta,H)$ by \eqref{eq:Riemann} with accuracy $\eps/3$ with 
 \begin{align}
 \label{eq:ScalingLandM}
     L = \widetilde{\mathcal{O}}\left(\frac{n^6}{\eps}\right)\quad\text{and}\quad M = \mathcal{O}\left(n+\log(1/\eps)\right).
 \end{align}
 Lastly, using Theorem~\ref{thm:CircuitGibbsSF} for varying on-site repulsion parameter $U,$ we can prepare for each $k=0,\cdots, L-1$ the Gibbs state $\sigma_\beta(H_{\operatorname{SF}}(k/L))$  with accuracy $\eps/(3\|H_M\|)$ in trace distance on a quantum computer with $\widetilde{\mathcal{O}}\left(n\log n\,\log\log(1/(\lambda_2(k/L)\eps))\right)\le\widetilde{\mathcal{O}}\left(n\log n\,\log\log(1/(\lambda^{\min}_2\eps))\right) $ many qubits using a circuit depth of order
 \begin{align*}
\widetilde{ \mathcal{O}}\left(\frac{1}{\lambda_2(k/L)}\operatorname{poly}\left(n, \log(1/\eps)\right)\right) \le \widetilde{ \mathcal{O}}\left(\frac{1}{\lambda^{\min}_2}\operatorname{poly}\left(n, \log(1/\eps)\right)\right).
 \end{align*}
 Multiplying this with the required numbers of samples of the Gibbs states in \eqref{eq:SampleComplexitySigmaL} and inserting the scaling of $L$ and $M$ in \eqref{eq:ScalingLandM} yields the result.
 Note that since $H_M$ is diagonal with respect to the Fock basis corresponding to $\{a_i,a_i^\dagger\},$ the required eigenbasis measurements are simply computational basis measurements of our quantum computer, and hence do not require additional circuit depth.

\end{proof}

\bibliographystyle{apsrev4-2}
\bibliography{ref}

%apsrev4-2.bst 2019-01-14 (MD) hand-edited version of apsrev4-1.bst
%Control: key (0)
%Control: author (72) initials jnrlst
%Control: editor formatted (1) identically to author
%Control: production of article title (-1) disabled
%Control: page (0) single
%Control: year (1) truncated
%Control: production of eprint (0) enabled
\begin{thebibliography}{76}%
\makeatletter
\providecommand \@ifxundefined [1]{%
 \@ifx{#1\undefined}
}%
\providecommand \@ifnum [1]{%
 \ifnum #1\expandafter \@firstoftwo
 \else \expandafter \@secondoftwo
 \fi
}%
\providecommand \@ifx [1]{%
 \ifx #1\expandafter \@firstoftwo
 \else \expandafter \@secondoftwo
 \fi
}%
\providecommand \natexlab [1]{#1}%
\providecommand \enquote  [1]{``#1''}%
\providecommand \bibnamefont  [1]{#1}%
\providecommand \bibfnamefont [1]{#1}%
\providecommand \citenamefont [1]{#1}%
\providecommand \href@noop [0]{\@secondoftwo}%
\providecommand \href [0]{\begingroup \@sanitize@url \@href}%
\providecommand \@href[1]{\@@startlink{#1}\@@href}%
\providecommand \@@href[1]{\endgroup#1\@@endlink}%
\providecommand \@sanitize@url [0]{\catcode `\\12\catcode `\$12\catcode `\&12\catcode `\#12\catcode `\^12\catcode `\_12\catcode `\%12\relax}%
\providecommand \@@startlink[1]{}%
\providecommand \@@endlink[0]{}%
\providecommand \url  [0]{\begingroup\@sanitize@url \@url }%
\providecommand \@url [1]{\endgroup\@href {#1}{\urlprefix }}%
\providecommand \urlprefix  [0]{URL }%
\providecommand \Eprint [0]{\href }%
\providecommand \doibase [0]{https://doi.org/}%
\providecommand \selectlanguage [0]{\@gobble}%
\providecommand \bibinfo  [0]{\@secondoftwo}%
\providecommand \bibfield  [0]{\@secondoftwo}%
\providecommand \translation [1]{[#1]}%
\providecommand \BibitemOpen [0]{}%
\providecommand \bibitemStop [0]{}%
\providecommand \bibitemNoStop [0]{.\EOS\space}%
\providecommand \EOS [0]{\spacefactor3000\relax}%
\providecommand \BibitemShut  [1]{\csname bibitem#1\endcsname}%
\let\auto@bib@innerbib\@empty
%</preamble>
\bibitem [{\citenamefont {Bravyi}\ \emph {et~al.}(2021)\citenamefont {Bravyi}, \citenamefont {Chowdhury}, \citenamefont {Gosset},\ and\ \citenamefont {Wocjan}}]{bravyi2021complexity}%
  \BibitemOpen
  \bibfield  {author} {\bibinfo {author} {\bibfnamefont {S.}~\bibnamefont {Bravyi}}, \bibinfo {author} {\bibfnamefont {A.}~\bibnamefont {Chowdhury}}, \bibinfo {author} {\bibfnamefont {D.}~\bibnamefont {Gosset}},\ and\ \bibinfo {author} {\bibfnamefont {P.}~\bibnamefont {Wocjan}},\ }\href@noop {} {\bibfield  {journal} {\bibinfo  {journal} {arXiv preprint arXiv:2110.15466}\ } (\bibinfo {year} {2021})}\BibitemShut {NoStop}%
\bibitem [{\citenamefont {Anschuetz}\ \emph {et~al.}(2025)\citenamefont {Anschuetz}, \citenamefont {Chen}, \citenamefont {Kiani},\ and\ \citenamefont {King}}]{Anschuetz2025}%
  \BibitemOpen
  \bibfield  {author} {\bibinfo {author} {\bibfnamefont {E.~R.}\ \bibnamefont {Anschuetz}}, \bibinfo {author} {\bibfnamefont {C.-F.}\ \bibnamefont {Chen}}, \bibinfo {author} {\bibfnamefont {B.~T.}\ \bibnamefont {Kiani}},\ and\ \bibinfo {author} {\bibfnamefont {R.}~\bibnamefont {King}},\ }\bibfield  {journal} {\bibinfo  {journal} {Physical Review Letters}\ }\textbf {\bibinfo {volume} {135}},\ \href {https://doi.org/10.1103/cbqf-d24r} {10.1103/cbqf-d24r} (\bibinfo {year} {2025})\BibitemShut {NoStop}%
\bibitem [{\citenamefont {Gily{\'e}n}\ \emph {et~al.}(2021)\citenamefont {Gily{\'e}n}, \citenamefont {Hastings},\ and\ \citenamefont {Vazirani}}]{gilyen2021sub}%
  \BibitemOpen
  \bibfield  {author} {\bibinfo {author} {\bibfnamefont {A.}~\bibnamefont {Gily{\'e}n}}, \bibinfo {author} {\bibfnamefont {M.~B.}\ \bibnamefont {Hastings}},\ and\ \bibinfo {author} {\bibfnamefont {U.}~\bibnamefont {Vazirani}},\ }in\ \href@noop {} {\emph {\bibinfo {booktitle} {Proceedings of the 53rd Annual ACM SIGACT Symposium on Theory of Computing}}}\ (\bibinfo {year} {2021})\ pp.\ \bibinfo {pages} {1357--1369}\BibitemShut {NoStop}%
\bibitem [{\citenamefont {Leng}\ \emph {et~al.}(2025)\citenamefont {Leng}, \citenamefont {Wu}, \citenamefont {Wu},\ and\ \citenamefont {Zheng}}]{leng2025sub}%
  \BibitemOpen
  \bibfield  {author} {\bibinfo {author} {\bibfnamefont {J.}~\bibnamefont {Leng}}, \bibinfo {author} {\bibfnamefont {K.}~\bibnamefont {Wu}}, \bibinfo {author} {\bibfnamefont {X.}~\bibnamefont {Wu}},\ and\ \bibinfo {author} {\bibfnamefont {Y.}~\bibnamefont {Zheng}},\ }\href@noop {} {\bibfield  {journal} {\bibinfo  {journal} {arXiv preprint arXiv:2504.14841}\ } (\bibinfo {year} {2025})}\BibitemShut {NoStop}%
\bibitem [{\citenamefont {Basso}\ \emph {et~al.}(2024)\citenamefont {Basso}, \citenamefont {Chen},\ and\ \citenamefont {Dalzell}}]{basso2024optimizing}%
  \BibitemOpen
  \bibfield  {author} {\bibinfo {author} {\bibfnamefont {J.}~\bibnamefont {Basso}}, \bibinfo {author} {\bibfnamefont {C.-F.}\ \bibnamefont {Chen}},\ and\ \bibinfo {author} {\bibfnamefont {A.~M.}\ \bibnamefont {Dalzell}},\ }\href@noop {} {\bibfield  {journal} {\bibinfo  {journal} {arXiv preprint arXiv:2411.02578}\ } (\bibinfo {year} {2024})}\BibitemShut {NoStop}%
\bibitem [{\citenamefont {Mi}\ \emph {et~al.}(2024)\citenamefont {Mi}, \citenamefont {Michailidis}, \citenamefont {Shabani}, \citenamefont {Miao}, \citenamefont {Klimov}, \citenamefont {Lloyd}, \citenamefont {Rosenberg}, \citenamefont {Acharya}, \citenamefont {Aleiner}, \citenamefont {Andersen}, \citenamefont {Ansmann}, \citenamefont {Arute}, \citenamefont {Arya}, \citenamefont {Asfaw}, \citenamefont {Atalaya}, \citenamefont {Bardin}, \citenamefont {Bengtsson}, \citenamefont {Bortoli}, \citenamefont {Bourassa}, \citenamefont {Bovaird}, \citenamefont {Brill}, \citenamefont {Broughton}, \citenamefont {Buckley}, \citenamefont {Buell}, \citenamefont {Burger}, \citenamefont {Burkett}, \citenamefont {Bushnell}, \citenamefont {Chen}, \citenamefont {Chiaro}, \citenamefont {Chik}, \citenamefont {Chou}, \citenamefont {Cogan}, \citenamefont {Collins}, \citenamefont {Conner}, \citenamefont {Courtney}, \citenamefont {Crook}, \citenamefont {Curtin}, \citenamefont {Dau}, \citenamefont {Debroy}, \citenamefont {Del
  Toro~Barba}, \citenamefont {Demura}, \citenamefont {Di~Paolo}, \citenamefont {Drozdov}, \citenamefont {Dunsworth}, \citenamefont {Erickson}, \citenamefont {Faoro}, \citenamefont {Farhi}, \citenamefont {Fatemi}, \citenamefont {Ferreira}, \citenamefont {Burgos}, \citenamefont {Forati}, \citenamefont {Fowler}, \citenamefont {Foxen}, \citenamefont {Genois}, \citenamefont {Giang}, \citenamefont {Gidney}, \citenamefont {Gilboa}, \citenamefont {Giustina}, \citenamefont {Gosula}, \citenamefont {Gross}, \citenamefont {Habegger}, \citenamefont {Hamilton}, \citenamefont {Hansen}, \citenamefont {Harrigan}, \citenamefont {Harrington}, \citenamefont {Heu}, \citenamefont {Hoffmann}, \citenamefont {Hong}, \citenamefont {Huang}, \citenamefont {Huff}, \citenamefont {Huggins}, \citenamefont {Ioffe}, \citenamefont {Isakov}, \citenamefont {Iveland}, \citenamefont {Jeffrey}, \citenamefont {Jiang}, \citenamefont {Jones}, \citenamefont {Juhas}, \citenamefont {Kafri}, \citenamefont {Kechedzhi}, \citenamefont {Khattar},
  \citenamefont {Khezri}, \citenamefont {Kieferov{\'a}}, \citenamefont {Kim}, \citenamefont {Kitaev}, \citenamefont {Klots}, \citenamefont {Korotkov}, \citenamefont {Kostritsa}, \citenamefont {Kreikebaum}, \citenamefont {Landhuis}, \citenamefont {Laptev}, \citenamefont {Lau}, \citenamefont {Laws}, \citenamefont {Lee}, \citenamefont {Lee}, \citenamefont {Lensky}, \citenamefont {Lester}, \citenamefont {Lill}, \citenamefont {Liu}, \citenamefont {Locharla}, \citenamefont {Malone}, \citenamefont {Martin}, \citenamefont {McClean}, \citenamefont {McEwen}, \citenamefont {Mieszala}, \citenamefont {Montazeri}, \citenamefont {Morvan}, \citenamefont {Movassagh}, \citenamefont {Mruczkiewicz}, \citenamefont {Neeley}, \citenamefont {Neill}, \citenamefont {Nersisyan}, \citenamefont {Newman}, \citenamefont {Ng}, \citenamefont {Nguyen}, \citenamefont {Nguyen}, \citenamefont {Niu}, \citenamefont {{O'Brien}}, \citenamefont {Opremcak}, \citenamefont {Petukhov}, \citenamefont {Potter}, \citenamefont {Pryadko}, \citenamefont
  {Quintana}, \citenamefont {Rocque}, \citenamefont {Rubin}, \citenamefont {Saei}, \citenamefont {Sank}, \citenamefont {Sankaragomathi}, \citenamefont {Satzinger}, \citenamefont {Schurkus}, \citenamefont {Schuster}, \citenamefont {Shearn}, \citenamefont {Shorter}, \citenamefont {Shutty}, \citenamefont {Shvarts}, \citenamefont {Skruzny}, \citenamefont {Smith}, \citenamefont {Somma}, \citenamefont {Sterling}, \citenamefont {Strain}, \citenamefont {Szalay}, \citenamefont {Torres}, \citenamefont {Vidal}, \citenamefont {Villalonga}, \citenamefont {Heidweiller}, \citenamefont {White}, \citenamefont {Woo}, \citenamefont {Xing}, \citenamefont {Yao}, \citenamefont {Yeh}, \citenamefont {Yoo}, \citenamefont {Young}, \citenamefont {Zalcman}, \citenamefont {Zhang}, \citenamefont {Zhu}, \citenamefont {Zobrist}, \citenamefont {Neven}, \citenamefont {Babbush}, \citenamefont {Bacon}, \citenamefont {Boixo}, \citenamefont {Hilton}, \citenamefont {Lucero}, \citenamefont {Megrant}, \citenamefont {Kelly}, \citenamefont {Chen},
  \citenamefont {Roushan}, \citenamefont {Smelyanskiy},\ and\ \citenamefont {Abanin}}]{Mi2024}%
  \BibitemOpen
  \bibfield  {author} {\bibinfo {author} {\bibfnamefont {X.}~\bibnamefont {Mi}}, \bibinfo {author} {\bibfnamefont {A.~A.}\ \bibnamefont {Michailidis}}, \bibinfo {author} {\bibfnamefont {S.}~\bibnamefont {Shabani}}, \bibinfo {author} {\bibfnamefont {K.~C.}\ \bibnamefont {Miao}}, \bibinfo {author} {\bibfnamefont {P.~V.}\ \bibnamefont {Klimov}}, \bibinfo {author} {\bibfnamefont {J.}~\bibnamefont {Lloyd}}, \bibinfo {author} {\bibfnamefont {E.}~\bibnamefont {Rosenberg}}, \bibinfo {author} {\bibfnamefont {R.}~\bibnamefont {Acharya}}, \bibinfo {author} {\bibfnamefont {I.}~\bibnamefont {Aleiner}}, \bibinfo {author} {\bibfnamefont {T.~I.}\ \bibnamefont {Andersen}}, \bibinfo {author} {\bibfnamefont {M.}~\bibnamefont {Ansmann}}, \bibinfo {author} {\bibfnamefont {F.}~\bibnamefont {Arute}}, \bibinfo {author} {\bibfnamefont {K.}~\bibnamefont {Arya}}, \bibinfo {author} {\bibfnamefont {A.}~\bibnamefont {Asfaw}}, \bibinfo {author} {\bibfnamefont {J.}~\bibnamefont {Atalaya}}, \bibinfo {author} {\bibfnamefont {J.~C.}\
  \bibnamefont {Bardin}}, \bibinfo {author} {\bibfnamefont {A.}~\bibnamefont {Bengtsson}}, \bibinfo {author} {\bibfnamefont {G.}~\bibnamefont {Bortoli}}, \bibinfo {author} {\bibfnamefont {A.}~\bibnamefont {Bourassa}}, \bibinfo {author} {\bibfnamefont {J.}~\bibnamefont {Bovaird}}, \bibinfo {author} {\bibfnamefont {L.}~\bibnamefont {Brill}}, \bibinfo {author} {\bibfnamefont {M.}~\bibnamefont {Broughton}}, \bibinfo {author} {\bibfnamefont {B.~B.}\ \bibnamefont {Buckley}}, \bibinfo {author} {\bibfnamefont {D.~A.}\ \bibnamefont {Buell}}, \bibinfo {author} {\bibfnamefont {T.}~\bibnamefont {Burger}}, \bibinfo {author} {\bibfnamefont {B.}~\bibnamefont {Burkett}}, \bibinfo {author} {\bibfnamefont {N.}~\bibnamefont {Bushnell}}, \bibinfo {author} {\bibfnamefont {Z.}~\bibnamefont {Chen}}, \bibinfo {author} {\bibfnamefont {B.}~\bibnamefont {Chiaro}}, \bibinfo {author} {\bibfnamefont {D.}~\bibnamefont {Chik}}, \bibinfo {author} {\bibfnamefont {C.}~\bibnamefont {Chou}}, \bibinfo {author} {\bibfnamefont {J.}~\bibnamefont
  {Cogan}}, \bibinfo {author} {\bibfnamefont {R.}~\bibnamefont {Collins}}, \bibinfo {author} {\bibfnamefont {P.}~\bibnamefont {Conner}}, \bibinfo {author} {\bibfnamefont {W.}~\bibnamefont {Courtney}}, \bibinfo {author} {\bibfnamefont {A.~L.}\ \bibnamefont {Crook}}, \bibinfo {author} {\bibfnamefont {B.}~\bibnamefont {Curtin}}, \bibinfo {author} {\bibfnamefont {A.~G.}\ \bibnamefont {Dau}}, \bibinfo {author} {\bibfnamefont {D.~M.}\ \bibnamefont {Debroy}}, \bibinfo {author} {\bibfnamefont {A.}~\bibnamefont {Del Toro~Barba}}, \bibinfo {author} {\bibfnamefont {S.}~\bibnamefont {Demura}}, \bibinfo {author} {\bibfnamefont {A.}~\bibnamefont {Di~Paolo}}, \bibinfo {author} {\bibfnamefont {I.~K.}\ \bibnamefont {Drozdov}}, \bibinfo {author} {\bibfnamefont {A.}~\bibnamefont {Dunsworth}}, \bibinfo {author} {\bibfnamefont {C.}~\bibnamefont {Erickson}}, \bibinfo {author} {\bibfnamefont {L.}~\bibnamefont {Faoro}}, \bibinfo {author} {\bibfnamefont {E.}~\bibnamefont {Farhi}}, \bibinfo {author} {\bibfnamefont {R.}~\bibnamefont
  {Fatemi}}, \bibinfo {author} {\bibfnamefont {V.~S.}\ \bibnamefont {Ferreira}}, \bibinfo {author} {\bibfnamefont {L.~F.}\ \bibnamefont {Burgos}}, \bibinfo {author} {\bibfnamefont {E.}~\bibnamefont {Forati}}, \bibinfo {author} {\bibfnamefont {A.~G.}\ \bibnamefont {Fowler}}, \bibinfo {author} {\bibfnamefont {B.}~\bibnamefont {Foxen}}, \bibinfo {author} {\bibfnamefont {{\'E}.}~\bibnamefont {Genois}}, \bibinfo {author} {\bibfnamefont {W.}~\bibnamefont {Giang}}, \bibinfo {author} {\bibfnamefont {C.}~\bibnamefont {Gidney}}, \bibinfo {author} {\bibfnamefont {D.}~\bibnamefont {Gilboa}}, \bibinfo {author} {\bibfnamefont {M.}~\bibnamefont {Giustina}}, \bibinfo {author} {\bibfnamefont {R.}~\bibnamefont {Gosula}}, \bibinfo {author} {\bibfnamefont {J.~A.}\ \bibnamefont {Gross}}, \bibinfo {author} {\bibfnamefont {S.}~\bibnamefont {Habegger}}, \bibinfo {author} {\bibfnamefont {M.~C.}\ \bibnamefont {Hamilton}}, \bibinfo {author} {\bibfnamefont {M.}~\bibnamefont {Hansen}}, \bibinfo {author} {\bibfnamefont {M.~P.}\
  \bibnamefont {Harrigan}}, \bibinfo {author} {\bibfnamefont {S.~D.}\ \bibnamefont {Harrington}}, \bibinfo {author} {\bibfnamefont {P.}~\bibnamefont {Heu}}, \bibinfo {author} {\bibfnamefont {M.~R.}\ \bibnamefont {Hoffmann}}, \bibinfo {author} {\bibfnamefont {S.}~\bibnamefont {Hong}}, \bibinfo {author} {\bibfnamefont {T.}~\bibnamefont {Huang}}, \bibinfo {author} {\bibfnamefont {A.}~\bibnamefont {Huff}}, \bibinfo {author} {\bibfnamefont {W.~J.}\ \bibnamefont {Huggins}}, \bibinfo {author} {\bibfnamefont {L.~B.}\ \bibnamefont {Ioffe}}, \bibinfo {author} {\bibfnamefont {S.~V.}\ \bibnamefont {Isakov}}, \bibinfo {author} {\bibfnamefont {J.}~\bibnamefont {Iveland}}, \bibinfo {author} {\bibfnamefont {E.}~\bibnamefont {Jeffrey}}, \bibinfo {author} {\bibfnamefont {Z.}~\bibnamefont {Jiang}}, \bibinfo {author} {\bibfnamefont {C.}~\bibnamefont {Jones}}, \bibinfo {author} {\bibfnamefont {P.}~\bibnamefont {Juhas}}, \bibinfo {author} {\bibfnamefont {D.}~\bibnamefont {Kafri}}, \bibinfo {author} {\bibfnamefont {K.}~\bibnamefont
  {Kechedzhi}}, \bibinfo {author} {\bibfnamefont {T.}~\bibnamefont {Khattar}}, \bibinfo {author} {\bibfnamefont {M.}~\bibnamefont {Khezri}}, \bibinfo {author} {\bibfnamefont {M.}~\bibnamefont {Kieferov{\'a}}}, \bibinfo {author} {\bibfnamefont {S.}~\bibnamefont {Kim}}, \bibinfo {author} {\bibfnamefont {A.}~\bibnamefont {Kitaev}}, \bibinfo {author} {\bibfnamefont {A.~R.}\ \bibnamefont {Klots}}, \bibinfo {author} {\bibfnamefont {A.~N.}\ \bibnamefont {Korotkov}}, \bibinfo {author} {\bibfnamefont {F.}~\bibnamefont {Kostritsa}}, \bibinfo {author} {\bibfnamefont {J.~M.}\ \bibnamefont {Kreikebaum}}, \bibinfo {author} {\bibfnamefont {D.}~\bibnamefont {Landhuis}}, \bibinfo {author} {\bibfnamefont {P.}~\bibnamefont {Laptev}}, \bibinfo {author} {\bibfnamefont {K.-M.}\ \bibnamefont {Lau}}, \bibinfo {author} {\bibfnamefont {L.}~\bibnamefont {Laws}}, \bibinfo {author} {\bibfnamefont {J.}~\bibnamefont {Lee}}, \bibinfo {author} {\bibfnamefont {K.~W.}\ \bibnamefont {Lee}}, \bibinfo {author} {\bibfnamefont {Y.~D.}\ \bibnamefont
  {Lensky}}, \bibinfo {author} {\bibfnamefont {B.~J.}\ \bibnamefont {Lester}}, \bibinfo {author} {\bibfnamefont {A.~T.}\ \bibnamefont {Lill}}, \bibinfo {author} {\bibfnamefont {W.}~\bibnamefont {Liu}}, \bibinfo {author} {\bibfnamefont {A.}~\bibnamefont {Locharla}}, \bibinfo {author} {\bibfnamefont {F.~D.}\ \bibnamefont {Malone}}, \bibinfo {author} {\bibfnamefont {O.}~\bibnamefont {Martin}}, \bibinfo {author} {\bibfnamefont {J.~R.}\ \bibnamefont {McClean}}, \bibinfo {author} {\bibfnamefont {M.}~\bibnamefont {McEwen}}, \bibinfo {author} {\bibfnamefont {A.}~\bibnamefont {Mieszala}}, \bibinfo {author} {\bibfnamefont {S.}~\bibnamefont {Montazeri}}, \bibinfo {author} {\bibfnamefont {A.}~\bibnamefont {Morvan}}, \bibinfo {author} {\bibfnamefont {R.}~\bibnamefont {Movassagh}}, \bibinfo {author} {\bibfnamefont {W.}~\bibnamefont {Mruczkiewicz}}, \bibinfo {author} {\bibfnamefont {M.}~\bibnamefont {Neeley}}, \bibinfo {author} {\bibfnamefont {C.}~\bibnamefont {Neill}}, \bibinfo {author} {\bibfnamefont {A.}~\bibnamefont
  {Nersisyan}}, \bibinfo {author} {\bibfnamefont {M.}~\bibnamefont {Newman}}, \bibinfo {author} {\bibfnamefont {J.~H.}\ \bibnamefont {Ng}}, \bibinfo {author} {\bibfnamefont {A.}~\bibnamefont {Nguyen}}, \bibinfo {author} {\bibfnamefont {M.}~\bibnamefont {Nguyen}}, \bibinfo {author} {\bibfnamefont {M.~Y.}\ \bibnamefont {Niu}}, \bibinfo {author} {\bibfnamefont {T.~E.}\ \bibnamefont {{O'Brien}}}, \bibinfo {author} {\bibfnamefont {A.}~\bibnamefont {Opremcak}}, \bibinfo {author} {\bibfnamefont {A.}~\bibnamefont {Petukhov}}, \bibinfo {author} {\bibfnamefont {R.}~\bibnamefont {Potter}}, \bibinfo {author} {\bibfnamefont {L.~P.}\ \bibnamefont {Pryadko}}, \bibinfo {author} {\bibfnamefont {C.}~\bibnamefont {Quintana}}, \bibinfo {author} {\bibfnamefont {C.}~\bibnamefont {Rocque}}, \bibinfo {author} {\bibfnamefont {N.~C.}\ \bibnamefont {Rubin}}, \bibinfo {author} {\bibfnamefont {N.}~\bibnamefont {Saei}}, \bibinfo {author} {\bibfnamefont {D.}~\bibnamefont {Sank}}, \bibinfo {author} {\bibfnamefont {K.}~\bibnamefont
  {Sankaragomathi}}, \bibinfo {author} {\bibfnamefont {K.~J.}\ \bibnamefont {Satzinger}}, \bibinfo {author} {\bibfnamefont {H.~F.}\ \bibnamefont {Schurkus}}, \bibinfo {author} {\bibfnamefont {C.}~\bibnamefont {Schuster}}, \bibinfo {author} {\bibfnamefont {M.~J.}\ \bibnamefont {Shearn}}, \bibinfo {author} {\bibfnamefont {A.}~\bibnamefont {Shorter}}, \bibinfo {author} {\bibfnamefont {N.}~\bibnamefont {Shutty}}, \bibinfo {author} {\bibfnamefont {V.}~\bibnamefont {Shvarts}}, \bibinfo {author} {\bibfnamefont {J.}~\bibnamefont {Skruzny}}, \bibinfo {author} {\bibfnamefont {W.~C.}\ \bibnamefont {Smith}}, \bibinfo {author} {\bibfnamefont {R.}~\bibnamefont {Somma}}, \bibinfo {author} {\bibfnamefont {G.}~\bibnamefont {Sterling}}, \bibinfo {author} {\bibfnamefont {D.}~\bibnamefont {Strain}}, \bibinfo {author} {\bibfnamefont {M.}~\bibnamefont {Szalay}}, \bibinfo {author} {\bibfnamefont {A.}~\bibnamefont {Torres}}, \bibinfo {author} {\bibfnamefont {G.}~\bibnamefont {Vidal}}, \bibinfo {author} {\bibfnamefont
  {B.}~\bibnamefont {Villalonga}}, \bibinfo {author} {\bibfnamefont {C.~V.}\ \bibnamefont {Heidweiller}}, \bibinfo {author} {\bibfnamefont {T.}~\bibnamefont {White}}, \bibinfo {author} {\bibfnamefont {B.~W.~K.}\ \bibnamefont {Woo}}, \bibinfo {author} {\bibfnamefont {C.}~\bibnamefont {Xing}}, \bibinfo {author} {\bibfnamefont {Z.~J.}\ \bibnamefont {Yao}}, \bibinfo {author} {\bibfnamefont {P.}~\bibnamefont {Yeh}}, \bibinfo {author} {\bibfnamefont {J.}~\bibnamefont {Yoo}}, \bibinfo {author} {\bibfnamefont {G.}~\bibnamefont {Young}}, \bibinfo {author} {\bibfnamefont {A.}~\bibnamefont {Zalcman}}, \bibinfo {author} {\bibfnamefont {Y.}~\bibnamefont {Zhang}}, \bibinfo {author} {\bibfnamefont {N.}~\bibnamefont {Zhu}}, \bibinfo {author} {\bibfnamefont {N.}~\bibnamefont {Zobrist}}, \bibinfo {author} {\bibfnamefont {H.}~\bibnamefont {Neven}}, \bibinfo {author} {\bibfnamefont {R.}~\bibnamefont {Babbush}}, \bibinfo {author} {\bibfnamefont {D.}~\bibnamefont {Bacon}}, \bibinfo {author} {\bibfnamefont {S.}~\bibnamefont
  {Boixo}}, \bibinfo {author} {\bibfnamefont {J.}~\bibnamefont {Hilton}}, \bibinfo {author} {\bibfnamefont {E.}~\bibnamefont {Lucero}}, \bibinfo {author} {\bibfnamefont {A.}~\bibnamefont {Megrant}}, \bibinfo {author} {\bibfnamefont {J.}~\bibnamefont {Kelly}}, \bibinfo {author} {\bibfnamefont {Y.}~\bibnamefont {Chen}}, \bibinfo {author} {\bibfnamefont {P.}~\bibnamefont {Roushan}}, \bibinfo {author} {\bibfnamefont {V.}~\bibnamefont {Smelyanskiy}},\ and\ \bibinfo {author} {\bibfnamefont {D.~A.}\ \bibnamefont {Abanin}},\ }\href {https://doi.org/10.1126/science.adh9932} {\bibfield  {journal} {\bibinfo  {journal} {Science}\ }\textbf {\bibinfo {volume} {383}},\ \bibinfo {pages} {1332} (\bibinfo {year} {2024})}\BibitemShut {NoStop}%
\bibitem [{\citenamefont {Chen}\ \emph {et~al.}(2023{\natexlab{a}})\citenamefont {Chen}, \citenamefont {Kastoryano}, \citenamefont {Brand{\~a}o},\ and\ \citenamefont {Gily{\'e}n}}]{chen2023quantum}%
  \BibitemOpen
  \bibfield  {author} {\bibinfo {author} {\bibfnamefont {C.-F.}\ \bibnamefont {Chen}}, \bibinfo {author} {\bibfnamefont {M.~J.}\ \bibnamefont {Kastoryano}}, \bibinfo {author} {\bibfnamefont {F.~G.}\ \bibnamefont {Brand{\~a}o}},\ and\ \bibinfo {author} {\bibfnamefont {A.}~\bibnamefont {Gily{\'e}n}},\ }\href@noop {} {\bibfield  {journal} {\bibinfo  {journal} {arXiv preprint arXiv:2303.18224}\ } (\bibinfo {year} {2023}{\natexlab{a}})}\BibitemShut {NoStop}%
\bibitem [{\citenamefont {Gily{\'e}n}\ \emph {et~al.}(2024)\citenamefont {Gily{\'e}n}, \citenamefont {Chen}, \citenamefont {Doriguello},\ and\ \citenamefont {Kastoryano}}]{gilyen2024quantum}%
  \BibitemOpen
  \bibfield  {author} {\bibinfo {author} {\bibfnamefont {A.}~\bibnamefont {Gily{\'e}n}}, \bibinfo {author} {\bibfnamefont {C.-F.}\ \bibnamefont {Chen}}, \bibinfo {author} {\bibfnamefont {J.~F.}\ \bibnamefont {Doriguello}},\ and\ \bibinfo {author} {\bibfnamefont {M.~J.}\ \bibnamefont {Kastoryano}},\ }\href@noop {} {\bibfield  {journal} {\bibinfo  {journal} {arXiv preprint arXiv:2405.20322}\ } (\bibinfo {year} {2024})}\BibitemShut {NoStop}%
\bibitem [{\citenamefont {Chen}\ \emph {et~al.}(2023{\natexlab{b}})\citenamefont {Chen}, \citenamefont {Kastoryano},\ and\ \citenamefont {Gily{\'e}n}}]{chen2023efficient}%
  \BibitemOpen
  \bibfield  {author} {\bibinfo {author} {\bibfnamefont {C.-F.}\ \bibnamefont {Chen}}, \bibinfo {author} {\bibfnamefont {M.~J.}\ \bibnamefont {Kastoryano}},\ and\ \bibinfo {author} {\bibfnamefont {A.}~\bibnamefont {Gily{\'e}n}},\ }\href@noop {} {\bibfield  {journal} {\bibinfo  {journal} {arXiv preprint arXiv:2311.09207}\ } (\bibinfo {year} {2023}{\natexlab{b}})}\BibitemShut {NoStop}%
\bibitem [{\citenamefont {Ding}\ \emph {et~al.}(2025{\natexlab{a}})\citenamefont {Ding}, \citenamefont {Li},\ and\ \citenamefont {Lin}}]{ding2025efficient}%
  \BibitemOpen
  \bibfield  {author} {\bibinfo {author} {\bibfnamefont {Z.}~\bibnamefont {Ding}}, \bibinfo {author} {\bibfnamefont {B.}~\bibnamefont {Li}},\ and\ \bibinfo {author} {\bibfnamefont {L.}~\bibnamefont {Lin}},\ }\href@noop {} {\bibfield  {journal} {\bibinfo  {journal} {Communications in Mathematical Physics}\ }\textbf {\bibinfo {volume} {406}},\ \bibinfo {pages} {67} (\bibinfo {year} {2025}{\natexlab{a}})}\BibitemShut {NoStop}%
\bibitem [{\citenamefont {Rouz\'{e}}\ \emph {et~al.}(2026)\citenamefont {Rouz\'{e}}, \citenamefont {Stilck~Fran\c{c}a},\ and\ \citenamefont {Alhambra}}]{rouze2411optimal}%
  \BibitemOpen
  \bibfield  {author} {\bibinfo {author} {\bibfnamefont {C.}~\bibnamefont {Rouz\'{e}}}, \bibinfo {author} {\bibfnamefont {D.}~\bibnamefont {Stilck~Fran\c{c}a}},\ and\ \bibinfo {author} {\bibfnamefont {A.~M.}\ \bibnamefont {Alhambra}},\ }\bibfield  {journal} {\bibinfo  {journal} {Physical Review Letters}\ }\textbf {\bibinfo {volume} {136}},\ \href {https://doi.org/10.1103/lhht-svmn} {10.1103/lhht-svmn} (\bibinfo {year} {2026})\BibitemShut {NoStop}%
\bibitem [{\citenamefont {Rouz\'{e}}\ \emph {et~al.}(2025)\citenamefont {Rouz\'{e}}, \citenamefont {Fran\c{c}a},\ and\ \citenamefont {Alhambra}}]{rouze2024efficient}%
  \BibitemOpen
  \bibfield  {author} {\bibinfo {author} {\bibfnamefont {C.}~\bibnamefont {Rouz\'{e}}}, \bibinfo {author} {\bibfnamefont {D.~S.}\ \bibnamefont {Fran\c{c}a}},\ and\ \bibinfo {author} {\bibfnamefont {A.~M.}\ \bibnamefont {Alhambra}},\ }in\ \href {https://doi.org/10.1145/3717823.3718268} {\emph {\bibinfo {booktitle} {Proceedings of the 57th Annual ACM Symposium on Theory of Computing}}},\ \bibinfo {series and number} {STOC ’25}\ (\bibinfo  {publisher} {ACM},\ \bibinfo {year} {2025})\ p.\ \bibinfo {pages} {1488–1495}\BibitemShut {NoStop}%
\bibitem [{\citenamefont {{\v S}m{\'\i}d}\ \emph {et~al.}(2025)\citenamefont {{\v S}m{\'\i}d}, \citenamefont {Meister}, \citenamefont {Berta},\ and\ \citenamefont {Bondesan}}]{vsmid2025polynomial}%
  \BibitemOpen
  \bibfield  {author} {\bibinfo {author} {\bibfnamefont {{\v S}.}~\bibnamefont {{\v S}m{\'\i}d}}, \bibinfo {author} {\bibfnamefont {R.}~\bibnamefont {Meister}}, \bibinfo {author} {\bibfnamefont {M.}~\bibnamefont {Berta}},\ and\ \bibinfo {author} {\bibfnamefont {R.}~\bibnamefont {Bondesan}},\ }\href {https://doi.org/10.1038/s41467-025-65765-1} {\bibfield  {journal} {\bibinfo  {journal} {Nature Communications}\ }\textbf {\bibinfo {volume} {16}},\ \bibinfo {pages} {10736} (\bibinfo {year} {2025})}\BibitemShut {NoStop}%
\bibitem [{\citenamefont {{\v{S}}m{\'\i}d}\ \emph {et~al.}(2025)\citenamefont {{\v{S}}m{\'\i}d}, \citenamefont {Meister}, \citenamefont {Berta},\ and\ \citenamefont {Bondesan}}]{vsmid2025rapid}%
  \BibitemOpen
  \bibfield  {author} {\bibinfo {author} {\bibfnamefont {{\v{S}}.}~\bibnamefont {{\v{S}}m{\'\i}d}}, \bibinfo {author} {\bibfnamefont {R.}~\bibnamefont {Meister}}, \bibinfo {author} {\bibfnamefont {M.}~\bibnamefont {Berta}},\ and\ \bibinfo {author} {\bibfnamefont {R.}~\bibnamefont {Bondesan}},\ }\href@noop {} {\bibfield  {journal} {\bibinfo  {journal} {arXiv preprint arXiv:2510.04954}\ } (\bibinfo {year} {2025})}\BibitemShut {NoStop}%
\bibitem [{\citenamefont {Tong}\ and\ \citenamefont {Zhan}(2025)}]{tong2025fast}%
  \BibitemOpen
  \bibfield  {author} {\bibinfo {author} {\bibfnamefont {Y.}~\bibnamefont {Tong}}\ and\ \bibinfo {author} {\bibfnamefont {Y.}~\bibnamefont {Zhan}},\ }\href@noop {} {\bibfield  {journal} {\bibinfo  {journal} {PRX Quantum}\ }\textbf {\bibinfo {volume} {6}},\ \bibinfo {pages} {030301} (\bibinfo {year} {2025})}\BibitemShut {NoStop}%
\bibitem [{\citenamefont {Ding}\ \emph {et~al.}(2025{\natexlab{b}})\citenamefont {Ding}, \citenamefont {Zhan}, \citenamefont {Preskill},\ and\ \citenamefont {Lin}}]{ding2025end}%
  \BibitemOpen
  \bibfield  {author} {\bibinfo {author} {\bibfnamefont {Z.}~\bibnamefont {Ding}}, \bibinfo {author} {\bibfnamefont {Y.}~\bibnamefont {Zhan}}, \bibinfo {author} {\bibfnamefont {J.}~\bibnamefont {Preskill}},\ and\ \bibinfo {author} {\bibfnamefont {L.}~\bibnamefont {Lin}},\ }\href@noop {} {\bibfield  {journal} {\bibinfo  {journal} {arXiv preprint arXiv:2508.05703}\ } (\bibinfo {year} {2025}{\natexlab{b}})}\BibitemShut {NoStop}%
\bibitem [{\citenamefont {Hahn}\ \emph {et~al.}(2025)\citenamefont {Hahn}, \citenamefont {Sweke}, \citenamefont {Deshpande},\ and\ \citenamefont {Shtanko}}]{hahn2025efficient}%
  \BibitemOpen
  \bibfield  {author} {\bibinfo {author} {\bibfnamefont {D.}~\bibnamefont {Hahn}}, \bibinfo {author} {\bibfnamefont {R.}~\bibnamefont {Sweke}}, \bibinfo {author} {\bibfnamefont {A.}~\bibnamefont {Deshpande}},\ and\ \bibinfo {author} {\bibfnamefont {O.}~\bibnamefont {Shtanko}},\ }\href@noop {} {\bibfield  {journal} {\bibinfo  {journal} {arXiv preprint arXiv:2506.04321}\ } (\bibinfo {year} {2025})}\BibitemShut {NoStop}%
\bibitem [{\citenamefont {Bakshi}\ \emph {et~al.}(2025)\citenamefont {Bakshi}, \citenamefont {Liu}, \citenamefont {Moitra},\ and\ \citenamefont {Tang}}]{bakshi2025dobrushin}%
  \BibitemOpen
  \bibfield  {author} {\bibinfo {author} {\bibfnamefont {A.}~\bibnamefont {Bakshi}}, \bibinfo {author} {\bibfnamefont {A.}~\bibnamefont {Liu}}, \bibinfo {author} {\bibfnamefont {A.}~\bibnamefont {Moitra}},\ and\ \bibinfo {author} {\bibfnamefont {E.}~\bibnamefont {Tang}},\ }\href@noop {} {\bibfield  {journal} {\bibinfo  {journal} {arXiv preprint arXiv:2510.08542}\ } (\bibinfo {year} {2025})}\BibitemShut {NoStop}%
\bibitem [{\citenamefont {Bergamaschi}\ and\ \citenamefont {Chen}(2025)}]{bergamaschi2025quantum}%
  \BibitemOpen
  \bibfield  {author} {\bibinfo {author} {\bibfnamefont {T.}~\bibnamefont {Bergamaschi}}\ and\ \bibinfo {author} {\bibfnamefont {C.-F.}\ \bibnamefont {Chen}},\ }\href@noop {} {\bibfield  {journal} {\bibinfo  {journal} {arXiv preprint arXiv:2510.08533}\ } (\bibinfo {year} {2025})}\BibitemShut {NoStop}%
\bibitem [{\citenamefont {Zhan}\ \emph {et~al.}(2025)\citenamefont {Zhan}, \citenamefont {Ding}, \citenamefont {Huhn}, \citenamefont {Gray}, \citenamefont {Preskill}, \citenamefont {Chan},\ and\ \citenamefont {Lin}}]{zhan2025rapid}%
  \BibitemOpen
  \bibfield  {author} {\bibinfo {author} {\bibfnamefont {Y.}~\bibnamefont {Zhan}}, \bibinfo {author} {\bibfnamefont {Z.}~\bibnamefont {Ding}}, \bibinfo {author} {\bibfnamefont {J.}~\bibnamefont {Huhn}}, \bibinfo {author} {\bibfnamefont {J.}~\bibnamefont {Gray}}, \bibinfo {author} {\bibfnamefont {J.}~\bibnamefont {Preskill}}, \bibinfo {author} {\bibfnamefont {G.~K.}\ \bibnamefont {Chan}},\ and\ \bibinfo {author} {\bibfnamefont {L.}~\bibnamefont {Lin}},\ }\href@noop {} {\bibfield  {journal} {\bibinfo  {journal} {arXiv preprint arXiv:2503.15827}\ } (\bibinfo {year} {2025})}\BibitemShut {NoStop}%
\bibitem [{\citenamefont {Bakshi}\ \emph {et~al.}(2024)\citenamefont {Bakshi}, \citenamefont {Liu}, \citenamefont {Moitra},\ and\ \citenamefont {Tang}}]{bakshi2024high}%
  \BibitemOpen
  \bibfield  {author} {\bibinfo {author} {\bibfnamefont {A.}~\bibnamefont {Bakshi}}, \bibinfo {author} {\bibfnamefont {A.}~\bibnamefont {Liu}}, \bibinfo {author} {\bibfnamefont {A.}~\bibnamefont {Moitra}},\ and\ \bibinfo {author} {\bibfnamefont {E.}~\bibnamefont {Tang}},\ }in\ \href@noop {} {\emph {\bibinfo {booktitle} {2024 IEEE 65th Annual Symposium on Foundations of Computer Science (FOCS)}}}\ (\bibinfo {organization} {IEEE},\ \bibinfo {year} {2024})\ pp.\ \bibinfo {pages} {1027--1036}\BibitemShut {NoStop}%
\bibitem [{\citenamefont {Mann}\ and\ \citenamefont {Helmuth}(2021)}]{mann2021efficient}%
  \BibitemOpen
  \bibfield  {author} {\bibinfo {author} {\bibfnamefont {R.~L.}\ \bibnamefont {Mann}}\ and\ \bibinfo {author} {\bibfnamefont {T.}~\bibnamefont {Helmuth}},\ }\href@noop {} {\bibfield  {journal} {\bibinfo  {journal} {Journal of Mathematical Physics}\ }\textbf {\bibinfo {volume} {62}} (\bibinfo {year} {2021})}\BibitemShut {NoStop}%
\bibitem [{\citenamefont {Helmuth}\ and\ \citenamefont {Mann}(2023)}]{helmuth2023efficient}%
  \BibitemOpen
  \bibfield  {author} {\bibinfo {author} {\bibfnamefont {T.}~\bibnamefont {Helmuth}}\ and\ \bibinfo {author} {\bibfnamefont {R.~L.}\ \bibnamefont {Mann}},\ }\href@noop {} {\bibfield  {journal} {\bibinfo  {journal} {Quantum}\ }\textbf {\bibinfo {volume} {7}},\ \bibinfo {pages} {1155} (\bibinfo {year} {2023})}\BibitemShut {NoStop}%
\bibitem [{\citenamefont {Mann}\ and\ \citenamefont {Minko}(2024)}]{Mann2024}%
  \BibitemOpen
  \bibfield  {author} {\bibinfo {author} {\bibfnamefont {R.~L.}\ \bibnamefont {Mann}}\ and\ \bibinfo {author} {\bibfnamefont {R.~M.}\ \bibnamefont {Minko}},\ }\bibfield  {journal} {\bibinfo  {journal} {PRX Quantum}\ }\textbf {\bibinfo {volume} {5}},\ \href {https://doi.org/10.1103/prxquantum.5.010305} {10.1103/prxquantum.5.010305} (\bibinfo {year} {2024})\BibitemShut {NoStop}%
\bibitem [{\citenamefont {Chen}\ \emph {et~al.}(2025)\citenamefont {Chen}, \citenamefont {Rouz{\'e}}, \citenamefont {Chen}, \citenamefont {Jiang}, \citenamefont {Scalet}, \citenamefont {Zhan}, \citenamefont {Chan}, \citenamefont {Ying},\ and\ \citenamefont {Tong}}]{chen2025convergence}%
  \BibitemOpen
  \bibfield  {author} {\bibinfo {author} {\bibfnamefont {H.}~\bibnamefont {Chen}}, \bibinfo {author} {\bibfnamefont {C.}~\bibnamefont {Rouz{\'e}}}, \bibinfo {author} {\bibfnamefont {J.}~\bibnamefont {Chen}}, \bibinfo {author} {\bibfnamefont {J.}~\bibnamefont {Jiang}}, \bibinfo {author} {\bibfnamefont {S.~O.}\ \bibnamefont {Scalet}}, \bibinfo {author} {\bibfnamefont {Y.}~\bibnamefont {Zhan}}, \bibinfo {author} {\bibfnamefont {G.~K.}\ \bibnamefont {Chan}}, \bibinfo {author} {\bibfnamefont {L.}~\bibnamefont {Ying}},\ and\ \bibinfo {author} {\bibfnamefont {Y.}~\bibnamefont {Tong}},\ }\href@noop {} {\bibfield  {journal} {\bibinfo  {journal} {arXiv preprint arXiv:2512.12010}\ } (\bibinfo {year} {2025})}\BibitemShut {NoStop}%
\bibitem [{\citenamefont {Braunstein}\ and\ \citenamefont {Van~Loock}(2005)}]{braunstein2005quantum}%
  \BibitemOpen
  \bibfield  {author} {\bibinfo {author} {\bibfnamefont {S.~L.}\ \bibnamefont {Braunstein}}\ and\ \bibinfo {author} {\bibfnamefont {P.}~\bibnamefont {Van~Loock}},\ }\href@noop {} {\bibfield  {journal} {\bibinfo  {journal} {Reviews of modern physics}\ }\textbf {\bibinfo {volume} {77}},\ \bibinfo {pages} {513} (\bibinfo {year} {2005})}\BibitemShut {NoStop}%
\bibitem [{\citenamefont {Cerf}\ \emph {et~al.}(2007)\citenamefont {Cerf}, \citenamefont {Leuchs},\ and\ \citenamefont {Polzik}}]{cerf2007quantum}%
  \BibitemOpen
  \bibfield  {author} {\bibinfo {author} {\bibfnamefont {N.~J.}\ \bibnamefont {Cerf}}, \bibinfo {author} {\bibfnamefont {G.}~\bibnamefont {Leuchs}},\ and\ \bibinfo {author} {\bibfnamefont {E.~S.}\ \bibnamefont {Polzik}},\ }\href@noop {} {\emph {\bibinfo {title} {Quantum information with continuous variables of atoms and light}}}\ (\bibinfo  {publisher} {World Scientific},\ \bibinfo {year} {2007})\BibitemShut {NoStop}%
\bibitem [{\citenamefont {Weedbrook}\ \emph {et~al.}(2012)\citenamefont {Weedbrook}, \citenamefont {Pirandola}, \citenamefont {Garc{\'\i}a-Patr{\'o}n}, \citenamefont {Cerf}, \citenamefont {Ralph}, \citenamefont {Shapiro},\ and\ \citenamefont {Lloyd}}]{weedbrook2012gaussian}%
  \BibitemOpen
  \bibfield  {author} {\bibinfo {author} {\bibfnamefont {C.}~\bibnamefont {Weedbrook}}, \bibinfo {author} {\bibfnamefont {S.}~\bibnamefont {Pirandola}}, \bibinfo {author} {\bibfnamefont {R.}~\bibnamefont {Garc{\'\i}a-Patr{\'o}n}}, \bibinfo {author} {\bibfnamefont {N.~J.}\ \bibnamefont {Cerf}}, \bibinfo {author} {\bibfnamefont {T.~C.}\ \bibnamefont {Ralph}}, \bibinfo {author} {\bibfnamefont {J.~H.}\ \bibnamefont {Shapiro}},\ and\ \bibinfo {author} {\bibfnamefont {S.}~\bibnamefont {Lloyd}},\ }\href@noop {} {\bibfield  {journal} {\bibinfo  {journal} {Reviews of Modern Physics}\ }\textbf {\bibinfo {volume} {84}},\ \bibinfo {pages} {621} (\bibinfo {year} {2012})}\BibitemShut {NoStop}%
\bibitem [{\citenamefont {Chabaud}\ \emph {et~al.}(2024)\citenamefont {Chabaud}, \citenamefont {Joseph}, \citenamefont {Mehraban},\ and\ \citenamefont {Motamedi}}]{chabaud2024bosonic}%
  \BibitemOpen
  \bibfield  {author} {\bibinfo {author} {\bibfnamefont {U.}~\bibnamefont {Chabaud}}, \bibinfo {author} {\bibfnamefont {M.}~\bibnamefont {Joseph}}, \bibinfo {author} {\bibfnamefont {S.}~\bibnamefont {Mehraban}},\ and\ \bibinfo {author} {\bibfnamefont {A.}~\bibnamefont {Motamedi}},\ }\href@noop {} {\bibfield  {journal} {\bibinfo  {journal} {arXiv preprint arXiv:2410.04274}\ } (\bibinfo {year} {2024})}\BibitemShut {NoStop}%
\bibitem [{\citenamefont {Chabaud}\ \emph {et~al.}(2025)\citenamefont {Chabaud}, \citenamefont {Gharibian}, \citenamefont {Mehraban}, \citenamefont {Motamedi}, \citenamefont {Naeij}, \citenamefont {Rudolph},\ and\ \citenamefont {Sambrani}}]{chabaud2025energy}%
  \BibitemOpen
  \bibfield  {author} {\bibinfo {author} {\bibfnamefont {U.}~\bibnamefont {Chabaud}}, \bibinfo {author} {\bibfnamefont {S.}~\bibnamefont {Gharibian}}, \bibinfo {author} {\bibfnamefont {S.}~\bibnamefont {Mehraban}}, \bibinfo {author} {\bibfnamefont {A.}~\bibnamefont {Motamedi}}, \bibinfo {author} {\bibfnamefont {H.~R.}\ \bibnamefont {Naeij}}, \bibinfo {author} {\bibfnamefont {D.}~\bibnamefont {Rudolph}},\ and\ \bibinfo {author} {\bibfnamefont {D.}~\bibnamefont {Sambrani}},\ }\href@noop {} {\bibfield  {journal} {\bibinfo  {journal} {arXiv preprint arXiv:2510.08545}\ } (\bibinfo {year} {2025})}\BibitemShut {NoStop}%
\bibitem [{\citenamefont {Cazalilla}\ \emph {et~al.}(2011)\citenamefont {Cazalilla}, \citenamefont {Citro}, \citenamefont {Giamarchi}, \citenamefont {Orignac},\ and\ \citenamefont {Rigol}}]{cazalilla2011one}%
  \BibitemOpen
  \bibfield  {author} {\bibinfo {author} {\bibfnamefont {M.~A.}\ \bibnamefont {Cazalilla}}, \bibinfo {author} {\bibfnamefont {R.}~\bibnamefont {Citro}}, \bibinfo {author} {\bibfnamefont {T.}~\bibnamefont {Giamarchi}}, \bibinfo {author} {\bibfnamefont {E.}~\bibnamefont {Orignac}},\ and\ \bibinfo {author} {\bibfnamefont {M.}~\bibnamefont {Rigol}},\ }\href@noop {} {\bibfield  {journal} {\bibinfo  {journal} {Reviews of Modern Physics}\ }\textbf {\bibinfo {volume} {83}},\ \bibinfo {pages} {1405} (\bibinfo {year} {2011})}\BibitemShut {NoStop}%
\bibitem [{\citenamefont {Griffin}\ \emph {et~al.}(1996)\citenamefont {Griffin}, \citenamefont {Snoke},\ and\ \citenamefont {Stringari}}]{griffin1996bose}%
  \BibitemOpen
  \bibfield  {author} {\bibinfo {author} {\bibfnamefont {A.}~\bibnamefont {Griffin}}, \bibinfo {author} {\bibfnamefont {D.~W.}\ \bibnamefont {Snoke}},\ and\ \bibinfo {author} {\bibfnamefont {S.}~\bibnamefont {Stringari}},\ }\href@noop {} {\emph {\bibinfo {title} {Bose-einstein condensation}}}\ (\bibinfo  {publisher} {Cambridge University Press},\ \bibinfo {year} {1996})\BibitemShut {NoStop}%
\bibitem [{\citenamefont {Dalfovo}\ \emph {et~al.}(1999)\citenamefont {Dalfovo}, \citenamefont {Giorgini}, \citenamefont {Pitaevskii},\ and\ \citenamefont {Stringari}}]{dalfovo1999theory}%
  \BibitemOpen
  \bibfield  {author} {\bibinfo {author} {\bibfnamefont {F.}~\bibnamefont {Dalfovo}}, \bibinfo {author} {\bibfnamefont {S.}~\bibnamefont {Giorgini}}, \bibinfo {author} {\bibfnamefont {L.~P.}\ \bibnamefont {Pitaevskii}},\ and\ \bibinfo {author} {\bibfnamefont {S.}~\bibnamefont {Stringari}},\ }\href@noop {} {\bibfield  {journal} {\bibinfo  {journal} {Reviews of modern physics}\ }\textbf {\bibinfo {volume} {71}},\ \bibinfo {pages} {463} (\bibinfo {year} {1999})}\BibitemShut {NoStop}%
\bibitem [{\citenamefont {T{\"o}rm{\"a}}\ \emph {et~al.}(2022)\citenamefont {T{\"o}rm{\"a}}, \citenamefont {Peotta},\ and\ \citenamefont {Bernevig}}]{torma2022superconductivity}%
  \BibitemOpen
  \bibfield  {author} {\bibinfo {author} {\bibfnamefont {P.}~\bibnamefont {T{\"o}rm{\"a}}}, \bibinfo {author} {\bibfnamefont {S.}~\bibnamefont {Peotta}},\ and\ \bibinfo {author} {\bibfnamefont {B.~A.}\ \bibnamefont {Bernevig}},\ }\href@noop {} {\bibfield  {journal} {\bibinfo  {journal} {Nature Reviews Physics}\ }\textbf {\bibinfo {volume} {4}},\ \bibinfo {pages} {528} (\bibinfo {year} {2022})}\BibitemShut {NoStop}%
\bibitem [{\citenamefont {Fabre}\ and\ \citenamefont {Treps}(2020)}]{fabre2020modes}%
  \BibitemOpen
  \bibfield  {author} {\bibinfo {author} {\bibfnamefont {C.}~\bibnamefont {Fabre}}\ and\ \bibinfo {author} {\bibfnamefont {N.}~\bibnamefont {Treps}},\ }\href@noop {} {\bibfield  {journal} {\bibinfo  {journal} {Reviews of Modern Physics}\ }\textbf {\bibinfo {volume} {92}},\ \bibinfo {pages} {035005} (\bibinfo {year} {2020})}\BibitemShut {NoStop}%
\bibitem [{\citenamefont {Brennecke}\ \emph {et~al.}(2007)\citenamefont {Brennecke}, \citenamefont {Donner}, \citenamefont {Ritter}, \citenamefont {Bourdel}, \citenamefont {K{\"o}hl},\ and\ \citenamefont {Esslinger}}]{brennecke2007cavity}%
  \BibitemOpen
  \bibfield  {author} {\bibinfo {author} {\bibfnamefont {F.}~\bibnamefont {Brennecke}}, \bibinfo {author} {\bibfnamefont {T.}~\bibnamefont {Donner}}, \bibinfo {author} {\bibfnamefont {S.}~\bibnamefont {Ritter}}, \bibinfo {author} {\bibfnamefont {T.}~\bibnamefont {Bourdel}}, \bibinfo {author} {\bibfnamefont {M.}~\bibnamefont {K{\"o}hl}},\ and\ \bibinfo {author} {\bibfnamefont {T.}~\bibnamefont {Esslinger}},\ }\href@noop {} {\bibfield  {journal} {\bibinfo  {journal} {nature}\ }\textbf {\bibinfo {volume} {450}},\ \bibinfo {pages} {268} (\bibinfo {year} {2007})}\BibitemShut {NoStop}%
\bibitem [{\citenamefont {Mivehvar}\ \emph {et~al.}(2021)\citenamefont {Mivehvar}, \citenamefont {Piazza}, \citenamefont {Donner},\ and\ \citenamefont {Ritsch}}]{mivehvar2021cavity}%
  \BibitemOpen
  \bibfield  {author} {\bibinfo {author} {\bibfnamefont {F.}~\bibnamefont {Mivehvar}}, \bibinfo {author} {\bibfnamefont {F.}~\bibnamefont {Piazza}}, \bibinfo {author} {\bibfnamefont {T.}~\bibnamefont {Donner}},\ and\ \bibinfo {author} {\bibfnamefont {H.}~\bibnamefont {Ritsch}},\ }\href@noop {} {\bibfield  {journal} {\bibinfo  {journal} {Advances in Physics}\ }\textbf {\bibinfo {volume} {70}},\ \bibinfo {pages} {1} (\bibinfo {year} {2021})}\BibitemShut {NoStop}%
\bibitem [{\citenamefont {Gersch}\ and\ \citenamefont {Knollman}(1963)}]{gersch1963quantum}%
  \BibitemOpen
  \bibfield  {author} {\bibinfo {author} {\bibfnamefont {H.~A.}\ \bibnamefont {Gersch}}\ and\ \bibinfo {author} {\bibfnamefont {G.~C.}\ \bibnamefont {Knollman}},\ }\href@noop {} {\bibfield  {journal} {\bibinfo  {journal} {Physical Review}\ }\textbf {\bibinfo {volume} {129}},\ \bibinfo {pages} {959} (\bibinfo {year} {1963})}\BibitemShut {NoStop}%
\bibitem [{\citenamefont {Guo}\ \emph {et~al.}(2012)\citenamefont {Guo}, \citenamefont {Weichselbaum}, \citenamefont {von Delft},\ and\ \citenamefont {Vojta}}]{guo2012critical}%
  \BibitemOpen
  \bibfield  {author} {\bibinfo {author} {\bibfnamefont {C.}~\bibnamefont {Guo}}, \bibinfo {author} {\bibfnamefont {A.}~\bibnamefont {Weichselbaum}}, \bibinfo {author} {\bibfnamefont {J.}~\bibnamefont {von Delft}},\ and\ \bibinfo {author} {\bibfnamefont {M.}~\bibnamefont {Vojta}},\ }\href@noop {} {\bibfield  {journal} {\bibinfo  {journal} {Physical review letters}\ }\textbf {\bibinfo {volume} {108}},\ \bibinfo {pages} {160401} (\bibinfo {year} {2012})}\BibitemShut {NoStop}%
\bibitem [{\citenamefont {Del~Pino}\ \emph {et~al.}(2018)\citenamefont {Del~Pino}, \citenamefont {Schr{\"o}der}, \citenamefont {Chin}, \citenamefont {Feist},\ and\ \citenamefont {Garcia-Vidal}}]{del2018tensor}%
  \BibitemOpen
  \bibfield  {author} {\bibinfo {author} {\bibfnamefont {J.}~\bibnamefont {Del~Pino}}, \bibinfo {author} {\bibfnamefont {F.~A.}\ \bibnamefont {Schr{\"o}der}}, \bibinfo {author} {\bibfnamefont {A.~W.}\ \bibnamefont {Chin}}, \bibinfo {author} {\bibfnamefont {J.}~\bibnamefont {Feist}},\ and\ \bibinfo {author} {\bibfnamefont {F.~J.}\ \bibnamefont {Garcia-Vidal}},\ }\href@noop {} {\bibfield  {journal} {\bibinfo  {journal} {Physical review letters}\ }\textbf {\bibinfo {volume} {121}},\ \bibinfo {pages} {227401} (\bibinfo {year} {2018})}\BibitemShut {NoStop}%
\bibitem [{\citenamefont {Kloss}\ \emph {et~al.}(2019)\citenamefont {Kloss}, \citenamefont {Reichman},\ and\ \citenamefont {Tempelaar}}]{kloss2019multiset}%
  \BibitemOpen
  \bibfield  {author} {\bibinfo {author} {\bibfnamefont {B.}~\bibnamefont {Kloss}}, \bibinfo {author} {\bibfnamefont {D.~R.}\ \bibnamefont {Reichman}},\ and\ \bibinfo {author} {\bibfnamefont {R.}~\bibnamefont {Tempelaar}},\ }\href@noop {} {\bibfield  {journal} {\bibinfo  {journal} {Physical review letters}\ }\textbf {\bibinfo {volume} {123}},\ \bibinfo {pages} {126601} (\bibinfo {year} {2019})}\BibitemShut {NoStop}%
\bibitem [{\citenamefont {Macridin}\ \emph {et~al.}(2018{\natexlab{a}})\citenamefont {Macridin}, \citenamefont {Spentzouris}, \citenamefont {Amundson},\ and\ \citenamefont {Harnik}}]{macridin2018digital}%
  \BibitemOpen
  \bibfield  {author} {\bibinfo {author} {\bibfnamefont {A.}~\bibnamefont {Macridin}}, \bibinfo {author} {\bibfnamefont {P.}~\bibnamefont {Spentzouris}}, \bibinfo {author} {\bibfnamefont {J.}~\bibnamefont {Amundson}},\ and\ \bibinfo {author} {\bibfnamefont {R.}~\bibnamefont {Harnik}},\ }\href@noop {} {\bibfield  {journal} {\bibinfo  {journal} {Physical Review A}\ }\textbf {\bibinfo {volume} {98}},\ \bibinfo {pages} {042312} (\bibinfo {year} {2018}{\natexlab{a}})}\BibitemShut {NoStop}%
\bibitem [{\citenamefont {Macridin}\ \emph {et~al.}(2018{\natexlab{b}})\citenamefont {Macridin}, \citenamefont {Spentzouris}, \citenamefont {Amundson},\ and\ \citenamefont {Harnik}}]{macridin2018electron}%
  \BibitemOpen
  \bibfield  {author} {\bibinfo {author} {\bibfnamefont {A.}~\bibnamefont {Macridin}}, \bibinfo {author} {\bibfnamefont {P.}~\bibnamefont {Spentzouris}}, \bibinfo {author} {\bibfnamefont {J.}~\bibnamefont {Amundson}},\ and\ \bibinfo {author} {\bibfnamefont {R.}~\bibnamefont {Harnik}},\ }\href@noop {} {\bibfield  {journal} {\bibinfo  {journal} {Physical review letters}\ }\textbf {\bibinfo {volume} {121}},\ \bibinfo {pages} {110504} (\bibinfo {year} {2018}{\natexlab{b}})}\BibitemShut {NoStop}%
\bibitem [{\citenamefont {Reinhard}\ \emph {et~al.}(2019)\citenamefont {Reinhard}, \citenamefont {Mordovina}, \citenamefont {Hubig}, \citenamefont {Kretchmer}, \citenamefont {Schollwöck}, \citenamefont {Appel}, \citenamefont {Sentef},\ and\ \citenamefont {Rubio}}]{reinhard2019density}%
  \BibitemOpen
  \bibfield  {author} {\bibinfo {author} {\bibfnamefont {T.~E.}\ \bibnamefont {Reinhard}}, \bibinfo {author} {\bibfnamefont {U.}~\bibnamefont {Mordovina}}, \bibinfo {author} {\bibfnamefont {C.}~\bibnamefont {Hubig}}, \bibinfo {author} {\bibfnamefont {J.~S.}\ \bibnamefont {Kretchmer}}, \bibinfo {author} {\bibfnamefont {U.}~\bibnamefont {Schollwöck}}, \bibinfo {author} {\bibfnamefont {H.}~\bibnamefont {Appel}}, \bibinfo {author} {\bibfnamefont {M.~A.}\ \bibnamefont {Sentef}},\ and\ \bibinfo {author} {\bibfnamefont {A.}~\bibnamefont {Rubio}},\ }\href@noop {} {\bibfield  {journal} {\bibinfo  {journal} {Journal of chemical theory and computation}\ }\textbf {\bibinfo {volume} {15}},\ \bibinfo {pages} {2221} (\bibinfo {year} {2019})}\BibitemShut {NoStop}%
\bibitem [{\citenamefont {Sandhoefer}\ and\ \citenamefont {Chan}(2016)}]{sandhoefer2016density}%
  \BibitemOpen
  \bibfield  {author} {\bibinfo {author} {\bibfnamefont {B.}~\bibnamefont {Sandhoefer}}\ and\ \bibinfo {author} {\bibfnamefont {G.~K.-L.}\ \bibnamefont {Chan}},\ }\href@noop {} {\bibfield  {journal} {\bibinfo  {journal} {Physical Review B}\ }\textbf {\bibinfo {volume} {94}},\ \bibinfo {pages} {085115} (\bibinfo {year} {2016})}\BibitemShut {NoStop}%
\bibitem [{\citenamefont {Schr{\"o}der}\ and\ \citenamefont {Chin}(2016)}]{schroder2016simulating}%
  \BibitemOpen
  \bibfield  {author} {\bibinfo {author} {\bibfnamefont {F.~A.}\ \bibnamefont {Schr{\"o}der}}\ and\ \bibinfo {author} {\bibfnamefont {A.~W.}\ \bibnamefont {Chin}},\ }\href@noop {} {\bibfield  {journal} {\bibinfo  {journal} {Physical Review B}\ }\textbf {\bibinfo {volume} {93}},\ \bibinfo {pages} {075105} (\bibinfo {year} {2016})}\BibitemShut {NoStop}%
\bibitem [{\citenamefont {Woods}\ \emph {et~al.}(2015)\citenamefont {Woods}, \citenamefont {Cramer},\ and\ \citenamefont {Plenio}}]{woods2015simulating}%
  \BibitemOpen
  \bibfield  {author} {\bibinfo {author} {\bibfnamefont {M.~P.}\ \bibnamefont {Woods}}, \bibinfo {author} {\bibfnamefont {M.}~\bibnamefont {Cramer}},\ and\ \bibinfo {author} {\bibfnamefont {M.~B.}\ \bibnamefont {Plenio}},\ }\href@noop {} {\bibfield  {journal} {\bibinfo  {journal} {Physical Review Letters}\ }\textbf {\bibinfo {volume} {115}},\ \bibinfo {pages} {130401} (\bibinfo {year} {2015})}\BibitemShut {NoStop}%
\bibitem [{\citenamefont {Kuwahara}\ \emph {et~al.}(2024)\citenamefont {Kuwahara}, \citenamefont {Vu},\ and\ \citenamefont {Saito}}]{kuwahara2024effective}%
  \BibitemOpen
  \bibfield  {author} {\bibinfo {author} {\bibfnamefont {T.}~\bibnamefont {Kuwahara}}, \bibinfo {author} {\bibfnamefont {T.~V.}\ \bibnamefont {Vu}},\ and\ \bibinfo {author} {\bibfnamefont {K.}~\bibnamefont {Saito}},\ }\href@noop {} {\bibfield  {journal} {\bibinfo  {journal} {Nature Communications}\ }\textbf {\bibinfo {volume} {15}},\ \bibinfo {pages} {2520} (\bibinfo {year} {2024})}\BibitemShut {NoStop}%
\bibitem [{\citenamefont {Schuch}\ \emph {et~al.}(2006)\citenamefont {Schuch}, \citenamefont {Cirac},\ and\ \citenamefont {Wolf}}]{schuch2006quantum}%
  \BibitemOpen
  \bibfield  {author} {\bibinfo {author} {\bibfnamefont {N.}~\bibnamefont {Schuch}}, \bibinfo {author} {\bibfnamefont {J.~I.}\ \bibnamefont {Cirac}},\ and\ \bibinfo {author} {\bibfnamefont {M.~M.}\ \bibnamefont {Wolf}},\ }\href@noop {} {\bibfield  {journal} {\bibinfo  {journal} {Communications in mathematical physics}\ }\textbf {\bibinfo {volume} {267}},\ \bibinfo {pages} {65} (\bibinfo {year} {2006})}\BibitemShut {NoStop}%
\bibitem [{\citenamefont {Nachtergaele}\ \emph {et~al.}(2007)\citenamefont {Nachtergaele}, \citenamefont {Raz}, \citenamefont {Schlein},\ and\ \citenamefont {Sims}}]{nachtergaele2007lieb}%
  \BibitemOpen
  \bibfield  {author} {\bibinfo {author} {\bibfnamefont {B.}~\bibnamefont {Nachtergaele}}, \bibinfo {author} {\bibfnamefont {H.}~\bibnamefont {Raz}}, \bibinfo {author} {\bibfnamefont {B.}~\bibnamefont {Schlein}},\ and\ \bibinfo {author} {\bibfnamefont {R.}~\bibnamefont {Sims}},\ }\href@noop {} {\bibfield  {journal} {\bibinfo  {journal} {arXiv preprint arXiv:0712.3820}\ } (\bibinfo {year} {2007})}\BibitemShut {NoStop}%
\bibitem [{\citenamefont {Bartlett}\ \emph {et~al.}(2002)\citenamefont {Bartlett}, \citenamefont {Sanders}, \citenamefont {Braunstein},\ and\ \citenamefont {Nemoto}}]{bartlett2002efficient}%
  \BibitemOpen
  \bibfield  {author} {\bibinfo {author} {\bibfnamefont {S.~D.}\ \bibnamefont {Bartlett}}, \bibinfo {author} {\bibfnamefont {B.~C.}\ \bibnamefont {Sanders}}, \bibinfo {author} {\bibfnamefont {S.~L.}\ \bibnamefont {Braunstein}},\ and\ \bibinfo {author} {\bibfnamefont {K.}~\bibnamefont {Nemoto}},\ }\href@noop {} {\bibfield  {journal} {\bibinfo  {journal} {Physical Review Letters}\ }\textbf {\bibinfo {volume} {88}},\ \bibinfo {pages} {097904} (\bibinfo {year} {2002})}\BibitemShut {NoStop}%
\bibitem [{\citenamefont {Goemans}\ and\ \citenamefont {Williamson}(1995)}]{Goemans1995}%
  \BibitemOpen
  \bibfield  {author} {\bibinfo {author} {\bibfnamefont {M.~X.}\ \bibnamefont {Goemans}}\ and\ \bibinfo {author} {\bibfnamefont {D.~P.}\ \bibnamefont {Williamson}},\ }\href {https://doi.org/10.1145/227683.227684} {\bibfield  {journal} {\bibinfo  {journal} {Journal of the ACM}\ }\textbf {\bibinfo {volume} {42}},\ \bibinfo {pages} {1115–1145} (\bibinfo {year} {1995})}\BibitemShut {NoStop}%
\bibitem [{\citenamefont {Hastings}\ and\ \citenamefont {O’Donnell}(2022)}]{Hastings2022}%
  \BibitemOpen
  \bibfield  {author} {\bibinfo {author} {\bibfnamefont {M.~B.}\ \bibnamefont {Hastings}}\ and\ \bibinfo {author} {\bibfnamefont {R.}~\bibnamefont {O’Donnell}},\ }in\ \href {https://doi.org/10.1145/3519935.3519960} {\emph {\bibinfo {booktitle} {Proceedings of the 54th Annual ACM SIGACT Symposium on Theory of Computing}}},\ \bibinfo {series and number} {STOC ’22}\ (\bibinfo  {publisher} {ACM},\ \bibinfo {year} {2022})\ p.\ \bibinfo {pages} {776–789}\BibitemShut {NoStop}%
\bibitem [{\citenamefont {Navascués}\ \emph {et~al.}(2013)\citenamefont {Navascués}, \citenamefont {García-Sáez}, \citenamefont {Acín}, \citenamefont {Pironio},\ and\ \citenamefont {Plenio}}]{Navascus2013}%
  \BibitemOpen
  \bibfield  {author} {\bibinfo {author} {\bibfnamefont {M.}~\bibnamefont {Navascués}}, \bibinfo {author} {\bibfnamefont {A.}~\bibnamefont {García-Sáez}}, \bibinfo {author} {\bibfnamefont {A.}~\bibnamefont {Acín}}, \bibinfo {author} {\bibfnamefont {S.}~\bibnamefont {Pironio}},\ and\ \bibinfo {author} {\bibfnamefont {M.~B.}\ \bibnamefont {Plenio}},\ }\href {https://doi.org/10.1088/1367-2630/15/2/023026} {\bibfield  {journal} {\bibinfo  {journal} {New Journal of Physics}\ }\textbf {\bibinfo {volume} {15}},\ \bibinfo {pages} {023026} (\bibinfo {year} {2013})}\BibitemShut {NoStop}%
\bibitem [{\citenamefont {Tong}\ and\ \citenamefont {Kuwahara}(2025)}]{kuwahara25}%
  \BibitemOpen
  \bibfield  {author} {\bibinfo {author} {\bibfnamefont {X.-H.}\ \bibnamefont {Tong}}\ and\ \bibinfo {author} {\bibfnamefont {T.}~\bibnamefont {Kuwahara}},\ }\href {https://doi.org/10.48550/ARXIV.2509.25572} {\bibinfo {title} {Long-range bosonic systems at thermal equilibrium: Computational complexity and clustering of correlations}} (\bibinfo {year} {2025})\BibitemShut {NoStop}%
\bibitem [{\citenamefont {Han}\ \emph {et~al.}(2025)\citenamefont {Han}, \citenamefont {Huang}, \citenamefont {Komargodski}, \citenamefont {Lucas},\ and\ \citenamefont {Popov}}]{han2025entropic}%
  \BibitemOpen
  \bibfield  {author} {\bibinfo {author} {\bibfnamefont {Y.}~\bibnamefont {Han}}, \bibinfo {author} {\bibfnamefont {X.}~\bibnamefont {Huang}}, \bibinfo {author} {\bibfnamefont {Z.}~\bibnamefont {Komargodski}}, \bibinfo {author} {\bibfnamefont {A.}~\bibnamefont {Lucas}},\ and\ \bibinfo {author} {\bibfnamefont {F.~K.}\ \bibnamefont {Popov}},\ }\href@noop {} {\bibfield  {journal} {\bibinfo  {journal} {Nature Communications}\ } (\bibinfo {year} {2025})}\BibitemShut {NoStop}%
\bibitem [{\citenamefont {Aaronson}\ and\ \citenamefont {Arkhipov}(2011)}]{Aaronson2011}%
  \BibitemOpen
  \bibfield  {author} {\bibinfo {author} {\bibfnamefont {S.}~\bibnamefont {Aaronson}}\ and\ \bibinfo {author} {\bibfnamefont {A.}~\bibnamefont {Arkhipov}},\ }in\ \href {https://doi.org/10.1145/1993636.1993682} {\emph {\bibinfo {booktitle} {Proceedings of the forty-third annual ACM symposium on Theory of computing}}},\ \bibinfo {series and number} {STOC’11}\ (\bibinfo  {publisher} {ACM},\ \bibinfo {year} {2011})\ p.\ \bibinfo {pages} {333–342}\BibitemShut {NoStop}%
\bibitem [{\citenamefont {Tillmann}\ \emph {et~al.}(2013)\citenamefont {Tillmann}, \citenamefont {Daki{\'c}}, \citenamefont {Heilmann}, \citenamefont {Nolte}, \citenamefont {Szameit},\ and\ \citenamefont {Walther}}]{tillmann2013experimental}%
  \BibitemOpen
  \bibfield  {author} {\bibinfo {author} {\bibfnamefont {M.}~\bibnamefont {Tillmann}}, \bibinfo {author} {\bibfnamefont {B.}~\bibnamefont {Daki{\'c}}}, \bibinfo {author} {\bibfnamefont {R.}~\bibnamefont {Heilmann}}, \bibinfo {author} {\bibfnamefont {S.}~\bibnamefont {Nolte}}, \bibinfo {author} {\bibfnamefont {A.}~\bibnamefont {Szameit}},\ and\ \bibinfo {author} {\bibfnamefont {P.}~\bibnamefont {Walther}},\ }\href@noop {} {\bibfield  {journal} {\bibinfo  {journal} {Nature photonics}\ }\textbf {\bibinfo {volume} {7}},\ \bibinfo {pages} {540} (\bibinfo {year} {2013})}\BibitemShut {NoStop}%
\bibitem [{\citenamefont {Spring}\ \emph {et~al.}(2013)\citenamefont {Spring}, \citenamefont {Metcalf}, \citenamefont {Humphreys}, \citenamefont {Kolthammer}, \citenamefont {Jin}, \citenamefont {Barbieri}, \citenamefont {Datta}, \citenamefont {Thomas-Peter}, \citenamefont {Langford}, \citenamefont {Kundys} \emph {et~al.}}]{spring2013boson}%
  \BibitemOpen
  \bibfield  {author} {\bibinfo {author} {\bibfnamefont {J.~B.}\ \bibnamefont {Spring}}, \bibinfo {author} {\bibfnamefont {B.~J.}\ \bibnamefont {Metcalf}}, \bibinfo {author} {\bibfnamefont {P.~C.}\ \bibnamefont {Humphreys}}, \bibinfo {author} {\bibfnamefont {W.~S.}\ \bibnamefont {Kolthammer}}, \bibinfo {author} {\bibfnamefont {X.-M.}\ \bibnamefont {Jin}}, \bibinfo {author} {\bibfnamefont {M.}~\bibnamefont {Barbieri}}, \bibinfo {author} {\bibfnamefont {A.}~\bibnamefont {Datta}}, \bibinfo {author} {\bibfnamefont {N.}~\bibnamefont {Thomas-Peter}}, \bibinfo {author} {\bibfnamefont {N.~K.}\ \bibnamefont {Langford}}, \bibinfo {author} {\bibfnamefont {D.}~\bibnamefont {Kundys}}, \emph {et~al.},\ }\href@noop {} {\bibfield  {journal} {\bibinfo  {journal} {Science}\ }\textbf {\bibinfo {volume} {339}},\ \bibinfo {pages} {798} (\bibinfo {year} {2013})}\BibitemShut {NoStop}%
\bibitem [{\citenamefont {Hamilton}\ \emph {et~al.}(2017)\citenamefont {Hamilton}, \citenamefont {Kruse}, \citenamefont {Sansoni}, \citenamefont {Barkhofen}, \citenamefont {Silberhorn},\ and\ \citenamefont {Jex}}]{hamilton2017gaussian}%
  \BibitemOpen
  \bibfield  {author} {\bibinfo {author} {\bibfnamefont {C.~S.}\ \bibnamefont {Hamilton}}, \bibinfo {author} {\bibfnamefont {R.}~\bibnamefont {Kruse}}, \bibinfo {author} {\bibfnamefont {L.}~\bibnamefont {Sansoni}}, \bibinfo {author} {\bibfnamefont {S.}~\bibnamefont {Barkhofen}}, \bibinfo {author} {\bibfnamefont {C.}~\bibnamefont {Silberhorn}},\ and\ \bibinfo {author} {\bibfnamefont {I.}~\bibnamefont {Jex}},\ }\href@noop {} {\bibfield  {journal} {\bibinfo  {journal} {Physical review letters}\ }\textbf {\bibinfo {volume} {119}},\ \bibinfo {pages} {170501} (\bibinfo {year} {2017})}\BibitemShut {NoStop}%
\bibitem [{\citenamefont {Bloch}\ \emph {et~al.}(2008)\citenamefont {Bloch}, \citenamefont {Dalibard},\ and\ \citenamefont {Zwerger}}]{bloch2008many}%
  \BibitemOpen
  \bibfield  {author} {\bibinfo {author} {\bibfnamefont {I.}~\bibnamefont {Bloch}}, \bibinfo {author} {\bibfnamefont {J.}~\bibnamefont {Dalibard}},\ and\ \bibinfo {author} {\bibfnamefont {W.}~\bibnamefont {Zwerger}},\ }\href@noop {} {\bibfield  {journal} {\bibinfo  {journal} {Reviews of modern physics}\ }\textbf {\bibinfo {volume} {80}},\ \bibinfo {pages} {885} (\bibinfo {year} {2008})}\BibitemShut {NoStop}%
\bibitem [{\citenamefont {Kollath}\ \emph {et~al.}(2007)\citenamefont {Kollath}, \citenamefont {L{\"a}uchli},\ and\ \citenamefont {Altman}}]{kollath2007quench}%
  \BibitemOpen
  \bibfield  {author} {\bibinfo {author} {\bibfnamefont {C.}~\bibnamefont {Kollath}}, \bibinfo {author} {\bibfnamefont {A.~M.}\ \bibnamefont {L{\"a}uchli}},\ and\ \bibinfo {author} {\bibfnamefont {E.}~\bibnamefont {Altman}},\ }\href@noop {} {\bibfield  {journal} {\bibinfo  {journal} {Physical review letters}\ }\textbf {\bibinfo {volume} {98}},\ \bibinfo {pages} {180601} (\bibinfo {year} {2007})}\BibitemShut {NoStop}%
\bibitem [{\citenamefont {Fisher}\ \emph {et~al.}(1989)\citenamefont {Fisher}, \citenamefont {Weichman}, \citenamefont {Grinstein},\ and\ \citenamefont {Fisher}}]{Fisher1989}%
  \BibitemOpen
  \bibfield  {author} {\bibinfo {author} {\bibfnamefont {M.~P.~A.}\ \bibnamefont {Fisher}}, \bibinfo {author} {\bibfnamefont {P.~B.}\ \bibnamefont {Weichman}}, \bibinfo {author} {\bibfnamefont {G.}~\bibnamefont {Grinstein}},\ and\ \bibinfo {author} {\bibfnamefont {D.~S.}\ \bibnamefont {Fisher}},\ }\href {https://doi.org/10.1103/physrevb.40.546} {\bibfield  {journal} {\bibinfo  {journal} {Physical Review B}\ }\textbf {\bibinfo {volume} {40}},\ \bibinfo {pages} {546–570} (\bibinfo {year} {1989})}\BibitemShut {NoStop}%
\bibitem [{\citenamefont {Freericks}\ and\ \citenamefont {Monien}(1996)}]{freericks1996strong}%
  \BibitemOpen
  \bibfield  {author} {\bibinfo {author} {\bibfnamefont {J.}~\bibnamefont {Freericks}}\ and\ \bibinfo {author} {\bibfnamefont {H.}~\bibnamefont {Monien}},\ }\href@noop {} {\bibfield  {journal} {\bibinfo  {journal} {Physical Review B}\ }\textbf {\bibinfo {volume} {53}},\ \bibinfo {pages} {2691} (\bibinfo {year} {1996})}\BibitemShut {NoStop}%
\bibitem [{\citenamefont {Jaksch}\ \emph {et~al.}(1998)\citenamefont {Jaksch}, \citenamefont {Bruder}, \citenamefont {Cirac}, \citenamefont {Gardiner},\ and\ \citenamefont {Zoller}}]{jaksch1998cold}%
  \BibitemOpen
  \bibfield  {author} {\bibinfo {author} {\bibfnamefont {D.}~\bibnamefont {Jaksch}}, \bibinfo {author} {\bibfnamefont {C.}~\bibnamefont {Bruder}}, \bibinfo {author} {\bibfnamefont {J.~I.}\ \bibnamefont {Cirac}}, \bibinfo {author} {\bibfnamefont {C.~W.}\ \bibnamefont {Gardiner}},\ and\ \bibinfo {author} {\bibfnamefont {P.}~\bibnamefont {Zoller}},\ }\href@noop {} {\bibfield  {journal} {\bibinfo  {journal} {Physical Review Letters}\ }\textbf {\bibinfo {volume} {81}},\ \bibinfo {pages} {3108} (\bibinfo {year} {1998})}\BibitemShut {NoStop}%
\bibitem [{\citenamefont {Greiner}\ \emph {et~al.}(2002)\citenamefont {Greiner}, \citenamefont {Mandel}, \citenamefont {Esslinger}, \citenamefont {H{\"a}nsch},\ and\ \citenamefont {Bloch}}]{greiner2002quantum}%
  \BibitemOpen
  \bibfield  {author} {\bibinfo {author} {\bibfnamefont {M.}~\bibnamefont {Greiner}}, \bibinfo {author} {\bibfnamefont {O.}~\bibnamefont {Mandel}}, \bibinfo {author} {\bibfnamefont {T.}~\bibnamefont {Esslinger}}, \bibinfo {author} {\bibfnamefont {T.~W.}\ \bibnamefont {H{\"a}nsch}},\ and\ \bibinfo {author} {\bibfnamefont {I.}~\bibnamefont {Bloch}},\ }\href@noop {} {\bibfield  {journal} {\bibinfo  {journal} {nature}\ }\textbf {\bibinfo {volume} {415}},\ \bibinfo {pages} {39} (\bibinfo {year} {2002})}\BibitemShut {NoStop}%
\bibitem [{\citenamefont {St{\"o}ferle}\ \emph {et~al.}(2004)\citenamefont {St{\"o}ferle}, \citenamefont {Moritz}, \citenamefont {Schori}, \citenamefont {K{\"o}hl},\ and\ \citenamefont {Esslinger}}]{stoferle2004transition}%
  \BibitemOpen
  \bibfield  {author} {\bibinfo {author} {\bibfnamefont {T.}~\bibnamefont {St{\"o}ferle}}, \bibinfo {author} {\bibfnamefont {H.}~\bibnamefont {Moritz}}, \bibinfo {author} {\bibfnamefont {C.}~\bibnamefont {Schori}}, \bibinfo {author} {\bibfnamefont {M.}~\bibnamefont {K{\"o}hl}},\ and\ \bibinfo {author} {\bibfnamefont {T.}~\bibnamefont {Esslinger}},\ }\href@noop {} {\bibfield  {journal} {\bibinfo  {journal} {Physical review letters}\ }\textbf {\bibinfo {volume} {92}},\ \bibinfo {pages} {130403} (\bibinfo {year} {2004})}\BibitemShut {NoStop}%
\bibitem [{\citenamefont {Bakr}\ \emph {et~al.}(2009)\citenamefont {Bakr}, \citenamefont {Gillen}, \citenamefont {Peng}, \citenamefont {F{\"o}lling},\ and\ \citenamefont {Greiner}}]{bakr2009quantum}%
  \BibitemOpen
  \bibfield  {author} {\bibinfo {author} {\bibfnamefont {W.~S.}\ \bibnamefont {Bakr}}, \bibinfo {author} {\bibfnamefont {J.~I.}\ \bibnamefont {Gillen}}, \bibinfo {author} {\bibfnamefont {A.}~\bibnamefont {Peng}}, \bibinfo {author} {\bibfnamefont {S.}~\bibnamefont {F{\"o}lling}},\ and\ \bibinfo {author} {\bibfnamefont {M.}~\bibnamefont {Greiner}},\ }\href@noop {} {\bibfield  {journal} {\bibinfo  {journal} {Nature}\ }\textbf {\bibinfo {volume} {462}},\ \bibinfo {pages} {74} (\bibinfo {year} {2009})}\BibitemShut {NoStop}%
\bibitem [{\citenamefont {Fallani}\ \emph {et~al.}(2007)\citenamefont {Fallani}, \citenamefont {Lye}, \citenamefont {Guarrera}, \citenamefont {Fort},\ and\ \citenamefont {Inguscio}}]{fallani2007ultracold}%
  \BibitemOpen
  \bibfield  {author} {\bibinfo {author} {\bibfnamefont {L.}~\bibnamefont {Fallani}}, \bibinfo {author} {\bibfnamefont {J.}~\bibnamefont {Lye}}, \bibinfo {author} {\bibfnamefont {V.}~\bibnamefont {Guarrera}}, \bibinfo {author} {\bibfnamefont {C.}~\bibnamefont {Fort}},\ and\ \bibinfo {author} {\bibfnamefont {M.}~\bibnamefont {Inguscio}},\ }\href@noop {} {\bibfield  {journal} {\bibinfo  {journal} {Physical review letters}\ }\textbf {\bibinfo {volume} {98}},\ \bibinfo {pages} {130404} (\bibinfo {year} {2007})}\BibitemShut {NoStop}%
\bibitem [{\citenamefont {Bakr}\ \emph {et~al.}(2010)\citenamefont {Bakr}, \citenamefont {Peng}, \citenamefont {Tai}, \citenamefont {Ma}, \citenamefont {Simon}, \citenamefont {Gillen}, \citenamefont {Foelling}, \citenamefont {Pollet},\ and\ \citenamefont {Greiner}}]{bakr2010probing}%
  \BibitemOpen
  \bibfield  {author} {\bibinfo {author} {\bibfnamefont {W.~S.}\ \bibnamefont {Bakr}}, \bibinfo {author} {\bibfnamefont {A.}~\bibnamefont {Peng}}, \bibinfo {author} {\bibfnamefont {M.~E.}\ \bibnamefont {Tai}}, \bibinfo {author} {\bibfnamefont {R.}~\bibnamefont {Ma}}, \bibinfo {author} {\bibfnamefont {J.}~\bibnamefont {Simon}}, \bibinfo {author} {\bibfnamefont {J.~I.}\ \bibnamefont {Gillen}}, \bibinfo {author} {\bibfnamefont {S.}~\bibnamefont {Foelling}}, \bibinfo {author} {\bibfnamefont {L.}~\bibnamefont {Pollet}},\ and\ \bibinfo {author} {\bibfnamefont {M.}~\bibnamefont {Greiner}},\ }\href@noop {} {\bibfield  {journal} {\bibinfo  {journal} {Science}\ }\textbf {\bibinfo {volume} {329}},\ \bibinfo {pages} {547} (\bibinfo {year} {2010})}\BibitemShut {NoStop}%
\bibitem [{\citenamefont {Becker}\ \emph {et~al.}(2026{\natexlab{a}})\citenamefont {Becker}, \citenamefont {Rouzé},\ and\ \citenamefont {Salzmann}}]{BeckerRouzeSalzmannToAppearcmp}%
  \BibitemOpen
  \bibfield  {author} {\bibinfo {author} {\bibfnamefont {S.}~\bibnamefont {Becker}}, \bibinfo {author} {\bibfnamefont {C.}~\bibnamefont {Rouzé}},\ and\ \bibinfo {author} {\bibfnamefont {R.}~\bibnamefont {Salzmann}},\ }\href {https://arxiv.org/abs/2604.01192} {\bibinfo {title} {Quantum {G}ibbs sampling in infinite dimensions: Generation, mixing times and circuit implementation}} (\bibinfo {year} {2026}{\natexlab{a}}),\ \Eprint {https://arxiv.org/abs/2604.01192} {arXiv:2604.01192 [quant-ph]} \BibitemShut {NoStop}%
\bibitem [{\citenamefont {Childs}\ \emph {et~al.}(2014)\citenamefont {Childs}, \citenamefont {Gosset},\ and\ \citenamefont {Webb}}]{childs2014bose}%
  \BibitemOpen
  \bibfield  {author} {\bibinfo {author} {\bibfnamefont {A.~M.}\ \bibnamefont {Childs}}, \bibinfo {author} {\bibfnamefont {D.}~\bibnamefont {Gosset}},\ and\ \bibinfo {author} {\bibfnamefont {Z.}~\bibnamefont {Webb}},\ }in\ \href@noop {} {\emph {\bibinfo {booktitle} {International Colloquium on Automata, Languages, and Programming}}}\ (\bibinfo {organization} {Springer},\ \bibinfo {year} {2014})\ pp.\ \bibinfo {pages} {308--319}\BibitemShut {NoStop}%
\bibitem [{\citenamefont {{Cipriani}}\ \emph {et~al.}(2000)\citenamefont {{Cipriani}}, \citenamefont {{Fagnola}},\ and\ \citenamefont {{Lindsay}}}]{OU}%
  \BibitemOpen
  \bibfield  {author} {\bibinfo {author} {\bibfnamefont {F.}~\bibnamefont {{Cipriani}}}, \bibinfo {author} {\bibfnamefont {F.}~\bibnamefont {{Fagnola}}},\ and\ \bibinfo {author} {\bibfnamefont {J.~M.}\ \bibnamefont {{Lindsay}}},\ }\href {https://doi.org/10.1007/s002200050773} {\bibfield  {journal} {\bibinfo  {journal} {Communications in Mathematical Physics}\ }\textbf {\bibinfo {volume} {210}},\ \bibinfo {pages} {85} (\bibinfo {year} {2000})}\BibitemShut {NoStop}%
\bibitem [{\citenamefont {Gondolf}\ \emph {et~al.}(2024)\citenamefont {Gondolf}, \citenamefont {M{\"o}bus},\ and\ \citenamefont {Rouz{\'e}}}]{gondolf2024energy}%
  \BibitemOpen
  \bibfield  {author} {\bibinfo {author} {\bibfnamefont {P.}~\bibnamefont {Gondolf}}, \bibinfo {author} {\bibfnamefont {T.}~\bibnamefont {M{\"o}bus}},\ and\ \bibinfo {author} {\bibfnamefont {C.}~\bibnamefont {Rouz{\'e}}},\ }\href@noop {} {\bibfield  {journal} {\bibinfo  {journal} {Quantum}\ }\textbf {\bibinfo {volume} {8}},\ \bibinfo {pages} {1551} (\bibinfo {year} {2024})}\BibitemShut {NoStop}%
\bibitem [{\citenamefont {Becker}\ \emph {et~al.}(2026{\natexlab{b}})\citenamefont {Becker}, \citenamefont {Rouz{\'e}},\ and\ \citenamefont {Salzmann}}]{BeckerRouzeSalzmannSchroedinger}%
  \BibitemOpen
  \bibfield  {author} {\bibinfo {author} {\bibfnamefont {S.}~\bibnamefont {Becker}}, \bibinfo {author} {\bibfnamefont {C.}~\bibnamefont {Rouz{\'e}}},\ and\ \bibinfo {author} {\bibfnamefont {R.}~\bibnamefont {Salzmann}},\ }\href@noop {} {\bibfield  {journal} {\bibinfo  {journal} {to appear}\ } (\bibinfo {year} {2026}{\natexlab{b}})}\BibitemShut {NoStop}%
\bibitem [{Note1()}]{Note1}%
  \BibitemOpen
  \bibinfo {note} {For some orthonormal basis $(\ket {i})_i$ and operator $B= \DOTSB \sum@ \slimits@ _{i,j}b_{ij}\ket {i}\protect \!\bra {j}$ satisfying $M:=\max \{\sup _i\DOTSB \sum@ \slimits@ _j|b_{ij}|,\sup _i\DOTSB \sum@ \slimits@ _j|b_{ij}|\}<\infty ,$ Schur's test gives that $B$ is a bounded operator with $\|B\|_\infty \le M.$}\BibitemShut {NoStop}%
\end{thebibliography}%

\end{document}